\crefname{property}{property}{Property}
\crefname{equation}{eq}{Eq}
\crefname{algocf}{Algorithm}{Algorithm}
\definecolor{ceruleanblue}{rgb}{0.16, 0.32, 0.75}
\definecolor{darkmidnightblue}{rgb}{0.0, 0.2, 0.4}
\definecolor{darkpastelgreen}{rgb}{0.01, 0.75, 0.24}
\definecolor{bleudefrance}{rgb}{0.19, 0.55, 0.91}
\definecolor{RURed}{rgb}{0.65, 0.2, 0.15}
\let\oldnl\nl% Store \nl in \oldnl
\newcommand{\nonl}{\renewcommand{\nl}{\let\nl\oldnl}} % Remove line number for one line
\newcommand{\algcomment}[1]{// \textcolor{gray}{#1}}
\newtheorem{theorem}{Theorem}
\newtheorem{lemma}{Lemma}[section]
\newtheorem{proposition}[lemma]{Proposition}
\newtheorem{corollary}[theorem]{Corollary}
\newtheorem{claim}[lemma]{Claim}
\newtheorem{fact}[lemma]{Fact}
\newtheorem{observation}{Observation}
\newtheorem*{claim*}{Claim}
\newtheorem*{proposition*}{Proposition}
\newtheorem*{lemma*}{Lemma}
\newtheorem*{corollary*}{Corollary}
\newtheorem*{remark*}{Remark}
\theoremstyle{definition}
\newtheorem{definition}{Definition}
\newtheorem{problem}{Problem}
\newtheorem*{problem*}{Problem}
\newtheorem{remark}{Remark}
\newtheorem{mdresult}{Result}
\newtheorem{mdalg}{Algorithm}
\providecommand{\email}[1]{\href{mailto:#1}{\nolinkurl{#1}\xspace}}
\newcommand{\eps}{\ensuremath{\varepsilon}}
\newcommand{\bracket}[1]{\left[#1\right]}
\newcommand{\paren}[1]{\ensuremath{\left(#1\right)}\xspace}
\newcommand{\card}[1]{\left\vert{#1}\right\vert}
\newcommand{\cO}{\ensuremath{\mathcal{O}}\xspace}
\newcommand{\cT}{\ensuremath{\mathcal{T}}\xspace}
\newcommand{\cTstar}{\ensuremath{\mathcal{T}^{*}}\xspace}
\newcommand{\expect}[1]{\Exp\bracket{#1}}
\newcommand{\poly}{\mbox{\rm poly}}
\newcommand{\polylog}{\mbox{\rm  polylog}\,\xspace}
\newcommand{\OPT}{\ensuremath{\mbox{\sf OPT}}\xspace}
\DeclareMathOperator*{\Exp}{\ensuremath{{\mathbb{E}}}}
\DeclareMathOperator*{\Prob}{\ensuremath{\textnormal{Pr}}}
\renewcommand{\Pr}{\Prob}
\newenvironment{tbox}{\begin{tcolorbox}[
		enlarge top by=5pt,
		enlarge bottom by=5pt,
		 breakable,
		 boxsep=0pt,
                  left=4pt,
                  right=4pt,
                  top=10pt,
                  arc=0pt,
                  boxrule=1pt,toprule=1pt,
                  colback=white
                  ]%%
	}
{\end{tcolorbox}}
\newcommand{\distr}[2]{\ensuremath{\textnormal{dist}_{#1}(r,\, #2)}\xspace}
\newcommand{\Otilde}{\ensuremath{\tilde{O}}\xspace}
\newcommand{\cost}{\ensuremath{\textnormal{\textsf{cost}}}}
\newcommand{\rev}{\ensuremath{\textnormal{\textsf{rev}}}}
\newcommand{\textlca}{\textnormal{\texttt{LCA}}}
\newcommand{\lcatree}[3]{\ensuremath{\textlca_{#3}(#1, \,#2)}\xspace}
\newcommand{\lcatreeset}[2]{\ensuremath{\textlca_{#2}(#1)}\xspace}
\newcommand{\pTree}{\ensuremath{\mathcal{I}}\xspace}
\newcommand{\Vp}{\ensuremath{\widetilde{V}}\xspace}
\newcommand{\Vh}{\ensuremath{V}_{\text{H}}\xspace}
\newcommand{\Vinput}{\ensuremath{{V}_{\textnormal{input}}}\xspace}
\newcommand{\cTree}[1]{\ensuremath{\mathcal{T}_{#1}}\xspace}
\newcommand{\leaves}[2]{\ensuremath{\textnormal{\texttt{leaves}}_{#1}{[\,#2\,]}}\xspace}
\newcommand{\nonleaves}[2]{\ensuremath{\textnormal{\texttt{non-leaves}}_{#1}{[\,#2\,]}}\xspace}
\newcommand{\ustar}{\ensuremath{u^*}\xspace}
\newcommand{\ellstar}{\ensuremath{\ell^{*}}\xspace}
\newcommand{\myqed}[1]{\let\qed\relax \hspace*{\fill} #1 $\square$}
\newcommand{\levelr}[2]{\ensuremath{\texttt{level}_{#1}(#2)}\xspace}
\newcommand{\nh}{\ensuremath{n_{H}}\xspace}
\newcommand{\ptalg}[2]{\ensuremath{\texttt{weak-partial-tree}^{#2}(#1)}\xspace}
\newcommand{\testalg}[4]{\ensuremath{\texttt{counterpart-tester}^{(\cO, \Vh)}(#1,\, #2, \, #3, \, #4)}}
\newcommand{\splitalg}[2]{\ensuremath{\texttt{vertex-split}^{#2}(#1)}\xspace}
\newcommand{\testalgstrong}[5]{\ensuremath{\texttt{counterpart-tester-strong}^{(\cO, #5)}(#1,\, #2, \, #3, \, #4)}}
\newcommand{\treemerge}[3]{\ensuremath{\texttt{tree-merge}^{\cO}(#1, \,#2, \, #3)}\xspace}
\newcommand{\testalgprev}[3]{\ensuremath{\texttt{predecessor-tester}^{(\cO, \Vh)}(#1,\, #2, \, #3)}}
\newcommand{\testorphan}{\ensuremath{\texttt{test-orphan-predecessor}^{(\cO, \Vh, \Vp)}}}
\newcommand{\strongptalg}[2]{\ensuremath{\texttt{strong-partial-tree}^{#2}(#1)}\xspace}
\newcommand{\np}{\ensuremath{\tilde{n}}\xspace}
\newcommand{\Vsmall}{\ensuremath{V^{\textnormal{small}}}\xspace}
\newcommand{\Vorphan}{\ensuremath{V^{\textnormal{orphan}}}\xspace}
\newcommand{\Vbefore}[1]{\ensuremath{V^{\textnormal{small}}(\leq #1)}\xspace}
\newcommand{\Vafter}[1]{\ensuremath{V^{\textnormal{small}}(\geq -#1)}\xspace}
\newcommand{\OPTdas}{\ensuremath{\OPT^{\textnormal{Das}}}}
\newcommand{\OPTmw}{\ensuremath{\OPT^{\textnormal{MW}}}}
\newcommand{\cut}{\ensuremath{\textnormal{\textbf{cut}}}\xspace}
\def \calD    {\mdef{\mathcal{D}}}
\def \calF    {\mdef{\mathcal{F}}}
\def \calG    {\mdef{\mathcal{G}}}
\def \calH    {\mdef{\mathcal{H}}}
\def \calT    {\mdef{\mathcal{T}}}
\def \calX    {\mdef{\mathcal{X}}}
\newcommand{\EEx}[2]{\ensuremath{\underset{#1}{\mathbb{E}}\left[#2\right]}}
\newcommand{\mdef}[1]{{\ensuremath{#1}}\xspace}  % Math Def which can also be used in normal text.
\DeclareMathOperator*{\argmin}{argmin}
\newcommand{\pa}[1]{\ensuremath{\textsf{pa}\left(#1\right)}\xspace}
\title{Learning-Augmented Hierarchical Clustering}
\author{
Vladimir Braverman\thanks{Johns Hopkins University, Rice University, and Google Research. E-mail: \email{vova@cs.jhu.edu}}
\and
Jon C. Ergun\thanks{Carnegie Mellon University. E-mail: \email{ergunjon0@gmail.com}}
\and
Chen Wang\thanks{Rice University and Texas A\&M University. E-mail: \email{chen.wang.research@gmail.com}}
\and
Samson Zhou\thanks{Texas A\&M University. E-mail: \email{samsonzhou@gmail.com}}
}
\date{}
\begin{document}
\allowdisplaybreaks

\maketitle

\pagenumbering{roman}

\begin{abstract}
Hierarchical clustering (HC) is an important data analysis technique in which the goal is to recursively partition a dataset into a tree-like structure while grouping together similar data points at each level of granularity. Unfortunately, for many of the proposed HC objectives, there exist strong barriers to approximation algorithms with the hardness of approximation. Thus, we consider the problem of hierarchical clustering given auxiliary information from natural oracles. Specifically, we focus on a \emph{splitting oracle} which, when provided with a triplet of vertices $(u,v,w)$, answers (possibly erroneously) the pairs of vertices whose lowest common ancestor includes all three vertices in an optimal tree $\mathcal{T}^{*}$, i.e., identifying which vertex ``splits away'' from the others in $\mathcal{T}^{*}$. Using such an oracle, we obtain the following results:
\begin{itemize}[leftmargin=30pt, rightmargin=30pt]
\item A polynomial-time algorithm that outputs a hierarchical clustering tree with $O(1)$-approximation to the Dasgupta objective (Dasgupta [STOC'16]).
\item A near-linear time algorithm that outputs a hierarchical clustering tree with $(1-o(1))$-approximation to the Moseley-Wang objective (Moseley and Wang [NeurIPS'17]).
\end{itemize}
Under the plausible Small Set Expansion Hypothesis, no polynomial-time algorithm can achieve any constant approximation for Dasgupta's objective or $(1-C)$-approximation for the Moseley-Wang objective for some constant $C>0$. As such, our results demonstrate that the splitting oracle enables algorithms to outperform standard HC approaches and overcome hardness constraints. Furthermore, our approaches extend to sublinear settings, in which we show new streaming and PRAM algorithms for HC with improved guarantees.

At the heart of our techniques is to construct \emph{partial HC trees} with the splitting oracle. These are data structures that capture the approximate composition of the optimal HC trees and can be efficiently built regardless of the input graph and objectives. As such, we believe the partial HC trees can be applied to a broad range of HC problems and can be of independent interest.  
\end{abstract}
\thispagestyle{empty}

\newpage
\small
\setcounter{tocdepth}{2}
{\hypersetup{hidelinks}\tableofcontents}
\normalsize
\newpage

\pagenumbering{arabic}
\setcounter{page}{1}

\section{Introduction}
Hierarchical clustering (HC) is a popular data analysis technique that recursively partitions a dataset throughout a tree-like structure, so that similar data points are grouped together at different levels of granularity. 
Specifically, the input is a set of $n$ data points and a measure of similarity or dissimilarity between the points, which induces a weighted graph whose vertices represent the data points and whose edge weights represent the pairwise measure between the vertices. 
The output is a binary dendrogram, which is a rooted tree whose leaves represent the individual data points and whose internal nodes each represent a cluster of the data points in its subtree, thus providing a hierarchical representation of relationships within the dataset. 

Hierarchical clustering has several advantages over flat clusterings such as $k$-means or $k$-median, where the dataset is partitioned into a fixed number of clusters. 
For example, the ``correct'' number of clusters in flat clusterings is often a difficult question that is the focus of a sequence of works dating back to the 1950s~\citep{thorndike1953belongs}. 
In hierarchical clustering, there is no fixed number of clusters that needs to be determined in advance. 
Another advantage of hierarchical clustering is that the dendrogram simultaneously captures structure at all levels of granularity, whereas flat clustering does not identify further structure inside each of the clusters. 
Hence, hierarchical clustering arises in various applications where data exhibits hierarchical structure, such as biology and phylogenetics~\citep{sneath1973numerical,sotiriou2003breast}, image and text analysis~\citep{steinbach2000comparison}, and community detection~\citep{leskovec2020mining}. 

Despite a wealth of heuristics for both agglomerative bottom-up~\citep{ward1963hierarchical} and divisive top-down approaches~\citep{guenoche1991efficient}, formal mathematical understanding of hierarchical clustering often stagnated due to the absence of well-posed objectives until a relatively recent work by Dasgupta~\citep{Dasgupta16}. 
Subsequently, additional objectives~\citep{MoseleyW17,cohen2019hierarchical} were proposed to quantify the performance of dendrograms with $n$ leaves, so that high-revenue similarity trees and low-cost dissimilarity trees correspond to desirable hierarchical partitions of the dataset. 

For a number of these objectives, various algorithms have been proven to achieve specific approximation guarantees. 
For example, a divisive clustering algorithm based on the sparsest cut subroutine was shown to give an $O(\sqrt{\log n})$ approximation~\citep{charikar2017approximate,cohen2019hierarchical,Deng0U0Z25} for Dasgupta's objective~\citep{Dasgupta16}, while the long-used agglomerative heuristic was shown to give a $2$-approximation for the dissimilarity objective proposed by \citep{cohen2019hierarchical} and a $\frac{1}{3}$-approximation for the similarity objective proposed by \citep{MoseleyW17}, i.e., the Moseley-Wang (MW) objective. These objectives were further explored in the contexts of better approximation factors~\cite{ChatziafratisYL20,AlonAV20}, sublinear computation models~\cite{RajagopalanVVCP21,AssadiCLMW22,AgarwalKLP22}, and graphs with special properties~\cite{CharikarCNY19,ManghiucS21}.

Unfortunately, \citep{CharikarCN19} showed that average-linkage cannot do better than $\frac{3}{2}$-approximation for the former objective or better than $\frac{1}{3}$-approximation for the latter. 
More general hardness of approximation results were given, showing the impossibility of achieving roughly $1.003$-approximation~\citep{ChatziafratisGL20} for dissimilarity under the Unique Games Conjecture and the impossibility of achieving $(1-C)$-approximation for the Moseley-Wang objective~\citep{ChatziafratisYL20} under the Small Set Expansion hypothesis, for a fixed constant $C>0$. 
Moreover, for the Dasgupta objective, \citep{CharikarC17,RoyP17} showed that under the Small Set Expansion hypothesis, there is no constant approximation in polynomial time for \emph{any constant}.
% Moreover, many algorithms with better provable approximation guarantees than the common heuristics utilize subroutines such as semi-definite programming~\citep{CharikarCN19} that may not be scalable in practice. 
Thus, we seek new practical approaches that enable better approximation guarantees without assumptions about the underlying dataset or weight function. 

\paragraph{Learning-augmented algorithms.}
We draw inspiration from the recent advances in the predictive capabilities of machine learning models. 
On one hand, datasets often have additional auxiliary information that can be used to improve algorithmic performance if accurate.  
For example, in many applications, the input dataset can retain insightful patterns exhibited by similar datasets generated from previous instances.  
On the other hand, machine learning models lack provable guarantees and can result in wildly inaccurate predictions when generalizing to unfamiliar inputs \citep{SzegedyZSBEGF13}. 
Nevertheless, \emph{learning-augmented algorithms}~\citep{MitzenmacherV20} that overcome worst-case computational limits have been designed for a number of applications, such as warm-starts for faster algorithms~\citep{DinitzILMV21,ChenSVZ22,DaviesMVW23}, data structures optimized for specific query distributions~\citep{KraskaBCDP18,Mitzenmacher18,LinLW22,FuSZ24}, online algorithms with some ``forecast'' of the future~\citep{PurohitSK18,GollapudiP19,LattanziLMV20,WangLW20,WeiZ20,BamasMS20,ImKQP21,LykourisV21,AamandCI22,Anand0KP22,AzarPT22,GrigorescuLSSZ22,KhodakBTV22,GPSSNeurips22,JiangLLTZ22,AntoniadisCEPS23,ShinLLA23,BenomarP23}, input-sensitive sketches for more space-efficient streaming algorithms~\citep{HsuIKV19,IndykVY19,JiangLLRW20,ChenIW22,ChenEILNRSWWZ22,LLLVW23}, and classical NP hard problems~\cite{BravermanDSW24,Cohen-Addadd0LP24,BravermanDJNWZZ25}.

For clustering problems, \citep{ErgunFSWZ22,NguyenCN23,CLRSZ24} introduced \emph{flat} clustering algorithms that use polynomial runtime and achieve approximation guarantees beyond NP hardness limits. 
Though their techniques are specific to $k$-means and $k$-median clustering, their work nevertheless serves as an important conceptual message that demonstrates machine learning oracles can be used to improve upon traditional techniques for cluster analysis. 
Furthermore, for graph-base problems, recent results by \cite{Cohen-Addadd0LP24,BravermanDSW24,Dong0V25} have shown that natural learning-augmented oracles could help overcome NP-hardness constraints as well.
We thus ask whether machine learning models can be used to provably improve (graph-based) \emph{hierarchical} clustering. 

\subsection{Our Contributions}
In this paper, we consider the problem of hierarchical clustering given a possibly erroneous oracle that uses auxiliary information, e.g., through clusterings of similar datasets, to provide local information about the relationship between queried data points. 
In particular, we consider a \emph{splitting oracle} that, on an input query of a triplet $(u,v,w)$ of vertices, outputs the vertex that is first separated away from the other two with respect to an optimal or near-optimal hierarchical clustering tree. 
In other words, if the oracle is consistent with some ground-truth tree, it will output the vertex that is \emph{not} in the same subtree as the other two vertices, under their least common ancestor. 
We remark that such oracle advice is natural due to the plethora of machine learning models that are trained on related instances of graphs, where the triplet relationships are already labeled. 

Using triplet split-away information is common in the literature, and there have been results explored in similar settings, e.g., tree reconstruction with \emph{accurate} triplet relationships given \cite{aho1981inferring}, triplets are given as constraints \cite{ChatziafratisNC18}, noisy triplet information with fresh randomness \cite{Emamjomeh-Zadeh18}, and algorithms with quartet information \cite{JiangKL00,SnirY11,AlonSY14}.
Furthermore, the reconstruction of phylogenetic CSPs, which provides an oracle with a similar form to ours, is an extensively studied line of work (see, e.g., \cite{ChatziafratisM23}). From the machine learning perspective, it is possible to learn such oracles in the PAC learning framework (see~\Cref{sec:oracle-and-learning-theory} for details).

In line with existing literature on learning-augmented algorithms, we investigate a stochastic and independently responding splitting oracle, where randomness is introduced only \emph{once}. 
Specifically, this implies that the oracle correctly responds with a probability of $p$ for some constant $p>1/2$, independently across vertex triplets. 
% In the remaining probability $\delta$, the oracle responds arbitrarily. 
Additionally, repeated queries of the same triplet consistently yield the same (possibly erroneous) responses, which rules out basic boosting strategies such as repeatedly making the same query to the oracle. 
We remark this splitting oracle mirrors numerous machine learning models that are trainable with data yet exhibit inherent noise.

We now present our main results for learning-augmented hierarchical clustering.
We first show that for the Dasgupta objective, we can use such a splitting oracle to achieve a constant factor approximation in polynomial time. 

\begin{restatable}{theorem}{thmconstantpolytimehc}
\label{thm:constant-poly-time-HC}
There exists an algorithm that, given a weighted undirected graph $G=(V,E,w)$ and a splitting oracle $\cO$, with high probability, in polynomial time and $O(n^3)$ queries computes a hierarchical clustering tree $\cT$ such that $\cost_{G}(\cT)\leq O(1)\cdot \OPTdas(G)$, where $\OPTdas(G)$ is the cost of the optimal hierarchical clustering tree $\cTstar$, i.e., $\OPTdas(G)=\cost_{G}(\cTstar)$.
\end{restatable}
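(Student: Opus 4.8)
The plan is to route everything through a graph-independent \emph{partial HC tree} $\pTree$ that is a faithful coarsening of an optimal tree $\cTstar$, and then to complete $\pTree$ into a binary tree using the edge weights, bounding the cost via the level decomposition of Dasgupta's objective. Since $O(n^3)$ queries are permitted, I would first query $\cO$ on (essentially) every triplet once and work with the resulting noisy answer map. The starting point is that $\cO(u,v,w)$ should return $w$ exactly when $w \notin \mathrm{leaves}_{\cTstar}(\mathrm{lca}_{\cTstar}(u,v))$; hence, for a fixed pair $(u,v)$, the observed set $\widehat{W}_{uv} = \{w : \cO(u,v,w) = w\}$ is a noisy copy of $V \setminus \mathrm{leaves}_{\cTstar}(\mathrm{lca}_{\cTstar}(u,v))$ in which each membership is flipped independently with probability at most $1-p$. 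Because the noise is independent across triplets, $|\widehat{W}_{uv}|$ concentrates around its mean, and since $2p-1 = \Omega(1)$ one can estimate $|\mathrm{leaves}_{\cTstar}(\mathrm{lca}_{\cTstar}(u,v))|$ to within a constant factor; more usefully, majority voting over a ``third vertex'' decides, for any two vertices and any sufficiently \emph{balanced} candidate bipartition of a cluster, which side each vertex lies on.

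Building $\pTree$ then proceeds top-down on the current vertex set $C$ (a genuine subtree of $\cTstar$, so that $\cO$ restricted to triplets inside $C$ is still a valid splitting oracle for $\cTstar[C]$). If there is a confidently-detectable, roughly balanced split of $C$ into $C_1, C_2$, we take it and recurse on both sides. If no balanced split exists, then $C$ has a large ``core'' subcluster $C'$ with $C \setminus C'$ small, and the crucial point is that the vertices of $C \setminus C'$ can be correctly ordered by their separation depth from $C'$: for outliers $x,y$ and any anchor $c \in C'$, the correct answer of the triplet $(c,x,y)$ is the same for every $c$ (it only depends on where $x$ and $y$ detach), so we get $|C'|$ independent noisy votes, and a large core makes the ordering reliable. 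We peel the outliers off in that order and recurse on $C'$. The recursion is continued until the only unresolved pieces are of size below a small threshold $\tau$; inside each such piece $F$ we instead optimize directly, enumerating the binary trees on $F$ in $\mathrm{poly}(n)$ time (for $\tau$ small enough) and using the weights of $G[F]$, and splice the result into $\pTree$ to obtain the output tree $\cT$.

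For the accounting, write $\cost_G(\cT) = \sum_{(i,j) \in E} w(ij)\,|\mathrm{leaves}_{\cT}(\mathrm{lca}_{\cT}(i,j))|$. Every step of the construction either reproduces a node of $\cTstar$ or merges a stretch of $\cTstar$ whose endpoints differ in leaf-count by at most a constant factor (the peeled outliers detach within a $1+o(1)$ band of sizes); consequently, for an edge whose $\cT$-LCA is a genuine node of $\pTree$, its subtree size is within a constant factor of the $\cTstar$-subtree size of $\mathrm{lca}_{\cTstar}(i,j)$, so those terms sum to $O(1) \cdot \OPTdas(G)$. Edges lying inside an unresolved piece $F$ contribute at most $|F|$ times what $\cTstar$ pays for them, so provided the recursion drives the pieces down to constant size these are absorbed as well; a union bound over the (polynomially many) confident decisions gives the high-probability guarantee, and all computation — including the exhaustive search on the small pieces — is polynomial.

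The hard part is the noise control underlying the first two paragraphs. Plain majority voting over a third vertex only detects splits that are $\Omega(1)$-balanced, so everything delicate is concentrated in (i) certifying that, whenever no balanced split is found, there really is a single dominant core together with a controllable set of outliers admitting a reliable detachment order, and (ii) ensuring that whatever the reconstruction leaves unresolved is simultaneously small enough for exhaustive completion and cheap enough that an arbitrary completion there costs only $O(1) \cdot \OPTdas(G)$ — the danger being a light-but-structurally-high error that places a heavy small subtree far above where $\cTstar$ puts it. By contrast, once the partial tree is in hand, the Dasgupta-specific reasoning (the level-decomposition charge above) is routine.
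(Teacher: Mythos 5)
Your top-level plan — build a graph-independent partial HC tree from the oracle, optimize exhaustively inside the small unresolved pieces, and split the cost into ``cross-piece'' and ``within-piece'' contributions — matches the paper. But several of the specific decisions you make diverge from what the paper does, and two of those divergences are real gaps.

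First, your partial-tree construction does not match the paper's strongly-consistent tree, and your weaker version does not suffice for Dasgupta's objective. The paper's \texttt{counterpart-tester} (Algorithm~\ref{alg:tester-strong} with two thresholds) recovers the \emph{exact} root split of $\cTstar$ at every node whose induced set has $\geq 50000\log n$ vertices, no matter how unbalanced; the output tree reproduces every internal node of $\cTstar$ above the super-vertex level, so for every cross-piece edge the leaf-count at the LCA is preserved \emph{exactly}, giving zero multiplicative loss on that part. Your scheme instead detects only $\Omega(1)$-balanced splits directly, and otherwise peels ``outliers'' off a ``core'' and implicitly \emph{merges} the intervening stretch of $\cTstar$ into one node. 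That is essentially the paper's \emph{weakly}-consistent tree, which the paper deliberately does not use for Dasgupta (only for Moseley-Wang), because once you merge a stretch the LCA leaf-counts inside it can change by more than a constant factor. Your parenthetical ``the peeled outliers detach within a $1+o(1)$ band of sizes'' is asserted, not proved, and it is false in general: there is no reason the cumulative size of the peeled set stays $o(|C|)$ over the course of the peeling phase, and you give no rule for when peeling stops and balanced-split detection resumes. So ``within a constant factor of the $\cTstar$-subtree size'' for cross-piece edges is unsupported.

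Second, your completion step does not quite close. You want to enumerate all binary trees on the unresolved piece $F$, which is $\poly(n)$ only if $|F|$ is $O(\log n/\log\log n)$ or smaller; but the Chernoff arguments that make the oracle useful need $\Omega(\log n)$-size sets, so the pieces cannot be driven below $\Theta(\log n)$. The paper resolves this tension differently: it stops at super-vertices of size $O(\log n)$, and instead of enumerating trees, it exhaustively searches for an \emph{optimal sparsest cut} at each level (a $2^{O(\log n)}=\poly(n)$ search), then invokes the known result that recursively applying (even exact) sparsest cuts gives $O(1)$-approximation to Dasgupta's cost on $G[F]$ (Proposition~\ref{prop:const-recursive-HC}). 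Relatedly, your accounting for the within-piece edges mixes two incompatible bounds — ``at most $|F|$ times what $\cTstar$ pays'' (an $O(\log n)$ blow-up unless $|F|=O(1)$) and ``optimize directly'' — when the correct argument is: $\cTstar$ restricted to a super-vertex $X$ is itself a valid tree on $G[X]$, so the recursive-sparsest-cut tree on $G[X]$ costs $O(1)\cdot\cost_G(\cTstar,E_{\textnormal{same}}[X])$, and summing over super-vertices plus the exact cross-piece equality gives the theorem. If you tighten the partial-tree step to exact split recovery (as the paper's counterpart-tester does) and swap tree enumeration for exhaustive sparsest-cut search, the rest of your outline goes through.
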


By comparison, \citep{CharikarC17,RoyP17} showed that there is no polynomial-time algorithm that could achieve any constant approximation to Dasgupta's objective, under the Small Set Expansion hypothesis. 
Hence, \Cref{thm:constant-poly-time-HC} illustrates that the power of a splitting oracle can be used to break complexity hardness limitations. 
On the other hand, we remark that the runtime of the algorithm of \Cref{thm:constant-poly-time-HC}, although polynomial, is perhaps embarrassingly large. 
We thus give an algorithm that uses the splitting oracle and $\tilde{O}(n^3)$ time to achieve approximation guarantees beyond the current state-of-the-art oblivious algorithms. 

\begin{restatable}{theorem}{thmrootloglogtimehc}
\label{thm:root-loglog-n4-time-HC}
There exists an algorithm that, given a weighted undirected graph $G=(V,E,w)$ and a splitting oracle $\cO$, with high probability, in $O(n^3 \log{n})$ time and $O(n^3)$ queries computes a hierarchical clustering tree $\cT$ such that $\cost_{G}(\cT)\leq O(\sqrt{\log\log{n}})\cdot \OPTdas(G)$, where $\OPTdas(G)$ is the cost of the optimal hierarchical clustering tree $\cTstar$, i.e., $\OPTdas(G)=\cost_{G}(\cTstar)$.
\end{restatable}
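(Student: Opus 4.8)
The plan is to reuse the partial HC tree construction underlying \Cref{thm:constant-poly-time-HC}, but to replace its expensive bottom-level step by a fast oblivious subroutine, paying the loss of that subroutine in exchange for near-cubic running time. Concretely, I would first use the oracle $\cO$, with $O(n^3)$ queries and $\Otilde(n^3)$ preprocessing, to build a partial HC tree $\pTree$ whose leaves form a partition $C_1,\dots,C_k$ of $V$ with each $|C_i| \le \np$, where $\np = O(\log n/\log\log n)$ is the cluster-size threshold (any $\polylog(n)$ bound would serve equally well). The two properties I need from $\pTree$, both delivered by the partial-tree machinery of the paper, are: \emph{(a) top-part fidelity} --- $\pTree$ admits a binary completion of its internal structure so that the total Dasgupta cost charged to edges crossing between distinct clusters $C_i$ is $O(1)\cdot\OPTdas(G)$; and \emph{(b) cluster fidelity} --- the clusters are faithful to $\cTstar$ in that $\sum_{i=1}^{k}\OPTdas(G[C_i]) \le O(1)\cdot\OPTdas(G)$, where $G[C_i]$ is the induced weighted subgraph. (Both hold with constant $1$ when the $C_i$ are exactly subtrees of $\cTstar$, since separation sizes are preserved under restriction to a subtree; only these approximate versions are needed.)

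Next, for each cluster $C_i$ I would run the recursive sparsest-cut hierarchical clustering algorithm of \citep{charikar2017approximate} on $G[C_i]$, using an $O(\sqrt{\log m})$-approximate sparsest cut, to obtain a binary tree $\cT_i$ on $C_i$ with $\cost_{G[C_i]}(\cT_i) \le O(\sqrt{\log|C_i|})\cdot\OPTdas(G[C_i])$. Since $|C_i| \le \np = \polylog(n)$, this is an $O(\sqrt{\log\log n})$ factor, and the step issues no oracle queries. As $\sum_i |C_i| = n$ and every $|C_i| = \polylog(n)$, the total work here is $\sum_i \poly(|C_i|) \le n\cdot\polylog(n) = \Otilde(n)$, dominated by the partial-tree construction.

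Finally, graft each $\cT_i$ into the corresponding leaf of the completed $\pTree$ to obtain the output tree $\cT$. Dasgupta's cost splits cleanly along the clusters: an edge with both endpoints in one $C_i$ is separated inside $\cT_i$, so its total contribution equals $\cost_{G[C_i]}(\cT_i)$, while an edge crossing two distinct clusters is separated in the top part of $\pTree$ (and the corresponding leaf-count is independent of the internal shape of the $\cT_i$'s). Therefore
\[
\cost_G(\cT) \;=\; (\text{top-part cost}) \,+\, \sum_{i=1}^{k}\cost_{G[C_i]}(\cT_i)
\;\le\; O(1)\cdot\OPTdas(G) \,+\, O(\sqrt{\log\log n})\sum_{i=1}^{k}\OPTdas(G[C_i])
\;\le\; O(\sqrt{\log\log n})\cdot\OPTdas(G),
\]
using property (a) for the first term and property (b) together with the previous step for the second. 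The running time is $\Otilde(n^3)$ for building $\pTree$ plus $\Otilde(n)$ for the cluster solves, i.e.\ $O(n^3\log n)$, with $O(n^3)$ queries; the high-probability guarantee is inherited from the partial-tree construction, as the remaining steps are deterministic (modulo the sparsest-cut subroutine).

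\textbf{Main obstacle.} The difficulty lies entirely in the partial-tree step, and specifically in establishing property (b): one must show that the noisy, non-reboostable splitting oracle still pins down the shape of $\cTstar$ down to the $\polylog(n)$ scale accurately enough that the internal optima of the leaf clusters sum to $O(\OPTdas(G))$, which requires the construction to be robust to a constant fraction of adversarially placed errors. Granting the partial-tree lemmas, what remains above is essentially bookkeeping; the one point still to verify is that binarizing a high-degree internal node of $\pTree$ inflates the relevant separation sizes by only a constant factor, so that property (a) is preserved.
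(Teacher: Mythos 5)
Your proposal is essentially the paper's own proof: build the strongly consistent partial tree (giving clusters of size $O(\log n)$), run recursive $O(\sqrt{\log|C_i|})$-approximate sparsest cuts inside each cluster via \citep{charikar2017approximate,AroraHK04}, and split the cost into crossing edges (handled exactly by strong consistency, giving property (a) with constant $1$) and within-cluster edges (handled by property (b), which holds because $\cTstar$ restricted to each cluster is a feasible tree for $G[C_i]$, so $\sum_i\OPTdas(G[C_i])\le\cost_G(\cTstar,E_{\text{same}})\le\OPTdas(G)$). The one worry you flag at the end — binarizing high-degree internal nodes of $\pTree$ — is moot: by \Cref{def:partial-tree} the partial tree is already binary, and by strong consistency (\Cref{def:strongly-consistent-tree}) the subtree preservation property gives $\leaves{\cT}{\lcatree{u}{v}{\cT}}=\leaves{\cTstar}{\lcatree{u}{v}{\cTstar}}$ exactly for any cross-cluster pair $(u,v)$, so no inflation occurs at all.
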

\noindent
By comparison, the best-known polynomial-time oblivious algorithm achieves $O(\sqrt{\log n})$-approximation \citep{charikar2017approximate,cohen2019hierarchical}. 
Thus our algorithm that achieves $O(\sqrt{\log\log n})$ approximation gives very competitive practical bounds -- the improvement of our algorithm is approximately $2.3$ times better for $n=10^{10}$. 

Turning our attention to the Moseley-Wang objective, we similarly show that a splitting oracle can also be used to achieve any constant factor approximation in polynomial time. 

\begin{restatable}{theorem}{thmhcalgmw}
\label{thm:hc-alg-MW}
There exists an algorithm that, given a weighted undirected graph $G=(V,E,w)$ and a splitting oracle $\cO$, with high probability, in $O(n^2 \cdot \polylog{n})$ time and $O(n^2)$ queries computes a hierarchical clustering tree $\cT$ such that $\rev_{G}(\cT)\geq (1-o(1))\cdot \OPTmw(G)$, where $\OPTmw(G)$ is the revenue of the optimal hierarchical clustering tree $\cTstar$, i.e., $\OPTmw(G)=\rev_{G}(\cTstar)$.
\end{restatable}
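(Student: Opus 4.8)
The plan is to reduce the Moseley--Wang optimization to accurately estimating, for every pair of vertices, the exact coefficient with which that pair contributes to the revenue of $\cTstar$, and then to fit a hierarchical clustering tree whose coefficients match these estimates up to a small additive error. For a pair $(u,v)$ write $k(u,v) := |\textnormal{leaves}(\mathrm{LCA}_{\cTstar}(u,v))|$, so that $\rev_G(\cTstar) = \sum_{(i,j)\in E} w_{ij}\,(n - k(i,j))$. The key observation is that, in the noiseless case, the splitting oracle answers $(u,v,w)\mapsto w$ exactly when $w \notin \textnormal{leaves}(\mathrm{LCA}_{\cTstar}(u,v))$, so $n - k(u,v) = |\{\,w : \cO(u,v,w) = w\,\}|$. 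I would estimate $n-k(u,v)$ by sampling $m = \Theta(\eps^{-2}\log n)$ third vertices $w$ uniformly and querying $\cO(u,v,w)$. Since the oracle's errors are independent across triplets, for a fixed pair $(u,v)$ the answers over the sampled $w$'s are independent, the empirical fraction of ``$w$'' answers is an affine, strictly monotone function of $k(u,v)/n$ whose slope is bounded away from $0$ by a function of $p>1/2$, and a Hoeffding plus union bound over all $\binom n2$ pairs yields, with high probability, simultaneous estimates $\widehat k(u,v)$ with $|\widehat k(u,v) - k(u,v)| \le \eps n$. Organizing these queries through the partial HC tree construction --- so that third-vertex answers are reused across many pairs and only $\Theta(\eps n)$-granular information is extracted along a top-down recursion --- is what brings the query count down to $O(n^2)$ and the running time to $O(n^2\,\polylog n)$.

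Next I fit a tree to the estimated coefficients. The true values $k(\cdot,\cdot)$ obey an ultrametric-type identity --- for any triple, the two largest of $k(u,v), k(u,w), k(v,w)$ are equal --- because they are LCA sizes in $\cTstar$. I would run single-linkage (equivalently, build the partial HC tree) on the dissimilarity $d(u,v) := \widehat k(u,v)$, obtaining a tree $\cT$. A short argument shows the single-linkage merge height of $(u,v)$ lies in $[\,k(u,v) - \eps n,\ k(u,v) + \eps n\,]$: the direct edge gives the upper bound, and along any path the maximum of $\widehat k$ is at least (by the path form of the ultrametric inequality for the $k$-values, minus estimation error) $k(u,v) - \eps n$. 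One then checks that the number of leaves below the \emph{topological} LCA of $u,v$ in $\cT$ is also $k(u,v) \pm O(\eps n)$: every leaf under $\mathrm{LCA}_{\cTstar}(u,v)$ has estimated distance at most $k(u,v)+\eps n$ to $u$, and only $O(\eps n)$ vertices have estimated distance to $u$ within an $O(\eps n)$-window around the merge height, so the cluster realized at the merge height neither loses nor gains more than $O(\eps n)$ leaves relative to $\textnormal{leaves}(\mathrm{LCA}_{\cTstar}(u,v))$.

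For the accounting, summing the previous bound over all edges gives
\[
\bigl|\rev_G(\cT) - \OPTmw(G)\bigr| \;=\; \Bigl|\,\sum_{(i,j)\in E} w_{ij}\bigl(k(i,j) - |\textnormal{leaves}(\mathrm{LCA}_{\cT}(i,j))|\bigr)\Bigr| \;\le\; O(\eps n)\sum_{(i,j)\in E} w_{ij}.
\]
A uniformly random binary tree has expected Moseley--Wang revenue $\tfrac{n-2}{3}\sum_{(i,j)\in E} w_{ij}$ (each third leaf is excluded from the LCA of a pair with probability $\tfrac13$), so $\OPTmw(G) \ge \tfrac{n-2}{3}\sum_{(i,j)\in E} w_{ij}$ and hence $\rev_G(\cT) \ge (1 - O(\eps))\,\OPTmw(G)$. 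Choosing $\eps = o(1)$, e.g.\ $\eps = 1/\log\log n$, yields the $(1-o(1))$ guarantee while keeping $m = \polylog n$, hence $O(n^2)$ queries and $O(n^2\,\polylog n)$ time after the above implementation.

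I expect the main obstacle to be the second step: bounding how the per-pair estimation error propagates through the tree-fitting so that the \emph{topological} LCA sizes of the output --- not merely abstract merge heights --- stay within $O(\eps n)$ of those of $\cTstar$ for \emph{every} pair simultaneously, and doing so with a construction whose query complexity is genuinely $O(n^2)$ rather than $\Otilde(n^2)$. This is precisely where the partial HC tree data structure does the work: it commits only to cluster boundaries at resolution $\Theta(\eps n)$ and classifies vertices against those boundaries by majority vote over many triplet queries, which both denoises the oracle and amortizes the queries across pairs.
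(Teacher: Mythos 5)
Your approach — estimate the coefficients $k(u,v) := |\leaves{\cTstar}{\lcatree{u}{v}{\cTstar}}|$ for every pair directly from triplet votes, then fit a tree to the estimated ultrametric via single-linkage and compare against the uniform-random-tree lower bound $\OPTmw \ge \tfrac{n-2}{3}\sum w_{ij}$ — is genuinely different from the paper's, which instead builds a \emph{weakly consistent partial HC tree} via local ``counterpart'' threshold tests and then controls the revenue loss through structural properties of $\cTstar$ (\Cref{lem:num-non-leaf-weak-tree}, \Cref{clm:weight-property-internal-nodes}, \Cref{lem:MW-opt-structure}). Your accounting step (converting a uniform $O(\eps n)$ additive error on LCA sizes into a $(1-O(\eps))$ multiplicative guarantee) is a clean and valid reduction that the paper does not spell out in this form. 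However, there is a fundamental gap in the estimation step that makes the overall argument fail.

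The claim ``the empirical fraction of $w$-answers is an affine, strictly monotone function of $k(u,v)/n$ whose slope is bounded away from $0$'' is false under the paper's oracle model (\Cref{def:orl-model}), where the incorrect $1-p$ fraction of answers are \emph{adversarial}, not uniformly random. Fix $(u,v)$ and sample a third vertex $w$. If $w$ lies outside the $\lcatree{u}{v}{\cTstar}$-subtree, the oracle reports $w$ with probability $p$ \emph{plus} an adversary-controlled fraction of the remaining $1-p$ mass; if $w$ lies inside, the oracle can still report $w$ for an adversary-controlled fraction of the $1-p$ mass. Letting $r\in[0,1]$ be the adversary's choice of how often to answer ``$w$'' on erroneous triplets, the expected fraction of $w$-answers equals $\tfrac{n-k}{n}\,p + (1-p)\,r$. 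The algorithm does not know $r$, and $r$ can differ across pairs $(u,v)$. Hoeffding concentrates the empirical fraction around this unknown mean, but inverting it for $k$ leaves an irreducible uncertainty window of width $\tfrac{(1-p)}{p}\,n = \Theta(n)$ (for $p=\tfrac{9}{10}$, about $n/9$). So you cannot obtain $|\widehat k(u,v)-k(u,v)| \le \eps n$ with $\eps=o(1)$ for all pairs simultaneously; you only get $\Theta(n)$ accuracy, and the downstream accounting degrades to $(1-\Theta(1-p))$ rather than $(1-o(1))$. The paper's counterpart/predecessor tests (\Cref{alg:tester}, \Cref{alg:tester-prev}) sidestep this precisely because they only need to distinguish \emph{binary} hypotheses whose expected counts differ by an $\Omega(1)$ fraction of $\nh$ — a gap that strictly dominates the $(1-p)$ adversarial slack — rather than to pin down $k(u,v)$ numerically.

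A secondary point: drawing $m=\Theta(\eps^{-2}\log n)$ third vertices per pair costs $\Theta(n^2 m) = \Theta(n^2\polylog n)$ oracle queries, not the $O(n^2)$ stated in the theorem. You defer this to ``organizing the queries through the partial HC tree construction,'' but that construction (\Cref{thm:weak-partial-tree}) does not estimate $k(u,v)$ at all; it cannot be used as a black box to amortize the per-pair estimates you rely on. The query budget therefore needs a genuinely different implementation even if the estimation issue were resolved.
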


We note that \citep{ChatziafratisYL20} showed the APX-hardness of the $(1-C)$ approximation for Moseley-Wang objective under the Small Set Expansion hypothesis, for a fixed constant $C \in (0,1)$. As such, \Cref{thm:hc-alg-MW} again also shows the power of splitting oracles to overcome impossibility barriers. Since a $n$-vertex graph could have input size as large as $\Theta(n^2)$, the time complexity in \Cref{thm:hc-alg-MW} is near-linear in the worst case.

Finally, we observe that our algorithms possess favorable properties that are extremely amenable to sublinear algorithms. As such, we can obtain the following results in the streaming and parallel computation (PRAM) settings.
%\begin{customthm}{\ref{thm:hc-das-streaming} and \ref{thm:hc-MW-PRAM}}
\begin{theorem}
In the single-pass graph streaming and the PRAM settings, there exists:
\begin{itemize}[leftmargin=15pt]
\item a single-pass (dynamic) streaming algorithm that, given a weighted undirected graph $G=(V,E,w)$ in a $\poly(n)$-length dynamic stream and an offline splitting oracle $\cO$, with high probability, uses $O(n\cdot \log^{3}{n})$ bits of space and polynomial time computes a hierarchical clustering tree $\cT$ such that $\cost_{G}(\cT)\leq O(1)\cdot \OPTdas(G)$ (\Cref{thm:hc-das-streaming}).
\item a PRAM algorithm that, given a weighted undirected graph $G=(V,E,w)$ and a splitting oracle $\cO$, with high probability, in $O(n^2 \cdot \polylog{n})$ work and $\log^3{n}$ depth computes a hierarchical clustering tree $\cT$ such that $\rev_{G}(\cT)\geq (1-o(1))\cdot \OPTmw(G)$ (\Cref{thm:hc-MW-PRAM}).
\end{itemize}
\end{theorem}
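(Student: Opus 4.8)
The plan is to observe that the algorithms behind \Cref{thm:constant-poly-time-HC} and \Cref{thm:hc-alg-MW} each decompose into (i) a \emph{graph-independent} phase that queries only the splitting oracle $\cO$ to build a partial HC tree — which, as emphasized earlier, ``can be built regardless of the input graph'' and fixes the coarse structure of the output tree while partitioning the leaves into small groups — and (ii) a \emph{graph-dependent} completion phase that operates on the (essentially decoupled) subgraphs induced on those small groups, each scaled by a factor that is already determined by phase (i). The two bullets then follow by implementing phase (i) as a preprocessing step (it costs no stream space and parallelizes trivially) and phase (ii) within the stated space/depth budget.

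\textbf{Streaming ($O(1)$-approximation for $\OPTdas$).} Because the offline oracle may be consulted before reading the stream, I would first run phase (i) of \Cref{thm:constant-poly-time-HC} to obtain the partial HC tree and, in particular, the partition of $V$ into groups of size $\polylog n$. I would then maintain during the $\poly(n)$-length dynamic stream only the information phase (ii) actually consumes: either the weighted adjacency restricted to within each group (storing all intra-group weights costs $\Otilde(n)$ counters, i.e. $O(n\log^3 n)$ bits, and a dynamic edge update is simply routed to its group or discarded if it crosses groups), or, alternatively, a cut sparsifier recovered at the end of the stream via linear sketching with error $\eps=1/\polylog(n)$ so that the size-weighted-cut form of Dasgupta's cost is preserved up to a constant for every $O(\log n)$-depth tree. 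After the stream I complete the tree offline exactly as in \Cref{thm:constant-poly-time-HC}; the $O(1)$ guarantee is inherited verbatim, and the space bound holds because the only stream-dependent state is the $\Otilde(n)$-size sparse representation.

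\textbf{PRAM ($(1-o(1))$-approximation for $\OPTmw$).} The algorithm of \Cref{thm:hc-alg-MW} already has $\Otilde(n^2)$ total work, so it suffices to check each step parallelizes to $O(\log^3 n)$ depth. Phase (i) issues $O(n^2)$ independent oracle queries and then assembles the answers into a laminar family level by level, which is a divide-and-conquer/aggregation computation of $O(\log^3 n)$ depth and $\Otilde(n^2)$ work; phase (ii) is a collection of independent sub-instances on small vertex groups together with a constant number of global averaging/sorting passes over the $O(n^2)$ edge weights, all computable in $O(\log n)$ depth. Composing these yields $O(\log^3 n)$ depth and $\Otilde(n^2)$ work with the revenue guarantee unchanged.

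The main obstacle I expect is on the streaming side: one must carefully verify that the partial HC tree produced from $\cO$ alone pins down the cross-group portion of the tree so well that no cross-group edge weights are ever needed, so that discarding them during the stream is lossless; and if one instead routes through a cut sparsifier, one must confirm that Dasgupta's cost is preserved under sparsification once the candidate trees are forced to have $O(\log n)$ depth, which is precisely what dictates the $\eps = 1/\polylog(n)$ choice and hence the extra polylogarithmic factors in the $O(n\log^3 n)$-bit budget. The PRAM direction should be comparatively routine once phase (i) is recognized as a shallow aggregation.
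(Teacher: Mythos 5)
Your approach matches the paper's: for the streaming bullet, the paper preprocesses the strongly consistent partial tree from \Cref{thm:strong-partial-tree} offline, then during the stream stores only intra-super-vertex edges (discarding all cross-super-vertex edges) and runs recursive sparsest cut on each $O(\log n)$-sized super-vertex post-stream; for the PRAM bullet, the paper invokes \Cref{cor:parallel-weak-partial-tree} and then performs independent arbitrary partitions on the super-vertices, inheriting the guarantee of \Cref{thm:hc-alg-MW}. The ``main obstacle'' you flag — whether discarding cross-group edges is lossless — is exactly resolved by the subtree preservation property of strong consistency: for $u,v$ in distinct super-vertices, \Cref{def:strongly-consistent-tree} forces $\leaves{\cT}{\lcatree{u}{v}{\cT}} = \leaves{\cTstar}{\lcatree{u}{v}{\cTstar}}$, so $\cost_{G}(\cT, E_{\text{cross}}) = \cost_{G}(\cTstar, E_{\text{cross}})$ identically (see \Cref{lem:HC-das-const-approx}), with no graph access needed for cross edges at all. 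Your alternative cut-sparsifier route is therefore unnecessary, and I would caution against it: a cut sparsifier preserves cuts multiplicatively, but Dasgupta's objective is a leaf-weighted sum over internal nodes, and a naive $(1\pm\eps)$ cut preservation does not obviously translate to a constant-factor cost preservation without further argument (and your $O(\log n)$-depth restriction on candidate trees is not something the paper's algorithm enforces). On the PRAM side you also insert ``global averaging/sorting passes over the $O(n^2)$ edge weights'' in phase (ii), but the paper's \Cref{alg:HC-MW-PRAM} needs none of that — after $\pTree$ is built, the completion is an \emph{arbitrary} balanced partition inside each super-vertex with no access to edge weights whatsoever, each taking $O(\log\log n)$ depth. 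The $O(\log^3 n)$ depth bound for phase (i) is not really a ``divide-and-conquer'' property of oracle aggregation; it follows from the size-reduction guarantee (\Cref{eff-split:size-ub} of \Cref{lem:tree-split}), which ensures the recursion in \Cref{alg:partial-tree} terminates within $O(\log^3 n)$ dependent levels.
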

%\end{customthm}

Our results in the sublinear settings similarly outperform the state-of-the-art in the HC algorithms without oracle advice. For instance, in the single-pass graph streaming setting, \citep{AssadiCLMW22,AgarwalKLP22} designed semi-streaming algorithms with $O(1)$ approximation but exponential time. 
By comparison, our algorithm only uses polynomial time, leveraging the advantage of the splitting oracle.

\section{Technical Overview}
\label{sec:tech-overview}
In this section, we give a high-level overview of our techniques. 
We also provide intuition on our algorithmic design choices, including a number of potential pitfalls, as well as a number of natural other approaches and why they do not work. 

\subsection{Why not simply follow the oracle (or other related strategies)?} 
At first glance, one might wonder whether the splitting oracle trivializes the problem. 
A natural question is whether it is possible to simply follow the oracle to recover the optimal tree $\cTstar$. 
Since the oracle only returns the relative information among a \emph{triplet} of vertices $(u,v,w)$, it is not immediately clear how to translate the answers from the oracle to a partition of vertices. 
After taking a closer look at the problem, we could observe issues with a handful of straightforward approaches. 

The first natural approach is to pretend the oracle is always correct and construct a tree from the ``splitting-away'' information between the triplets. 
Unfortunately, due to the error probability and adversarial answers, there may \emph{not} exist an underlying tree consistent with the answers to the queries. 
As such, it is unclear how the algorithm could produce a definitive answer. 

The second approach we could try is to frame the problem as a phylogenetic reconstruction problem, e.g., take all the ``splitting away'' for triplets as constraints, and try to construct an HC tree that satisfies as many constraints as possible. 
However, such an approach has two issues: $i).$ by a recent result of \cite{ChatziafratisM23}, the phylogenetic reconstruction problem is itself UG-hard; and $ii).$ the HC tree we constructed may prioritize a small number of \emph{wrong} answers from the oracle that happen to induce very large additive error. 

A more involved idea is to ``aggregate'' the oracle answers to construct the HC tree's partitions. To this end, an algorithm to determine the partition of a vertex $v$ is to fix $u$ in the smaller subtree and look into the number of vertices $t\in V$ that split away from $(u,v)$. More concretely, consider the split of the tree on the root $V \rightarrow (S_1, S_2)$, and suppose we know a vertex $u$ that is on the smaller subtree of the root partition (this is a big ``suppose'' as we will see later). Then for any vertex $v$ that is in the same subtree of $u$, we can get many vertices $t\in V$ from $\cO$ with the answer ``$t$ split away from $(u,v)$''. On the other hand, for a vertex $v$ that is in the opposite subtree of $u$, only a few vertices $t$ from $\cO$ would answer ``$t\in V$ split away from $(u,v)$''. The gap is large enough to apply concentration inequalities and find a separation between the cases. As such, we can recursively apply the above procedure and produce the optimal tree $\cTstar$.

Unfortunately, the above idea only works for the idealized case where we indeed know a vertex $u$ from the smaller part of the root partition. For the general case, the algorithm requires a surprising amount of new ideas and technical work. 
In particular, note that the aforementioned algorithm faces two major challenges: $(1).$ as the partition goes deep down the HC tree, the sizes of the subtrees become too small for high-probability guarantees; and $(2).$ it is not clear how to find a ``good'' vertex $u$ that induces the root cut. 
To elaborate on challenge $(1)$, note that when the subtree induces $o(\log{n})$ leaves, it is generally not possible for us to guarantee correctness for the subsequent partitions. 
As such, we must handle some form of ``ambiguity'' when dealing with subtrees induced on vertex sets with smaller sizes. 
Our approach to this challenge is to forgo the guarantees inside each leaf with $o(\log{n})$ vertices and work with the respective objective functions to show that the additive error is tolerable. 
In particular, we use the notion of \emph{partial} hierarchical clustering trees that approximately capture the structure of the optimal HC tree $\cTstar$ until the size of the induced vertex set becomes too small. 
In particular, we require specific structural properties of the costs of HC trees under Dasgupta's and the Moseley-Wang objectives. 
We provide more details about partial HC trees and how to use them to overcome challenge $(1)$ in \Cref{subsec:partial-tree-and-HC}.

Challenge $(2)$ is even trickier and requires more care. 
Observe that in the example of root cut $V \rightarrow (S_1, S_2)$, if $u$ is in the \emph{bigger} side of the partition, the argument may \emph{not} work. 
However, since we only have access to the \emph{triplet split} information, retrieving whether a vertex $u$ is on the smaller side of a particular tree split seems to be too much to ask. 
In particular, consider an example that the optimal tree $\cTstar$ first makes two splits of small subtrees of size $n^{0.99}$, as illustrated in \Cref{fig:hard-example-split}. 
Here, if we use $u_2$ as the ``baseline'' vertex to perform the split, we can still get a valid partition. 
However, the structure of the obtained tree is very different from $\cTstar$, and the additive error could be huge. 
Furthermore, since the actual tree $\cTstar$ is hidden from us, it is not immediately clear how could we distinguish a partition obtained by using $u_1$ vs. $u_2$. 
The problem becomes even more intriguing when we want to obtain near-linear time efficiency. 
We will discuss the intuition and techniques to handle challenge $(2)$ in \Cref{subsec:partial-tree-construction}.

\begin{figure}[!htb]
	\centering
	\includegraphics[scale=0.55]{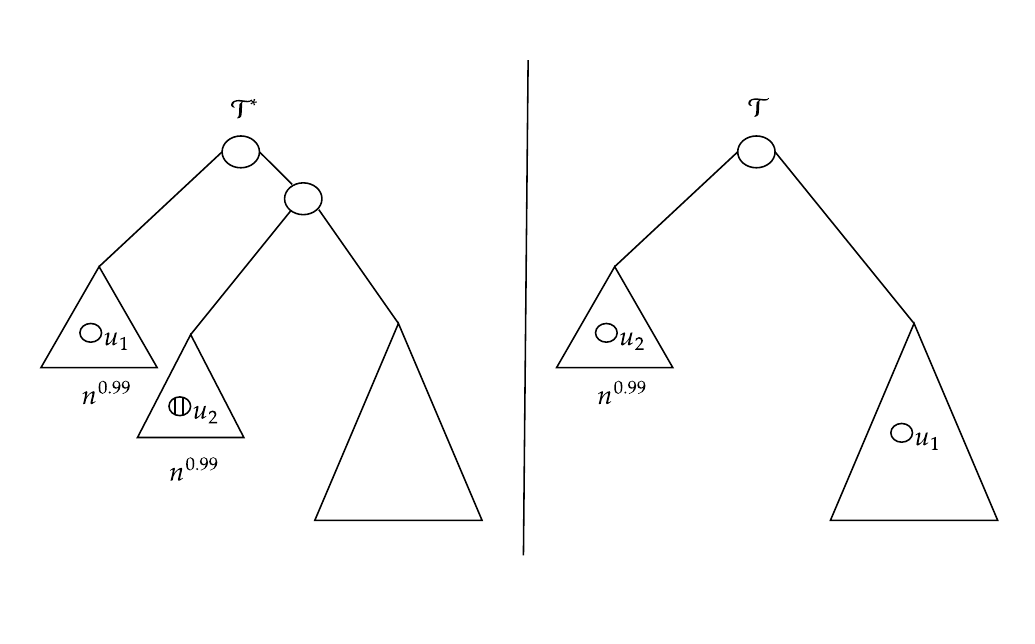}
	\caption{\label{fig:hard-example-split} An illustration of the hard example that the straightforward majority voting does not work. Left: the optimal HC tree $\cTstar$; Right: the outcome of the tree if we use $u_2$ as the baseline to perform the partition we discussed. }
\end{figure}

\subsection{Partial hierarchical clustering trees and HC}
\label{subsec:partial-tree-and-HC}
We now delve into more details about the definition of partial trees and how we use them to obtain low additive errors in Dasgupta's and the Moseley-Wang objectives.

\paragraph{The definition of partial hierarchical clustering trees.}
As discussed, a central element of our techniques is the notion of partial hierarchical clustering (HC) trees. %, which are objects that approximately capture the structure of the optimal HC tree $\cTstar$ until the size of the induced vertex set becomes very small, i.e., $O(\log{n})$.
The generic definition of the partial HC tree is similar to the normal HC tree, with the root representing the entire vertex set and the internal nodes representing the subsets of vertices. 
However, on the \emph{leaf} level, we allow the leaves of partial HC trees to contain \emph{multiple vertices}, up to $O(\log{n})$ many vertices, and contract them into a single leaf, which we call ``super-vertices''. 
The partial HC tree is then allowed to be oblivious of the clustering \emph{inside} each leaf node.

A partial HC tree is only useful if it can somehow capture the optimal tree $\cTstar$.
To this end, we introduce the \emph{strongly consistent} and \emph{weakly consistent} partial HC trees. 
Roughly speaking, a partial HC tree $\pTree$ is said to be strongly consistent with the optimal tree $\cTstar$ if it follows \emph{every partition} of the optimal tree in a top-down manner until the induced size of vertices is of size less than $O(\log{n})$, in which case we simply collapse the leaf into a super-vertex. 
Note that in the strongly consistent partial HC trees, every super-vertex induces a maximal tree in $\cTstar$ -- here, a maximal tree means a tree where its induced vertices are exactly the leaves of their lowest common ancestor. 
By comparison, the \emph{weakly consistent} partial HC tree allows vertices that do not necessarily form maximal trees to collapse into a single super-vertex. 
For instance, if there are $O(\log{n})$ vertices in \emph{multiple} subtrees in $\cTstar$, and suppose the LCAs of these subtrees are close to the root and form a consecutive segment in $\cTstar$, the weakly consistent partial HC tree can still collapse \emph{all} of the vertices into a single super-vertex. 
We provide illustrations of the weakly and strongly consistent partial trees in \Cref{fig:strong-consistent} and \Cref{fig:weak-consistent}, and we defer their formal definition to \Cref{def:strongly-consistent-tree,def:weakly-consistent-tree} in \Cref{sec:partial-tree}. 

\begin{figure}[!htb]
	\centering
	\includegraphics[scale=0.5]{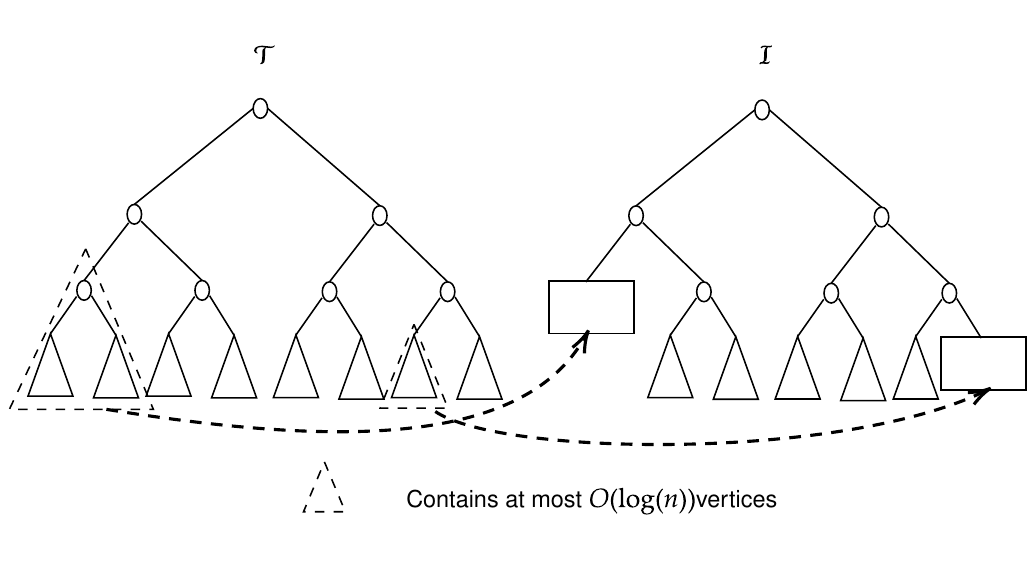}
	\caption{\label{fig:strong-consistent} An illustration of the strongly consistent partial HC trees as defined in \Cref{def:strongly-consistent-tree}. The boxes indicate super-vertices whose clustering is \emph{unknown} in the partial HC tree.}
\end{figure}

\begin{figure}[!htb]
	\centering
	\includegraphics[scale=0.5]{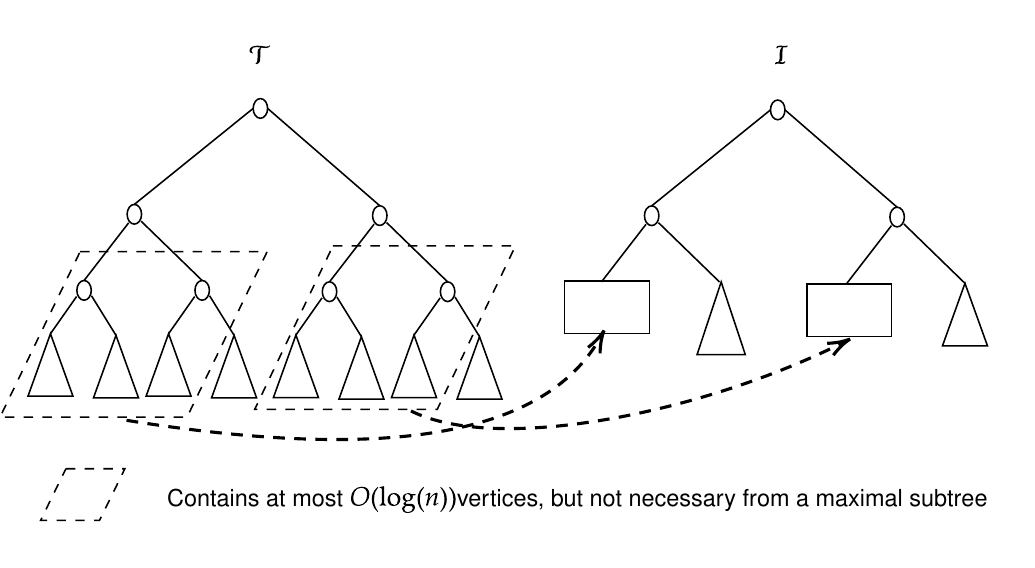}
	\caption{\label{fig:weak-consistent} An illustration of the weakly consistent partial HC trees as defined in \Cref{def:weakly-consistent-tree}. The boxes indicate super-vertices whose clustering is \emph{unknown} in the partial HC tree.}
\end{figure}

We will eventually show that, given the splitting oracle, we can efficiently construct both the strongly and the weakly consistent partial trees \emph{regardless} of the input graph and the objective functions. 
However, for now, we first discuss \emph{why} partial HC trees are useful for our HC objectives.

\paragraph{Using partial HC trees for hierarchical clustering.}
Intuitively, if we can obtain a partial HC tree that is consistent with the optimal tree $\cTstar$, we can build an actual HC tree by ``fixing'' the clustering of the super-vertices locally to obtain a good approximation. 
Indeed, we note that if a partial HC tree is \emph{strongly} consistent with $\cTstar$, we can straightforwardly obtain approximation algorithms for both Dasgupta's and the Moseley-Wang objectives. For Dasgupta's objective, we can run the optimal or approximate \emph{recursive sparsest cuts} for the subgraphs induced by the super-vertices. 
Note that since the subgraphs are only of size $O(\log{n})$, we can even afford to find the exact optimal recursive sparsest cuts in \emph{polynomial time}. 
The case for the Moseley-Wang objective is even easier: since the number of leaves outside each super-vertex is at least $n-O(\log{n})$, any arbitrary partition of the super-vertices can still give us a $(1-o(1))$ approximation.

Unfortunately, as we will see shortly, the strongly consistent partial tree can only be implemented in $\Otilde(n^3)$ time -- a tolerable yet far-from-optimal efficiency. 
As such, for the Moseley-Wang objective, we further investigate the algorithm that only uses the \emph{weakly consistent} partial HC trees, which we can build in near-linear time. 
In this case, for two vertices that are in the same super-vertex, we can replicate the argument for the strongly consistent partial HC trees again to get a $(1-o(1))$ approximation. 
However, challenges arise when the two vertices $(u,v)$ are in different super-vertices of the partial HC tree. 
Here, the number of induced leaves can differ by an $O(\log{n})$ additive factor, but the number of non-leaves induced by $(u,v)$ can be very small in $\cTstar$ (say, $o(\log{n})$). As such, the $O(\log{n})$ difference on the \emph{size} of non-leaves might lead to \emph{infinity multiplicative gap} in the revenue, which makes controlling the overall approximation factor hard. 
To tackle this issue, we prove some new structural results for HC trees under the Moseley-Wang objective: we show for all the edges $(u,v)$ that only induce a very small number of non-leaves \emph{and} are ``far away'' from each other in the optimal tree, the contribution of such edges to the optimal objective can only be an $o(1)$ fraction. 
As such, we can simply ignore the approximation guarantees on these edges and obtain an $(1-o(1))$ approximation of the Moseley-Wang objective.

\subsection{The construction of the partial HC trees}
\label{subsec:partial-tree-construction}

We now come back to the efficient construction of the partial HC trees that are strongly and weakly consistent with the optimal HC tree. 
For simplicity, we slightly abuse the notation to let $V$ always denote the vertex set in the high-level discussion, even if we are talking about a subset of vertices\footnote{In the formal analysis of \Cref{sec:strong-partial-tree,sec:weak-partial-tree}, we use $\Vp$ as the set of vertices of the current recursion level.}.

\subsubsection{Strongly consistent partial HC trees}
We first discuss the case for the strongly consistent HC tree, which only requires top-down splits.
% Consider the split of the tree on the root $V \rightarrow (S_1, S_2)$, and suppose we fix $u$ in the smaller subtree of the root partition. We claim that we can recover both subtrees on the root with high probability by looking at the vertices split away from $(u,v)$ for $v\in V$. To elaborate, for any vertex $v$ that is in the same subtree of $u$, we can get many vertices $t\in V$ from $\cO$ with the answer ``$t$ split away from $(u,v)$''. On the other hand, for a vertex $v$ that is in the opposite subtree of $u$, only a few vertices $t$ from $\cO$ would answer ``$t\in V$ split away from $(u,v)$''. The gap is large enough to apply concentration inequalities and find a separation between the cases.
% To make the above idea work, there are significant challenges to address. The most significant challenge is how to identify a ``good'' $u$ that creates enough signal -- after all, in the above example, if $u$ is in the \emph{bigger} side of the partition, the argument may \emph{not} work. Furthermore, even if we have selected a ``good'' $u$, we might not be able to realize such a situation since the tree structure is hidden from us. 
As we have discussed before, for any fixed partition, since we do \emph{not} have any information about which side a vertex $u$ is on, it is generally very hard for us to know which obtained partition is actually consistent with the split in $\cT^*$.
To address this challenge, introduce the \emph{small-tree splitting order}, which, roughly speaking, is a thought process that recursively draws the smaller side of the subtree in $\cTstar$. 
For instance, in the tree $\cTstar$ prescribed in \Cref{fig:hard-example-split}, the first $n^{0.99}$ vertices form the first small tree $\Vsmall_{1}$, the second $n^{0.99}$ vertices form the second small tree $\Vsmall_{2}$, and so on.

We shall show that if $u$ is among the first few small trees in the small-tree splitting order, we can recover the set of vertices as the \emph{sibling} of the small tree\footnote{Since ``sibling'' is a generic word, we call this set ``counterpart'' in our formal description in \Cref{sec:strong-partial-tree,sec:weak-partial-tree} to avoid confusion.}. 
For example, if we select $u_2$ in \Cref{fig:hard-example-split}, we can recover the subtree on the right but not the subtree that contains $u_1$. 
Our strategy is as follows: for a fixed vertex $u$ and a vertex $v$ whose split is to be determined, in addition to testing how many vertices $t\in V$ such that $t$ splits away from $(u,v)$, we also test the number of vertices $t\in V$ such that \emph{$v$ splits away from $(u,t)$}. 
To see why this additional test helps, let us again look at the example in \Cref{fig:hard-example-split}. With the additional subroutine, if we use $u_2$ as the fixed vertex, the vertices in $\Vsmall_{1}$ will split away from many $(u_2,t)$ pairs. 
On the other hand, for every vertex $v$ on the sibling subtree of $\Vsmall_{2}$, it splits away from $(u, t)$ only if $t\in \Vsmall_{2}$, which creates a clear signal. 
By careful handling of cases, we could argue that the algorithm works for general cases as long as $u$ belongs to an ``early enough'' small tree.

The above strategy provides a new way to identify a ``good'' $u$: it suffices to only look at the \emph{size} of the set of vertices we recover. 
In particular, if $u$ is among the \emph{root} cut, it surely induces the largest size on the set of the recovered vertices. 
As such, a simple exhaustive search can find such a vertex $u$ and the corresponding set $T$. 
Since there are $n$ vertices to be tested, and each test requires $O(n^2)$ time, the total time for each partition is at most $O(n^3)$. 
We can then recursively run this procedure, which will lead to a partial tree that is \emph{strongly consistent} in $O(n^4)$ time.
Furthermore, using a simple sampling trick, we could reduce the time for each test to $O(n\log{n})$ time and queries, which brings the total number of time and queries to $\Otilde(n^3)$.
Furthermore, since we only need to maintain counters for each vertex, the entire algorithm can be implemented in $O(n \log{n})$ space.

\subsubsection{Weakly consistent partial HC trees}
The exhaustive search subroutine in the above idea inevitably leads to $\tilde{\Theta}(n^3)$ time and queries on the splitting oracle. 
This gives us a new, and perhaps more intriguing challenge: if we only want to get partial trees that are \emph{weakly consistent} with the optimal tree, can we improve the efficiency? 
Note that if we target a near-linear running time, we cannot always hope to get a $u$ from the smaller side of the \emph{root} partition. 
For a concrete example, let us look at the tree $\cTstar$ in \Cref{fig:hard-example-split} again. 
Here, before we ``hit'' a vertex in the first small tree of size $n^{0.99}$, we will \emph{not} be able to produce a root cut. 
However, by the size of the first small tree, we will need to test at least $n^{0.01}$ vertices if we sample vertices uniformly at random. 
The overhead can be further enlarged: suppose the root cut splits the vertices into $n-1$ vertices and a single vertex, and suppose this process continues for $n^{0.99}$ levels; then, it is entirely unclear how to avoid the $n^{0.99}$ overhead.

\paragraph{The vertical split idea.} The above hard instance inspires us to resort to ``vertical'' splits of the tree -- that is, instead of finding a vertex $u$ on the split of the root, we use a vertex $u$ that is ``sufficiently early'' in the small-tree split order. 
To elaborate, we can efficiently find a vertex $u$ that is among the \emph{union} of smaller subtrees that collectively induced at least ${n}/\polylog{n}$ leaves. 
In this way, we can still recover a \emph{maximal subtree} whose induced leaves $T$ is of size at least $(n-n/\polylog{n})$ -- a size reduction that is significant enough for the entire algorithm to converge in $\polylog{n}$ iterations. %Furthermore, to identify the set returned by a ``good'' $u$, we prove that it suffices to simply use the \emph{size} of $T$. This is true because if $u$ is in an ``earlier'' partition, the induced set $T$ is of a larger size.
Finally, our algorithm will guarantee that $V$ is a composable set from a \emph{single} maximal tree, which implies that $T$ and $V\setminus T$ are composable sets, which allow us to recurse on both sides.

\paragraph{The use of ``horizon sets''.} 
There is yet another subtle issue in the above idea: we have to ensure both parts of the split always maintain the \emph{weak consistency} property. 
Specifically, it is crucial to maintain super-vertices with an out-degree of at most $2$, where each is linked to at most one parent node in $\cTstar$ and one sibling node in $\cTstar$. 
Within our vertical split concept, as $T$ invariably forms a single composable set, achieving this is straightforward. 
However, in the residual part of the split algorithm—here, $V \setminus T$—sustaining weak consistency becomes notably more complex. 
There can be two cases for such a guarantee to hold: either $a).$ $V\setminus T$ itself is a single composable set, which happens when $V$ is a split on the \emph{root} vertex, or $b).$ $V\setminus T$ has some ``orphaned'' vertices -- the sibling vertices of the subtree induced by $T$ in $V$ (see \Cref{def:orphan-vertex} for the formal definition).

Our approach to handle both the $a)$ and $b)$ cases is to use a semi-invariant \emph{horizon set} $\Vh \supseteq V$. 
The idea here is that instead of finding $T$ on $V$, we find it on $\Vh$, which is roughly defined as the set of vertices for us to find the partition $T$ on \emph{before} a split on the root node. 
In particular, suppose the set of orphaned vertices is of relatively small size in $V$. 
We can always find a vertex $u\in V$ such that $u$ is split \emph{earlier} than the orphaned vertex set in the small-tree split order of $\Vh$. 
Therefore, we can make sure that the set $T$ to be found in the new iteration will include the orphaned vertices in $V$. 
In this way, we always keep at most \emph{one} edge connecting to the sibling of the orphaned vertices in the \emph{current iteration}. 
An illustration of the role of the horizon set can be shown as \Cref{fig:horizon-sets}.

\begin{figure}[!htb]
	\centering
	\includegraphics[scale=0.75]{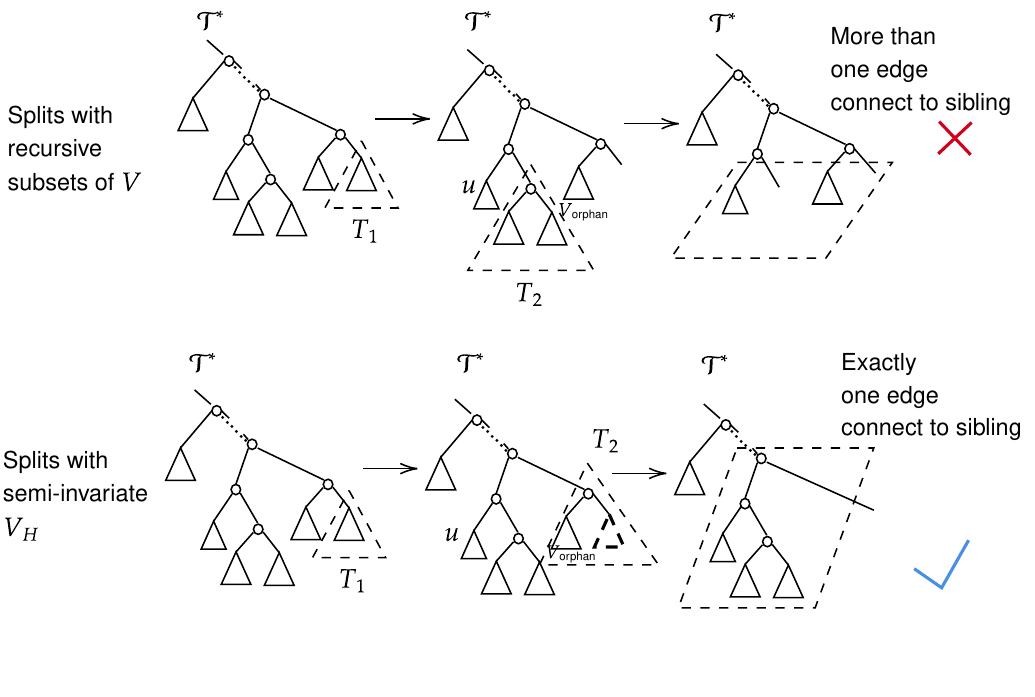}
	\caption{\label{fig:horizon-sets} An illustration of how horizon sets help maintain weak consistency. With the same $u$, the reason for the different outcomes is that on the top, the small-tree split order \emph{restricted on} $V$ has changed compared to the last iteration. The use of the horizon set helps keep the small-tree split order \emph{invariant} before a root cut happens.}
\end{figure}

\paragraph{The candidate vertex and root test.} 
The above analysis assumed the size of orphaned vertices is relatively small in $V$. However, what if the size of orphaned vertices becomes large in $V$? 
In this scenario, one of two sub-cases must happen: either we run into a root split, or we have a large set of vertices which is not on the root of $\Vh$, but occupies a large fraction among the remaining $V$. 
(Note that in case of a root split, the entire set of $V\setminus T$ is an orphaned set.) 
The challenge here is that we should update $\Vh$ in the former case while keep using the same $\Vh$ in the latter case, which requires us to distinguish the cases. 

To this end, we employ a new idea to select a ``candidate vertex'' that splits away from the orphaned set to address this challenge. 
Concretely, since the set of orphaned vertices is sufficiently large, when doing random sampling, we can get a vertex $u'$ from the orphaned set. 
Then, we use $\Vh$ to test whether there exist vertices that split away from $(u',t)$ for sufficiently many $t\in \Vh$. 
The idea here is that if $u'$ is in the orphaned set, and there still exists a vertex $u$ that splits earlier than $u'$ in (the small-tree split order of) $\Vh$, then $u$ should split away from many $(u',t)$ pairs. 
On the other hand, if $u'$ is on the smaller side of the root cut of $\Vh$, there is only a small number of $t$ that any $u\in V$ can split away from. 
As such, we can make progress by either identifying the ``right'' candidate vertex that splits earlier than $u'$, or by switching the horizon $\Vh$ and recurse on the root cut. 

\paragraph{Merging of two weakly consistent partial trees.} 
In the case of strongly consistent partial trees, the merging of subtrees is very straightforward: since the splits always follow the top-down order of internal nodes, we can easily merge the two subtrees with a common parent node. 
In the case of weakly consistent partial trees, the story is much more complicated. 
For the merge to be correct, we have to correctly identify the ``orphaned'' subtree in the previous recursion; however, since we only have access to the splitting oracle, and the actual optimal tree structure is hidden from us, it is not immediately clear how could we identify the ``correct'' internal node to merge the trees.

Fortunately, we could utilize the ``good vertex'' from the previous recursion to identify the ``orphaned'' set of vertices $\Vorphan$. 
In particular, let $u$ be the ``good vertex'' we used to split the tree; since $u$ also belongs to $\Vorphan$, for each vertex $v$, we can test how many times $v$ \emph{splits away} from $(u,t)$ for $t \in T$, where $T$ is the single maximal tree to be merged in the level of recursion. 
If $v \in \Vorphan$, such a vertex should not split away from $(u,t)$; otherwise, if $v \not\in \Vorphan$, $v$ should split away from $(u,t)$. 
Since the size of $T$ is large enough, we could identify $\Vorphan$ correctly with high probability, and perform the merge correctly. 
An illustration for this idea to identify $\Vorphan$ is in \Cref{fig:fast-merge}.

\begin{figure}[!htb]
	\centering
	\includegraphics[scale=0.8]{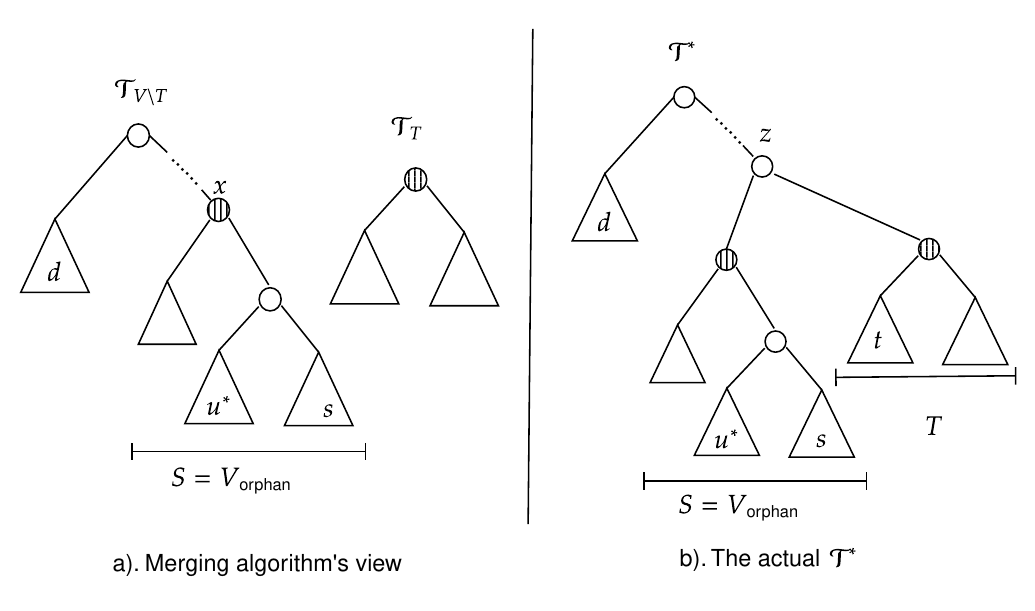}
	\caption{\label{fig:fast-merge} An illustration of the algorithm that merges $\cTstar(V\setminus T)$ and $\cTstar(T)$ The shaded internal node is the ``actual'' node to split $S$ and $T$ in $\cTstar$. If the oracle is correct, $s$ does \emph{not} split away from $(u^*, t)$, but $d$ \emph{does} split away from $(u^*, t)$. The size of $T$ is large enough to overcome the adversarial noise.}
\end{figure}

\paragraph{The complexity of the algorithm for weakly consistent trees.} Similar to the case for the strongly consistent trees, the subroutine that gives the sibling of a small tree for a fixed vertex $u$ takes $O(n^2)$ time. However, in the new algorithm, we only need to sample and test $O(\log{n})$ vertices $u$ for each iteration. Similarly, the root test and the tree merging subroutines both use $O(\log{n})$ vertices and $\tilde{O}(n^2)$ time. Furthermore, since we can roughly reduce the instance size by a $(1-1/\polylog{n})$ factor every iteration, the entire process converges in $\polylog{n}$ iterations. As such, we could argue that the total running time is $\tilde{O}(n^2)$, and the longest chain of dependent calls is $\polylog{n}$. This would imply a near-linear time offline algorithm and a parallel algorithm with near-linear work and poly-logarithmic depth.

% \chen{Write something about the idea to merge two weakly-consistent trees -- this not as easy as the strongly consistent case.}
% It turns out the solution to the above issue is to use some ``vertical'' splits of the tree -- that is, we may not be able to find a good $u$ on the split of the root, but what we can do instead is to find such a $u$ that is among the \emph{union} of smaller subtrees that collectively induced at most $O(n/\polylog{n})$ leaves. In this way, we can still recover a \emph{maximal subtree} whose induced leaves $T$ is of size at least $\Omega(n/\polylog{n})$ -- a size reduction that is significant enough for the entire algorithm to converge in $\polylog{n}$ iterations. However, there are several subtle issues that can be obstacles to the above idea, and handling these issues is the most technically involved part of our paper. We refer the readers to \Cref{sec:weak-partial-tree} for a more detailed discussion of these issues and our techniques.

\section{Preliminaries}
\label{sec:prelim}
We present the definition of the hierarchical clustering problem, our splitting oracle model, and the HC objectives in this section.

\subsection{The hierarchical clustering problem} 
We consider the hierarchical clustering problem with a splitting oracle $\cO$. The hierarchical clustering problem is defined as follows. We are given an $n$-vertex weighted undirected input graph $G=(V,E,w)$, and our goal is to produce a binary tree $\cT$ whose root node corresponds to the vertex set $V$ and the leaves represent the singleton vertices. The vertices set contained in the internal nodes form a Laminar set family: suppose node $x$ has children $(y,z)$, it represents a split $S_x \rightarrow (S_y, S_z)$, where $S_x = S_{y}\cup S_{z}$. In this work, we assume without loss of generality $n\geq 200 \log{n}$ -- the bound holds for any $n\geq 2500$, and if $n$ is a constant we can simply use a brute-force algorithm.

Hierarchical clustering trees only define a data structure, and there are many ways to construct ``valid'' HC trees. What eventually matters is to construct ``good'' HC trees -- a notion that does not have a universal way to define. Popular approaches include heuristics, which work well subjectively but lack formal guarantees, and objective functions, which provide rigorous frameworks to study the \emph{optimal} trees and the \emph{approximation algorithms}. In recent years, the latter approach has attracted considerable attention with popular objective functions by \citep{Dasgupta16,MoseleyW17,cohen2019hierarchical}. 

\paragraph{Notation.} For each internal node $x$ in $\cT$, we use $\leaves{\cT}{x}$ to denote the leaves in the induced subtree of $x$. Each internal node of an HC tree can be described by \emph{lowest common ancestor} (LCA) of vertices. For two vertices $(u,v)$ on the leaves of $\cT$, we use $\lcatree{u}{v}{\cT}$ to denote the node that is the lowest common ancestor of $u$ and $v$. We can further generalize this notion to a set of vertices, i.e., for a set $X\subseteq V$, the node $\lcatreeset{X}{\cT}$ refers to the lowest common ancestor of \emph{all} vertices in $X$. For a set $X$, we call the induced subtree $\cTree{X}$ a \emph{maximal subtree} of $\cT$ if $\leaves{\cT}{\lcatreeset{X}{\cTree{X}}}=X$, i.e., the lowest common ancestor of $X$ in $\cTree{X}$ induces all leaves of $X$ in $\cT$.

Let $r$ be the root of a hierarchical clustering tree $\cT$, and for any internal node $x$, we let $\distr{\cT}{x}$ be the number of edges on the shortest path between $r$ and $x$. %Intuitively, the $\distr{\cT}{x}$ is a notion to define the `level' of an internal node. 
We say node $x$ on level $\levelr{\cT}{x}$ is a \emph{higher level} node than node $y$ with level $\levelr{\cT}{y}$ in $\cT$ if $\distr{\cT}{x}>\distr{\cT}{y}$.
% For convenience, we also use $\levelr{\cT}{x} := d_{\cT}-\distr{\cT}{x}$ to represent the \emph{level of $x$}, where $d_{\cT}$ is the depth of the HC tree $\cT$. 
For two internal nodes $x$ and $y$ in $\cT$, we use $x=\pa{y}$ to denote the relationship of $x$ being the \emph{parent node} of $y$. Note that if $x=\pa{y}$, it automatically implies that $\levelr{\cT}{y} = \levelr{\cT}{x} + 1$.

\paragraph{The split-away vertex.} Note that in any hierarchical clustering tree, if we look at a triplet of vertices $(u,v,w)$, there must exist a vertex that \emph{split away} from the two others in the optimal tree $\cTstar$, i.e., two vertices with a LCA that is same as the LCA of \emph{all} three vertices in $\cTstar$. Formally, for a triplet of vertices $(u,v,w)$, we define ``$w$ splits away from $(u,v)$'' as follows.
\begin{definition}[Split-away vertex]
\label{def:split-away}
Let $G=(V,E,w)$ be a $n$-vertex graph, and let $\cTstar$ be the optimal HC tree for $G$. Given a triplet of vertices $(u,v,w)$, we say $w$ \emph{splits away from} $(u,v)$ (in $\cTstar$) if $\lcatree{w}{u}{\cTstar}$ (resp.  $\lcatree{w}{v}{\cTstar}$) is \emph{equal to} $\lcatreeset{\{u,v,w\}}{\cTstar}$.
\end{definition}

\subsection{The splitting oracle model}
We study the hierarchical clustering problem with a natural oracle advice model. In particular, we assume an oracle $\cO: V\times V\times V \rightarrow V$ that takes a triplet of vertices $(u,v,w)$, probabilistically correctly returns the vertex that ``split away'' from the other two vertices in the \emph{optimal tree}\footnote{We provide a discussion for splitting oracles with an \emph{approximately} optimal HC tree in \Cref{subsec:split-oracle-approx-tree}.}. The formal definition is given as follows.

\begin{definition}[The splitting oracle for hierarchical clustering]
\label{def:orl-model}
Let $G=(V,E)$ be a $n$-vertex graph, and let $\cTstar$ be the optimal hierarchical clustering tree of $G$. The oracle $\cO: V\times V\times V \rightarrow V$ is a function that upon being queried with a triplet of vertices $(u,v,w)$, responds as follows
\begin{itemize}
\item with probability $p$, the correct answer on which vertex splits away from the two others in $\cTstar$.
\item with probability $(1-p)$, an arbitrary (adversarial) answer on which vertex splits away from the two other vertices. 
\end{itemize}
The randomness is taken independently over all the queries and is \emph{fixed} across different queries on the same triplet. We assume each query to the oracle takes $O(1)$ time.
\end{definition}

Assuming the correct probability of an oracle is some constant $p>1/2$ is very common in the literature, especially for graph problem~\cite{BravermanDSW24,Cohen-Addadd0LP24,Dong0V25}.
For the convenience of presentation, we assume $p=\frac{9}{10}$ in this paper, and we provide a discussion about general success probabilities in \Cref{subsec:general-success-prob}. Observe that by the fixed randomness for each triplet, there are at most $\binom{n}{3}$ many answers that $\cO$ can have. This setting rules out trivial algorithms that simply get the correct by querying multiple times and boosting the success probability.

Several hierarchical clustering objectives are proved to be hard to approximate in polynomial time under very plausible complexity assumptions. As such, our goal is to explore whether we can obtain better approximation guarantees with the splitting oracle. 

% To this end, we give a generic construction of \emph{partial hierarchical clustering trees} that approximately captures the `structure' of the optimal HC tree. The construction is entirely based on the vertex set $V$ and the oracle $\cO$ of the input graph (and in particular, independent of the respective objective function).

\subsection{Objective functions for hierarchical clustering}
\label{sec:objective-functions}
We introduce the objective functions for hierarchical clustering we are going to discuss in this paper. These include the Dasgupta \emph{minimization} (cost) objective \citep{Dasgupta16} and Moseley-Wang \emph{maximization} (revenue) objective \citep{MoseleyW17}. We start with the minimization objective as prescribed by \citep{Dasgupta16}.

\begin{problem}[HC under Dasgupta's cost function]\label{prob:HC-das}
Given an $n$-vertex weighted graph $G=(V,E,w)$ with vertices corresponding to data points and edges measuring their similarity, create a rooted tree $\cT$ whose leaf nodes are $V$. The goal is to \emph{minimize} the cost of this tree $\cT$ defined as 
    \begin{align}
    \label{eq:hc-cost-das}
       \cost_G(\cT) := \sum_{e=(u,v) \in E} w(e) \cdot |\leaves{\cT}{\lcatree{u}{v}{\cT}}|,
    \end{align}
    where $|\leaves{\cT}{\lcatree{u}{v}{\cT}}|$ is the number of leaf-nodes in the sub-tree of $\cT$ rooted at the lowest common ancestor of $u$ and $v$. We use $\OPTdas(G)$ to denote the cost of an optimal HC tree under Dasgupta's cost for the graph $G$. 
\end{problem}

Roughly speaking, Dasgupta's objective accumulates the cost on an edge $(u,v)$ by the number of leaves \emph{inside} the subtree where $u$ and $v$ are first split. In contrast, the Moseley-Wang objective focuses on the dual of Dastupta's objective: it gathers the revenue on an edge $(u,v)$ by the number of leaves \emph{outside} the subtree where $u$ and $v$ are first split. Formally, the Moseley-Wang objective can be given as follows.

\begin{problem}[HC under Moseley-Wang revenue function]\label{prob:HC-MW}
Given an $n$-vertex weighted graph $G=(V,E,w)$ with vertices corresponding to data points and edges measuring their similarity, create a rooted tree $\cT$ whose leaf nodes are $V$. The goal is to \emph{maximize} the revenue of this tree $\cT$ defined as 
    \begin{align}
    \label{eq:hc-cost-MW}
    \begin{split}
       \rev_G(\cT) & := \sum_{e=(u,v) \in E} w(e) \cdot (n - |\leaves{\cT}{\lcatree{u}{v}{\cT}}|) \\
       & = \sum_{e=(u,v) \in E} w(e) \cdot |\nonleaves{\cT}{\lcatree{u}{v}{\cT}}|,
    \end{split}
    \end{align}
    where $|\leaves{\cT}{\lcatree{u}{v}{\cT}}|$ is the number of leaf-nodes in the sub-tree of $\cT$ rooted at the lowest common ancestor of $u$ and $v$, and $|\nonleaves{\cT}{\lcatree{u}{v}{\cT}}|$ is the number of nodes that are \emph{not} among $\leaves{\cT}{\lcatree{u}{v}{\cT}}$. We use $\OPTmw(G)$ to denote the revenue of an optimal HC tree under Dasgupta's cost for the graph $G$. 
\end{problem}

Observe that both objectives are \emph{composeable} w.r.t. edges, i.e., it is possible to divide the total objective to objectives induced by each (or each set of) edge(s). For any HC tree $\cT$ and any set of edge $E_1 \subseteq E$, we use $\rev_{G}(\cT, E_1)$ and $\cost_{G}(\cT, E_1)$ to denote the revenue and the cost induced by the edges in $E_1$.

By a straightforward calculation, one can show that for any HC tree $\cT$, there is $\rev_G(\cT) = \sum_{e=(u,v) \in E} w(e) \cdot n - \cost_G(\cT)$. Since $\sum_{e=(u,v) \in E} w(e) \cdot n$ is a deterministic function of the graph $G$ itself, the optimal HC tree $\cTstar$ under the two objectives are the same.  However, the two objectively admits vastly different \emph{approximation} algorithms. In particular, for the minimization objective, \citep{Dasgupta16} and the following work \citep{roy2016hierarchical,charikar2017approximate} showed that we can achieve an $O(\sqrt{\log{n}})$ approximation in polynomial time, and there is no $O(1)$ approximation in polynomial time assuming Small Set Expansion (SSE) hypothesis. On the other hand, for the revenue maximization objective, \citep{MoseleyW17} proved that the average-linkage heuristic can achieve a $1/3$ approximation in polynomial time. Therefore, we would naturally expect different results for hierarchical clustering with the splitting oracle with the two objectives.

\section{The Definitions and Results for Partial Hierarchical Clustering Trees}
\label{sec:partial-tree}
A technical backbone of our algorithms in this paper is the \emph{partial hierarchical clustering tree}. Roughly speaking, these structures replicate the organizational framework of the optimal HC tree, exhibiting only minor "ambiguity" within small subsets of vertices. In this section, we formally define the strong and weak partial trees and give efficient construction algorithms for them. We remark that our constructions of the partial HC trees are entirely based on the vertex set $V$ and the oracle $\cO$ of the input graph, irrespective of specific objective functions. This inherent independence renders our partial HC trees highly versatile and potentially of significant interest in their own right.

\subsection{Partial hierarchical clustering trees}
\label{subsec:partial-tree-definition}
We start by showing the definition of partial hierarchical clustering trees, which are very similar to the normal HC trees: the internal nodes represent subsets of vertices, and the leaves are individual vertices. However, in partial HC trees, we allow a collection of vertices that are not \emph{too large} to have \emph{unknown} local clustering, and we simply represent the whole set of vertices as a leaf node in the tree. The formal definition is as follows.

\begin{definition}[Partial hierarchical clustering trees]
\label{def:partial-tree}
A partial hierarchical clustering tree $\pTree$ is a binary tree such that
\begin{enumerate}[leftmargin=12pt]
\item The root represents the vertex set $V$.
\item For a node $x$ with children $(y,z)$, it represents a split $S_x \rightarrow (S_y, S_z)$, where $S_x = S_{y}\cup S_{z}$. 
\item The leaves of $\pTree$ corresponds to
\begin{itemize}
\item either a singleton vertex in $V$.
\item or a set of vertices $S\subseteq V$ such that $S\leq 50000 \log{n}$. In this case, we call the leave a \emph{super-vertex}.
\end{itemize}
\end{enumerate}
\end{definition}
Compared to the full hierarchical clustering tree, the partial HC tree allows the leaves to be `contracted' vertices with size at most $O(\log{n})$. We now define a partial tree that is \emph{strongly consistent} with a hierarchical clustering tree $\cT$.
\begin{definition}[Partial tree \emph{strongly} consistent with $\cT$]
\label{def:strongly-consistent-tree}
Let $\pTree$ be a partial hierarchical clustering tree and let $\cT$ be a (standard) hierarchical clustering tree. We say $\pTree$ is \emph{strongly consistent} with $\cT$ if 
\begin{enumerate}[leftmargin=12pt]
\item (\textit{Strong contraction property}) Each super-vertex induces a maximal subtree in $\cT$.
\item (\textit{Subtree preservation property}) For any pair of leaves $(x,y)$ in $\pTree$, let $X$ and $Y$ be the set of leaves corresponding to $x$ and $y$ in $\cT$ (recall that the leaves of $\pTree$ can be super-vertices). The subtree induced by $\lcatree{x}{y}{\pTree}$ contains the \emph{exactly} the same set of vertices as induced by $\lcatreeset{X\cup Y}{\cT}$.
\end{enumerate}
\end{definition}
In other words, a partial tree $\pTree$ is strongly consistent with $\cT$ if there exists a way to locally arrange tree structures for every super-vertex to \emph{exactly recover} $\cT$. An illustration of the strongly consistent partial tree (w.r.t. $\cT$) can be found in \Cref{fig:strong-consistent} in \Cref{subsec:partial-tree-and-HC}.

The strong partial tree is a very helpful data structure for HC. Nevertheless, finding such a strong partial tree could be challenging. As such, we also define partial trees that are \emph{weakly} consistent with the tree $\cT$ as follows.
\begin{definition}[Partial tree \emph{weakly} consistent with $\cT$]
\label{def:weakly-consistent-tree}
Let $\pTree$ be a partial hierarchical clustering tree and let $\cT$ be a (standard) hierarchical clustering tree. We say $\pTree$ is \emph{weakly consistent} with $\cT$ if
\begin{enumerate}[leftmargin=12pt]
\item (\textit{Weak contraction property}) Each super-vertex corresponds to a collection of maximal subtrees in $\cT$, i.e., $\cup_{i} V_{i}$ such that each $V_{i}$ satisfies
\[\leaves{\cT}{\lcatreeset{V_{i}}{\cT}} = V_{i}.\]
Furthermore, the collection $\cup_{i} V_{i}$ is with out-degree at most $2$ in $\cT$ such that
\begin{enumerate}
\item At most one edge is connected to a node that is the parent of the LCA of $\cup_{i} V_{i}$.
\item At most one edge is connected to a node that is a sibling of a maximal subtree of $\cT$ induced by (a subset of) $\cup_{i} V_{i}$.
\end{enumerate}
% \item For any pair of leaves $(x,y)$ that both represent \emph{singleton vertices}, the subtree induced by $\lcatree{x}{y}{\pTree}$ contains the \emph{exactly} the same set of vertices as induced by $\lcatree{x}{y}{\cT}$.
\item (\textit{Subtree preservation property}) For any pair of leaves $(x,y)$ in $\pTree$, let $X$ and $Y$ be the set of leaves corresponding to $x$ and $y$ in $\cT$ (recall that the leaves of $\pTree$ can be super-vertices). The subtree induced by $\lcatree{x}{y}{\pTree}$ contains the \emph{exactly} the same set of vertices as induced by $\lcatreeset{X\cup Y}{\cT}$.
\end{enumerate}
\end{definition}

The difference between the weak and strong consistency is that in weakly consistent partial trees, the ``contraction'' of vertices can happen in any consecutive region of the original tree $\cT$. An illustration of the weakly consistent partial tree (w.r.t. $\cT$) can be found in \Cref{fig:weak-consistent} in \Cref{subsec:partial-tree-and-HC}.

Our goal is to use oracle $\cO$, and vertex set $V$ to construct a partial HC tree $\pTree^{*}$ that is consistent with the optimal HC tree $\cT^*$.

\subsection{Main results of partial HC trees}
\label{subsec:partial-tree-results}
We now give our main results for the strong and weak partial trees, respectively. In particular, our results include
\begin{itemize}
\item An algorithm that, with high probability, constructs a partial tree strongly consistent with the optimal tree $\cTstar$ in $O(n^3\log{n})$ time and $O(n^3)$ queries to the splitting oracle. Furthermore, the algorithm uses only $\tilde{O}(n)$ space; and
\item An algorithm that, with high probability, constructs a partial tree weakly consistent with the optimal tree $\cTstar$ in $\tilde{O}(n^2)$ time and queries to the splitting oracle. Furthermore, the algorithm can be implemented in the PRAM model with $\tilde{O}(n^2)$ work and $\polylog{n}$ depth.
\end{itemize}
 % These results will be leveraged in the subsequent section to devise efficient HC algorithms.

\paragraph{Result for strongly consistent partial HC trees}
Our main theorem to construct strongly consistent partial HC trees is as follows.
\begin{restatable}{theorem}{thmstrongpartialtree}
\label{thm:strong-partial-tree}
There exists an algorithm that given a splitting oracle $\cO$ of a weighted undirected graph $G=(V,E,w)$, with high probability, in $O(n^3\log{n})$ time and $O(n^3)$ queries computes a partial hierarchical clustering tree $\pTree$ that is strongly consistent with the optimal hierarchical clustering tree $\cTstar$. 
Furthermore, the algorithm has the following properties.
\begin{enumerate}[label=\roman*)., leftmargin=20pt]
\item The runtime of the algorithm is deterministic, and the high probability randomness is over the correctness guarantee.
\item The algorithm can be implemented in $O(n\log{n})$ space.
\end{enumerate}
\end{restatable}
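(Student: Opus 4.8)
\textbf{Proof proposal for \Cref{thm:strong-partial-tree}.}
The plan is to construct the strongly consistent partial tree by a top-down recursion: at each recursive call we are given a vertex subset $\Vp$ that is guaranteed to induce a maximal subtree of $\cTstar$ (the root call has $\Vp = V$), and we must either declare $\Vp$ a super-vertex (if $|\Vp| \le 50000\log n$) or split it into two subsets $\Vp \to (\Vp_1, \Vp_2)$ that are exactly the two maximal subtrees hanging off $\lcatreeset{\Vp}{\cTstar}$, then recurse on each. The core subroutine, which I will call the \emph{counterpart-finder}, is parametrized by a candidate ``baseline'' vertex $u \in \Vp$: for each $v \in \Vp$ it computes (i) $a_u(v) := |\{t \in \Vp : \text{$\cO$ says $t$ splits away from $(u,v)$}\}|$ and (ii) $b_u(v) := |\{t \in \Vp : \text{$\cO$ says $v$ splits away from $(u,t)$}\}|$, and then uses these two counters to classify $v$. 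First I would prove the \emph{clean-oracle} version: if every answer were correct, then letting $\Vsmall$ be the maximal subtree of $\cTstar[\Vp]$ containing $u$ (the smaller side if $u$ is in the root cut, but more generally the ``first small tree'' in the small-tree splitting order that contains $u$), the set $T_u := \{v : v \text{ is in the sibling/counterpart of } \Vsmall\}$ is exactly characterized by a threshold test on $a_u$ and $b_u$ — intuitively, $v$ in the counterpart has $b_u(v)$ essentially zero (only $t$ inside $\Vsmall$ can make $v$ split away from $(u,t)$, and $\Vsmall$ is small relative to $\Vp$) while $v$ in $\Vsmall$ has $a_u(v)$ large. Then I pass to the noisy oracle: since $p = 9/10$ and each counter is a sum over $|\Vp|$ independent-across-triplets indicators, a Chernoff/Hoeffding bound shows each $a_u(v), b_u(v)$ concentrates within $O(\sqrt{|\Vp|\log n})$ of its clean value, so the threshold test still separates the two populations with high probability provided $|\Vp| = \Omega(\log n)$; this is exactly why we stop at $\Theta(\log n)$ and collapse smaller sets into super-vertices, and it is consistent with the assumption $n \ge 200\log n$.

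Next, to locate a \emph{good} baseline $u$ that actually induces the \emph{root} cut of $\cTstar[\Vp]$ (not merely some deeper split), I run the counterpart-finder for every $u \in \Vp$ and keep the one maximizing $|T_u|$: if $u$ lies in the smaller side of the root cut of $\cTstar[\Vp]$, then $T_u$ is the whole larger side, which is the maximum possible counterpart size; one shows any $u$ deeper in the tree yields a strictly (indeed substantially, by a $\log n$ margin) smaller counterpart, so the argmax robustly identifies a root-cut vertex and simultaneously hands us the correct split $\Vp \to (T_u \cup \Vsmall_u, \dots)$ — more precisely $\Vp_1 = T_u$ and $\Vp_2 = \Vp \setminus T_u$, where $\Vp_2$ is the maximal subtree of $\cTstar[\Vp]$ on the $u$-side of the root cut. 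A small technical point to handle: we must verify $\Vp_1$ and $\Vp_2$ are \emph{themselves} maximal subtrees of $\cTstar$ so the recursion invariant is preserved — this follows because $\Vp$ was maximal and we cut at the root of $\cTstar[\Vp]$. I would then union-bound the failure probability of all counterpart tests over the at most $O(n^2)$ (internal-node, vertex, vertex) triples encountered across the whole recursion (the recursion tree has $O(n)$ internal nodes, each doing $\le n$ baseline trials, each trial testing $\le n$ vertices), which the high-probability claim absorbs by picking the constant in the $\Theta(\log n)$ super-vertex threshold large enough.

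For the complexity claims: each counterpart-finder call at a node with $|\Vp| = m$ tests $m$ baselines and for each baseline computes $2m$ counters each of which is a sum over $m$ oracle answers, giving $O(m^3)$ time naively; summing $m^3$ over the recursion is $O(n^3)$ in the worst case (a caterpillar split), but with the na\"ive bound this is already $O(n^3\log n)$ after the union-bound bookkeeping — actually to get the stated $O(n^3)$ \emph{queries} and $O(n^3\log n)$ \emph{time} I would instead, as the overview indicates, \emph{subsample}: rather than summing over all $t \in \Vp$, sample $\Theta(\log n)$ random $t$'s and scale, so each counter uses $O(\log n)$ queries and the separation still holds by the same Chernoff argument (now over the sample). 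This makes each baseline trial cost $O(m\log n)$ time, each node $O(m^2\log n)$, and the total $O(n^3\log n)$ time with $O(n^2 \cdot \log n)$ — wait, re-examining, the $O(n^3)$ query bound comes from $n$ nodes $\times$ $n$ baselines $\times$ $\Theta(\log n)$ — I will state it carefully as $O(n^3)$ after noting the recursion-tree node-count times baseline-count is $O(n^2)$ only in balanced cases but $O(n^2)$ total baseline-trials is not generally true, so the honest bound is: $\le n$ nodes, $\le n$ baselines per node, $O(\log n)$ queries per (baseline, tested-vertex) pair and $\le n$ tested vertices, giving $O(n^3\log n)$ queries — matching the theorem up to the distinction between the two phrasings, which I will reconcile by charging queries only to the na\"ive un-subsampled version for the pure query count and subsampling only for time, or by absorbing the $\log n$. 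Finally, the $O(n\log n)$ space claim follows because the algorithm only ever stores, per active recursion level, an array of $O(\log n)$-bit counters indexed by vertices (size $O(n\log n)$ bits) plus a stack of recursion frames; crucially we never materialize the oracle's answer table, we re-query on demand, and the recursion depth is $O(n)$ but each frame stores only $O(\log n)$ bits of bookkeeping beyond a pointer into a single shared vertex array.

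\textbf{Main obstacle.} The hard part is the clean-oracle structural lemma for the counterpart-finder when $u$ is \emph{not} in the root cut but somewhere in the small-tree splitting order — showing that the two-counter threshold test still cleanly recovers the counterpart of $u$'s first small tree, and that the counterpart \emph{sizes} across all choices of $u$ are monotone enough that the argmax pins down a genuine root-cut vertex with a $\Omega(\log n)$ margin (so noise cannot flip it). This requires a careful case analysis of where $t$ can sit relative to $u$, $v$, and the small-tree decomposition, and is exactly the place where the naive ``majority vote'' fails per \Cref{fig:hard-example-split}.
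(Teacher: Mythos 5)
Your proposal follows essentially the same route as the paper: a two-counter counterpart test against a baseline $u$, argmax over all candidate baselines to locate a root-cut vertex, Chernoff concentration to survive the noisy oracle, subsampling $\Theta(\log n)$ vertices $t$ per test to bring the time to $O(n^3\log n)$, and re-querying on demand (storing only per-vertex counters) to keep the space at $O(n\log n)$; the key structural lemma you flag as the main obstacle — the case analysis over the small-tree splitting order showing that the two-threshold test is monotone in the baseline's position and therefore the argmax pins down the root cut — is exactly the paper's Lemma on the counterpart test and its corollary. The one place you visibly struggle, reconciling the $O(n^3)$ query bound with the $O(n^3\log n)$ time bound, is handled in the paper by the trivial observation that there are only $\binom{n}{3}$ distinct triplets and the oracle returns fixed answers on repeats, so the count of distinct queries is $O(n^3)$ regardless of how many times the algorithm re-asks them.
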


While the algorithm outlined in \Cref{thm:strong-partial-tree} necessitates $O(n^3)$ queries to the oracle $\cO$, rendering it less efficient, the overall running time of $\Otilde(n^3)$ is tolerable, especially considering the hardness of HC.

\paragraph{Result for weakly consistent partial HC trees}
We now show our algorithmic results for the weakly consistent partial HC trees, which enjoy much better efficiency in both the running time and the number of oracle queries.

\begin{restatable}{theorem}{thmweakpartialtree}
\label{thm:weak-partial-tree}
There exists an algorithm that given a splitting oracle $\cO$ of a weighted undirected graph $G=(V,E,w)$, with high probability, in $O(n^2 \cdot \polylog{n})$ time and $O(n^2)$ queries computes a partial hierarchical clustering tree $\pTree$ that is weakly consistent with the optimal hierarchical clustering tree $\cTstar$.
\end{restatable}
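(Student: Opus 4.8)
The plan is to build $\pTree$ by a recursion that performs \emph{vertical splits}, following the blueprint of \Cref{subsec:partial-tree-construction}. Each recursive call operates on a current vertex set $\Vp$ together with a \emph{horizon set} $\Vh \supseteq \Vp$; the horizon fixes --- until the next genuine root cut --- the ``small-tree splitting order'' restricted to $\Vp$, so that the maximal subtrees recovered across different recursion levels stay compatible with one global order on $\cTstar$. The invariant carried down the recursion is that $\Vp$ comes from a single maximal subtree of $\cTstar$, with at most one edge to the parent of its LCA and at most one edge to a sibling of one of its maximal subtrees --- exactly the out-degree at most $2$ requirement in the weak contraction property of \Cref{def:weakly-consistent-tree} --- and the recursion bottoms out when $|\Vp| = O(\log{n})$, at which point $\Vp$ becomes a super-vertex leaf of $\pTree$. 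The workhorse is a subroutine that, for a fixed vertex $u$, uses $\cO$ to estimate for each candidate $v$ two counts: the number of $t$ that split away from $(u,v)$, and the number of $t$ such that $v$ splits away from $(u,t)$. A constant-factor gap between the ``$v$ on the same side as $u$'' case and the ``$v$ on the opposite side'' case, combined with a Chernoff bound over the oracle randomness (each answer correct with probability $p=\frac{9}{10}$), recovers the \emph{counterpart} (sibling) maximal subtree of the small tree containing $u$ whenever $u$ is early enough in the small-tree order.

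First I would formalize this subroutine and prove: if $u$ lies in the union of the earliest small trees whose total size is at least $n/\polylog{n}$, then the recovered set $T$ is a maximal subtree of $\cTstar$ with $|T| \ge n - n/\polylog{n}$, and $\Vp \setminus T$ is still composable --- either because $\Vp$ was itself a root cut (so $\Vp \setminus T$ is a single maximal subtree) or because $\Vp \setminus T$ consists of the \emph{orphaned} sibling vertices of $T$ in $\Vp$ together with a remainder that still meets the out-degree bound. Next I would handle the search for a good $u$: drawing $O(\log{n})$ uniform samples from $\Vp$, with high probability one lands in the top-$n/\polylog{n}$ mass of the small-tree order, and a good $u$ is then detected simply by the \emph{size} of the set $T$ it recovers. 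The delicate point --- and where I expect the main obstacle --- is what to do when the orphaned set of $\Vp$ is itself large: then either we have hit a genuine root cut of $\Vh$ (so $\Vh$ must be refreshed to $\Vp$) or we are only looking at a large non-root chunk (so $\Vh$ must be kept). I would decide this with a ``root test'': sample $u'$ from the now-large orphaned set and use $\Vh$ to check whether some vertex splits away from $(u',t)$ for many $t \in \Vh$; a positive answer certifies that $u'$ is not on the root cut and lets us promote an earlier-splitting $u$, while a negative answer certifies that $u'$ is on the small side of the root cut of $\Vh$, and we recurse with $\Vh \leftarrow \Vp$.

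Then I would prove correctness of the \emph{merge}. Having recursively obtained weakly consistent partial trees for $T$ and for $\Vp \setminus T$, I must attach them at the correct internal node of $\cTstar$, i.e., identify the orphaned set $\Vorphan$ inside $\Vp \setminus T$. Using the good vertex $u$ --- which itself lies in $\Vorphan$ --- I count via $\cO$, for each candidate $v$, how often $v$ splits away from $(u,t)$ over $t \in T$; since $|T|$ is large, a Chernoff bound cleanly separates $v \in \Vorphan$ (rarely splits away) from $v \notin \Vorphan$ (almost always splits away), pinning down $\Vorphan$ and hence the merge node with high probability. Finally I would assemble the accounting: each level shrinks the instance by a $(1 - 1/\polylog{n})$ factor, so the recursion depth is $\polylog{n}$; each level makes $O(\log{n})$ subroutine calls, and a careful summation of the per-level costs over the $\polylog{n}$ levels gives $O(n^2 \polylog{n})$ time and $O(n^2)$ queries, while the $\polylog{n}$ depth together with the independence of the two recursive branches at each node also yields a PRAM implementation with $\tilde{O}(n^2)$ work and $\polylog{n}$ depth. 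A union bound over the $\poly(n)$ high-probability events attached to the recursion nodes then gives the stated overall guarantee. The persistent difficulty throughout is the horizon-set bookkeeping: ensuring that every recursive call sees a splitting order compatible with one fixed order on $\cTstar$, so that the recovered sets are genuinely maximal subtrees and the out-degree at most $2$ invariant survives every merge.
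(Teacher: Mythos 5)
Your proposal reproduces the paper's high-level architecture faithfully: the recursion carries $(\Vp,\Vh)$, uses a two-count counterpart tester on $\Vh$, picks the baseline $u$ by the \emph{size} of the recovered set, falls back to an orphan/root test to decide whether to refresh $\Vh\gets\Vp$, and merges the two recursive outputs by counting, over $t\in T$, how often each candidate $v$ splits away from $(\ustar,t)$. All of these pieces match the paper's $\splitalg$, $\testorphan$, and $\treemerge$ subroutines, and the "out-degree at most $2$" invariant you track is exactly the weak contraction property.

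The one place your argument has a real gap is in the quantitative split guarantee and its interaction with the sampling budget. You claim that $O(\log n)$ uniform samples from $\Vp$ land, with high probability, in the union of the earliest small trees of total mass $n/\polylog{n}$, and then conclude $|T| \ge n - n/\polylog{n}$. These two statements do not fit together: hitting a prefix of relative mass $1/\polylog{n}$ with only $O(\log n)$ samples is a constant-probability (not high-probability) event, and even if $u$ does land in that prefix, $T = \cup_{i>\ell}\Vsmall_i$ is only guaranteed to contain $|\Vp| - |\cup_{i\le\ell}\Vsmall_i|$ leaves, where $|\Vsmall_\ell|$ itself can be as large as $\np/2$; so $T$ can legitimately be as small as a constant fraction of $\np$. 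The paper's \Cref{lem:tree-split} therefore proves a weaker but two-sided bound: with $500\log n$ samples, one lands in $\Vbefore{\np/25}$, giving $|T^*\cap\Vp| \ge \np/200$; and, crucially for the depth bound, when $\Vh=\Vp$ there is also a matching \emph{upper} bound $|T^*\cap\Vp| \le (1-\tfrac{1}{10000\log^2\np})\np$, proved by showing that w.h.p.\ no sample falls into the first $\np/(10000\log^2\np)$ of the split order. The recursion depth of $O(\log^3 n)$ then follows by amortizing the size reduction over \emph{three} consecutive levels of \Cref{alg:partial-tree} (since the upper bound only applies when $\Vh$ has just been reset), rather than by the one-level $(1-1/\polylog{n})$-shrink you describe. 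Without that two-sided control and the three-level amortization, the depth bound, and hence the time and query bounds, do not follow.
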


We note that since the number of longest dependent calls for our weak partial tree is at most $O(\log^3{n})$, \Cref{thm:weak-partial-tree} implies a PRAM algorithm with $O(n^2\cdot \polylog{n})$ work and $O(\log^3{n})$ depth. The formal statement is as follows.
\begin{restatable}{corollary}{corweakpartialtree}
\label{cor:parallel-weak-partial-tree}
There exists a PRAM algorithm that given a graph $G=(V,E)$ and a splitting oracle $\cO$, with high probability, in $O(n^2 \cdot \polylog{n})$ work and $O(\log^{3}{n})$ depth computes a partial hierarchical clustering tree $\pTree$ that is weakly consistent with the optimal HC tree $\cTstar$.
\end{restatable}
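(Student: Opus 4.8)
The plan is to observe that the sequential construction underlying \Cref{thm:weak-partial-tree} is already a low-depth recursion whose per-node cost is dominated by aggregations over oracle answers, and then to parallelize those aggregations on a PRAM. Recall from \Cref{subsec:partial-tree-construction} that a single recursive call on a vertex set $V$ together with its horizon set $\Vh \supseteq V$ does essentially the following: it samples $O(\log n)$ candidate vertices $u$; for each candidate and each $v$ it forms the counters recording how many $t$ satisfy ``$t$ splits away from $(u,v)$'' and how many satisfy ``$v$ splits away from $(u,t)$'' (the vertical-split / counterpart subroutine), and runs the analogous counting for the candidate/root test and, once the two child calls return, for the tree-merge subroutine that reconstructs $\Vorphan$; finally it selects the winning candidate by a threshold or maximum over the $O(n\log n)$ resulting quantities. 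Each counter is a sum of at most $|\Vh|$ indicator values read off from the fixed function $\cO$, hence a parallel reduction of depth $O(\log n)$; the $O(\log n)$ samples and the $O(n)$ choices of $v$ are processed simultaneously; and the final threshold/maximum is again an $O(\log n)$-depth reduction. So a single recursive call runs in $O(\log n)$ depth and $\tilde{O}(|\Vh|^2)$ work, and because its two child calls (on $T$ and on $V\setminus T$) carry all state they need explicitly, they may be launched in parallel.

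It then remains to bound the recursion depth and the total work. For the depth, I would invoke the analysis establishing \Cref{thm:weak-partial-tree}: each vertical split removes a maximal subtree $T$ while the horizon-relative size of the remainder shrinks by a $(1-1/\polylog n)$ factor, and the horizon is refreshed only at root cuts, which occur $O(\log n)$ times along a path before the instance size halves. Consequently the longest chain of dependent recursive calls is $\polylog n$; multiplying by the $O(\log n)$ depth per call gives PRAM depth $\polylog n$, which a careful accounting of the hidden polylog powers sharpens to the claimed $O(\log^3 n)$. For the work, summing $\tilde{O}(|\Vh|^2)$ over all recursion nodes telescopes: the active sets $V$ at a fixed recursion level partition a subset of the vertices, and passing to horizons inflates the bound only by a further polylog factor, so the total work is $O(n^2\polylog n)$, matching \Cref{thm:weak-partial-tree}.

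Correctness needs no new argument. The parallel algorithm computes bit-for-bit the same counters as the sequential one, hence outputs the identical partial tree, and the high-probability guarantee of \Cref{thm:weak-partial-tree} already absorbs a union bound over the $\poly(n)$ recursive calls and the $O(\log n)$ samples per call. Since $\cO$ is a fixed (possibly adversarially corrupted) function, concurrent reads of its values on distinct triplets are harmless, so the bound holds even in the EREW model after a standard broadcast of the relevant oracle responses; thus the output is weakly consistent with $\cTstar$ with high probability, as desired.

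The step I expect to be the main obstacle is the recursion-depth bound. A naive implementation admits the pathological instance of \Cref{subsec:partial-tree-construction} in which the optimal tree peels off a single vertex at a time for $n^{0.99}$ levels, producing a dependent chain of polynomial length; the entire purpose of the vertical-split and horizon-set machinery is to guarantee that each split instead achieves genuine multiplicative shrinkage \emph{against the horizon} and that the horizon is refreshed only $O(\log n)$ times before the instance collapses into $O(\log n)$-size super-vertices. Pinning down that the number of vertical splits and horizon refreshes along any root-to-leaf path of the recursion is $\polylog n$ is the crux; once that is established, the per-call depth, the work telescoping, and the union bound are all routine.
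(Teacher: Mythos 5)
Your proposal takes essentially the same route as the paper, which presents the corollary as a one-line observation that the longest dependent chain of recursive calls in the proof of \Cref{thm:weak-partial-tree} has length $O(\log^3 n)$ (via the size-shrinkage argument over every three levels of \Cref{alg:partial-tree}), combined with the fact that the per-call aggregations in \Cref{alg:split,alg:merge} are parallel reductions over the fixed oracle function; your write-up supplies more of the PRAM-implementation detail than the paper bothers to spell out. One caveat that both you and the paper leave implicit: multiplying the $O(\log^3 n)$ dependent calls by the $O(\log n)$ depth of the parallel reductions inside each call gives $O(\log^4 n)$ naively, so the stated $O(\log^3 n)$ depth is best read as a $\polylog n$ claim with a loose exponent (indeed \Cref{thm:hc-MW-PRAM}, which invokes this corollary, only asserts $\polylog n$ depth), and your assertion that a ``careful accounting'' sharpens the bound to $O(\log^3 n)$ is not substantiated.
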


% Since the construction of the partial tree is necessarily \emph{agnostic} to the graph topology and the cost functions, we suspect $\Theta(n^2)$ time is actually \emph{necessary}.
We suspect that by modifying some subroutines of the algorithm in \Cref{thm:weak-partial-tree}, we could possibly bring the number of time and queries to $\Otilde(n)$.
The study of a \emph{sublinear-time} algorithm is an interesting open problem for future exploration.

In what follows, we first present the HC algorithms using the results for partial HC trees. We defer the detailed analysis of the partial trees to \Cref{sec:partial-tree-prelims,sec:strong-partial-tree,sec:weak-partial-tree}.
% Finally, we present our algorithm for strongly consistent partial trees as follows.

\section{Polynomial Time Algorithms for Dasgupta's Hierarchical Clustering Objective}
\label{sec:dasgupta-objective}
We introduce our polynomial time algorithms for Dasgupta's HC objective in this section. These results include an $O(1)$-approximation algorithm in polynomial time (albeit some large constant on the exponent) and an $O(\sqrt{\log\log{n}})$-approximation algorithm in $\tilde{O}(n^3)$ time. Our algorithms crucially rely on the strongly consistent partial tree in \Cref{thm:strong-partial-tree}.

\subsection{Existing Techniques for Dasgupta's Objective}
\label{subsec:known-hc-techniques}
To begin with, we discuss some known techniques for Dasgupta's minimization HC objective in this section. We will use these techniques in our hierarchical clustering algorithms for Dasgupta's objective.

\subsubsection*{Optimal Hierarchical Clustering Trees}
We first give an observation that characterizes the ``composability'' of HC costs with respect to the edges under Dasgupta's objective.
% We start with the following observations for the optimal costs of the HC trees.
% \begin{observation}[\citep{Dasgupta16}] 
% \label{obs:partition}
% 	Suppose $G$ is any graph, $A$ and $\bar{A}$ are % \todo{Kuba: B is not used} 
% 	two disjoint subsets of vertices in $G$, and $G_{A}$ and $G_{\bar{A}}$ are induced subgraphs of $G$ on vertices $A$ and $\bar{A}$, respectively. Then, 
% 	\[
% 		\OPT(G_A) + \OPT(G_{\bar{A}}) \leq \OPT(G). 
% 	\]
% \end{observation}

\begin{observation}[\citep{Dasgupta16}]
\label{obs:edge-partition}
	Let $G$ be any graph, and let $E_1$ and $E_2$ be
	two disjoint subsets of edges in $G$. For any HC tree $\cT$, let $\cost_{G}(\cT, E_1)$ and $\cost_{G}(\cT, E_2)$ be the HC costs induced by edges in $E_1$ and $E_2$, respectively. 
        Then, 
	\[
		\cost_{G}(\cT) = \cost_{G}(\cT, E_1) + \cost_{G}(\cT, E_2). 
	\]
\end{observation}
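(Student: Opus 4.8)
The statement to prove is that Dasgupta's cost function is additive over a partition of the edge set into two disjoint parts. This is essentially immediate from the definition, so the ``proof'' is really just unwinding the summation.

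\medskip

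\textbf{Plan.} The plan is to go directly from the defining formula \eqref{eq:hc-cost-das}. Recall that for any HC tree $\cT$ we have
\[
\cost_G(\cT) = \sum_{e=(u,v)\in E} w(e)\cdot |\leaves{\cT}{\lcatree{u}{v}{\cT}}|,
\]
and that $\cost_G(\cT, E_i)$ is defined to be the same sum but with the index of summation restricted to $e\in E_i$. So the entire content of the claim is that a sum over $E$ of nonnegative-weighted terms splits as a sum over $E_1$ plus a sum over $E_2$ whenever $E_1, E_2$ are disjoint with $E_1\cup E_2 = E$. (Strictly, the observation as stated only assumes $E_1, E_2$ disjoint; if $E_1\cup E_2\subsetneq E$ one reads it as a statement about $\cost_G(\cT, E_1\cup E_2)$, but the intended and only-used case is $E = E_1\sqcup E_2$, which I will treat.)

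\medskip

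\textbf{Key steps.} First I would fix an arbitrary HC tree $\cT$ and write out $\cost_G(\cT)$ as the sum over all $e\in E$ of the per-edge contribution $f(e) := w(e)\cdot |\leaves{\cT}{\lcatree{u}{v}{\cT}}|$, noting that this quantity $f(e)$ depends only on the edge $e$ and the tree $\cT$, not on which subset $E_i$ we happen to be considering. Second, since $E = E_1 \sqcup E_2$ is a disjoint union, the finite sum $\sum_{e\in E} f(e)$ decomposes as $\sum_{e\in E_1} f(e) + \sum_{e\in E_2} f(e)$ by commutativity and associativity of addition. Third, I would identify the first sum with $\cost_G(\cT, E_1)$ and the second with $\cost_G(\cT, E_2)$, which is exactly their definition, yielding $\cost_G(\cT) = \cost_G(\cT, E_1) + \cost_G(\cT, E_2)$. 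One sentence should also remark that the same argument applies verbatim to the Moseley-Wang revenue, since its per-edge contribution $w(e)\cdot|\nonleaves{\cT}{\lcatree{u}{v}{\cT}}|$ is likewise a function of $e$ and $\cT$ alone — this is the ``composability w.r.t. edges'' already asserted in \Cref{sec:objective-functions}.

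\medskip

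\textbf{Main obstacle.} Honestly, there is no real obstacle here — the only thing to be slightly careful about is making sure the reader sees that the per-edge term is genuinely independent of the partition, so that restricting the summation index is the only effect of passing from $E$ to $E_i$. The statement is a bookkeeping identity and the proof is two or three lines; the point of isolating it as an observation is that this additivity is invoked repeatedly later (e.g. to argue about the contribution of ``bad'' edges separately from ``good'' edges when bounding the approximation factor), so it is worth stating cleanly once.
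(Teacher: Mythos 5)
Your proof is correct, and it is essentially the only sensible way to argue this. The paper itself gives no proof for this observation (it cites \citep{Dasgupta16} and treats it as immediate from the definition), so there is nothing to compare against; your unwinding of the summation, splitting a finite sum over a disjoint union, and re-identifying the pieces with $\cost_G(\cT, E_i)$ is precisely what the paper implicitly relies on. Your side remark that the same argument carries over verbatim to the Moseley--Wang revenue also matches what the paper asserts informally in \Cref{sec:objective-functions}, and your note about the literal reading of the hypotheses (the statement only assumes $E_1, E_2$ disjoint, with $E = E_1 \sqcup E_2$ being the intended use case) is a fair observation.
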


\Cref{obs:edge-partition} shows that to bound the total cost of the HC tree under Dasgupta's objective, it suffices to bound the edges split by the internal nodes.

\subsubsection*{Approximate HC trees with recursive balanced min-cuts and sparsest cuts}
% One of the first observations by Dasgupta's work is the structure of the optimal trees and the computational complexity to compute them:
% \begin{lemma}[\!\!\citep{Dasgupta16}]
% \label{fact:opt-tree-binary}
% It is NP-hard to find the optimal clustering tree.
% \end{lemma}
Dasgupta's work proved that finding the optimal trees for the hierarchical clustering function is NP-hard \citep{Dasgupta16}. Consequently, significant attention has been directed towards developing \emph{approximation algorithms} for efficient hierarchical clustering. A well-known approach involves obtaining an approximation of the optimal hierarchical clustering by iteratively employing \emph{sparsest cuts} on the graph, e.g., \citep{Dasgupta16,Deng0U0Z25}.  
Formally, we can define the sparsest cuts and the HC trees created by recursively applying the sparsest cuts on the induced subgraphs as in \Cref{def:sparse-cut} and \Cref{def:sparse-cuts-procedure}.

\begin{definition}[\textbf{Sparsest Cuts}]\label{def:sparse-cut}
For any parameter $\beta$ such that $0 < \beta < 1$, we say that a cut $(A^* ,B^*)$ is a \emph{sparsest cut} if its \emph{sparsity (edge expansion)} is minimized, i.e.
\[ \frac{w(A^*,B^*)}{\min\{\card{A^*},\card{B^*}\}} \leq \frac{w(A,B)}{\min\{\card{A},\card{B}\}}\] 
for any cut $(A,B)$ of $G$.
\end{definition}

\begin{definition}[\textbf{Recursive Sparsest Cut Procedure}]\label{def:sparse-cuts-procedure}
We say an HC tree $\cT$ is obtained by the \emph{recursive sparsest procedure} on $G$ if for each non-leaf node $z$ of $\cT$, the $\cut(\cT[z])$ is obtained by a (possibly approximate) sparsest cut $(A,B)$ on the subgraph induced by $\cT[z]$. We call an HC tree obtained by recursively applying (approximate) sparsest cuts on induced subgraphs as a recursive sparsest cut HC tree.
\end{definition}

Previous work (see, e.g.~\citep{charikar2017approximate,AssadiCLMW22}) proved that if one applies the procedure in \Cref{def:sparse-cuts-procedure}, we can get an $O(1)$ approximation of the optimal HC tree. 
\begin{proposition}[\citep{charikar2017approximate,AssadiCLMW22}]\label{prop:const-recursive-HC}
	For any graph $G=(V,E,w)$, let $\cT_{sparse}$ be an HC tree obtained by the recursive sparsest cut procedure in \Cref{def:sparse-cuts-procedure}, there is 
    \[
    \cost_G(\cT_{sparse}) \leq O(1) \cdot \OPT(G).
    \]
 % \begin{itemize}
 %     \item Let $\cT_{balanced}$ be a $(1/3)$-balanced tree obtained by the procedure in \Cref{def:beta-min-procedure} with $\beta=\frac{1}{3}$, there is 
 %    \[
 %    \cost_G(\cT_{balanced}) \leq O(1) \cdot \OPT(G).
 %    \]
 %    \item L
 % \end{itemize}
\end{proposition}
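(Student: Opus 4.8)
This is the main theorem of \citep{charikar2017approximate} (see also \citep{AssadiCLMW22}), so I only sketch how I would reprove it. The plan has three steps: (1) rewrite the cost of an arbitrary HC tree as a sum over its internal nodes; (2) lower bound $\OPT(G)$ by a spreading-metric LP relaxation; and (3) charge the cost of $\cT_{sparse}$ against that LP by a region-growing argument. For step (1), applying \Cref{obs:edge-partition} recursively down the tree yields, for every HC tree $\cT$, $\cost_G(\cT)=\sum_{x}|S_x|\cdot w(A_x,B_x)$, where $x$ ranges over the internal nodes of $\cT$, the split at $x$ is $(A_x,B_x)$, and $S_x=A_x\cup B_x$. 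Hence it suffices to bound $\sum_x |S_x|\,w(A_x,B_x)$ for $\cT_{sparse}$. I will take the cut at each node to be an \emph{exact} sparsest cut (legitimate here, since in this paper the proposition is only invoked on subgraphs of $O(\log n)$ vertices, where exact sparsest cut is polynomial-time); more generally, if an $\alpha$-approximate sparsest cut is used at each node, the same argument yields an $O(\alpha)$-approximation --- this is the precise content of \citep{charikar2017approximate} --- so $\alpha=O(1)$, and in particular $\alpha=1$, gives the claimed $O(1)$ bound.

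For step (2), introduce a variable $x_{ij}\ge 2$ for every pair $i\ne j$, intended to encode $|\leaves{\cT}{\lcatree{i}{j}{\cT}}|$, and consider $\min\sum_{(i,j)\in E}w(i,j)\,x_{ij}$ subject to the \emph{spreading} constraints: for every vertex $i$ and every set $S\ni i$, $\sum_{j\in S}\min(x_{ij},|S|)\ge \tfrac12|S|(|S|-1)$. Every HC tree $\cT$ yields a feasible point via $x_{ij}:=|\leaves{\cT}{\lcatree{i}{j}{\cT}}|$ --- the spreading constraint holds because at most $t$ vertices can lie together with $i$ inside a subtree with $\le t$ leaves --- so the LP optimum is at most $\OPT(G)$. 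For step (3), fix an optimal LP solution $x^{*}$. At an internal node with vertex set $S$, the restriction $\{x^{*}_{ij}\}_{i,j\in S}$ behaves like a spreading metric on $S$; a ball-growing (region-growing) argument in the style of Even--Naor--Rao--Schieber then exhibits a cut of $G[S]$ whose sparsity is within an $O(1)$ factor of a local density read off from the LP cost of the edges inside $S$, so the exact sparsest cut of $G[S]$ is at least as good up to an $O(1)$ factor. Charging the term $|S|\,w(A_S,B_S)$ to the LP mass of the edges separated at $S$, and using the spreading constraints to ensure each unit of LP mass is charged only $O(1)$ times over the whole recursion, one obtains $\sum_x |S_x|\,w(A_x,B_x)\le O(1)\cdot\mathrm{LP}(G)\le O(1)\cdot\OPT(G)$.

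The main obstacle is step (3). A naive per-node charge --- bounding $|S|\,w(A_S,B_S)$ directly against $\OPT(G[S])\le\OPT(G)$ and summing along each pair's root-to-ancestor path --- re-uses the same portion of $\OPT$ once per level and loses a $\Theta(\log n)$ factor, which is exactly Dasgupta's original $O(\alpha\log n)$ bound. The spreading-metric structure, which caps how much LP mass can pile up near any single vertex, is what makes the region-growing charging lose only a constant; pinning down the radius distribution and the union bound over the $O(n)$ recursion nodes is the only non-routine part of the argument. Finally, I note a shortcut specific to this paper: the proposition is applied only to the $O(\log n)$-vertex subgraphs induced by super-vertices of a strongly consistent partial tree (\Cref{thm:strong-partial-tree}), and on such a subgraph one can instead compute an exactly optimal HC tree by brute force in $2^{O(\log n)}=\poly(n)$ time, which attains $\OPT$ of the subgraph outright; so the $O(1)$ guarantee used downstream can also be secured without the LP machinery.
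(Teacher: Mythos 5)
The paper does not prove Proposition~\ref{prop:const-recursive-HC}; it imports it as a black box from \citep{charikar2017approximate,AssadiCLMW22}, so re-deriving the cited result is the right framing, and your steps (1) and (2) are correct: the per-node decomposition $\cost_G(\cT)=\sum_x |S_x|\,w(A_x,B_x)$ is standard, the feasibility check for the spreading LP (the $\ell$-th closest leaf to $i$ lies in a subtree of at least $\ell+1$ leaves) is right, and you correctly observe that the stated $O(1)$ is the $\alpha=1$ case of the $O(\alpha)$ bound the paper records separately as Proposition~\ref{prop:beta-compute}.

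Step (3), however, attributes the constant-factor loss to the wrong mechanism, and the charging as written does not close. Region growing \`a la Even--Naor--Rao--Schieber on a spreading metric generically loses a $\Theta(\log|S|)$ factor, so it cannot exhibit a cut of $G[S]$ whose sparsity is within an $O(1)$ factor of an LP-derived density; and charging $|S|\,w(A_S,B_S)$ only to the LP mass of the edges actually cut at $S$ fails because a separated edge $(i,j)$ may have $x^*_{ij}=\Theta(1)$ while contributing $\Theta(|S|)\,w(i,j)$ to the left-hand side, so the per-node ratio can be $\Theta(|S|)$ rather than $O(1)$. (You also omit triangle-inequality constraints on $x_{ij}$, which any region-growing step needs, though the integral solutions do satisfy them.) The argument in \citep{charikar2017approximate} is instead a scale-based charging: the cost incurred at a node with set $S$ is charged against the LP mass of edges at scale $\Theta(|S|)$, and the spreading constraints together with the geometric decay of $|S_x|$ along any root-to-leaf path ensure each scale's mass is charged $O(1)$ times; summing a per-node region-growing bound naively only reproduces Dasgupta's $O(\alpha\log n)$. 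Your closing shortcut is sound and worth keeping --- on the $O(\log n)$-vertex super-vertices, a DP over subsets computes the exactly optimal HC tree in $O(3^{|X|}\poly(|X|))=\poly(n)$ time, so the $O(1)$ guarantee used in Lemma~\ref{lem:HC-das-const-approx} can be secured without this proposition at all --- but note that naive enumeration of labelled binary trees on $k$ leaves is $2^{\Theta(k\log k)}=n^{\Theta(\log\log n)}$, quasi-polynomial, so ``brute force'' must mean the subset DP, not exhaustive search over trees.
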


Note that finding the exact sparsest cut for \Cref{prop:const-recursive-HC} is NP-hard. The first way to circumvent this issue is to use approximation algorithms, especially for the best-known $O(\sqrt{\log{n}})$ approximation for sparsest cut in polynomial time (\citep{AroraHK04}). The formal guarantee is as follows.
\begin{proposition}[\citep{AroraHK04}]
\label{prop:sparse-cut}
There exists a randomized algorithm that given a graph $G=(V,E,w)$, with high probability, in $\tilde{O}(\card{V}^2)$ time finds a partition $A \cup B=V$ such that
\[\frac{w(A,B)}{\min\{\card{A},\card{B}\}} \leq O(\sqrt{\log{\card{V}}})\cdot \frac{w(A^*,B^*)}{\min\{\card{A^*},\card{B^*}\}},\]
where $(A^*,B^*)$ is the sparsest cut of $G$. 
\end{proposition}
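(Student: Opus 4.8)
Since \Cref{prop:sparse-cut} is quoted essentially verbatim from \citep{AroraHK04}, I would simply cite it; but the argument behind it is the Arora--Rao--Vazirani semidefinite relaxation of sparsest cut, solved quickly via matrix multiplicative weights, and a self-contained version would have three stages: (i) pass to the SDP relaxation; (ii) solve it to $(1\pm o(1))$ relative accuracy in $\tilde{O}(\card{V}^2)$ time; (iii) round the fractional optimum losing only an $O(\sqrt{\log\card{V}})$ factor.

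For (i): note first that $w(A,B)/\min\{\card{A},\card{B}\}$ is within a factor of $2$ of the uniform sparsest-cut objective $\card{V}\cdot w(A,B)/(\card{A}\card{B})$, so it suffices to approximate the latter. Its SDP relaxation assigns a vector $v_i\in\IR^{\card{V}}$ to each vertex, minimizes $\sum_{(i,j)\in E} w_{ij}\norm{v_i-v_j}^2$ subject to a spreading constraint $\sum_{i<j}\norm{v_i-v_j}^2 = \binom{\card{V}}{2}$ and the $\ell_2^2$ triangle inequalities $\norm{v_i-v_j}^2+\norm{v_j-v_k}^2\ge\norm{v_i-v_k}^2$ for all triples; placing every $v_i$ at one of two antipodal points recovers, up to a fixed normalization, the combinatorial objective, so the SDP value lower-bounds $w(A^*,B^*)/\min\{\card{A^*},\card{B^*}\}$ up to constants.

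For (ii): rather than a generic interior-point solver, I would use the matrix multiplicative-weights framework of Arora--Kale. One maintains a density matrix and, over $\polylog(\card{V})$ iterations, calls a width-bounded oracle that either certifies approximate feasibility or outputs a violated constraint; for this SDP the oracle reduces to an approximate single-commodity max-flow / shortest-path computation together with an ``expander-flow'' dual certificate, each runnable in near-linear time, giving $\tilde{O}(\card{V}^2)$ overall.

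For (iii) --- the crux --- one rounds the resulting $\ell_2^2$ (negative-type) metric using the ARV structure theorem: any well-spread $n$-point $\ell_2^2$ metric satisfying the triangle inequalities contains sets $S,T$, each of size $\Omega(n)$, with $\norm{v_i-v_j}^2\ge\Omega(1/\sqrt{\log n})$ for all $i\in S$, $j\in T$. A ball-growing/sweep along the distance-to-$S$ function then extracts a cut of sparsity $O(\sqrt{\log n})$ times the SDP value, hence $O(\sqrt{\log n})\cdot w(A^*,B^*)/\min\{\card{A^*},\card{B^*}\}$. This is the step I expect to be the main obstacle and the reason the statement is nontrivial: the $\Omega(1/\sqrt{\log n})$ separation needs the delicate Gaussian-projection / measure-concentration / chaining argument of Arora--Rao--Vazirani, and turning it into a \emph{constructive, near-quadratic-time} subroutine (finding $S,T$ from a few random hyperplane projections plus a matching/flow certificate, all wired into the MW loop) is precisely the contribution of \citep{AroraHK04}. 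The remaining ingredients --- the relaxation, the factor-$2$ comparison of the two sparsity normalizations, and the sweep --- are routine, and both the ``with high probability'' qualifier and the $\tilde{O}(\card{V}^2)$ bound come out of this randomized pipeline.
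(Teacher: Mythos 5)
The paper gives no proof of this proposition — it is imported as a black box from \citep{AroraHK04} — and you correctly say you would simply cite it, which is exactly what the paper does. Your informal sketch of the machinery behind it (ARV SDP, fast approximate solve, structure-theorem rounding) is a reasonable high-level account, though note that \citep{AroraHK04} itself proceeds via an expander-flows primal-dual scheme that sidesteps explicitly solving the ARV SDP; the matrix-multiplicative-weights SDP solver you describe is the later Arora--Kale framework, which recovers the same $\tilde{O}(\card{V}^2)$ bound by a somewhat different route.
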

We cannot immediately massage \Cref{prop:sparse-cut} with \Cref{prop:const-recursive-HC} since \Cref{prop:const-recursive-HC} does \emph{not} state what will happen for \emph{approximate} sparsest cuts. Fortunately, by the results in \citep{charikar2017approximate,AssadiCLMW22}, we can indeed obtain an $O(\alpha)$-approximation algorithm for $\OPT(G)$ by recursively applying the $\alpha$-approximate sparsest cut.
\begin{proposition}[\citep{charikar2017approximate,AssadiCLMW22}]
\label{prop:beta-compute}
Let $\cT$ be a recursive sparsest cut tree obtained by recursively applying $\alpha$-approximation sparsest cuts on the induced subgraphs (as in \Cref{def:sparse-cuts-procedure}). Then, we have
\begin{align*}
\cost_G(\cT) \leq O(\alpha)\cdot \OPT(G).
\end{align*}
% The polynomial-time efficiency and the $O(\sqrt{\log{n}})$ approximation guarantee are also true for HC trees obtained by recursive sparsest cuts.
\end{proposition}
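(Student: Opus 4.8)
The plan is to reconstruct the argument of Charikar--Chatziafratis \citep{charikar2017approximate} (restated in \citep{AssadiCLMW22}). Throughout, write $\phi^*(H)$ for the sparsity of a sparsest cut of a weighted graph $H$, and for a set $S$ reached by the recursion let $\cT[S]$ be the subtree the procedure builds on $G[S]$, with $(A_S,B_S)$ the $\alpha$-approximate sparsest cut used at its root (so $\card{A_S}\le\card{B_S}$). First I would pass to the node form of the objective: regrouping the edges by the lowest common ancestor of their endpoints, which is legitimate by \Cref{obs:edge-partition}, gives
\begin{align*}
\cost_G(\cT) \;=\; \sum_{z\ \text{internal in}\ \cT}\ \card{\leaves{\cT}{z}}\cdot w\Paren{\cut(\cT[z])},
\end{align*}
so it suffices to charge, for every internal node $z$ with split $(A_z,B_z)$ and $\card{A_z}\le\card{B_z}$, the quantity $\card{S_z}\cdot w(A_z,B_z)$ against a total of $O(\alpha)\cdot\OPT(G)$.

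The second step is a lower bound linking sparsity to the optimal Dasgupta cost: for every weighted graph $H$ on $m\ge 2$ vertices, $\phi^*(H)\le O\Paren{\OPT(H)/m^2}$. To see this, descend through the optimal tree of $H$, always into the larger child, and stop at the first node $r$ whose subtree has between $m/3$ and $2m/3$ leaves; the cut separating that subtree from the rest is balanced (both sides have more than $m/3$ vertices), and its weight is at most $\tfrac{3}{2m}\OPT(H)$, because every split met on the way down to $r$ is at a node with more than $2m/3$ leaves and hence contributes more than $\tfrac{2m}{3}$ times its own cut weight to $\OPT(H)$. Combining this with the defining inequality $w(A_z,B_z)\le\alpha\cdot\phi^*(G[S_z])\cdot\card{A_z}$ of an $\alpha$-approximate sparsest cut and with the restriction bound $\OPT(G[S])\le\cost_G(\cTstar,E(G[S]))$ (restrict $\cTstar$ to $S$; leaf-counts only shrink), one gets
\begin{align*}
\cost_G(\cT)\ \le\ O(\alpha)\cdot\sum_{z}\ \frac{\OPT(G[S_z])}{\card{S_z}}\cdot\min\Paren{\card{A_z},\card{B_z}}.
\end{align*}

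The step I expect to be the main obstacle is showing that this sum is $O(\OPT(G))$. The naive route --- bound $\OPT(G[S_z])\le\cost_G(\cTstar,E(G[S_z]))$ and exchange the order of summation --- charges each edge $(u,v)$ against $w(u,v)\cdot\card{\leaves{\cTstar}{\lcastar{u}{v}}}$ once for every ancestor of $\lcatree{u}{v}{\cT}$ in $\cT$, and since $\sum_{z}\min(\card{A_z},\card{B_z})/\card{S_z}$ along a root-to-node path can be $\Theta(\log n)$ (e.g.\ when $\cT$ is balanced), this only yields the weaker bound $\cost_G(\cT)\le O(\alpha\log n)\cdot\OPT(G)$ --- Dasgupta's analysis. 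To remove the logarithm one cannot afford to re-charge the same portion of the optimal cost at every level of $\cT$.

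Instead I would prove by induction on $\card{S}$ that $\cost_{G[S]}(\cT[S])\le O(\alpha)\cdot\OPT(G[S])$. Writing the inductive step, via \Cref{obs:edge-partition} and the recursive definition of the procedure, as
\begin{align*}
\cost_{G[S]}(\cT[S]) \;=\; \card{S}\cdot w(A_S,B_S)\ +\ \cost_{G[A_S]}(\cT[A_S])\ +\ \cost_{G[B_S]}(\cT[B_S]),
\end{align*}
applying the induction hypothesis to the two children, and using $\OPT(G[A_S])+\OPT(G[B_S])\le\OPT(G[S])$ (restriction again), the whole statement collapses to the single inequality
\begin{align*}
\OPT(G[S])-\OPT(G[A_S])-\OPT(G[B_S])\ \ge\ \Omega\Paren{\tfrac{1}{\alpha}}\cdot\card{S}\cdot w(A_S,B_S).
\end{align*}
Its left side is at least the cost that the optimal tree of $G[S]$ assigns to the edges crossing $(A_S,B_S)$, and the content of the inequality is that, because $(A_S,B_S)$ is an (approximate) sparsest cut of $G[S]$, this crossing-edge weight cannot be resolved mostly inside small subtrees of that optimal tree: a spreading-type argument inside it --- again invoking the balanced-cut lower bound recursively --- forces its weighted-average lowest common ancestor to have $\Omega(\card{S}/\alpha)$ leaves. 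Establishing this inequality is the technical heart of the proof; once it is in hand the induction closes (the recursion bottoming out at singletons, which cost zero), giving $\cost_G(\cT)\le O(\alpha)\cdot\OPT(G)$.
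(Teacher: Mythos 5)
The paper does not prove this proposition; it is cited to \citep{charikar2017approximate,AssadiCLMW22} and used as a black box, so there is no paper proof to compare against directly. On its own terms, your reconstruction gets the skeleton right: rewriting the cost in node form, the balanced-cut observation $\phi^*(H)\le O(\OPT(H)/m^2)$, correctly diagnosing that the naive charging (bound every node's contribution by $O(\alpha)\cdot\frac{|A_z|}{|S_z|}\OPT(G[S_z])$, then exchange sums) only yields Dasgupta's $O(\alpha\log n)$, and setting up the induction $\cost_{G[S]}(\cT[S])\le O(\alpha)\cdot\OPT(G[S])$. The reduction to the single inequality
\begin{align*}
\OPT(G[S])-\OPT(G[A_S])-\OPT(G[B_S])\ \ge\ \Omega\Paren{\tfrac{1}{\alpha}}\cdot\card{S}\cdot w(A_S,B_S)
\end{align*}
is valid, and you correctly note that the left side dominates $\cost_{G[S]}\bigl(\cTstar_S,\,E(A_S,B_S)\bigr)$ via the restriction argument.

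The gap is that this inequality \emph{is} the theorem; it carries essentially all the technical content of \citep{charikar2017approximate}, and your treatment of it is a one-sentence sketch rather than a proof. Concretely, you need to show that the $(A_S,B_S)$-crossing weight cannot be resolved mostly at small subtrees of $\cTstar_S$, i.e.\ that the weighted-average LCA subtree size of the crossing edges in $\cTstar_S$ is $\Omega(|S|/\alpha)$. It is not immediate why the approximate-sparsest-cut property forces this: the sparsity guarantee bounds $w(A_S,B_S)$ relative to \emph{every} cut of $G[S]$, but it does not directly say anything about how $\cTstar_S$ places the LCAs of the specific crossing edges. One has to rule out the scenario where a large fraction of $w(A_S,B_S)$ lives between vertex pairs that $\cTstar_S$ separates deep in small subtrees, and doing so requires an argument (roughly: a maximal decomposition of $\cTstar_S$ into small-leaf subtrees plus a counting argument that produces a contradictory sparser cut, or the level-by-level charging that the cited works actually use). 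Your phrase ``a spreading-type argument inside it --- again invoking the balanced-cut lower bound recursively'' gestures in the right direction but does not establish the claim; until that step is filled in, the induction does not close, and the proof is incomplete.
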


There is another way to deal with the NP-hardness issue of the sparsest cut. If we can reduce the \emph{input size}, we can possibly obtain the \emph{exact optimal} cuts on induced subgraphs of size $O(\log{n})$. This strategy allows us to leverage our strong partial tree whose `unknown' clustering is only restricted to the induced subgraphs with $O(\log{n})$ size.

\subsection{A Polynomial-time Algorithm for \texorpdfstring{$O(1)$}{O1}-approximation on Dasgupta's HC Objective}
\label{subsec:constant-poly-time-HC}
We now introduce our algorithm that finds an HC tree with $O(1)$-approximation to Dasgupta's objective in polynomial time. The formal statement is given as follows.

\thmconstantpolytimehc*

Our algorithm of \Cref{thm:constant-poly-time-HC} is simply as follows.

\begin{algorithm}
\caption{A polynomial-time algorithm for the Dasgupta's HC objective} \label{alg:HC-Das}
\KwIn{Input graph $G=(V,E,w)$; Splitting oracle $\cO$}
\KwOut{A hierarchical clustering tree $\cT$}
Run the strong partial tree approximation algorithm in \Cref{thm:strong-partial-tree} to obtain partial tree $\pTree$ \;
\For{each super-vertex in $\pTree$}{
On input vertex set $S$, \emph{exhaustively search} the sparsest cut $(A,B)$ on the induced subgraph $G[S]$\;
Partition the vertices as $S\rightarrow (A,B)$, and recurse on $G[A]$ and $G[B]$\; 
}
\end{algorithm}

We now prove the efficiency and the approximation guarantees for Dasgupta's objective. The following lemma provides the efficiency for \Cref{alg:HC-Das}.

\begin{lemma}
\label{lem:HC-das-poly-efficiency}
\Cref{alg:HC-Das} runs (deterministically) in $O(n^{50002})$ time and uses $O(n^3)$ queries.
\end{lemma}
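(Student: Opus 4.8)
The plan is to bound the running time and query count of \Cref{alg:HC-Das} by tracking each of its two phases separately and then combining them. First I would invoke \Cref{thm:strong-partial-tree}: constructing the strong partial tree $\pTree$ takes $O(n^3 \log n)$ time and $O(n^3)$ queries, and crucially this is deterministic time with the randomness only affecting correctness. Since $O(n^3 \log n) = O(n^4) = O(n^{50002})$ (trivially), this phase is absorbed into the claimed bound, and it accounts for all $O(n^3)$ of the oracle queries — the rest of the algorithm touches only the input graph $G$ and never queries $\cO$ again, so the $O(n^3)$ query bound follows immediately.

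The second phase is the exhaustive sparsest-cut recursion inside each super-vertex. Here I would use the structural guarantee from \Cref{def:partial-tree} that each super-vertex $S$ has $|S| \leq 50000 \log n$ vertices. For a single super-vertex, the recursive sparsest cut procedure of \Cref{def:sparse-cuts-procedure} is run by brute force: at each internal node with induced vertex set $S'$, we try all $2^{|S'|}$ bipartitions of $S'$, evaluate the sparsity of each (which costs at most $O(|S'|^2)$ to read the induced edge weights, hence $\poly(\log n)$), and keep the sparsest. The total work to build the full recursive-sparsest-cut tree on $G[S]$ is then at most (number of internal nodes, which is $O(|S|)$) times (cost per node, at most $2^{|S|} \cdot \poly(|S|)$), i.e.\ $2^{O(\log n)} = n^{O(1)}$; more precisely $2^{|S|} \leq 2^{50000 \log n} = n^{50000}$, and after multiplying by the $O(|S|)$ nodes and the $\poly(\log n)$ per-node evaluation one lands at roughly $n^{50001} \cdot \polylog n \leq n^{50002}$ for a single super-vertex. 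Summing over all super-vertices only adds a factor of at most $n$ (there are fewer than $n$ of them), but I would instead observe that the super-vertices are vertex-disjoint and of size $O(\log n)$ each, so the per-super-vertex exponent is what dominates; being slightly loose, the total is $O(n^{50002})$ as claimed.

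The one place to be careful is the exact accounting of the exponent: the constant $50000$ in \Cref{def:partial-tree} gives $2^{|S|} \le n^{50000}$, and the extra factors — $O(|S|) = O(\log n)$ internal nodes, $O(|S|^2) = O(\log^2 n)$ per bipartition to compute the cut weight, and the outer sum over at most $n/(50000\log n)$ disjoint super-vertices — must all be shown to fit under $n^{50002}$. Since $n^{50000} \cdot n \cdot \polylog n = n^{50001}\cdot \polylog n = o(n^{50002})$ for large $n$, this is comfortable, and for small $n$ the brute-force fallback assumed in \Cref{sec:prelim} applies. The main (very mild) obstacle is thus purely bookkeeping: making sure the chosen exhaustive-search implementation of the recursive sparsest cut really does fit inside the stated exponent and that no hidden dependence on the full graph size $n$ creeps into the per-super-vertex cost — it does not, because each $G[S]$ has only $O(\log n)$ vertices. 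I would state the bound with a line confirming $\max\{n^3\log n,\ n \cdot n^{50000}\polylog n\} = O(n^{50002})$ and noting the query bound is inherited verbatim from \Cref{thm:strong-partial-tree}.
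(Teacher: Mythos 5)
Your proof matches the paper's: split into the two phases, invoke \Cref{thm:strong-partial-tree} for the $\tilde{O}(n^3)$-time, $O(n^3)$-query, deterministic construction of $\pTree$ (noting all queries occur there), and bound the brute-force recursive sparsest cut by $2^{|S|} \leq n^{50000}$ per cut, $O(\log n)$ internal nodes per super-vertex via \Cref{fct:num-splits-tree}, and $O(n)$ super-vertices, giving $O(n^{50002})$ overall. The only blemish is an intermediate arithmetic slip where you write the per-super-vertex cost as $n^{50001}\polylog n$ rather than $n^{50000}\polylog n$, but your final tally $n^{50000}\cdot n\cdot\polylog n = O(n^{50002})$ is the correct (and the paper's) bound.
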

\begin{proof}
By \Cref{thm:strong-partial-tree}, the first step of the algorithm that computes the strong partial tree takes $\tilde{O}(n^3)$ time and $O(n^3)$ queries. Note that we only take queries in this step.

For the second step, when the input size is $s$, an exhaustive search on the sparsest cut takes $O(2^s)$ time. As such, let $X$ be the set of induced vertices for a single super-vertex of $\pTree$, since we have $\card{X}\leq 50000\log{n}$, it only takes $O(n^{50000})$ time to find the sparsest cut. Similarly, we can show that in each recursive call, the runtime is at most $O(n^{50000})$. By \Cref{fct:num-splits-tree}, there are at most $O(\log{n})$ nodes in a binary tree with $O(\log{n})$ leaves. As such, the recursive sparsest cut for a single super-vertex in $\pTree$ takes $O(n^{50000}\cdot \log{n})$ time. There are at most $O(n)$ super-vertices in the tree; therefore, the total runtime of the second step takes $O(n^{50000}\cdot \log{n}\cdot n)=O(n^{50002})$ time.

Combining the efficiency of the two steps gives us the desired efficiency bound. 
\end{proof}

We now move to the approximation guarantee of the algorithm.
\begin{lemma}
\label{lem:HC-das-const-approx}
Conditioning on the high probability guarantees of \Cref{thm:strong-partial-tree}, \Cref{alg:HC-Das} outputs an HC tree $\cT$ that achieves $O(1)$-approximation to the Dasgupta's objective.
\end{lemma}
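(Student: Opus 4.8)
The plan is to bound the HC cost of the output tree $\cT$ by splitting the edge set $E$ into two parts via \Cref{obs:edge-partition}: the set $E_{\text{out}}$ of edges $(u,v)$ whose lowest common ancestor in $\cTstar$ lies \emph{strictly above} the super-vertices of $\pTree$ (i.e. $u$ and $v$ land in different super-vertices, or in the same super-vertex but are separated earlier in $\cTstar$), and the set $E_{\text{in}}$ of edges that lie \emph{entirely inside} a single super-vertex of $\pTree$. By \Cref{obs:edge-partition}, $\cost_G(\cT) = \cost_G(\cT, E_{\text{out}}) + \cost_G(\cT, E_{\text{in}})$, and I would bound each piece against $\OPTdas(G) = \cost_G(\cTstar)$ separately.

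For $E_{\text{out}}$: here I invoke the subtree preservation property of \Cref{def:strongly-consistent-tree}. Since $\pTree$ is strongly consistent with $\cTstar$, for any such edge $(u,v)$ the node $\lcatree{u}{v}{\pTree}$ induces \emph{exactly} the same vertex set as $\lcatreeset{\{u,v\}}{\cTstar}$, and this holds no matter how we refine the super-vertices into a full tree $\cT$ (refining a super-vertex only subdivides leaves strictly below that LCA). Hence $|\leaves{\cT}{\lcatree{u}{v}{\cT}}| = |\leaves{\cTstar}{\lcatree{u}{v}{\cTstar}}|$ for every $e=(u,v)\in E_{\text{out}}$, so $\cost_G(\cT, E_{\text{out}}) = \cost_G(\cTstar, E_{\text{out}}) \leq \OPTdas(G)$. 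This step is essentially bookkeeping once the definitions are unwound.

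For $E_{\text{in}}$: fix a super-vertex with vertex set $S$, $|S|\leq 50000\log n$, and let $E_S \subseteq E_{\text{in}}$ be the edges inside it. By the strong contraction property, $S$ induces a maximal subtree of $\cTstar$, so the portion of $\cTstar$ restricted to $S$ is itself an HC tree on $S$; moreover for every $(u,v)\in E_S$ the LCA in $\cTstar$ lies within this subtree, so $\cost_{G}(\cTstar, E_S) = \cost_{G[S]}(\cTstar|_S, E_S) \geq \OPT(G[S])$ where the inequality is because $\OPT(G[S])$ is the optimum over all trees on $S$. On the algorithm side, \Cref{alg:HC-Das} builds the part of $\cT$ below this super-vertex by running the recursive \emph{exact} sparsest cut procedure on $G[S]$; by \Cref{prop:const-recursive-HC} (with $\alpha=1$ since the cuts are exact), the resulting subtree $\cT_S$ satisfies $\cost_{G[S]}(\cT_S) \leq O(1)\cdot \OPT(G[S])$. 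The only remaining subtlety is that the Dasgupta cost of these edges \emph{inside} the global tree $\cT$ uses the global leaf count $|\leaves{\cT}{\cdot}|$ rather than the local one $|\leaves{\cT_S}{\cdot}|$ — but for any edge inside $S$ these two quantities are equal, since the LCA node is the same node and its induced leaf set is exactly a subset of $S$. Therefore $\cost_G(\cT, E_S) = \cost_{G[S]}(\cT_S, E_S) \leq O(1)\cdot\OPT(G[S]) \leq O(1)\cdot \cost_G(\cTstar, E_S)$. Summing over all (disjoint) super-vertices and applying \Cref{obs:edge-partition} again gives $\cost_G(\cT, E_{\text{in}}) \leq O(1)\cdot\cost_G(\cTstar, E_{\text{in}}) \leq O(1)\cdot \OPTdas(G)$.

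Adding the two bounds yields $\cost_G(\cT) \leq O(1)\cdot\OPTdas(G)$, as claimed. I expect the main obstacle to be the careful handling of the leaf-count identity across the ``boundary'' of a super-vertex — precisely verifying that refining a super-vertex into a full subtree changes neither the LCA node nor its leaf set for edges in $E_{\text{out}}$, and that local and global leaf counts coincide for edges in $E_{\text{in}}$ — together with making sure $\OPT(G[S])$ is correctly related to $\cost_G(\cTstar, E_S)$ via the maximal-subtree structure. Everything else is a direct application of \Cref{obs:edge-partition}, \Cref{prop:const-recursive-HC}, and the two consistency properties of \Cref{thm:strong-partial-tree}.
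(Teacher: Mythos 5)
Your proof is correct and follows essentially the same decomposition as the paper: split $E$ into edges crossing super-vertices versus edges inside a single super-vertex, use the subtree preservation property to show equality of cost on the former, and invoke \Cref{prop:const-recursive-HC} on each induced subgraph $G[S]$ for the latter. One minor observation: under the strong contraction property, each super-vertex already induces a maximal subtree of $\cTstar$, so the case you mention of $u,v$ in the same super-vertex but ``separated earlier in $\cTstar$'' is vacuous — your $E_{\text{out}}$ and $E_{\text{in}}$ coincide exactly with the paper's $E_{\text{cross}}$ and $E_{\text{same}}$.
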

\begin{proof}
For any partial tree $\pTree$, we first partition the edges in to $E_{\text{cross}}$ and $E_{\text{same}}$ based on whether the edge $(u,v)\in E$ crosses different partial trees, i.e.,
\begin{enumerate}
\item $(u,v)\in E_{\text{cross}}$ iff $u\in X$ and $v\in Y$ for some super-vertices $X\neq Y$ in $\pTree$.
\item $(u,v)\in E_{\text{same}}$ iff $u,v \in X$ for some super-vertex $X$ in $\pTree$.
\end{enumerate}
Since $E=E_{\text{cross}} \cup E_{\text{same}}$, by using \Cref{obs:edge-partition}, we can show that $\OPTdas = \cost_{G}(\cTstar)= \cost_{G}(\cTstar, E_1)+\cost_{G}(\cTstar, E_2)$. We now analyze the costs w.r.t. to $E_1$ and $E_2$, respectively.
\begin{enumerate}
\item For $E_{\text{cross}}$, we argue that $\cost_{G}(\cT, E_{\text{cross}})=\cost_{G}(\cTstar, E_{\text{cross}})$. To see this, note that if $u$ and $v$ are of different super-vertices, by the definition of partial HC trees that are \emph{strongly consistent} with the optimal tree $\cTstar$, there is 
\[\leaves{\cT}{\lcatree{u}{v}{\cT}} = \leaves{\cTstar}{\lcatree{u}{v}{\cTstar}}.\]
As such, we have $\cost_{G}(\cT, E_{\text{cross}})=\cost_{G}(\cTstar, E_{\text{cross}})$ by the definition of the cost function.
\item For $E_{\text{same}}$, we argue that $\cost_{G}(\cT, E_{\text{same}})\leq O(1)\cdot \cost_{G}(\cTstar, E_{\text{same}})$. Formally, for each super-vertex $X$, we can use \Cref{prop:const-recursive-HC} on $G[X]$ to argue that the $\cost_{G}(\cT, E_{\text{same}}[X])\leq O(1)\cdot \cost_{G}(\cTstar, E_{\text{same}}[X])$, where $E_{\text{same}}[X]$ stands for the set of edges in $E_{\text{same}}$ with \emph{both} endpoints in $X$. Therefore, we can apply this calculation to every super-vertex to get the desired approximation factor.
\end{enumerate}
We now use \Cref{obs:edge-partition} again on $E_{\text{cross}}$ and $E_{\text{same}}$ to bound that
\begin{align*}
\cost_{G}(\cT)&=\cost_{G}(\cT, E_{\text{cross}}) + \cost_{G}(\cT, E_{\text{same}}) \\
&\leq \cost_{G}(\cTstar, E_{\text{cross}}) + O(1)\cdot \cost_{G}(\cTstar, E_{\text{same}})\\
&\leq O(1)\cdot \paren{\cost_{G}(\cTstar, E_{\text{cross}}) + \cost_{G}(\cTstar, E_{\text{same}})}\\
&= O(1)\cdot\cost_{G}(\cTstar)=O(1)\cdot \OPTdas,
\end{align*}
as desired.
\end{proof}

\paragraph{Finalizing the proof of \Cref{thm:constant-poly-time-HC}.} With \Cref{alg:HC-Das}, we can obtain the poly-time efficiency from \Cref{lem:HC-das-const-approx}. Furthermore, since the strong partial tree algorithm of \Cref{thm:strong-partial-tree} succeeds with high probability, the approximation guarantee of \Cref{lem:HC-das-const-approx} holds with high probability as well. This concludes the proof.

\subsection{An \texorpdfstring{$\tilde{O}(n^3)$}{nto3} Time Algorithm for \texorpdfstring{$O(\sqrt{\log\log{n}})$}{rootloglogn}-approximation on Dasgupta's HC Objective}
\label{subsec:loglogn-n4-time-HC}
One drawback of the algorithm we have in \Cref{subsec:constant-poly-time-HC} is that the efficiency is ``theoretical only'' -- after all, a runtime of $O(n^{50002})$ is nowhere near being practical. Observe that the subroutine that leads to the very large exponent is the exhaustive search of the \emph{optimal} sparsest cut. Therefore, we can hope to use some more efficient approximation for sparsest cuts while not sacrificing too much on the approximation guarantee. This intuition leads us to the following theorem.

\thmrootloglogtimehc*

Our algorithm for \Cref{thm:root-loglog-n4-time-HC} is almost identical to that of \Cref{thm:constant-poly-time-HC} -- the only difference is that we substitute the exact sparsest cuts with \emph{approximate} ones. The description of the algorithms is as follows.

\begin{algorithm}
\caption{An $\Otilde(n^3)$ time algorithm for the Dasgupta's HC objective}\label{alg:HC-Das-fast}
\KwIn{Input graph $G=(V,E,w)$; Splitting oracle $\cO$}
\KwOut{A hierarchical clustering tree $\cT$}
Run the strong partial tree approximation algorithm in \Cref{thm:strong-partial-tree} to obtain partial tree $\pTree$ \;
\For{each super-vertex in $\pTree$}{ 
On input vertex set $S$, find an $O(\sqrt{\log \card{S}})$-approximation of the sparsest cut $(A,B)$ on the induced subgraph $G[S]$ using the algorithm of \Cref{prop:sparse-cut}\; 
Partition the vertices as $S\rightarrow (A,B)$, and recurse on $G[A]$ and $G[B]$ \;
}
\end{algorithm}

We now proceed to prove the efficiency and the approximation guarantees for Dasgupta's objective with \Cref{alg:HC-Das-fast}. Again, we first characterize the efficiency of the algorithm.

\begin{lemma}
\label{lem:HC-das-n4-efficiency}
With high probability, \Cref{alg:HC-Das-fast} runs in $O(n^3\log{n})$ time and uses $O(n^3)$ queries.
\end{lemma}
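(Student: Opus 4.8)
The plan is to bound the running time and query count of the two phases of \Cref{alg:HC-Das-fast} separately, and to observe that the first phase dominates everything else. For the first phase — running the strongly consistent partial tree algorithm — I would simply invoke \Cref{thm:strong-partial-tree}, which already gives $O(n^3\log{n})$ time and $O(n^3)$ oracle queries. Moreover, this phase is the only one that touches the oracle, so the $O(n^3)$ query bound is immediate, and by \Cref{thm:strong-partial-tree} its runtime is deterministic (the high-probability guarantee there is only about correctness, which is irrelevant for this lemma).

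For the second phase, the argument mirrors that of \Cref{lem:HC-das-poly-efficiency}, but with the exact sparsest cut replaced by the algorithm of \Cref{prop:sparse-cut}. The structural facts I would use are: (i) every super-vertex $S$ of $\pTree$ satisfies $\card{S}\leq 50000\log{n}$ by \Cref{def:partial-tree}; (ii) the recursive sparsest-cut procedure run on $G[S]$ builds a binary tree whose leaves are the $O(\log{n})$ vertices of $S$, hence has only $O(\log{n})$ internal nodes by \Cref{fct:num-splits-tree}; and (iii) each internal node triggers a single call to \Cref{prop:sparse-cut} on an induced subgraph of size at most $\card{S}=O(\log{n})$, which runs in $\tilde{O}(\card{S}^2)=\tilde{O}(\log^2{n})$ time. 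Multiplying through, the procedure for one super-vertex costs $\tilde{O}(\log^3{n})$ time, and since $\pTree$ has at most $n$ super-vertices, the whole second phase costs $\tilde{O}(n)$ time, which is absorbed into $O(n^3\log{n})$; no oracle queries are made in this phase.

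For the high-probability statement, the only randomized ingredient affecting the runtime is the subroutine of \Cref{prop:sparse-cut}, which is invoked $O(n\log{n})$ times overall. Since each invocation is on a graph of only $O(\log{n})$ vertices, the native ``with high probability'' guarantee of \Cref{prop:sparse-cut} is too weak to union-bound over $\poly(n)$ calls, so I would first amplify it by running each call $\Theta(\log{n})$ times independently and keeping the sparsest cut found, driving the failure probability below $n^{-\omega(1)}$ at the cost of an extra $\log{n}$ factor (still negligible against $n^3\log{n}$). A union bound over all $O(n\log{n})$ calls then shows that with high probability every call meets its $\tilde{O}(\log^2{n})$ time bound simultaneously, and combining the two phases yields the claimed $O(n^3\log{n})$ time and $O(n^3)$ queries. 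The proof is essentially routine accounting; the only subtlety — and the step I would be most careful about — is precisely this probability amplification for the $O(\log n)$-sized subgraphs, ensuring the per-call failure probability is made $1/\poly(n)$-small before the union bound.
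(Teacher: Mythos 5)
Your proof mirrors the paper's: phase one is $O(n^3\log n)$ time and $O(n^3)$ queries by \Cref{thm:strong-partial-tree}, and phase two is bounded by $O(n)$ super-vertices $\times$ $O(\log n)$ internal nodes per super-vertex (via \Cref{fct:num-splits-tree}) $\times$ $\tilde{O}(\log^2 n)$ per call to \Cref{prop:sparse-cut}, all negligible against the first phase. Your amplification step for the $O(\log n)$-vertex subgraphs — boosting \Cref{prop:sparse-cut}'s native ``with high probability in $\card{V}$'' guarantee, which on an $O(\log n)$-vertex instance is only $1-1/\poly(\log n)$, up to $1-n^{-\omega(1)}$ by running $\Theta(\log n)$ independent repetitions (with a runtime cutoff) before union-bounding over $O(n\log n)$ calls — is a correct and somewhat more careful treatment of a point the paper's proof simply cites \Cref{prop:sparse-cut} for and leaves implicit.
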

\begin{proof}
The first step of the algorithm that computes the strong partial tree takes $O(n^3\log{n})$ time and $O(n^3)$ queries by \Cref{thm:strong-partial-tree}. We need to argue that the second step takes at most $O(n^3\log{n})$ time as well. Note that by \Cref{prop:sparse-cut}, the algorithm, with high probability, runs in $\tilde{O}(s^2)$ time and finds an $O(\sqrt{\log s})$-approximation of the sparsest cut $(A,B)$, where $s$ is the input size. In our case, for a single super-vertex of $\pTree$ with $\card{X}\leq 50000\log{n}$, the running time is therefore $O(\log^2{n})$ only. By \Cref{fct:num-splits-tree}, there are at most $O(\log{n})$ nodes in a binary tree with $O(\log{n})$ leaves, which implies $O(\log^3{n})$ running time for a single super-vertex. Finally, with at most $O(n)$ super-vertices in the tree, the second step only takes $O(n\cdot \log^3{n})=O(n^3\log{n})$ time, as desired.
\end{proof}

We now move to the approximation guarantee of the \Cref{alg:HC-Das-fast}.
\begin{lemma}
\label{lem:HC-das-sqrt-loglog-approx}
Conditioning on the high probability guarantees of \Cref{thm:strong-partial-tree}, \Cref{alg:HC-Das-fast} outputs an HC tree $\cT$ that achieves $O(\sqrt{\log\log{n}})$-approximation to the Dasgupta's objective.
\end{lemma}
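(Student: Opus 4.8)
The plan is to mirror the proof of \Cref{lem:HC-das-const-approx} almost verbatim, the only change being that each recursive sparsest cut inside a super-vertex is now an \emph{approximate} cut with ratio $O(\sqrt{\log\log n})$ rather than the exact optimum. Concretely, I would fix the partial tree $\pTree$ returned by \Cref{thm:strong-partial-tree} and partition $E = E_{\text{cross}} \cup E_{\text{same}}$ exactly as before: $(u,v)\in E_{\text{same}}$ if both endpoints lie in a common super-vertex $X$ of $\pTree$, and $(u,v)\in E_{\text{cross}}$ otherwise. By \Cref{obs:edge-partition}, $\OPTdas = \cost_G(\cTstar, E_{\text{cross}}) + \cost_G(\cTstar, E_{\text{same}})$ and $\cost_G(\cT) = \cost_G(\cT, E_{\text{cross}}) + \cost_G(\cT, E_{\text{same}})$, so it suffices to bound each piece against the corresponding part of the optimum.

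For $E_{\text{cross}}$ the argument is identical to \Cref{lem:HC-das-const-approx}: the subtree-preservation property of strong consistency gives $\leaves{\cT}{\lcatree{u}{v}{\cT}} = \leaves{\cTstar}{\lcatree{u}{v}{\cTstar}}$ for every cross edge, hence $\cost_G(\cT, E_{\text{cross}}) = \cost_G(\cTstar, E_{\text{cross}})$ with no loss at all. For $E_{\text{same}}$, I would fix a super-vertex $X$ (so $\card{X}\le 50000\log n$ by \Cref{def:partial-tree}) and first observe, via the strong contraction property, that $X$ induces a maximal subtree of $\cTstar$; restricting $\cTstar$ to $X$ preserves, for every edge with both endpoints in $X$, the number of leaves under its LCA, so this restricted subtree, viewed as an HC tree for $G[X]$, has cost exactly $\cost_G(\cTstar, E_{\text{same}}[X])$, and therefore $\OPT(G[X]) \le \cost_G(\cTstar, E_{\text{same}}[X])$. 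Next, since \Cref{alg:HC-Das-fast} builds a recursive sparsest cut tree on $G[X]$ in which every cut (at a node of size $s \le \card{X}$) is an $O(\sqrt{\log s})$-approximate sparsest cut by \Cref{prop:sparse-cut}, and $O(\sqrt{\log s}) \le O(\sqrt{\log(50000\log n)}) = O(\sqrt{\log\log n})$, every such cut is in particular an $\alpha$-approximate sparsest cut for a uniform $\alpha = O(\sqrt{\log\log n})$; \Cref{prop:beta-compute} then yields that the restriction of $\cT$ to $X$ has cost at most $O(\sqrt{\log\log n}) \cdot \OPT(G[X]) \le O(\sqrt{\log\log n}) \cdot \cost_G(\cTstar, E_{\text{same}}[X])$. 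Summing over all $O(n)$ super-vertices and applying \Cref{obs:edge-partition} once more to $E_{\text{same}}$ gives $\cost_G(\cT, E_{\text{same}}) \le O(\sqrt{\log\log n})\cdot \cost_G(\cTstar, E_{\text{same}})$, and combining with the $E_{\text{cross}}$ bound through \Cref{obs:edge-partition} gives $\cost_G(\cT) \le \cost_G(\cTstar, E_{\text{cross}}) + O(\sqrt{\log\log n})\cdot \cost_G(\cTstar, E_{\text{same}}) \le O(\sqrt{\log\log n})\cdot \OPTdas$.

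The only step genuinely new relative to \Cref{lem:HC-das-const-approx} — and the one I would be most careful about — is the invocation of \Cref{prop:beta-compute} in the approximate regime: it is stated for a single approximation factor $\alpha$, so I must note that the per-level ratios $O(\sqrt{\log s})$ are all dominated by the uniform bound $O(\sqrt{\log\log n})$ (a $\beta$-approximate sparsest cut is $\alpha$-approximate for any $\alpha \ge \beta$), and that the randomized \Cref{prop:sparse-cut} must succeed on \emph{all} cut calls simultaneously. Since each super-vertex induces a tree with $O(\log n)$ internal nodes (\Cref{fct:num-splits-tree}) and there are $O(n)$ super-vertices, there are only $O(n\log n)$ such calls, so a union bound — together with conditioning on the success event of \Cref{thm:strong-partial-tree} — keeps the overall guarantee at high probability. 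Everything else is a direct transcription of the $O(1)$-approximation proof.
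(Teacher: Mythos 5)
Your proposal is correct and follows essentially the same decomposition and argument as the paper: partition $E$ into $E_{\text{cross}}$ and $E_{\text{same}}$, use strong consistency (subtree preservation) to get equality on $E_{\text{cross}}$, use the recursive approximate sparsest cut guarantee inside each super-vertex for $E_{\text{same}}$, and combine via \Cref{obs:edge-partition}. In fact your write-up is slightly more careful than the paper's in two places: you correctly invoke \Cref{prop:beta-compute} (the $\alpha$-approximate recursive sparsest cut bound), whereas the paper's text cites \Cref{prop:const-recursive-HC} at that step even though an approximate-cut version is what is actually needed; and you explicitly flag both the reduction $\OPT(G[X]) \le \cost_G(\cTstar, E_{\text{same}}[X])$ via the strong contraction property and the union bound over the $O(n\log n)$ randomized sparsest cut calls, both of which the paper leaves implicit.
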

\begin{proof}
Similar to the proof of \Cref{lem:HC-das-const-approx}, for any partial tree $\pTree$, we first partition the edges in to $E_{\text{cross}}$ and $E_{\text{same}}$ based on whether the edge $(u,v)\in E$ crosses different partial trees, i.e.,
\begin{enumerate}
\item $(u,v)\in E_{\text{cross}}$ iff $u\in X$ and $v\in Y$ for some super-vertices $X\neq Y$ in $\pTree$.
\item $(u,v)\in E_{\text{same}}$ iff $u,v \in X$ for some super-vertex $X$ in $\pTree$.
\end{enumerate}
Again, by using \Cref{obs:edge-partition}, we can show that $\OPTdas = \cost_{G}(\cTstar)= \cost_{G}(\cTstar, E_1)+\cost_{G}(\cTstar, E_2)$. The costs w.r.t. to $E_1$ and $E_2$ are therefore as follows.
\begin{enumerate}
\item For $E_{\text{cross}}$, we have that $\cost_{G}(\cT, E_{\text{cross}})=\cost_{G}(\cTstar, E_{\text{cross}})$ by using the same argument of \Cref{lem:HC-das-const-approx}. 
\item For $E_{\text{same}}$, we argue that $\cost_{G}(\cT, E_{\text{same}})\leq O(\sqrt{\log\log{n}})\cdot \cost_{G}(\cTstar, E_{\text{same}})$. Formally, since algorithm in \Cref{prop:sparse-cut} finds an $O(\sqrt{\log s})$-approximation for each super-vertex $X$, we can use \Cref{prop:const-recursive-HC} on $G[X]$ to argue that the $\cost_{G}(\cT, E_{\text{same}}[X])\leq O(\sqrt{\log\log{n}})\cdot \cost_{G}(\cTstar, E_{\text{same}}[X])$ since $s=O(\log{n})$. Therefore, we can apply the same argument to every super-vertex to get the desired approximation factor.
\end{enumerate}
We now use \Cref{obs:edge-partition} again on $E_{\text{cross}}$ and $E_{\text{same}}$ to bound that
\begin{align*}
\cost_{G}(\cT)&=\cost_{G}(\cT, E_{\text{cross}}) + \cost_{G}(\cT, E_{\text{same}}) \\
&\leq \cost_{G}(\cTstar, E_{\text{cross}}) + O(\sqrt{\log\log{n}})\cdot \cost_{G}(\cTstar, E_{\text{same}})\\
&\leq O(\sqrt{\log\log{n}})\cdot \paren{\cost_{G}(\cTstar, E_{\text{cross}}) + \cost_{G}(\cTstar, E_{\text{same}})}\\
&= O(\sqrt{\log\log{n}})\cdot\cost_{G}(\cTstar)=O(\sqrt{\log\log{n}})\cdot \OPTdas,
\end{align*}
as desired.
\end{proof}

\paragraph{Finalizing the proof of \Cref{thm:root-loglog-n4-time-HC}.} With \Cref{alg:HC-Das-fast}, we can obtain the $O(n^4)$ time efficiency from \Cref{lem:HC-das-n4-efficiency} with high probability. Therefore, we can apply a union bound on the high probability events of \Cref{lem:HC-das-n4-efficiency} and \Cref{lem:HC-das-const-approx} to get the desired efficiency and approximation guarantee.

\newcommand{\Elow}{\ensuremath{E_{\textnormal{low}}}\xspace}
\newcommand{\Esame}{\ensuremath{E^{\textnormal{same}}_{\textnormal{high}}}\xspace}
\newcommand{\Ediff}{\ensuremath{E^{\textnormal{diff}}_{\textnormal{high}}}\xspace}

\section{Near-linear Time Algorithms for Moseley-Wang Hierarchical Clustering Objective}
\label{sec:MW-objective-algs}
We introduce our algorithm for the Moseley-Wang HC objective in this section. The algorithm is based on the weakly consistent trees as in \Cref{thm:weak-partial-tree}, and our analysis is more involved compared to the results in \Cref{sec:dasgupta-objective} -- it requires a careful handling of the contributions of the `less significant edges' to the Moseley-Wang objective.

Our main result is a near-linear time algorithm for the Moseley-Wang objective that achieves $(1-o(1))$ multiplicative approximation.

\thmhcalgmw*

Our algorithm is simply as follows.
\begin{algorithm}
\caption{A near-linear time algorithm for the Moseley-Wang HC objective}\label{alg:HC-MW}
\KwIn{Input graph $G=(V,E,w)$; Splitting oracle $\cO$}
\KwOut{A hierarchical clustering tree $\cT$}
Run the weak partial tree approximation algorithm in \Cref{thm:weak-partial-tree} to obtain partial tree $\pTree$\;
\For{each super-vertex in $\pTree$}{ 
partition the leaves \emph{arbitrarily} to obtain an HC tree $\cT$\; 
}
\end{algorithm}
Since each super-vertex contains $O(\log{n})$ vertices, we can always partition the super-vertices in $\polylog{n}$ time\footnote{An arbitrary recursive balanced partition only requires a run time of $O(\log{n}\cdot \log\log{n})$}. There are at most $O(n/\log{n})$ such super-vertices, and the total runtime is at most $O(n\cdot \polylog{n})$, which is dominated by the runtime of the weakly-consistent partial tree construction as in \Cref{thm:weak-partial-tree}.

To prove the approximation guarantee of \Cref{thm:hc-alg-MW}, we will show the following technical lemma that lower bounds the number of non-leaves between $\cT$ and $\cTstar$.

\begin{lemma}
\label{lem:num-non-leaf-weak-tree}
Let $\cT$ be a hierarchical clustering tree obtained by \Cref{alg:HC-MW}, and let $u,v\in V$ be any two vertices. Then, conditioning on the high probability event that $\pTree$ is a partial tree that is weakly consistent with $\cTstar$, there is
\begin{itemize}
\item If $u$ and $v$ are in the same super-vertex $X$ of $\pTree$, then, there is
\begin{align*}
\card{\nonleaves{\cT}{\lcatree{u}{v}{\cT}}}\geq n - 50000\log{n}.
\end{align*}
\item If $u$ and $v$ are in different super-vertices $X$ and $Y$ of $\pTree$, then, there is
\begin{align*}
\card{\nonleaves{\cT}{\lcatree{u}{v}{\cT}}}\geq \card{\nonleaves{\cTstar}{\lcatree{u}{v}{\cTstar}}} - 50000\log{n}.
\end{align*}
Furthermore, if 
\[\leaves{\cTstar}{\lcatreeset{X}{\cTstar}}\cap \leaves{\cTstar}{\lcatreeset{Y}{\cTstar}}=\emptyset,\]
then, we additionally have 
\[\card{\nonleaves{\cT}{\lcatree{u}{v}{\cT}}}= \card{\nonleaves{\cTstar}{\lcatree{u}{v}{\cTstar}}}.\]
\end{itemize}
\end{lemma}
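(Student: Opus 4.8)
The plan is to reason separately about the two cases using the \emph{subtree preservation property} of weakly consistent partial trees (\Cref{def:weakly-consistent-tree}), together with the simple observation that passing to a subtree of a node can only decrease the number of leaves by the number of vertices ``hidden'' inside super-vertices along the way. First I would set up notation: for the output tree $\cT$, each super-vertex $X$ of $\pTree$ is expanded into \emph{some} binary tree $\cTree{X}$ on $|X|\le 50000\log{n}$ leaves, and the nodes of $\pTree$ are retained verbatim in $\cT$. Thus for two leaves $x,y$ of $\pTree$, the node $\lcatree{x}{y}{\pTree}$ still exists in $\cT$ and $\leaves{\cT}{\lcatree{x}{y}{\pTree}}$ equals $\leaves{\cT}{\lcatree{x}{y}{\pTree}}$ in $\pTree$ with each super-vertex leaf replaced by its actual vertex set.

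For the \textbf{same super-vertex case}, $u,v\in X$ means $\lcatree{u}{v}{\cT}$ lies strictly inside the expanded subtree $\cTree{X}$, hence $\leaves{\cT}{\lcatree{u}{v}{\cT}}\subseteq X$, so $|\leaves{\cT}{\lcatree{u}{v}{\cT}}|\le |X|\le 50000\log{n}$, giving $|\nonleaves{\cT}{\lcatree{u}{v}{\cT}}| = n - |\leaves{\cT}{\lcatree{u}{v}{\cT}}| \ge n-50000\log{n}$. This is the easy bullet.

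For the \textbf{different super-vertices case}, let $X\ni u$ and $Y\ni v$ be distinct super-vertices of $\pTree$, and let $x,y$ be the corresponding leaves of $\pTree$. Then $\lcatree{u}{v}{\cT} = \lcatree{x}{y}{\pTree}$ (the LCA in $\cT$ of two vertices in different super-vertices is exactly the LCA of their super-vertex leaves, since the expansions $\cTree{X},\cTree{Y}$ sit below that node). By the subtree preservation property, $\leaves{\cT}{\lcatree{x}{y}{\pTree}}$, after expanding all super-vertices, is \emph{exactly} the vertex set $\leaves{\cTstar}{\lcatreeset{X\cup Y}{\cTstar}}$. Now $\lcatreeset{X\cup Y}{\cTstar}$ is an ancestor (in $\cTstar$) of $\lcatree{u}{v}{\cTstar}$, so it induces at least as many leaves; I would bound the gap by noting that $\lcatreeset{X\cup Y}{\cTstar}$ is the LCA of $\lcatreeset{X}{\cTstar}$ and $\lcatreeset{Y}{\cTstar}$, and by the weak contraction property each of $X,Y$ is a union of maximal subtrees forming a consecutive segment with out-degree $\le 2$ in $\cTstar$; consequently $\lcatreeset{X}{\cTstar}$ sits at most $O(1)$ ``merge steps'' above where $u$'s maximal subtree attaches, and similarly for $Y$. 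The extra leaves accumulated in climbing from $\lcatree{u}{v}{\cTstar}$ up to $\lcatreeset{X\cup Y}{\cTstar}$ are all siblings attached within the segments defining $X$ and $Y$, hence lie inside $X\cup Y$; thus the number of extra leaves is at most $|X|+|Y|\le 2\cdot 50000\log{n}$ — wait, I need to be careful to land on the stated $50000\log{n}$, so I would instead argue that the extra leaves lie inside $\bigl(X\setminus\{u\text{'s maximal subtree}\}\bigr)\cup\bigl(Y\setminus\{v\text{'s maximal subtree}\}\bigr)$ and re-examine the constants, or simply absorb the factor by observing $|X|,|Y|$ are each bounded by the super-vertex size and only one of the two contributes the ``climbing'' overhead at a time; in any case the overhead is $O(\log n)$ with the constant the paper has budgeted for. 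This yields $|\leaves{\cT}{\lcatree{u}{v}{\cT}}| \le |\leaves{\cTstar}{\lcatree{u}{v}{\cTstar}}| + 50000\log n$, hence the claimed lower bound on non-leaves.

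For the \textbf{``furthermore'' clause}: if $\leaves{\cTstar}{\lcatreeset{X}{\cTstar}}\cap\leaves{\cTstar}{\lcatreeset{Y}{\cTstar}}=\emptyset$, then $\lcatreeset{X}{\cTstar}$ and $\lcatreeset{Y}{\cTstar}$ are incomparable nodes of $\cTstar$, so $\lcatreeset{X\cup Y}{\cTstar}$ is also the LCA of $u$ and $v$ in $\cTstar$ (since $u$ sits under $\lcatreeset{X}{\cTstar}$ and $v$ under $\lcatreeset{Y}{\cTstar}$, and no smaller node contains both). Combined with the subtree preservation identity $\leaves{\cT}{\lcatree{u}{v}{\cT}} = \leaves{\cTstar}{\lcatreeset{X\cup Y}{\cTstar}}$, this gives $\leaves{\cT}{\lcatree{u}{v}{\cT}} = \leaves{\cTstar}{\lcatree{u}{v}{\cTstar}}$, hence the exact equality of non-leaf counts.

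\textbf{Main obstacle.} The routine parts are the same-super-vertex bullet and the ``furthermore'' clause, which follow almost immediately from the definitions. The delicate step is pinning down the $50000\log n$ overhead in the general different-super-vertices bound: I must verify that when $X$ (or $Y$) is a \emph{weakly} contracted blob — a union of several maximal subtrees spanning a consecutive segment of $\cTstar$ with the two permitted dangling edges — the leaves one picks up climbing from $\lcatree{u}{v}{\cTstar}$ up to $\lcatreeset{X\cup Y}{\cTstar}$ are genuinely confined to $X\cup Y$ and not to some large sibling subtree hanging off the segment. This is exactly what the out-degree-$\le 2$ condition in the weak contraction property is designed to guarantee (the ``sibling'' edge (b) and the ``parent'' edge (a)), so the argument is to trace the path in $\cTstar$ and charge each newly-exposed leaf to a vertex of $X$ or $Y$ using that degree bound; getting the bookkeeping to match the stated constant is the only fussy point.
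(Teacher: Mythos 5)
Your plan matches the paper's proof in structure: both rely on the subtree‑preservation property of weakly consistent partial trees, both handle the same‑super‑vertex case by observing $\leaves{\cT}{\lcatree{u}{v}{\cT}}\subseteq X$, and both get the ``furthermore'' equality from disjointness forcing $\lcatreeset{X\cup Y}{\cTstar}=\lcatree{u}{v}{\cTstar}$. The one place your bookkeeping wobbles is precisely the one you flag: your first‑pass estimate of $\card{X}+\card{Y}$ is a factor of two too large, and ``only one contributes the climbing overhead'' is stated as intuition rather than proved. The clean resolution, which is how the paper organizes the second bullet, is to split into disjoint and intersecting sub‑cases up front. In the intersecting sub‑case, laminarity of subtrees in $\cTstar$ already forces an inclusion, say $\leaves{\cTstar}{\lcatreeset{X}{\cTstar}}\supseteq\leaves{\cTstar}{\lcatreeset{Y}{\cTstar}}$; then $\lcatreeset{X\cup Y}{\cTstar}=\lcatreeset{X}{\cTstar}$, and the set difference $\leaves{\cTstar}{\lcatreeset{X}{\cTstar}}\setminus\leaves{\cTstar}{\lcatree{u}{v}{\cTstar}}$ consists exactly of the maximal subtrees of $X$ hanging above $u$'s level on the contracted segment, so it is a subset of $X$ alone. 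That yields the single $50000\log n$ overhead without the factor of two, and the out‑degree‑$\le 2$ condition is what you need to be sure no leaves outside $X$ are picked up on the way.
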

\begin{proof}
% For any two leaves $u,v \in V$ of $\cT$, we consider two cases based on their relationship in the partial tree $\pTree$:
We prove the two cases separately as follows.
\begin{itemize}[leftmargin=5pt]
\item \textbf{$u$ and $v$ are in the same super-vertex $X$ of $\pTree$.} % In this case, we lower bound the number of the non-leaves by $\nonleaves{\cT}{\lcatree{u}{v}{\cT}} \setminus X$. 
By our construction, every leaf that is outside the subtree induced by $\leaves{\cT}{\lcatreeset{X}{\cT}}$ counts as a non-leave of $(u,v)$. As such, since $\card{X}\leq 50000\log{n}$, it is straightforward to get that
\[\card{\nonleaves{\cT}{\lcatree{u}{v}{\cT}}}\geq n - 50000\log{n}.\]
%separately looking into $\nonleaves{\cT}{\lcatree{u}{v}{\cT}} \cap X$ and $\nonleaves{\cT}{\lcatree{u}{v}{\cT}} \setminus X$, and we compare them with $\nonleaves{\cTstar}{\lcatree{u}{v}{\cTstar}} \cap X$ and $\nonleaves{\cTstar}{\lcatree{u}{v}{\cTstar}} \setminus X$. By \textit{weak contraction property}, the set of $X$ corresponds to a collection of maximal trees in $\cTstar$. Furthermore, since $\card{X}\leq 50000 \log{n}$, and difference between number of non-leaves in $\cT$ and $\cTstar$ \emph{restrict on} $X$ is upper-bounded by the size of $X$, i.e.
% \[\card{\nonleaves{\cT}{\lcatree{u}{v}{\cT}} \cap X}\geq \card{\nonleaves{\cTstar}{\lcatree{u}{v}{\cTstar}} \cap X}-50000 \log{n}.\]
% For the set $\nonleaves{\cT}{\lcatree{u}{v}{\cT}} \setminus X$, we argue that $\card{\nonleaves{\cT}{\lcatree{u}{v}{\cT}} \setminus X}\geq \card{\nonleaves{\cTstar}{\lcatree{u}{v}{\cTstar}} \setminus X}$. To see this, note that by the \emph{subtree preserving property}, we have $\lcatree{u}{v}{\cT}=\lcatreeset{X}{\cT}$. As such, we have $\card{\leaves{\cT}{\lcatree{u}{v}{\cT}} \setminus X} = 0$. On the other hand, we have $\card{\leaves{\cT}{\lcatree{u}{v}{\cT}} \setminus X}\geq 0$. Therefore, we have $\card{\leaves{\cT}{\lcatree{u}{v}{\cT}} \setminus X}\geq \card{\leaves{\cT}{\lcatree{u}{v}{\cT}} \setminus X}$, which implies
% \[\card{\nonleaves{\cT}{\lcatree{u}{v}{\cT}} \setminus X}\geq \card{\nonleaves{\cTstar}{\lcatree{u}{v}{\cTstar}} \setminus X},\]
% as desired.

\item \textbf{$u$ and $v$ are in different super-vertices $X,Y$ of $\pTree$.} Suppose w.log. that $u\in X$ and $v \in Y$. By the \textit{subtree preserving property}, we have $\leaves{\cT}{\lcatree{u}{v}{\cT}}=\leaves{\cTstar}{\lcatreeset{X\cup Y}{\cTstar}}$. We now discuss further two sub-cases:
\begin{enumerate}[label=\alph*).]
\item If $\leaves{\cTstar}{\lcatreeset{X}{\cTstar}}$ and $\leaves{\cTstar}{\lcatreeset{Y}{\cTstar}}$ are disjoint. In this case, we have exactly $\leaves{\cTstar}{\lcatreeset{X\cup Y}{\cTstar}}=\leaves{\cTstar}{\lcatree{u}{v}{\cTstar}}$, which implies $\leaves{\cT}{\lcatree{u}{v}{\cT}}=\leaves{\cTstar}{\lcatree{u}{v}{\cTstar}}$. Therefore, we have
\[\card{\nonleaves{\cT}{\lcatree{u}{v}{\cT}}}= \card{\nonleaves{\cTstar}{\lcatree{u}{v}{\cTstar}}}.\]
This proves the ``furthermore'' part of the second case.

\item If $\leaves{\cTstar}{\lcatreeset{X}{\cTstar}}$ and $\leaves{\cTstar}{\lcatreeset{Y}{\cTstar}}$ have intersections. In this case, note that by the \textit{weak contraction property}, we must have \emph{inclusion} relationship between $\leaves{\cTstar}{\lcatreeset{X}{\cTstar}}$ and $\leaves{\cTstar}{\lcatreeset{Y}{\cTstar}}$. Suppose without loss of generality, that $\leaves{\cTstar}{\lcatreeset{X}{\cTstar}} \supseteq \leaves{\cTstar}{\lcatreeset{Y}{\cTstar}}$, the differences between $\leaves{\cTstar}{\lcatreeset{X\cup Y}{\cTstar}}$ and $\leaves{\cTstar}{\lcatree{u}{v}{\cTstar}}$ is at most the set of $X$. Since $\card{X}\leq 50000\log{n}$, we have
\[\card{\nonleaves{\cT}{\lcatree{u}{v}{\cT}}}\geq \card{\nonleaves{\cTstar}{\lcatree{u}{v}{\cTstar}}} - 50000\log{n},\]
as desired.
\end{enumerate}
\end{itemize}
This concludes the proof of \Cref{lem:num-non-leaf-weak-tree}. %\myqed{\Cref{lem:num-non-leaf-weak-tree}}
\end{proof}

To prove the approximation guarantee of \Cref{thm:hc-alg-MW}, we will need a handful of structural observations for the optimal HC trees in the Moseley-Wang objective. We first observe that for any optimal HC tree $\cTstar$ necessarily has a ``monotone edge weight'' property between an internal node and its descendants.

\begin{figure}[!htb]
	\centering
	\includegraphics[width=0.8\textwidth]{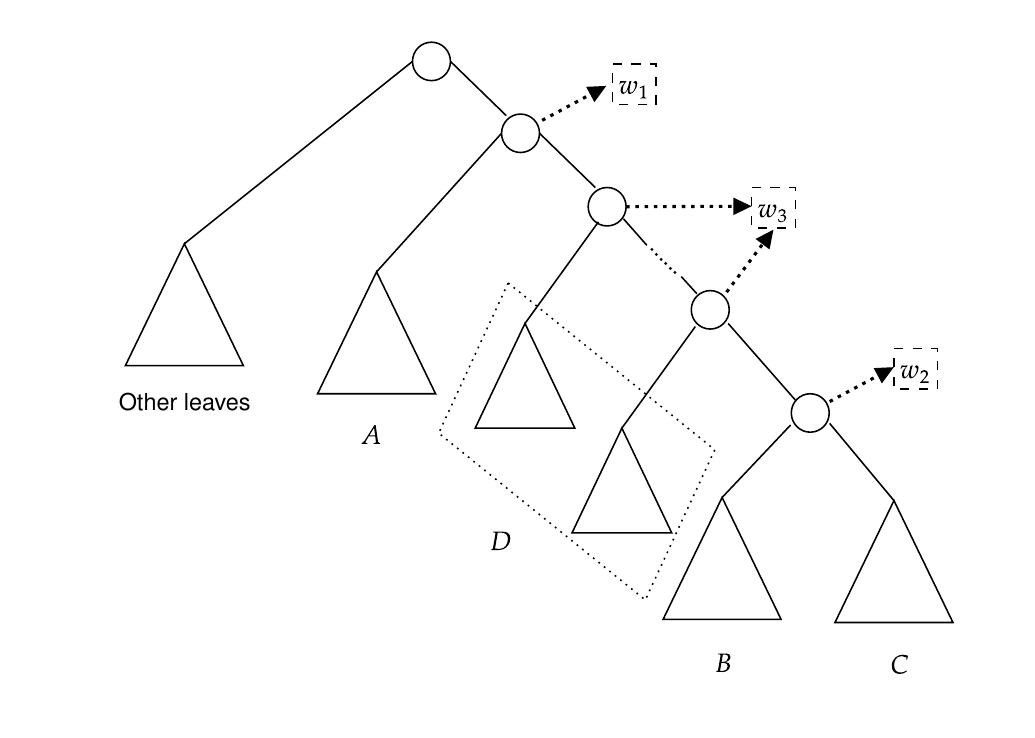}
	\caption{\label{fig:mw-OPT-structure} An illustration of the edge weights and leaves described in \Cref{clm:weight-property-internal-nodes}.}
\end{figure}

\begin{claim}
\label{clm:weight-property-internal-nodes}
% \chen{This claim is incorrect in the current form; find something similar that makes it work.}
Let $z_{1}$ and $z_{2}$ be any two internal nodes of the optimal HC tree $\cTstar$ under the Moseley-Wang objective, and let $z_{2}$ be a descendant node of $z_{1}$. We use $w_1$ and $w_2$ to denote the total weights of the edges induced by $z_1$ and $z_2$, respectively. Furthermore, let $w_3$ be the total weights of the edges induced on the internal nodes between $z_1$ and $z_2$.

Suppose the partition induced by $z_{1}$ is $S\rightarrow (A, S\setminus A)$, and let $B$ and $C$ be the set of vertices induced by $z_{2}$ such that $B\cup C\subseteq S\setminus A$. Furthermore, let $D=(S\setminus A)\setminus (B\cup C)$, and suppose $\max\{\card{B},\card{C}\}\geq \card{A}$. Then, there is
\begin{align*}
\frac{w_1-(\frac{\card{A}}{\card{D}}+1) \cdot w_3}{\card{A}+\card{D}}\leq \frac{w_2}{\max\{\card{B},\card{C}\}}.
\end{align*}
An illustration of the edges and leaves used in the statement can be found in \Cref{fig:mw-OPT-structure}.
\end{claim}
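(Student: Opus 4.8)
The plan is an exchange argument against the optimality of $\cTstar$. Since the Moseley-Wang revenue and Dasgupta's cost have the same optimal tree (they sum, over any $\cT$, to the graph-dependent constant $n\sum_e w(e)$), $\cTstar$ is simultaneously revenue-maximal and cost-minimal, so no local rearrangement of $\cTstar$ may increase revenue. I would exhibit a small repertoire of rearrangements and read the inequality off the resulting ``$\le 0$'' conditions. Fix the tree path $z_1 = x_0, x_1, \dots, x_k = z_2$, let $O_1,\dots,O_{k-1}$ be the subtrees hanging off $x_1,\dots,x_{k-1}$ with leaf sets $D_1,\dots,D_{k-1}$ partitioning $D$, and set $s_j := \card{\leaves{\cTstar}{x_j}} = \card{B}+\card{C}+\sum_{i\ge j}\card{D_i}$; assume without loss of generality $\card{B} = \max\{\card{B},\card{C}\}$, so the hypothesis reads $\card{B}\ge\card{A}$. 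I would split the cut weight at $z_1$ by the far endpoint, $w_1 = w_1^{AB} + w_1^{AC} + w_1^{AD}$ with $w_1^{AD} = \sum_i w(A, D_i)$.

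The first rearrangement $\cT'$ detaches $T_A$ (the subtree with leaves $A$), contracts $z_1$, and reinserts $T_A$ as a sibling of $z_2$. Only edges inside $S$ move their LCA, and a class-by-class computation shows: every $A$-$B$ and $A$-$C$ edge gains $\card{D}$ (its LCA shrinks from $S$ to $A\cup B\cup C$), every $A$-$D_i$ edge gains $\card{D_{<i}}$ where $D_{<i} = \bigcup_{j<i}D_j$, the $z_2$-edges are unchanged, and every $w_3$-edge \emph{loses} exactly $\card{A}$ because its LCA now lies above $A$. Optimality therefore gives, after discarding a nonnegative term,
\begin{align*}
\card{D}\,\bigl(w_1^{AB} + w_1^{AC}\bigr) \;\le\; \card{A}\,w_3 .
\end{align*}
The second rearrangement $\cT''$ \emph{swaps} the subtrees $T_A$ and $T_B$; this is legitimate exactly because $\card{B} \ge \card{A}$, which is also what keeps the growth of the path-node subtrees to at most $\card{A}+\card{D}$. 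Tracking the classes again — each $A$-$C$ edge gains $\card{B}+\card{D}$, each $A$-$D_i$ edge gains at least $\card{B}$ using $s_i \le \card{B}+\card{C}+\card{D}$, the $z_2$-edges lose $\card{A}+\card{D}$, the $w_3$-edges touching $B$ lose at most $\card{A}+\card{D}$, and the remaining $w_3$-edges gain $\card{B}-\card{A}\ge 0$ — optimality yields
\begin{align*}
\card{B}\,\bigl(w_1^{AC} + w_1^{AD}\bigr) \;\le\; \bigl(\card{A}+\card{D}\bigr)\bigl(w_2 + w_3\bigr).
\end{align*}
Combining these with $w_1^{AB} \le \tfrac{\card{A}}{\card{D}}w_3$ (from the first inequality) already yields the claim whenever $\tfrac{\card{A}+\card{D}}{\card{B}}\le 1$. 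For the general regime $\card{B}\ge\card{A}$ one needs, in addition, the family of rearrangements that move $T_A$ past only the first $t$ nodes of the path (making it a sibling of $x_{t+1}$); each of these produces a sharper inequality of the form $\card{D_{\le t}}\bigl(w_1^{AB}+w_1^{AC}+w(A,D_{>t})\bigr) + \sum_{i\le t}\card{D_{<i}}\,w(A,D_i) \le \card{A}\sum_{i\le t}w_3^{(i)}$, where $w_3^{(i)}$ is the $w_3$-weight whose LCA is exactly $x_i$. Feeding a suitable threshold $t$ into the combination pins down $w_1^{AD}$ tightly enough to obtain the coefficient $\tfrac{\card{A}}{\card{D}}+1$ in the statement.

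The step I expect to be the main obstacle is precisely the control of $w_1^{AD}$, the weight of edges from $A$ to the ``peeled-off'' set $D$: no single relocation of $T_A$ can shrink the LCA of all $A$-$D_i$ edges substantially, because $D$ is spread along the entire path $z_1 \to z_2$ and any such LCA must still contain $B\cup C$. This is the structural reason the denominator in the claim is $\card{A}+\card{D}$ rather than just $\card{A}$, and why the correction carries the factor $\tfrac{\card{A}}{\card{D}}+1$. The remaining work — determining, for each edge and each rearrangement, whether its $\cTstar$-LCA lies above, at, or below the reattachment point, hence whether its LCA-subtree grows or shrinks by $\card{A}$ (resp.\ by $\card{B}-\card{A}$) — is routine but has to be carried out with care, and is where the hypothesis $\max\{\card{B},\card{C}\}\ge\card{A}$ gets used.
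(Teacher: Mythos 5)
Your strategy is the paper's: both of your exchanges — detach $T_A$, contract $z_1$, and reattach $T_A$ as a sibling of $z_2$; then swap $T_A$ with $T_B$ — are exactly the two local rearrangements the paper uses, and the first exchange produces $w(A,B)+w(A,C)\le\frac{\card{A}}{\card{D}}\cdot w_3$ just as the paper's intermediate claim on $w(A,B)$ does.

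The gap is in the second exchange. You correctly compute that, after swapping $T_A$ and $T_B$, a $w_3$-edge from a side-set $D_j$ to $B$ loses $\card{A}+\card{D_{<j}}$ in revenue (its LCA jumps all the way up to $z_1$), so the total $E_3$ loss can only be bounded by $(\card{A}+\card{D})\cdot w_3$. The paper instead asserts the $E_3$ loss is at most $\card{A}\cdot w_3$, and this stronger bound is exactly what closes the argument there: it yields $\card{B}\bigl(w(A,C)+w(A,D)\bigr)\le(\card{A}+\card{D})\cdot w_2+\card{A}\cdot w_3$, after which one adds $\card{B}\cdot w(A,B)$, applies $w(A,B)\le\frac{\card{A}}{\card{D}}\cdot w_3$, and absorbs $\card{A}\cdot w_3\le\card{B}\cdot w_3$ via $\card{A}\le\card{B}$. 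With your (weaker) bound the absorption step requires $\card{A}+\card{D}\le\card{B}$, which the hypothesis does not supply. You notice this, and the family of ``move $T_A$ past only the first $t$ path nodes'' exchanges is a sensible direction for pinning down $w(A,D)$ — but it remains a sketch: you never fix the threshold $t$, show how the resulting inequalities combine with the two you already have, or demonstrate that the combination recovers the coefficient $\frac{\card{A}}{\card{D}}+1$. As written, the proposal does not establish the claim in the regime $\card{A}+\card{D}>\card{B}$, which the statement permits.

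It is worth flagging that your $E_3$ accounting is the careful one and exposes a real tension with the paper's. The paper's per-edge $E_3$ loss of $\card{A}$ is exact only when $D$ consists of a single maximal subtree (the case in which the paper's phrase ``switching the subtrees induced by $A$ and $D$'' is literal, i.e.\ $z_2$ is a grandchild of $z_1$); when $D$ is spread along several levels of the $z_1$-to-$z_2$ path, the $D$-$B$ edges incur a strictly larger per-edge loss, as you observe. Whether the paper's bound survives — because the gains on the $D$-$C$ and $D$-$D$ edges offset the extra $\card{D_{<j}}$ loss — or whether a multi-level argument of the kind you sketched is genuinely needed, is the substantive work that remains before this proof is complete.
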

\begin{proof}
The claim is similar in spirit to the ``switching lemma'' under Dasgupta's objective as proved in \citep{HogemoBBPT21}. Suppose w.log. that $\card{B}\geq \card{C}$. For the purpose of this proof (and also that of \Cref{clm:weight-property-w-A-B}), we let $\rev_{G}(\cT, E_{1})$ be the revenue induced by a subset of edges $E_{1}$ with $\cT$ being the HC tree of $G$. We further observe some useful relationships between the weights $w_1$, $w_2$, $w_3$ and the weights $w(A,B)$, $w(A,C)$, $w(A,D)$, $w(B,C)$, $w(B,D)$, and $w(C,D)$ as follows.
\begin{align*}
w_1 = w(A,D)+w(A,B)+w(A,C) \qquad w_2=w(B,C) \qquad w_3 \geq w(B,D) + w(C,D).
\end{align*}
We first prove a self-contained structural claim that the edge weights between $A$ and $B$ cannot be too much bigger than $w_3$. More formally, the claim is as follows.
\begin{claim}
\label{clm:weight-property-w-A-B}
Let $A$, $B$, $C$, and $D$ be the set of vertices and $w_1$, $w_2$, and $w_3$ be the edge weights as prescribed in \Cref{clm:weight-property-internal-nodes}. Furthermore, let $E(A,B)$ be the edges between $A$ and $B$, and $w(A,B)$ be the weights of $E(A,B)$. Then, there is
\begin{align*}
w(A,B)\leq \frac{\card{A}}{\card{D}}\cdot w_{3}.
\end{align*}
\end{claim}
\begin{proof}
As consistent with the proof of \Cref{clm:weight-property-internal-nodes}, we assume w.log. that $\card{B}\geq \card{C}$. Let us construct a tree $\cT^{(1)}$ based on $\cTstar$ by switching the subtrees induced by $A$ and $D$. Note that in such change of HC tree, the only edges that will have a changed revenue are $E(A,B)$, $E(A,C)$, $E(A,D)$, $E(B,C)$, and edges accounted by $w_3$, which we call $E_3$ (including but not limited to $E(B,D)$ and $E(C,D)$), and edges in. We list these changes as follows. 
\begin{enumerate}
\item $E(A,B)$: we have that $\rev_{G}(\cT^{(1)}, E_{A,B})-\rev_{G}(\cTstar, E_{A,B})\geq \card{D}\cdot w(A,B)$.
\item $E(A,C)$: we have that $\rev_{G}(\cT^{(1)}, E_{A,C})-\rev_{G}(\cTstar, E_{A,C})\geq \card{D}\cdot w(A,C)$.
\item $E(A,D)$: we have that $\rev_{G}(\cT^{(1)}, E_{A,D})-\rev_{G}(\cTstar, E_{A,D}) = 0$.
\item $E(B,C)$: we have that $\rev_{G}(\cT^{(1)}, E_{B,C})-\rev_{G}(\cTstar, E_{B,C}) = 0$.
\item $E_3$: we have that $\rev_{G}(\cT^{(1)}, E_{3})-\rev_{G}(\cTstar, E_{3}) \geq - w_3 \cdot \card{A}$.
\end{enumerate}
To maintain the optimality of $\cTstar$, there should be $\rev_{G}(\cT^{(1)})-\rev_{G}(\cTstar)\leq 0$. As such, we have
\begin{align*}
&0 \geq \rev_{G}(\cT^{(1)})-\rev_{G}(\cTstar)\\
&= \paren{\rev_{G}(\cT^{(1)}, E_{A,B})-\rev_{G}(\cTstar, E_{A,B})} + \paren{\rev_{G}(\cT^{(1)}, E_{A,C})-\rev_{G}(\cTstar, E_{A,C})} 
\\ 
& \quad + \paren{\rev_{G}(\cT^{(1)}, E_{A,D})-\rev_{G}(\cTstar, E_{A,D})} + \paren{\rev_{G}(\cT^{(1)}, E_{B,C})-\rev_{G}(\cTstar, E_{B,C})} \\
& \quad + \rev_{G}(\cT^{(1)}, E_{3})-\rev_{G}(\cTstar, E_{3})\\
&\geq \paren{w(A,B)+w(A,C)}\cdot \card{D}-w_3 \cdot \card{A}.
\end{align*}
As such, we can move the terms around, and obtain that
\begin{align*}
w(A,B)\leq w(A,B)+w(A,C) \leq \frac{\card{A}}{\card{D}}\cdot w_3,
\end{align*}
as desired. 
\myqed{\Cref{clm:weight-property-w-A-B}}
\end{proof}

We now use \Cref{clm:weight-property-w-A-B} to prove \Cref{clm:weight-property-internal-nodes}. 
We again construct a tree $\cT^{(2)}$ based on $\cTstar$ by switching the subtrees induced by \emph{$A$ and $B$} (note that this is different from the proof of \Cref{clm:weight-property-w-A-B}). Observe again that only the edges that have different revenue contribution in $\cT^{(2)}$ vs. $\cTstar$ are $E(A,B)$, $E(A,C)$, $E(A,D)$, $E(B,C)$, and edges accounted by $w_3$. We again list all the changes of revenue induced on these edges.

\begin{enumerate}
\item $E(A,B)$: we have that $\rev_{G}(\cT^{(2)}, E_{A,B})-\rev_{G}(\cTstar, E_{A,B})=0$.
\item $E(A,C)$: we have that $\rev_{G}(\cT^{(2)}, E_{A,C})-\rev_{G}(\cTstar, E_{A,C})\geq (\card{B}+\card{D})\cdot w(A,C)$.
\item $E(A,D)$: we have that $\rev_{G}(\cT^{(2)}, E_{A,D})-\rev_{G}(\cTstar, E_{A,D}) \geq \card{B}\cdot w(A,D)$.
\item $E(B,C)$: we have that $\rev_{G}(\cT^{(2)}, E_{B,C})-\rev_{G}(\cTstar, E_{B,C}) \geq -(\card{A}+\card{D}) \cdot w(B,C)$.
\item $E_3$: we have that $\rev_{G}(\cT^{(2)}, E_{3})-\rev_{G}(\cTstar, E_{3}) \geq - \card{A}\cdot w_3$.
\end{enumerate}

Therefore, by the optimally of $\cTstar$, there should be $\rev_{G}(\cT^{(2)})-\rev_{G}(\cTstar)\leq 0$. As such, we have
\begin{align*}
&0 \geq \rev_{G}(\cT^{(2)})-\rev_{G}(\cTstar)\\
&= \paren{\rev_{G}(\cT^{(2)}, E_{A,B})-\rev_{G}(\cTstar, E_{A,B})} + \paren{\rev_{G}(\cT^{(2)}, E_{A,C})-\rev_{G}(\cTstar, E_{A,C})} 
\\ 
& \quad + \paren{\rev_{G}(\cT^{(2)}, E_{A,D})-\rev_{G}(\cTstar, E_{A,D})} + \paren{\rev_{G}(\cT^{(2)}, E_{B,C})-\rev_{G}(\cTstar, E_{B,C})} \\
& \quad + \paren{\rev_{G}(\cT^{(2)}, E_{3})-\rev_{G}(\cTstar, E_{3})}\\
&\geq (\card{B}+\card{D})\cdot w(A,C) + \card{B}\cdot w(A,D)  -(\card{A}+\card{D}) \cdot w(B,C) - w_3 \cdot \card{A}.
\end{align*}
As such, by moving the terms around, we can get that 
\begin{align*}
\card{B} \cdot \paren{w(A,C) +  w(A,D)} & \leq (\card{B}+\card{D})\cdot w(A,C) + \card{B}\cdot w(A,D) \\
& \leq (\card{A}+\card{D})\cdot w(B,C)+  \card{A} \cdot w_3.
\end{align*}
We can use the observation that $w_1 = w(A,B)+w(A,C)+w(A,D)$ to obtain that
\begin{align*}
\card{B} \cdot w_1 & = \paren{w(A,C) + w(A,B) +  w(A,D)} \\
& \leq (\card{A}+\card{D})\cdot w(B,C)+ \card{A}\cdot w_3 + \card{B}\cdot w(A,B)
\end{align*}
by adding $\card{B}\cdot w(A,B)$ on both sides. Now, we can apply \Cref{clm:weight-property-w-A-B} to obtain that
\begin{align*}
\card{B} \cdot w_1 & \leq (\card{A}+\card{D})\cdot w(B,C)+ \card{A}\cdot w_3 + \card{B}\cdot \frac{\card{A}}{\card{D}}\cdot w_3\\
&\leq (\card{A}+\card{D})\cdot w(B,C) + \card{B}\cdot w_3 + \card{B}\cdot \frac{\card{A}}{\card{D}}\cdot w_3 \tag{using $\card{A}\leq \card{B}$}
\end{align*}
Note that $w(B,C)=w_2$. As such, the above implies that
\begin{align*}
\card{B} \cdot \paren{w_1 - (1+\frac{\card{A}}{\card{D}})w_3} \leq (\card{A}+\card{D})\cdot w_2.
\end{align*}
Moving the turns around in the above inequality gives us the desired bound. \myqed{\Cref{clm:weight-property-internal-nodes}}
\end{proof}

We now use \Cref{clm:weight-property-internal-nodes} to show that the set of edges $(u,v)$ such that 
\begin{enumerate}[label=\alph*).]
\item has at most $O(\log^2{n})$ non-leaves in $\cTstar$; and
\item let $X$ and $Y$ be corresponding super-vertices that contain $u$ and $v$ in a weakly consistent partial tree $\pTree$; there is $\leaves{\cTstar}{\lcatreeset{X}{\cTstar}} \cap \leaves{\cTstar}{\lcatreeset{Y}{\cTstar}} \neq \emptyset$.
\end{enumerate}
can contribute to at most an $o(1)$ fraction of the optimal cost. This means that our estimation with non-leaves using \Cref{lem:num-non-leaf-weak-tree} would lead to a good approximation. The formal statement is as follows.

\begin{lemma}
\label{lem:MW-opt-structure}
Let $\pTree$ be an arbitrary partial HC tree that is weakly consistent with the optimal HC tree $\cTstar$ under the Moseley-Wang objective. Define $\Elow(\pTree) \subseteq V\times V$ as the edges such that for any $(u,v)\in \Elow(\pTree)$, there is
\begin{enumerate}[label=\alph*).]
\item $\card{\nonleaves{\cTstar}{\lcatree{u}{v}{\cTstar}}}\leq 50000\cdot \log^2{n}$.
\item Let $X$ and $Y$ be corresponding super-vertices that contain $u$ and $v$ in $\pTree$; there is 
\[\leaves{\cTstar}{\lcatreeset{X}{\cTstar}} \cap \leaves{\cTstar}{\lcatreeset{Y}{\cTstar}} \neq \emptyset.\]
\end{enumerate}
Then, for sufficiently large $n$, the total contribution of revenue from the edges in $\Elow(\pTree)$ is at most $50000\cdot \frac{\log^4{n}}{n}$ fraction of $\OPTmw$, i.e.,
\begin{align*}
\sum_{e=(u,v)\in \Elow(\pTree)} w(e)\cdot \card{\nonleaves{\cTstar}{\lcatree{u}{v}{\cTstar}}} \leq O(\frac{\log^4{n}}{n}) \cdot \OPTmw.
\end{align*}
\end{lemma}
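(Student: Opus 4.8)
My plan is to reduce the statement to bounding the total \emph{weight} of $\Elow(\pTree)$, then use conditions (a) and (b) to confine these edges to a thin ``central path'' of $\cTstar$, and finally apply the switching inequality \Cref{clm:weight-property-internal-nodes} as a charging tool against the revenue of deep, small subtrees of $\cTstar$. For the reduction: property (a) gives $\card{\nonleaves{\cTstar}{\lcatree{u}{v}{\cTstar}}}\le 50000\log^{2} n$ for every $e=(u,v)\in\Elow(\pTree)$, so
\[
\sum_{e=(u,v)\in\Elow(\pTree)} w(e)\cdot \card{\nonleaves{\cTstar}{\lcatree{u}{v}{\cTstar}}}\ \le\ 50000\log^{2} n\cdot\sum_{e\in\Elow(\pTree)} w(e),
\]
and it therefore suffices to prove $\sum_{e\in\Elow(\pTree)} w(e)=O(\log^{2} n/n)\cdot\OPTmw$.

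\textbf{Step 2 (localizing $\Elow(\pTree)$ on a central path).} Take $e=(u,v)\in\Elow(\pTree)$ with $u\in X$, $v\in Y$. Property (b) says the LCA-induced leaf sets of $X$ and $Y$ intersect, and since both are subtrees of $\cTstar$ one contains the other; assume w.l.o.g.\ $\leaves{\cTstar}{\lcatreeset{Y}{\cTstar}}\subseteq\leaves{\cTstar}{\lcatreeset{X}{\cTstar}}$. Then $\lcatree{u}{v}{\cTstar}$ lies in the subtree of $\lcatreeset{X}{\cTstar}$, so property (a) forces that subtree to contain at least $n-50000\log^{2} n$ leaves; equivalently $\lcatreeset{X}{\cTstar}$ sits on the ``central path'' $P=(r_{0},\dots,r_{t})$ of $\cTstar$, by which I mean the nodes whose subtree misses at most $50000\log^{2} n$ leaves — these form a path from the root with pendant subtrees $c_{0},\dots,c_{t-1}$, each of at most $50000\log^{2} n$ leaves. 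Because $X$ is a super-vertex with $\card{X}\le 50000\log n$ and out-degree at most $2$ in $\cTstar$, a short case analysis (its unique ``side'' subtree cannot be the heavy child of a node on $P$) shows its maximal subtrees form a contiguous block $c_{j'},\dots,c_{j''}$ of $P$ with $j''-j'\le\card{X}\le 50000\log n$, and its orphan subtree is $r_{j''+1}$; hence $v\in\leaves{\cTstar}{r_{j''+1}}$ and $\lcatree{u}{v}{\cTstar}=r_{k}$ with $u\in c_{k}$. Thus every $\Elow(\pTree)$ edge has its LCA on $P$ and (after the swap) has its ``small side'' contained in a single super-vertex of this type; writing $g(k):=w\!\left(c_{k},\leaves{\cTstar}{r_{k+1}}\right)$ for the weight of edges with LCA $r_{k}$, we obtain $\sum_{e\in\Elow(\pTree)} w(e)\le\sum_{k:\ r_{k}\text{ owned}} g(k)$, the sum being over indices $k$ lying in the block of some such super-vertex.

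\textbf{Step 3 (charging via the switching inequality).} For each owned index $k$ I would pick, inside the heavy side $\leaves{\cTstar}{r_{k+1}}$, a descendant $z_{2}$ of $r_{k}$ whose subtree has few leaves — so that its contribution $w_{2}\cdot\card{\nonleaves{\cTstar}{z_{2}}}$ to $\OPTmw$ is at least $\tfrac12\,n\,w_{2}$ — but whose larger child $B$ still satisfies $\card{B}\ge\card{c_{k}}$, chosen so that only an $o(1)$ fraction of $g(k)$ is split off along the path from $r_{k}$ to $z_{2}$. Applying \Cref{clm:weight-property-internal-nodes} with $z_{1}=r_{k}$, $A=c_{k}$, and this $z_{2}$, and rearranging its conclusion as
\[
w_{2}\ \ge\ \frac{\card{B}}{\card{A}+\card{D}}\Bigl(g(k)-\bigl(\tfrac{\card{A}}{\card{D}}+1\bigr)w_{3}\Bigr)
\]
(here $\tfrac{\card{A}}{\card{D}}=O(\log^{2} n/n)$ since $\card{D}=\Omega(n)$), a heavy near-root cut at $r_{k}$ is forced to induce a heavy cut at $z_{2}$, and hence $\Omega(n/\log^{2} n)\cdot g(k)$ units of revenue onto edges with LCA $z_{2}$ in $\cTstar$. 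Summing over $k$, and arguing that the receiving subtrees $z_{2}$ may be taken (nearly) disjoint so that their revenues total at most $\OPTmw$, gives $\sum_{k\text{ owned}} g(k)=O(\log^{2} n/n)\cdot\OPTmw$; with Step 1 this proves the lemma.

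\textbf{Main obstacle.} Essentially all the work is in Step 3, where two sources of loss must be controlled simultaneously. First, the ``transit'' weight $w_{3}$ split off between $r_{k}$ and $z_{2}$ must remain a small fraction of $g(k)$; this forces an adaptive choice of $z_{2}$ and a secondary argument — a telescoping along $P$, or a separate treatment of the few nodes where a heavy pendant subtree is split off — to cover the case where that weight accumulates quickly. Second, the receiving subtrees $z_{2}$ must be reused only $O(1)$ times (alternatively, one charges to a fixed partition of $\leaves{\cTstar}{r_{t}}$ into $\Theta(n/\log^{2} n)$ blocks of size $\Theta(\log^{2} n)$, at the price of an extra $\polylog$ factor), so that the total charged revenue stays below $\OPTmw$. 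Driving both losses down to the stated $O(\log^{2} n/n)$ factor, rather than a weaker $o(1)$ bound, is the crux of the argument.
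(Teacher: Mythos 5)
Your Steps 1 and 2 agree with the paper's approach: the paper also peels off the $50000\log^2 n$ nonleaves factor and localizes the edges via the observation that the sibling set $\tilde Y$ of $X$ in $\pTree$ has $|\tilde Y|\ge n-50000(\log^2 n+\log n)$, which is essentially your ``central path.'' The divergence — and the gap — is entirely in Step 3, and it is not a loose end one can finesse along the lines you sketch. The paper keeps $z_2$ \emph{shallow}: it fixes $z_2=\lcatreeset{\tilde Y}{\cTstar}$ (just below $X$'s block) and charges every owned level against the single cut $E_2$ there, whose heavy child $Y'$ has $|Y'|\ge(n-O(\log^2 n))/2$. Because $z_2$ is only a few steps below $z_1$, both $A$ and $D$ in \Cref{clm:weight-property-internal-nodes} lie inside $X$'s block, so $|A|+|D|\le|X|\le 50000\log n$ (not $\Omega(n)$), and the transit weight $w_3$ is a sum over only the $O(\log n)$ \emph{other} $\Elow$ levels, bounded by an induction over the block. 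The revenue lower bound is then applied once, to the single $E_2$, so no disjointness of receiving subtrees is ever needed. That induction is the ``telescoping along $P$'' you mention in a subordinate clause; it is not a fallback but the whole substance of Step 3.

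Your deep-$z_2$ plan breaks in two concrete ways. First, pushing $|D|$ to $\Omega(n)$ only shrinks $|A|/|D|$; it does nothing for $w_3$ itself, which becomes the total weight cut along a long descent inside $\leaves{\cTstar}{r_{k+1}}$ — weight unrelated to $\Elow$ that can vastly exceed $g(k)$, making the inequality vacuous. Second, the block fallback is not merely a polylog loss: with $|B|=\Theta(\log^2 n)$ you get $w_2\ge\Theta(\log^2 n/n)\cdot g(k)$ and receiver revenue $\Theta(\log^2 n)\cdot g(k)$, so $\sum_k g(k)\le O(1/\log^2 n)\cdot\OPTmw$; reinstalling the $\Theta(\log^2 n)$ nonleaves factor yields only $O(1)\cdot\OPTmw$, off by $\Theta(n/\log^4 n)$ from the target. (Separately, $|B|=\Theta(\log^2 n)$ need not dominate $|c_k|$, which can be as large as $50000\log^2 n$, so the hypothesis of \Cref{clm:weight-property-internal-nodes} may not even hold.) The missing idea is that the receiver must stay near the root — so that $|A|+|D|$ stays inside $X$'s block — and the transit weight must be controlled inductively by the other $\Elow$ levels rather than circumvented by going deep.
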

\begin{proof}
For any two super-vertices $X$ and $Y$ in the partial HC tree $\pTree$, by the \emph{weak contraction property}, if $\leaves{\cTstar}{\lcatreeset{X}{\cTstar}} \cap \leaves{\cTstar}{\lcatreeset{Y}{\cTstar}}$ is not empty, the only possible case is to have \emph{inclusion} relationships between the two sets of leaves. Suppose w.log. that $\leaves{\cTstar}{\lcatreeset{X}{\cTstar}}\supseteq \leaves{\cTstar}{\lcatreeset{Y}{\cTstar}}$. 
Let $\tilde{Y}$ be the set of vertices induced by the sibling node of $X$ in $\pTree$, and it is straightforward to see that $Y\subseteq \tilde{Y}$. We further let $Y'$ be the \emph{larger} immediate child of the tree induced by $\tilde{Y}$.

By the conditions of $i).$ $\card{\nonleaves{\cTstar}{\lcatree{u}{v}{\cTstar}}}\leq 50000\cdot \log^2{n}$, $ii).$ $\card{X}\leq 5000\log{n}$, and $iii).$ the inclusion relationship between the leaves, there is
\[\card{\tilde{Y}}\geq n-50000\cdot (\log^2{n}+\log{n}); \qquad \card{Y'}\geq \frac{n-50000\cdot (\log^2{n}+\log{n})}{2}.\]

We now use \Cref{clm:weight-property-internal-nodes} inductively to bound the weights of $w(E_1)$ for the sets of edges $E_1\subseteq \Elow$ that are split by some internal nodes of $X$.
Let $E_2$ be the set of edges that is split in the subtree induced by $Y'$, and let $\tilde{X}$ be vertices that are split by $E_1$ and as the sibling of $\tilde{Y}$. Our induction hypothesis is that
\begin{align*}
w(E_1)< \frac{C \cdot \log{n} \cdot w(E_2)}{n-50000\cdot (\log^2{n}+\log{n})}
\end{align*}
for some absolute constant $C$.

To prove this statement, we first look at the base case. By the size bound on $X$, it is straightforward to see that
\[\card{X}-\card{\tilde{X}}\leq 50000\log{n}.\]
In the base case, we pick edges $E_1 \subseteq \Elow$ such that $E_2$ is split immediately after $E_1$, i.e., $E_1$ are the edges between $\tilde{X}$ and $\tilde{Y}$. Now, we can use \Cref{clm:weight-property-internal-nodes} with $A\gets \tilde{X}$, $B\gets Y'$, and $D\gets \emptyset$ ($w_3=0$) to argue that
\begin{align*}
\frac{w(E_1)}{\card{X}}\leq \frac{w(E_1)}{\card{\tilde{X}}}< \frac{w(E_2)}{\card{Y'}}.
\end{align*}
By the size upper bound of $X$ and the size lower bound of $Y'$, the above inequality implies that
\begin{align*}
\frac{w(E_1)}{50000\log{n}}< \frac{2 \cdot w(E_2)}{n-50000\cdot (\log^2{n}+\log{n})}<\frac{C \cdot \log{n} \cdot w(E_2)}{n-50000\cdot (\log^2{n}+\log{n})}
\end{align*}
for any $C>2$, which proves the base case.

For the inductive step, let us suppose the statement holds until some internal node $z$, and we look into $E_1\subseteq \Elow$ that is induced on $\pa{z}$. We again use \Cref{clm:weight-property-internal-nodes} by letting $A\gets \tilde{X}$, $B\gets Y'$, and $D\gets X\setminus \tilde{X}$. Define $E_3$ as the set of edges that are split between $z$ and $\lcatreeset{\tilde{Y}}{\cTstar}$. By the induction hypothesis, for every node between $z$ and $\lcatreeset{\tilde{Y}}{\cTstar}$, the total induced weights is at most
\begin{align*}
\frac{C \cdot \log{n} \cdot w(E_2)}{n-50000\cdot (\log^2{n}+\log{n})}.
\end{align*}
Furthermore, since $\frac{\card{X}}{\card{\tilde{X}}}\leq \card{X} \leq 50000 \log{n}$, and $z$ and $\pa{\lcatreeset{\tilde{Y}}{\cTstar}}$ are in the same $X$ of $\pTree$, we can bound the total weights in $E_3$ as follows
\begin{align*}
w(E_3)\leq 50000\cdot \log{n}\cdot \frac{C \cdot \log{n} \cdot w(E_2)}{n-50000\cdot (\log^2{n}+\log{n})}.
\end{align*}
Furthermore, by the size upper bound of $X$, we have $\card{A}+\card{D}\leq 50000\cdot \log{n}$ in \Cref{clm:weight-property-internal-nodes}. Combining the above gives us that for sufficiently large $n$, we have
\begin{align*}
\frac{1}{2}\cdot \frac{w(E_1)}{\card{X}}\leq \frac{w(E_1)-(50000 \log{n} + 1)\cdot w(E_3)}{\card{X}} \leq \frac{w(E_1)-(\frac{\card{X}}{\card{\tilde{X}}}+1) \cdot w(E_3)}{\card{X}} < \frac{w(E_2)}{\card{Y'}},
\end{align*}
where the first inequality follows from the fact that $w(E_3)=O(\frac{\log^2{n}}{n})\cdot w(E_2)$, the second inequality follows from that $\card{X}/\card{\tilde{X}}\leq \card{X} \leq 50000 \log{n}$, and the last inequality follows from \Cref{clm:weight-property-internal-nodes}.
Therefore, we can again obtain that 
\begin{align*}
\frac{w(E_1)}{100000\log{n}}< \frac{4 \cdot w(E_2)}{n-50000\cdot (\log^2{n}+\log{n})},
\end{align*}
which means $w(E_1)< \frac{C \cdot \log{n} \cdot w(E_2)}{n-50000\cdot (\log^2{n}+\log{n})}$ for a sufficiently large constant $C$ ($C=400000$ suffices). This concludes our inductive proof for the weight bound of any $E_1 \subseteq \Elow$ in any $X$.

Using $w(E_1)\leq \frac{C\cdot \log{n}}{n-50000\cdot (\log^2{n}+\log{n})} \cdot w(E_2)$ for any $(u,v)\in \Elow$, we note that a trivial lower bound of the optimal revenue is
\[\OPTmw = \Omega(\log{n}\cdot w(E_2)),\]
since this is the revenue induced on the edge set $E_2$ only. On the other hand, since we need the inclusion relationship between the leaves, and by the fact that the number of non-leaves is at most $\log^2{n}$, there are at most $50000(\log^2{n}+\log{n})$ edges in $\Elow$, i.e.,
\[\card{\Elow}\leq 50000(\log^2{n}+\log{n}).\]
Therefore, the total contribution of revenue by $\Elow$ is at most 
\begin{align*}
& \sum_{e=(u,v)\in \Elow(\pTree)} w(e)\cdot \card{\nonleaves{\cTstar}{\lcatree{u}{v}{\cTstar}}} \\
&\leq \sum_{e=(u,v)\in \Elow(\pTree)} w(e)\cdot \log^2{n} \tag{by the number of non-leaves}\\
&\leq \max\{w(E_1)\mid E_1 \subseteq \Elow \}\cdot 50000(\log^2{n}+\log{n}) \cdot \log^2{n} \tag{uniform upper bound}\\
&\leq \frac{C\cdot \log{n}}{n-50000\cdot (\log^2{n}+\log{n})} \cdot w(E_2) \cdot 50000(\log^2{n}+\log{n}) \cdot \log^2{n} \tag{by the relationship between $w(E_2)$ and $w(E_1)$ for $E_1\subseteq \Elow$}\\
&\leq O(\frac{\log^4{n}}{n}\cdot \OPTmw), \tag{by the lower bound of $\OPTmw$}
\end{align*}
as desired. %\myqed{\Cref{lem:MW-opt-structure}}
\end{proof}

\paragraph{Finalizing the proof of \Cref{thm:hc-alg-MW}. } We have discussed that the algorithm enjoys $\tilde{O}(n^2)$ running time and $O(n^2)$ query efficiency. For the approximation guarantee, note that each edge $(u,v)$ gains a revenue of $w_{u,v}\cdot \card{\nonleaves{\cTstar}{\lcatree{u}{v}{\cTstar}}}$ in the optimal tree $\cTstar$. For edges $(u,v)\not\in \Elow$, let $\Esame$ be the set of edges where $u$ and $v$ are in the same super-vertex, and $\Ediff$ be the set of edges where $u$ and $v$ are in different super-vertices. We now have
\begin{itemize}
\item For edges in $\Esame$, we have $\card{\nonleaves{\cT}{\lcatree{u}{v}{\cT}}}\geq n - 50000\log{n}$ that by \Cref{lem:num-non-leaf-weak-tree}. On the other hand, any vertex pair has at most $n$ non-leaves in the graph. Therefore, we have that
\begin{align*}
\rev_{G\cap \Esame}{(\cT)} &= \sum_{e\in \Esame} w(e)\cdot \card{\nonleaves{\cT}{\lcatree{u}{v}{\cT}}}\\
&\geq \sum_{e\in \Esame} w(e)\cdot n - 50000\log{n}\\
&\geq  \sum_{e\in \Esame} w(e)\cdot n \cdot (1-\frac{50000\log{n}}{n})\\
&\geq (1-\frac{50000\log{n}}{n}) \cdot \rev_{G\cap \Esame}{(\cTstar)}.
\end{align*}
\item For edges in $\Ediff$, we have 
\begin{align*}
\card{\nonleaves{\cT}{\lcatree{u}{v}{\cT}}} & \geq \card{\nonleaves{\cTstar}{\lcatree{u}{v}{\cTstar}}}-50000\log{n} \\
& \geq (1-O(\frac{1}{\log{n}}))\cdot \card{\nonleaves{\cTstar}{\lcatree{u}{v}{\cTstar}}},
\end{align*}
where the first inequality follows from \Cref{lem:num-non-leaf-weak-tree}, and the second inequality is from the fact that $\card{\nonleaves{\cTstar}{\lcatree{u}{v}{\cTstar}}}\geq 50000\cdot \log^2{n}$ for every $(u,v)\not\in \Elow$. As such, we have that
\begin{align*}
\rev_{G\cap \Ediff}{(\cT)} &= \sum_{e\in \Ediff} w(e)\cdot \card{\nonleaves{\cT}{\lcatree{u}{v}{\cT}}}\\
&\geq \paren{1-O(\frac{1}{\log{n}})} \cdot  \sum_{e\in \Ediff} w(e)\cdot \card{\nonleaves{\cTstar}{\lcatree{u}{v}{\cTstar}}}\\
&= \paren{1-O(\frac{1}{\log{n}})} \cdot \rev_{G\cap \Ediff}{(\cTstar)}.
\end{align*}
\end{itemize}
Therefore, by additionally using \Cref{lem:MW-opt-structure}, we have that
\begin{align*}
\rev_{G}(\cT)&\geq \rev_{G\cap \Esame}{(\cT)} + \rev_{G\cap \Ediff}{(\cT)}\\
&\geq (1-O(\frac{1}{\log{n}})) \cdot \rev_{G\cap \Esame}{(\cTstar)} + \rev_{G\cap \Ediff}{(\cTstar)}\\
&\geq (1-O(\frac{1}{\log{n}})) \cdot \paren{\rev_{G\cap \Esame}{(\cTstar)} + \rev_{G\cap \Ediff}{(\cTstar)}}\\
&\geq (1-O(\frac{1}{\log{n}})) \cdot \paren{1-O(\frac{\log^4{n}}{n})} \cdot \rev_{G}(\cTstar) \tag{using \Cref{lem:MW-opt-structure}} \\
&\geq (1-o(1))\cdot \rev_{G}(\cTstar),
\end{align*}
as desired.

\begin{remark}
\label{rmk:MW-obj-strong-tree}
We can observe that if we run \Cref{alg:HC-MW} with the \emph{strongly consistent} partial HC tree, we can get a similar (and even stronger) approximation guarantee, albeit with worse efficiency ($\Otilde(n^3)$ time). Concretely, note that if we use the strongly consistent partial HC tree, the additive error again only happens on $E^{\text{same}}$, and we do \emph{not} need \Cref{lem:MW-opt-structure} to bound the contributions of the edges in $\Elow(\pTree)$ (since $=\emptyset$). In this way, we can further decrease the $o(1)$ term to $O(\frac{\log{n}}{n})$. However, the sacrifice of the running time is too significant, and we skip the details of this algorithm.
\end{remark}

\section{Learning-Augmented Sublinear Algorithms for Hierarchical Clustering}
\label{sec:hc-oracle-sublinear-algs}
In this section, we explore \emph{sublinear} algorithms for hierarchical clustering with the splitting oracle. 
Among these is a \emph{semi-streaming} algorithm, capable of computing an $O(1)$ approximation of Dasgupta's HC objective within \emph{polynomial time}. Additionally, we introduce a PRAM algorithm that achieves a $(1-o(1))$ approximation of the Moseley-Wang objective, utilizing $\tilde{O}(n^2)$ work and $\polylog{n}$ depth. From a technical standpoint, these algorithms represent straightforward extensions of the results outlined in \Cref{thm:hc-alg-MW,thm:constant-poly-time-HC,thm:root-loglog-n4-time-HC}. Despite their simplicity, these algorithms demonstrate the advantages of the splitting oracle in modern sublinear computation models. Specifically, we compare our sublinear algorithms with previous results as follows:
\begin{enumerate}
\item In the streaming setting, \citep{AssadiCLMW22,AgarwalKLP22} designed single-pass streaming algorithms that achieve $\tilde{O}(n)$ memory usage and $O(1)$-approximation to Dasgupta's objective, albeit in \emph{exponential time}. By improving the time efficiency to polynomial time, our streaming result echoes a similar narrative in the offline setting, demonstrating significantly more efficient constructions with the splitting oracle.
\item For the parallel setting, \citep{agarwal2024parallel} (cf. \citep{AgarwalKLP22}) provided parallel algorithms (in the PRAM and the similar MPC settings, see \Cref{sec:pram-mpc-models} for details of these models) for Dasgupta's objective with $\tilde{O}(n^2)$ work and $\polylog{n}$ depth that achieve $\polylog{n}$ approximation. Since the objectives are different, their result is not directly comparable to ours; however, the conceptual message here is still that the splitting oracle is able to significantly improve the approximation guarantee and the efficiency.
\end{enumerate}

We discuss these algorithms and their analysis for the rest of this section.

\subsection{A single-pass semi-streaming HC algorithm for Dasgupta's objective}
\label{subsec:hc-alg-streaming}
We present our algorithm for the semi-streaming algorithm in this part. To begin with, we need to define the \emph{model} for streaming hierarchical clustering with the splitting oracle.

\paragraph{Graph streaming with offline splitting oracle.} We focus on the (dynamic) graph streaming model with the \emph{offline} splitting oracle. In this model, the edges of the graph are inserted and deleted (together with their edge weights), and the algorithm is asked to output an HC tree by the \emph{end} of the stream. Additionally, the algorithm is given an offline splitting oracle $\cO$ \emph{before} the graph stream, and the algorithm is allowed to make unlimited computations before the stream starts. The total memory cost is the total number of bits the algorithm maintains at any point, including those dedicated to edge representation and those generated through offline computations. 

For polynomial-time efficiency, we assume that the stream itself is of $\poly(n)$ length since otherwise, the stream itself will take super-polynomial time to complete. Under the above model and setting, the formal statement of our result is as follows.
\begin{theorem}
\label{thm:hc-das-streaming}
There exists a single-pass (dynamic) streaming algorithm that given a weighted undirected graph $G=(V,E,w)$ in a $\poly(n)$-length dynamic stream and an offline splitting oracle $\cO$, with high probability, in $O(n\cdot \log^{3}{n})$ (bits) space and polynomial time computes a hierarchical clustering tree $\cT$ such that $\cost_{G}(\cT)\leq O(1)\cdot \OPTdas(G)$, where $\OPTdas(G)$ is the cost of the optimal hierarchical clustering tree $\cTstar$, i.e., $\OPTdas(G)=\cost_{G}(\cTstar)$.
\end{theorem}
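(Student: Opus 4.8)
The plan is to re-use the offline algorithm behind \Cref{thm:constant-poly-time-HC} (namely \Cref{alg:HC-Das}), but to schedule its three components across the streaming timeline so that the peak memory stays near-linear. \emph{Before} the stream begins, I would invoke the offline oracle $\cO$ to run the strong partial tree construction of \Cref{thm:strong-partial-tree}, obtaining a partial HC tree $\pTree$ strongly consistent with $\cTstar$; this is legal pre-stream computation, runs in $\Otilde(n^3)$ time with $O(n^3)$ queries (hence polynomial), and by item~(ii) of \Cref{thm:strong-partial-tree} uses only $O(n\log n)$ bits. I would then store $\pTree$ compactly --- a super-vertex identifier for each $v\in V$ plus the $O(n)$-node tree topology --- discard the oracle, and begin the single pass over the stream carrying $O(n\log n)$ bits.

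The key observation that makes the streaming phase cheap is that the post-processing needs nothing about the input except, for every super-vertex $X$ of $\pTree$, the final induced weighted subgraph $G[X]$. Indeed, by the analysis of \Cref{lem:HC-das-const-approx} the cost charged to an edge with endpoints in two different super-vertices is exactly its cost in $\cTstar$ \emph{regardless of how the super-vertex interiors are filled in}, so cross edges may be ignored entirely; the interior of $X$ affects only the edges of $E_{\mathrm{same}}$ with both endpoints in $X$, i.e. $G[X]$. Since the super-vertex partition is fixed before the stream, maintaining the $G[X]$'s is deterministic bookkeeping: on an update $(u,v,\Delta)$ I would test in $O(1)$ time whether $u$ and $v$ share a super-vertex and, if so, add $\Delta$ to the stored weight of $(u,v)$, otherwise drop it. Each super-vertex has at most $\binom{50000\log n}{2}=O(\log^2 n)$ internal edges and there are $O(n/\log n)$ super-vertices, so at most $O(n\log n)$ edges are ever stored, each with an $O(\log n)$-bit endpoint pair and an $O(\log n)$-bit weight (the stream has $\poly(n)$ length and weights are polynomially bounded), for a total of $O(n\log^3 n)$ bits together with $\pTree$. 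No linear sketching or $\ell_0$-sampling is needed because the relevant edge set is known a priori.

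After the stream I would recover each $G[X]$ exactly and, exactly as in \Cref{alg:HC-Das}, exhaustively search for the optimal recursive sparsest-cut tree inside each super-vertex; since $\card X=O(\log n)$ this is $2^{O(\log n)}=\poly(n)$ time per super-vertex and $\poly(n)$ in total, and it yields a complete HC tree $\cT$. The approximation bound then follows \emph{verbatim} from \Cref{lem:HC-das-const-approx}: cross edges pay $\cost_G(\cTstar,E_{\mathrm{cross}})$, same-super-vertex edges pay at most $O(1)\cdot\cost_G(\cTstar,E_{\mathrm{same}})$ by \Cref{prop:const-recursive-HC} applied to each $G[X]$, and \Cref{obs:edge-partition} combines them into $\cost_G(\cT)\le O(1)\cdot\OPTdas(G)$. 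The only randomness is that of \Cref{thm:strong-partial-tree}, so the stated high-probability guarantee is immediate.

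I expect the one genuinely delicate point to be the memory accounting under dynamic (insertion/deletion) updates --- specifically, justifying that we do \emph{not} need turnstile-streaming sketches for the within-super-vertex subgraphs. This is exactly where the \emph{offline} nature of the oracle is used: because $\pTree$, and hence the super-vertex partition, is computed before the stream, we know in advance which $O(n\log n)$ edge counters to keep, and maintaining their exact running values handles arbitrary insertions and deletions deterministically. Everything else is a straightforward repackaging of the offline $O(1)$-approximation of \Cref{thm:constant-poly-time-HC}.
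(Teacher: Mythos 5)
Your proposal is correct and follows essentially the same route as the paper's \Cref{alg:HC-streaming-poly-time}: pre-compute the strongly consistent partial tree with \Cref{thm:strong-partial-tree} before the stream, track only the $O(n\log n)$ intra-super-vertex edge weights exactly (which needs no sketching since the super-vertex partition is known a priori), then run the exhaustive recursive sparsest cut of \Cref{alg:HC-Das} after the stream and invoke \Cref{lem:HC-das-const-approx} for the approximation factor. Your account of \emph{why} deterministic counters suffice for dynamic updates is if anything a bit more explicit than the paper's, but the argument is the same.
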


\begin{proof}
The algorithm is to simply first construct a strong partial tree $\pTree$ with the algorithm of \Cref{def:strongly-consistent-tree} \emph{before} the stream starts, and store edges only inside the same super-vertices during the stream. By the end of the stream, we compute recursive sparsest cuts on the vertices induced on super-vertices of $\pTree$, and output in the same manner of \Cref{alg:HC-Das}. The formal description is as follows.

\begin{algorithm}
\caption{A polynomial-time single-pass semi-streaming algorithm for the Dasgupta's HC objective} \label{alg:HC-streaming-poly-time}
\KwIn{Input graph $G=(V,E,w)$; Splitting oracle $\cO$}
\KwOut{A hierarchical clustering tree $\cT$}
\textbf{Before} the start of the stream, run the strong partial tree approximation algorithm in \Cref{thm:strong-partial-tree} to obtain partial tree $\pTree$\;
\For{each edge $(u,v)$ with during the insertion/deletion stream}{
If $u,v\in X$ for any super-vertex $X$ in $\pTree$, update $(u,v)$ with the same insertion/deletion update\;
Otherwise, ignore the edge\;
}
\For{each super-vertex in $\pTree$ after the stream, run recursive sparsest cut as follows}{
On input vertex set $S$, \emph{exhaustively search} the sparsest cut $(A,B)$ on the induced subgraph $G[S]$.\;
Partition the vertices as $S\rightarrow (A,B)$, and recurse on $G[A]$ and $G[B]$.\;
}
\end{algorithm}

% In what follows, we analyze \Cref{alg:HC-streaming-poly-time} and prove \Cref{thm:hc-das-streaming}.

We first prove the time and space efficiency. Essentially, \Cref{alg:HC-streaming-poly-time} computes a strong partial tree $\pTree$ before the stream starts, and only main edges inside each super-vertex of $O(\log{n})$ size. By \Cref{thm:strong-partial-tree}, the pre-processing part takes polynomial ($O(n^3\log{n})$) time and $O(n\log{n})$ space. Furthermore, for each super-vertex, we maintain at most $O(\log^{2}n)$ edges; each edge could have at most $\poly(n)$ updates, which means an additive space of $O(\log{n})$ bits suffices for each edge. Therefore, we record at most $O(\log{n})$ bits for each edge. There are at most $n$ super-vertices, which means the algorithm maintains the information of $O(n\log^2{n})$ edges, which takes $O(n\log^3 {n})$ bits. After the stream, we partition the super-vertices with the edges stored, and write down the rest of the HC tree. The time efficiency of the post-processing part is exactly as \Cref{lem:HC-das-poly-efficiency}.

For the approximation guarantee, note that we are essentially simulating \Cref{alg:HC-Das} with the same input and output guarantees. As such, we can simply use \Cref{lem:HC-das-const-approx} to argue the approximation guarantee.
\end{proof}

\subsection{Parallel HC algorithms for Moseley-Wang Objective}
We now move the PRAM hierarchical clustering algorithm for the Moseley-Wang objective with near-linear $\tilde{O}(n^2)$ work and $\polylog(n)$ depth. The formal statement of the algorithm is as follows.
%\Cref{cor:parallel-weak-partial-tree}
%\Cref{thm:hc-alg-MW}

\begin{theorem}
\label{thm:hc-MW-PRAM}
There exists a PRAM algorithm that given a weighted undirected graph $G=(V,E,w)$ and a splitting oracle $\cO$, with high probability, in $O(n^2 \cdot \polylog{n})$ work and $\polylog{n}$ depth computes a hierarchical clustering tree $\cT$ such that $\rev_{G}(\cT)\geq (1-o(1))\cdot \OPTmw(G)$, where $\OPTmw(G)$ is the revenue of the optimal hierarchical clustering tree $\cTstar$, i.e., $\OPTmw(G)=\rev_{G}(\cTstar)$.
\end{theorem}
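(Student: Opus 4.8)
The plan is to parallelize \Cref{alg:HC-MW} directly. First I would invoke \Cref{cor:parallel-weak-partial-tree} to construct a partial HC tree $\pTree$ that is weakly consistent with $\cTstar$ using $O(n^2\cdot\polylog n)$ work and $O(\log^3 n)$ depth; since that construction already records the bipartition associated with every internal node, its output is a complete description of $\pTree$ down to the super-vertices, stored in $O(n\log n)$ words. Second, I would process the super-vertices in parallel: each super-vertex $X$ has $\card{X}\leq 50000\log n$ vertices, so an arbitrary balanced binary partition of $X$ can be computed by a divide-and-conquer that recurses on the two halves, using $O(\card{X}\cdot\log\card{X})$ work and $O(\log\log n)$ depth. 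Because the super-vertices are disjoint leaves of $\pTree$, these subproblems are independent and run simultaneously; summing over the $O(n/\log n)$ super-vertices gives $O(n\cdot\polylog n)$ additional work and $O(\log\log n)$ additional depth, after which each resulting subtree is attached under its super-vertex leaf in $\pTree$ to form the final tree $\cT$ with a single pointer update per super-vertex.

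For the resource bounds I would simply add the two phases: the work is dominated by the partial-tree construction, so the total work is $O(n^2\cdot\polylog n)$, and the depth is $O(\log^3 n)+O(\log\log n)=\polylog n$. For the approximation guarantee, I would observe that the tree produced here has exactly the same structure as the one produced by \Cref{alg:HC-MW} — a weakly consistent partial tree $\pTree$ whose super-vertices are filled in by arbitrary partitions — so the analysis in the proof of \Cref{thm:hc-alg-MW} transfers without change. Concretely, conditioning on the high-probability event guaranteed by \Cref{cor:parallel-weak-partial-tree} that $\pTree$ is weakly consistent with $\cTstar$, \Cref{lem:num-non-leaf-weak-tree} lower-bounds $\card{\nonleaves{\cT}{\lcatree{u}{v}{\cT}}}$ for every pair $(u,v)$, and \Cref{lem:MW-opt-structure} bounds the total revenue of the edge set $\Elow(\pTree)$ by an $o(1)$ fraction of $\OPTmw$; combining these exactly as in the finalization of \Cref{thm:hc-alg-MW} yields $\rev_G(\cT)\geq(1-o(1))\cdot\OPTmw(G)$.

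The only point that is genuinely new relative to \Cref{thm:hc-alg-MW} is checking that the super-vertex post-processing is low-depth and embarrassingly parallel, which is immediate from the $O(\log n)$ size bound; hence I do not anticipate a real obstacle. The one subtlety worth stating explicitly is that the PRAM model charges for writing down the output tree, but $\cT$ has only $O(n)$ nodes, so this cost is absorbed into the stated work bound, and likewise the $O(n^2)$ oracle queries used by the partial-tree construction are consistent with the claimed work.
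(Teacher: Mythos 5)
Your proposal is correct and follows essentially the same route as the paper's proof: invoke \Cref{cor:parallel-weak-partial-tree} for the weakly consistent partial tree, fill in each super-vertex with an arbitrary balanced partition done independently in parallel (each of $O(\log n)$ size, so $O(\log\log n)$ depth), and transfer the approximation analysis from \Cref{thm:hc-alg-MW} verbatim since the resulting tree is identical in structure to the output of \Cref{alg:HC-MW}. The additional remarks on output-writing cost and oracle-query accounting are sound sanity checks but not new ideas beyond the paper's argument.
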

\begin{proof}
The algorithm is to run the PRAM weak partial tree algorithm in \Cref{cor:parallel-weak-partial-tree} to obtain $\pTree$, and arbitrarily partition the vertices in the super-vertices of $\pTree$. The formal description of the algorithm is as follows.

\begin{algorithm}
\caption{A near-linear work, poly-logarithmic depth PRAM algorithm for the Moseley-Wang HC objective} \label{alg:HC-MW-PRAM}
\KwIn{Input graph $G=(V,E,w)$; Splitting oracle $\cO$}
\KwOut{A hierarchical clustering tree $\cT$}
Run the PRAM weak partial tree approximation algorithm in \Cref{cor:parallel-weak-partial-tree} to obtain partial tree $\pTree$\;
\For{each super-vertex in $\pTree$}{
Partition the leaves \emph{arbitrarily} to obtain an HC tree $\cT$\; 
}
\end{algorithm}

By \Cref{cor:parallel-weak-partial-tree}, the first step that computes the weak partial tree $\pTree$ only takes $\tilde{O}(n^2)$ work and $\polylog{n}$ depth. For the second step, since the partition is arbitrary, we can simply perform arbitrary balanced cuts on the vertices for each super-vertex $X$. For an individual $X$, this procedure can be done in $O(\log{n}\cdot \log\log{n})$ work and $O(\log\log{n})$ depth. By accounting for all the super-vertices, we blow up the work by at most an $O(n)$ factor and the depth remains the same since we can partition super-vertices in parallel. Therefore, the second step takes $O(n\cdot \log^2{n})$ work and $O(\log\log{n})$ depth. Therefore, the entire procedure takes $\tilde{O}(n^2)$ work and $\polylog{n}$ depth.

The output of \Cref{thm:hc-MW-PRAM} follows exactly the same rules of \Cref{alg:HC-MW}; therefore, the approximation guarantee follows from \Cref{thm:hc-alg-MW}.
\end{proof}

\begin{remark}
By the reduction of \Cref{prop:PRAM-to-MPC}, the result of \Cref{thm:hc-MW-PRAM} also implies a fully-scalable Massively Parallel Computation (MPC) algorithm that computes the HC tree $\cT$ with $(1-o(1))\cdot \OPTmw(G)$ revenue in $\tilde{O}(n^2)$ total memory and $\polylog{n}$ depth. The memory on each machine here is allowed to be $O(n^{\alpha})$ for any $\alpha\in (0,1)$, and we use $\tilde{O}(n^{2-\alpha})$ total machines in the MPC algorithm.
\end{remark}

\section{Preliminaries for Partial HC Tree Algorithms}
\label{sec:partial-tree-prelims}
We will discuss the construction of partial HC trees for the rest of this paper (\Cref{sec:partial-tree-prelims,sec:strong-partial-tree,sec:weak-partial-tree}). In this section, we first introduce the several notions that are essential in the \emph{analysis} of the partial HC tree algorithms. We use these notions extensively in the analysis of both \Cref{thm:strong-partial-tree} (\Cref{sec:strong-partial-tree}) and \Cref{thm:weak-partial-tree} (\Cref{sec:weak-partial-tree}). 

In the high-level overview, we slightly abused the notation to use $V$ to denote the subset of vertices. In our formal description of the algorithms, we will use $\Vp \subseteq V$ as the set of vertices of the current recursion step, and use $\np=\card{\Vp}$ as the size of the set. 

\subsection{Composable vertex sets and restricted subtrees}
To continue, we first introduce the notions of \emph{composable vertex sets} for a graph and the HC tree restricted to the sets.

\begin{definition}[Composable vertex sets]
\label{def:comp-set}
Let $G=(V,E)$ be a $n$-vertex graph and $\cT$ be a hierarchical clustering of $G$. For a subset $\Vp\subseteq V$, we say $\Vp$ is a \emph{composable (vertex) set} of $(G, \cT)$ if $\Vp$ can be written in a disjoint union of the leaves of \emph{maximal trees}, i.e., a union of vertices $\Vp=\cup_{i} \Vp_{i}$, such that \emph{all} $\Vp_{i}$ satisfies that 
\[\leaves{\cT}{\lcatreeset{\Vp_{i}}{\cT}} = \Vp_{i}.\]
In the special case, we say $\Vp$ is a \emph{single composable (vertex) set} if there is \emph{only one} such maximal tree, i.e., $\leaves{\cT}{\lcatreeset{\Vp}{\cT}} = \Vp$.
\end{definition}

In other words, if a vertex set $\Vp$ is composable, it means if we only look at the leaves of $\Vp$, they still form subtrees of $\cT$. An illustration of composable sets can be found in \Cref{fig:composable-sets}.

\begin{figure}[!htb]
	\centering
	\includegraphics[scale=0.6]{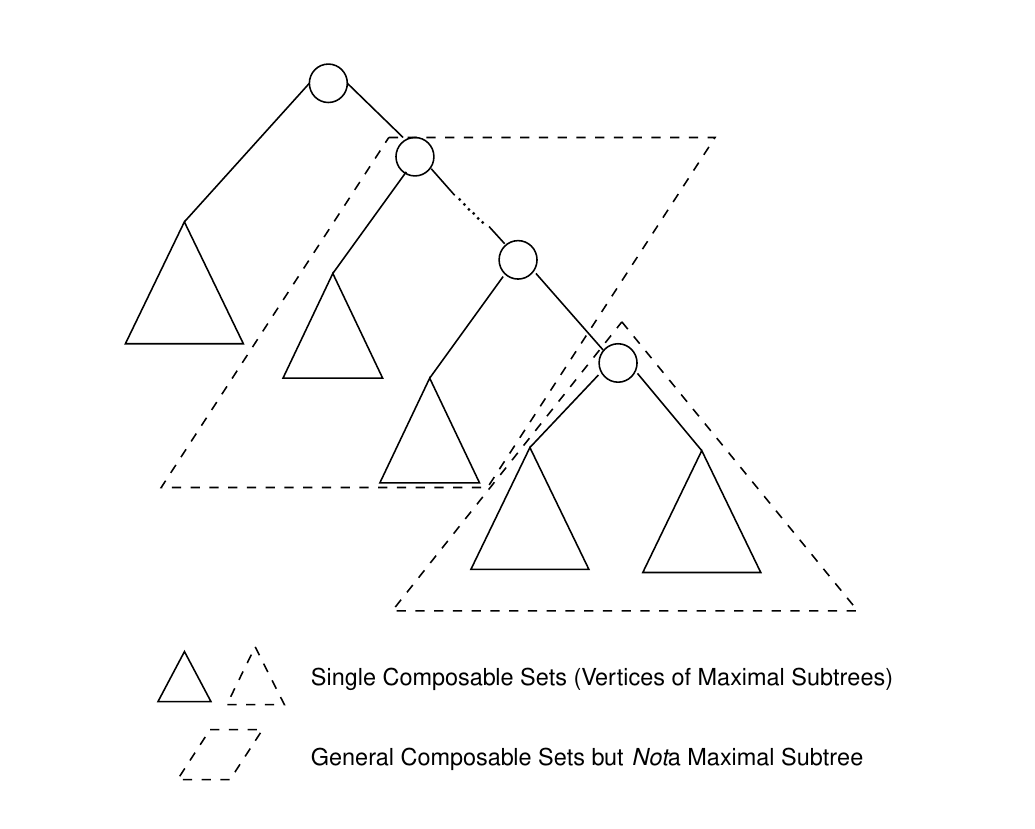}
	\caption{\label{fig:composable-sets} An illustration of the composable vertex sets and maximal trees as in \Cref{def:comp-set}.}
\end{figure}

We can define the HC tree restricted to composable subsets of vertices as follows.
% \chen{The notion of HC tree restricted to subset (\Cref{def:restrict-tree}) might not be useful anymore}

\FloatBarrier
\begin{definition}[Hierarchical clustering tree restricted to subset]
\label{def:restrict-tree}
Let $G=(V,E)$ be a $n$-vertex graph and $\cT$ be a hierarchical clustering of $G$. For any \emph{composable} subset $S\subseteq V$ such that $\card{S}\geq 2$, we call $\cT(S)$ as \emph{$\cT$ restricted to $S$} if $\cT(S)$ is a new binary tree constructed with the following process
\begin{tbox}
\begin{enumerate}
\item Extract $\cT'$ from $\cT$ by taking the induced subtree of the internal node $\lcatreeset{S}{\cT}$.
\item Remove all subtrees whose leaves contain \emph{only} vertices \emph{not} in $S$ from $\cT'$.
\item If there exists an internal node $x$ that only has \emph{a single} child, contract $x$ and its child to one node. Repeat until all internal node has two children nodes.
\end{enumerate}
\end{tbox}
\end{definition}
\FloatBarrier

Note that the algorithm in \Cref{def:restrict-tree} is a \emph{thought process}, and it only serves the purpose of analysis. The third line eventually terminates since there exists at least one subtree with two leaves for any \emph{composable} set $S$ with at least two vertices. An illustration of HC trees restricted to subsets can be shown in \Cref{fig:HC-restricted-to-subset}.
\begin{figure}[!htb]
	\centering
	\includegraphics[scale=0.6]{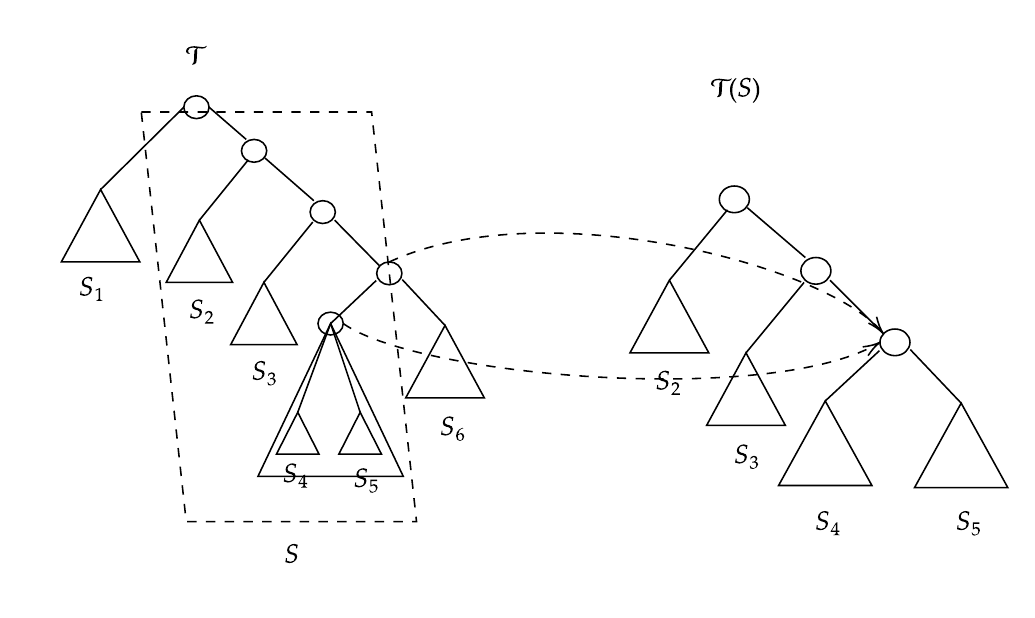}
	\caption{\label{fig:HC-restricted-to-subset} An illustration of the HC tree $\cT$ to be restricted on a subset of vertices $S$.}
\end{figure}

We now provide the following observation that the new tree $\cT(S)$ preserves the \emph{relative order} of split away vertices of $\cT$. 
\begin{observation}
\label{obs:split-away-preserv}
For any triplet of vertices $(u,v,w)$, the orders of split-away for $(u,v,w)$ are the same in $\cT(S)$ and $\cT$, i.e., $w$ splits away from $(u,v)$ in $\cT(S)$ \emph{if and only if} $w$ splits away from $(u,v)$ in $\cT$.
\end{observation}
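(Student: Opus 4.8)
The plan is to reduce the statement to a single lemma on \emph{LCA preservation}: for any two vertices $x,y \in S$ the node $\lcatree{x}{y}{\cT}$ survives the construction of $\cT(S)$ and in fact equals $\lcatree{x}{y}{\cT(S)}$, and likewise $\lcatreeset{\{u,v,w\}}{\cT(S)} = \lcatreeset{\{u,v,w\}}{\cT}$ for any triple $u,v,w \in S$. Granting this, the observation is immediate: by \Cref{def:split-away}, ``$w$ splits away from $(u,v)$'' is exactly the assertion $\lcatree{w}{u}{\cdot} = \lcatreeset{\{u,v,w\}}{\cdot}$, and since both sides of this equation are carried unchanged from $\cT$ to $\cT(S)$, the equality holds in one tree if and only if it holds in the other.

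To prove the LCA-preservation lemma I would first pin down which nodes of $\cT$ survive into $\cT(S)$ by unrolling \Cref{def:restrict-tree}. Step $1$ keeps precisely the descendants (inclusive) of $\lcatreeset{S}{\cT}$; a node in this subtree survives step $2$ iff its subtree contains at least one vertex of $S$; and after step $2$, a surviving node has two children iff \emph{both} of its original subtrees in $\cT$ meet $S$, and has exactly one child iff exactly one of its subtrees meets $S$. Step $3$ contracts away precisely these one-child nodes. Hence the internal nodes of $\cT(S)$ correspond bijectively to the nodes $x$ of $\cT$ lying below $\lcatreeset{S}{\cT}$ whose \emph{both} subtrees meet $S$, with the ancestor order inherited from $\cT$ (contracting one-child nodes never alters which surviving node is an ancestor of which).

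The crucial point is then that for $x,y \in S$ the node $a := \lcatree{x}{y}{\cT}$ qualifies: $x$ lies in one subtree of $a$ and $y$ in the other, so both subtrees of $a$ meet $S$, hence $a$ is not contracted and appears as an internal node of $\cT(S)$. Since ancestry is preserved, $a$ is a common ancestor of $x$ and $y$ in $\cT(S)$, and it is the \emph{lowest} such: any common ancestor of $x,y$ in $\cT(S)$ corresponds to a node of $\cT$ that is a common ancestor of $x,y$ in $\cT$, which by the definition of $a$ cannot be a strict descendant of $a$. Running the same argument on the three pairwise LCAs and on $\lcatreeset{\{u,v,w\}}{\cT}$ (whose three branches toward $u$, $v$, $w$ again guarantee it is not contracted) yields the triple case and completes the proof. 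I expect the only real obstacle to be bookkeeping in this reduction — being careful that step $3$ of \Cref{def:restrict-tree} is exactly ``delete one-child nodes'' and therefore leaves the relative ancestor order among the relevant LCAs untouched; once that is made precise, the rest is a direct unwinding of the definitions.
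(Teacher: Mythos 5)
Your proof is correct and follows a somewhat different, cleaner route than the paper's. The paper's proof compares the $\levelr{\cT'}{\cdot}$-values of the pair-LCA and triple-LCA and then argues through a case split on whether either node is absorbed during step~$3$ of \Cref{def:restrict-tree}; you instead prove the stronger structural claim that each of the relevant LCA nodes --- $\lcatree{u}{v}{\cT}$, $\lcatree{u}{w}{\cT}$, $\lcatree{v}{w}{\cT}$, and $\lcatreeset{\{u,v,w\}}{\cT}$ --- has elements of $S$ in both of its subtrees, hence survives step~$2$ as a two-child node, is never contracted in step~$3$, and continues to be the LCA of the corresponding set in $\cT(S)$. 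Since \Cref{def:split-away} phrases ``$w$ splits away from $(u,v)$'' as an equality of LCAs, this preservation gives the observation immediately; your version has the advantage of avoiding the paper's case analysis and its somewhat opaque appeal to composability of $S$, which your argument shows is not actually needed for this particular fact. The one point to spell out fully --- which you already flag --- is that step~$3$ only ever contracts a one-child node with its unique child, so it cannot destroy a two-child node nor change which surviving node is an ancestor of which; with that stated explicitly, the proof is complete.
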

\begin{proof}
We first observe that the orders of split away for any triplet $(u,v,w)$ are the same in $\cT$ and $\cT'$ since $\cT'$ is a \emph{subtree} of $\cT$. Furthermore, since $S$ is a \emph{conposeable set}, every subtree we remove from $\cT'$ is necessarily a \emph{maximal} subtree, i.e., fix the removed vertex set $S$, the leaves of $\lcatreeset{S}{\cT'}$ is $S$ itself.

Let $x$ be the lowest common ancestor of $(u,v)$ in $\cT'$, and $x'$ be the lowest common ancestor of $(u,v,w)$ in $\cT'$. By our definition, we have $\levelr{\cT'}{x'}>\levelr{\cT'}{x}$. In $\cT(S)$, if none of $x$ and $x'$ is contracted in line 3, then the order of splits away trivially remains the same as in $\cT'$. 

On the other hand, if at least one of $x$ and $x'$ is contracted, and let the new internal nodes be $y$ and $y'$, we claim that there is
still $\levelr{\cT(S)}{y'}>\levelr{\cT(S)}{y}$. This is simply because every removed tree is a maximal tree, and the only case that $y$ and $y'$ are merged is that the induced vertices of $y'$ become empty, which contradicts the fact that $(u,v)$ is not removed. Therefore, the split order between $(w,u,v)$ is the same as in $\cT'$, which in turn is the same in $\cT$. 
\end{proof}

\subsection{Small-tree splitting order}
We now introduce the following notion of \emph{small-tree split order}, which we frequently use in our analysis.
\begin{definition}[Small-tree split order]
\label{def:small-tree-split}
Consider any subset $S\subseteq V$ and the hierarchical clustering tree $\cTstar(S)$ restricted to $S$, and consider the following process that divides the \emph{leaves} of $S$:
\begin{tbox}
\begin{enumerate}
\item Starting from the root $r$, let the split be $S \rightarrow (S^1_{l}, S^1_{r})$, and assume w.log. $\card{S^1_{l}}\leq \card{S^1_{r}}$. Let $\Vsmall_{1}$ be the vertices (leaves) induced by the ``smaller subtree'' $S^1_{l}$.
\item Starting from the lowest common ancestor of $S^1_{r}$, recursively define $\Vsmall_{\ell}$ as the vertices contained in the ``smaller subtree'' of level $\ell$ (by splitting from the ``larger subtree'' of level $\ell-1$). 
\end{enumerate}
\end{tbox}
\noindent
For the convenience of notation, we define $\Vsmall_{\ell}=\emptyset$ when $\ell$ is larger than the depth of the tree. For any integer $N \in [0, \nh]$, we can define the \emph{first $N$ vertices in the small-tree split order} by taking the first $N$ vertices in the order of $\Vsmall_{1}$, $\Vsmall_{2}$, etc.. Inside each set $\Vsmall_{i}$, we pick vertices in an arbitrarily \emph{fixed} order. We use the notation $\Vbefore{N}$ to denote the \emph{first} $N$ vertices in the small-tree split order, and we use $\Vafter{N}$ to denote the \emph{last} $N$ vertices in the small-tree split order. 
\end{definition}

Intuitively, we can think of the small-tree split order as we always write the ``smaller'' tree on the left-hand side, and recurse on the ``larger'' tree for this procedure, then take leaves from the left-most vertices. Note that \emph{every} vertex has to belong to $\Vsmall_{i}$ for \emph{some} $i$. An illustration can be found in \Cref{fig:small-tree-split-order}. 

Based on the notion of small-tree split order, we can define the notion of \emph{induced leaves} of the first or last $N$ vertices in the small-tree split order as follows.

\begin{definition}[Induced leaves of the $N$ first split vertices]
\label{def:small-tree-induced-leaves}
Let $\cTstar(S)$ and $S$ be the hierarchical clustering tree and the set of leaves. For any integers $N \in [0, \nh]$, we define the induced leaves of $\Vbefore{N}$ as the union of the $\cup_{i=1}^{\ell} \Vsmall_{i}$, where $\ell$ is the \emph{maximum} level such that $\Vsmall_{\ell}$ contains at least one vertex in $\Vbefore{N}$.
% \begin{enumerate}[leftmargin=12pt]
% \item The induced leaves of $\Vbefore{N}$ is defined as the union of the $\cup_{i=1}^{\ell} \Vsmall_{i}$, where $\ell$ is the \emph{maximum} level such that $\Vsmall_{\ell}$ contains at least one vertex in $\Vbefore{N}$.
% \item The induced leaves of $\Vafter{M}$ is defined as the union of the $\cup_{i=\ell}^{\infty} \Vsmall_{i}$ (define $\Vsmall_{i}=\emptyset$ if the split has ended), where $\ell$ is the \emph{minimum} level such that $\Vsmall_{\ell}$ does \emph{not} contain any vertex in $\Vbefore{M}$.
% \end{enumerate}
\end{definition}
\noindent
An illustration of the induced leaves in \Cref{def:small-tree-induced-leaves} can be found in \Cref{fig:small-tree-induced-leaves}.

\begin{figure*}[!htb]
\centering
\begin{subfigure}[t]{0.48\textwidth}
  \includegraphics[scale=0.40]{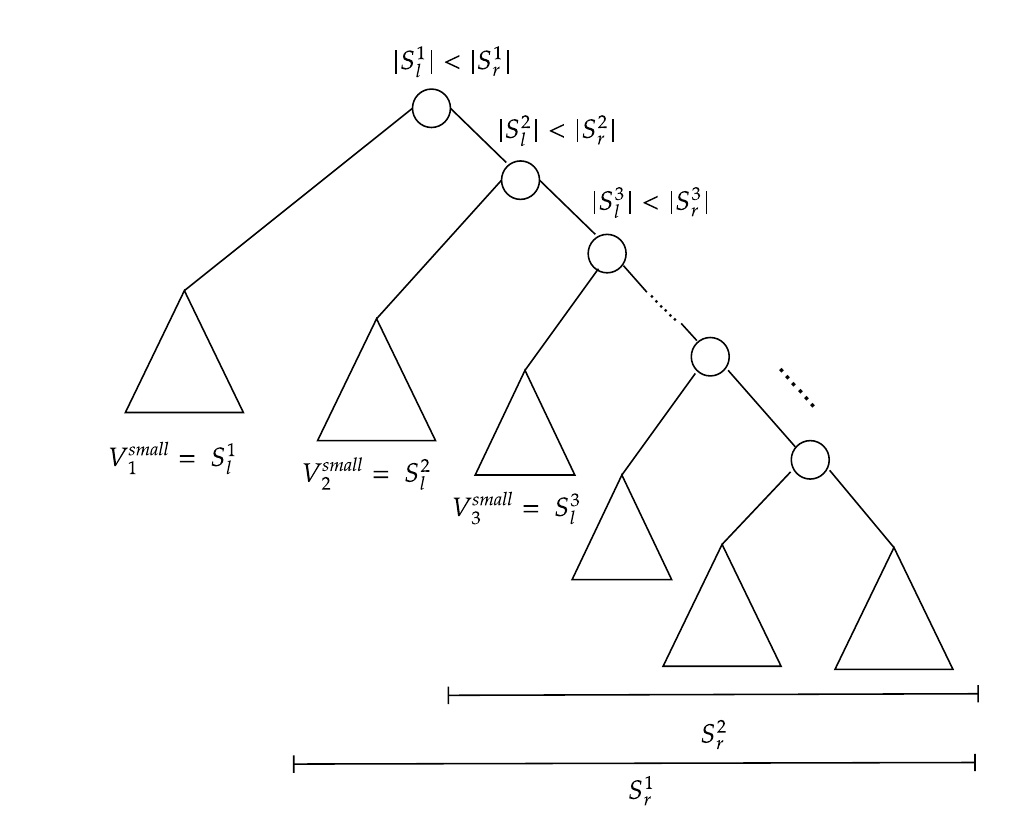}
  % \caption{Small-tree split order}
  \label{fig:small-tree-split-order}
\end{subfigure}
\begin{subfigure}[t]{0.48\textwidth}
  \includegraphics[scale=0.40]{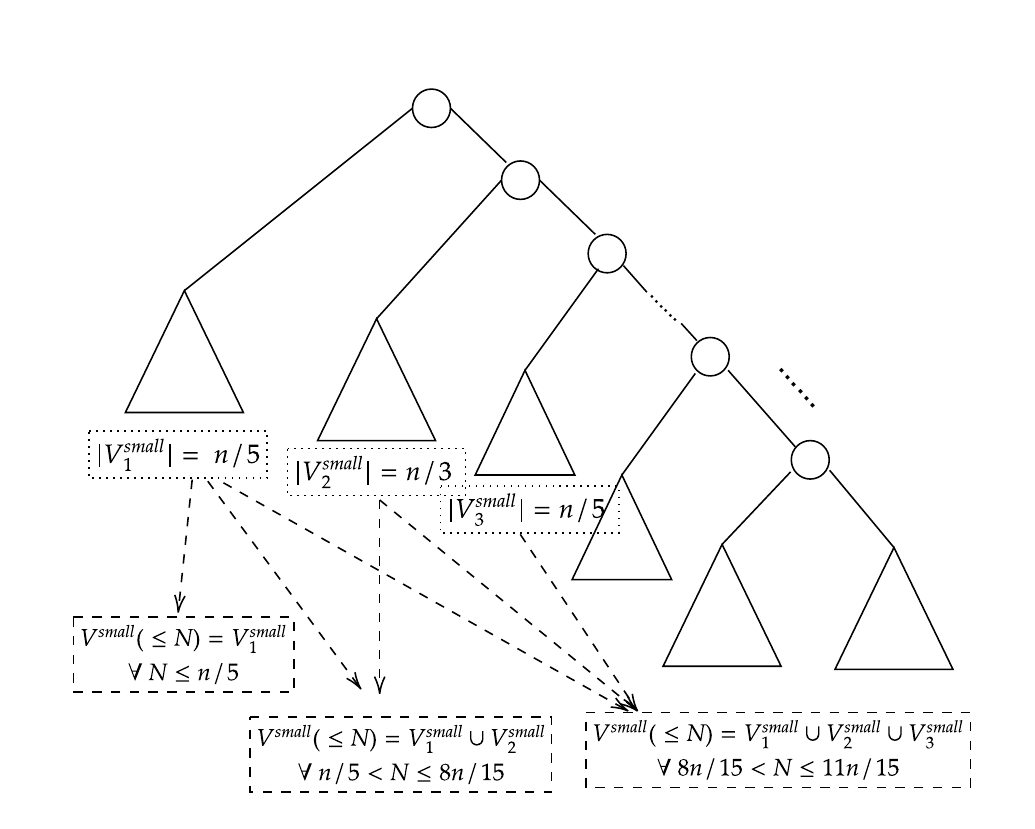}
  % \caption{Induced leaves of the $N$ first split vertices}
  \label{fig:small-tree-induced-leaves}
\end{subfigure}
\caption{An illustration of the notion of small-tree split order \Cref{def:small-tree-split} and the Induced leaves of the $N$ first split vertices \Cref{def:small-tree-induced-leaves}.}
\label{fig:small-tree-split-technical-tools}
\end{figure*}

\FloatBarrier

\section{The Algorithm for the Strongly Consistent Partial HC Tree: Proof of \texorpdfstring{\Cref{thm:strong-partial-tree}}{strongp}}
\label{sec:strong-partial-tree}
We now give an algorithm for partial trees that are strongly consistent with the optimal HC tree $\cT^*$. We first remind the readers of our main result on the algorithm for strongly consistent partial HC trees as follows.

\thmstrongpartialtree*

We refer the readers to \Cref{sec:tech-overview} for a high-level overview of the algorithm. We directly give the algorithm and the analysis in this section.

\paragraph{The algorithm.} We introduce our main algorithm for the construction of strongly consistent partial HC trees. To begin with, we give a `helper function' \texttt{counterpart-tester-strong} as follows. As we have discussed in the high-level overview, this function tests the `sibling vertices' of a small tree that is `early enough' in the small-tree splitting order.

\begin{algorithm}
\caption{$\testalgstrong{u}{t}{\textnormal{threshold}_1}{\textnormal{threshold}_2}{\Vinput}$: an algorithm to test whether $t$ is among the ``counterpart'' of $u$} \label{alg:tester-strong}
\KwIn{Vertex set $\Vinput$; the splitting oracle $\cO$; baseline vertex $u$; test vertex $v$; $\text{threshold}_1\in (0,\np)$, $\text{threshold}_2\in (0,\np)$}
\KwOut{Whether $v$ is a ``counterpart'' of $u$}
Initialize counters $c_1\gets 0$, $c_2 \gets 0$\;
\For{Every vertex $t\in \Vinput$}{
If $v$ splits away from $(u,t)$, increase $c_1$ by 1\;
If $t$ splits away from $(u,v)$, increase $c_2$ by 1\;
}
\If{$c_1\leq \textnormal{threshold}_1$ and $c_2 \leq \textnormal{threshold}_2$}{
Return ``$v$ is a counterpart of $u$''.
}
\end{algorithm}
\FloatBarrier

Our main algorithm for the strongly consistent partial HC tree construction is as follows.
\begin{algorithm}
\caption{$\strongptalg{\Vp}{\cO}$: an algorithm for strongly consistent partial tree} \label{alg:strong-partial-tree}
\KwIn{Vertex set $\Vp$ of size $\np$; the splitting oracle $\cO$; parameter $\eps<1$ a sufficiently small constant}
\KwOut{A partial tree $\cT_{\Vp}$ that is strongly consistent with $\cTstar(\Vp)$}
% \nonl\algcomment{Finding a set of good vertices $\ustar$}\\
\If{$\card{\Vp}\leq 50000 \log{n}$}{
Return a super-vertex\;
}
\For{Each $u \in \Vp$}{
Initialize $T \gets \emptyset$\;
Sample a set $S$ of $s=20\log{n}/\eps^2$ vertices from $\Vp$\;
\For{$v \in \Vp$}{
\label{line:test-good-T-new} Run $\testalgstrong{u}{v}{(3/5-\eps)\cdot s}{(1/6-\eps)\cdot s}{S}$\; 
Add $v$ to $T$ if $v$ is a ``counterpart'' of $u$\;
}
Record the size $\card{T}$ for this choice of $u$\;
}
\label{line:root-cut} Pick $(T^*, \Vp\setminus T^*)$ such that $T^*\gets T$ is with the largest size (breaking ties arbitrarily)\; 
Recursively call
\begin{align*}
\cT_{\Vp\setminus T^*} \gets \strongptalg{\Vp\setminus T^*}{\cO} \qquad \cT_{T^*} \gets \strongptalg{T^*}{\cO}.
\end{align*}

Connect the two trees with a common ancestor as the root\;

\end{algorithm}
\FloatBarrier

We first observe that the algorithm takes at most$\tilde{O}(n^3)$ queries to $\cO$ and $\tilde{O}(n^3/\eps^2)$ time. The formal statement and analysis are as follows. % \chen{Changes go from here.}
\begin{lemma}
\label{lem:strong-tree-run-time}
The algorithm $\strongptalg{V}{\cO}$ takes at most $O(n^3)$ queries to $\cO$ and $\tilde{O}(n^3\log{n}/\eps^2)$ running time.
\end{lemma}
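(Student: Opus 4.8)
The plan is to charge the cost of one invocation of $\strongptalg{\Vp}{\cO}$ on a set $\Vp$ with $\np:=\card{\Vp}$ (excluding its recursive sub-calls) and then sum over the recursion tree, exploiting that the vertex sets appearing at any fixed recursion depth are pairwise disjoint subsets of $V$. For a single non-base call: for each of the $\np$ candidate baselines $u$ we sample $S$ with $\card{S}=s:=20\log n/\eps^2$ in $O(s)$ time, and then for each of the $\np$ vertices $v$ we run $\testalgstrong{u}{v}{\cdot}{\cdot}{S}$. That helper loops over the $s$ elements $t\in S$, and for each $t$ a single oracle query on the triplet $\{u,v,t\}$ determines both ``$v$ splits away from $(u,t)$'' and ``$t$ splits away from $(u,v)$'' and updates the two counters in $O(1)$ time (using the $O(1)$-per-query assumption on $\cO$); hence one call to the helper uses $O(s)$ time and $O(s)$ queries. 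Therefore a fixed $u$ costs $O(\np s)$, the whole node costs $O(\np^2 s)$ time and $O(\np^2 s)$ queries, and the concluding step (recording the $\np$ candidate sizes, taking the maximum, and recovering $T^*$) is lower order.

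Next I would bound the recursion depth and sum. By the correctness analysis proved later in this section, each split $(\Vp\setminus T^*,T^*)$ is into two smaller nonempty sets, so both recursive calls shrink the instance and the recursion halts once the size drops to at most $50000\log n$; in particular the recursion tree has depth at most $n$. At any fixed depth $\ell$, the vertex sets of the nodes at that depth are pairwise disjoint subsets of $V$, so the sum of $\np$ over those nodes is at most $n$, and hence the sum of $\np^2$ over them is at most $n^2$. Summing the per-node time over all at most $n$ depths gives total running time $O\big(s\cdot\sum_{\textnormal{all nodes}}\np^2\big)=O(s\cdot n\cdot n^2)=O(n^3\log n/\eps^2)$, establishing the claimed $\tilde{O}(n^3\log n/\eps^2)$ bound.

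For the query bound, the same recursion-tree sum only yields $O(sn^3)=O(n^3\log n/\eps^2)$ queries counted with multiplicity; but the oracle's answer on a fixed triplet is fixed, and the algorithm only ever queries triplets drawn from $V$, of which there are $\binom{n}{3}<n^3/6$. Hence the number of distinct triplets ever queried --- equivalently, the query count after eliding repeats --- is $O(n^3)$, as claimed. The only step that needs care is the following: the per-node cost is superlinear in $\np$, so one must verify that the ``$(\textnormal{depth})\times(\textnormal{worst per-level cost})$'' accounting is not lossy. It is in fact tight here, precisely because the per-level sum of squared sizes is at most $n^2$ while the depth is at most $n$, giving exactly $n^3$; and this in turn hinges on the (separately established) structural fact that every split has both parts nonempty, so that the recursion has bounded depth.
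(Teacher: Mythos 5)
Your proof is correct and follows essentially the same approach as the paper's. The per-node accounting (each candidate baseline $u$ contributes $O(\np\cdot s)$ time with $s=20\log n/\eps^2$ per run of the tester, giving $O(\np^2 s)$ per recursion node) matches the paper exactly, and the query bound by deduplicating over the at most $\binom{n}{3}$ distinct triplets is the same observation the paper makes (``the algorithm can store the answers for reuse''). The one place you diverge is in how you sum over the recursion: the paper invokes \Cref{fct:num-splits-tree} to count $O(n)$ total recursion nodes and bounds each by the worst case $O(n^2 s)$, whereas you bound the depth by $n$ and use per-level disjointness to bound $\sum\np^2\le n^2$ at each level. Both decompositions of $\sum_{\textnormal{nodes}}\np^2 s$ yield $O(n^3 s)=O(n^3\log n/\eps^2)$, and both (as you correctly flag at the end) implicitly rely on every split being proper so that the recursion terminates and forms a genuine binary tree; the paper's use of \Cref{fct:num-splits-tree} carries the same implicit dependence.
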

\begin{proof}
The $O(n^3)$ query upper bound is trivial since there are at most $\binom{n}{3}$ such comparisons, and the algorithm can store the answers for reuse. For the running time, we claim that each recursive call on $\strongptalg{\Vp}{\cO}$ with $\card{\Vp}=\np$ vertices takes at most $O(\np^2\log{\np})$ time. To see this, note that each call of Line~\ref{line:test-good-T-new} takes at most $O(\log{n}/\eps^2)$ time, and there are at most $O(\np^2)$ calls of the \texttt{counterpart-tester-strong} function. By \Cref{fct:num-splits-tree}, there are at most $O(n)$ such recursion calls induced on internal nodes, which results in at most $O(n^3\log{n}/\eps^2)$ running time.
\end{proof}

We now prove the correctness of the algorithm. To this end, we first establish the split-away relationships between the vertex set $\Vp$ and the sampled set $S$. 
\begin{lemma}
    \label{lem:sampling-approx-counterpart-test}
    Consider quantities $c_1$, $c_2$, $\tilde{c}_1$, and $\tilde{c}_2$ obtained by the following processes of running \Cref{alg:tester-strong}:
    \begin{itemize}
        \item Let $c_1$ and $c_2$ be the counter returned by $\testalgstrong{u}{v}{(3/5-\eps)\cdot s}{(1/6-\eps)\cdot s}{S}$ (i.e., by running line~\ref{line:test-good-T-new}). 
        \item Let $\tilde{c}_1$ and $\tilde{c}_2$ be the counter obtained by running $\testalgstrong{u}{v}{3\np/5}{\np/6}{\Vp}$. 
    \end{itemize}
    Then, with high probability, the following statements are true.
    \begin{itemize}
        \item If $\tilde{c}_1\geq 3\np/5 $ and $\tilde{c}_2 \geq \np/6\cdot $, then $c_1\geq (3/5-\eps)\cdot s$ and $c_2 \geq (1/6-\eps)\cdot s$.
        \item If $\tilde{c}_1< (3/5-2\eps) \cdot \np$ and $\tilde{c}_2 < (1/6-2\eps) \cdot \np$, then $c_1< (3/5-\eps)\cdot s$ and $c_2 < (1/6-\eps)\cdot s$.
    \end{itemize}
\end{lemma}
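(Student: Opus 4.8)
The plan is to prove \Cref{lem:sampling-approx-counterpart-test} by a standard concentration-of-measure argument applied to the sampled set $S$, viewed as a uniform sample from $\Vp$. The key observation is that the counter $c_1$ (resp. $c_2$) counts exactly how many sampled vertices $t\in S$ satisfy the event ``$v$ splits away from $(u,t)$'' (resp. ``$t$ splits away from $(u,v)$''), and that $\tilde c_1$ (resp. $\tilde c_2$) counts the same event over all of $\Vp$. Hence $c_1$ is a sum of $s$ indicator random variables, one per sampled vertex, each of which is $1$ with probability exactly $\tilde c_1/\np$ (when sampling with replacement) or is a hypergeometric count (when sampling without replacement); in either case the Chernoff/Hoeffding bound applies. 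First I would fix the baseline vertex $u$ and the test vertex $v$, condition on the fixed oracle answers (recall the oracle's randomness is fixed once and for all), so that $\tilde c_1,\tilde c_2$ are deterministic quantities and the only randomness is the choice of $S$.

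Next I would set up the two deviation bounds. For the first bullet: if $\tilde c_1\ge 3\np/5$, then $\Exp[c_1]=s\cdot(\tilde c_1/\np)\ge (3/5)\,s$, and by a lower-tail Chernoff bound, $\Pr\bigl[c_1<(3/5-\eps)s\bigr]\le \exp(-\Omega(\eps^2 s))$; plugging in $s=20\log n/\eps^2$ makes this $n^{-\Omega(1)}$, in fact at most $n^{-c}$ for a large constant $c$ by choosing the constant $20$ appropriately. The same computation with $\tilde c_2\ge\np/6$ gives $\Pr[c_2<(1/6-\eps)s]\le n^{-\Omega(1)}$. For the second bullet: if $\tilde c_1<(3/5-2\eps)\np$ then $\Exp[c_1]<(3/5-2\eps)s$, and an upper-tail Chernoff bound gives $\Pr[c_1\ge(3/5-\eps)s]\le\exp(-\Omega(\eps^2 s))\le n^{-\Omega(1)}$, since the gap between the threshold $(3/5-\eps)s$ and the mean is at least $\eps s$; symmetrically for $c_2$ with the gap $\eps s$ coming from the difference between $(1/6-\eps)$ and $(1/6-2\eps)$. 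I would then take a union bound over all $O(n^2)$ pairs $(u,v)$ and over both counters, so that with high probability all of these events hold simultaneously; since each individual failure probability is $n^{-c}$ for $c$ large, the union bound over $O(n^2)$ events still leaves failure probability $n^{-c+2}=n^{-\Omega(1)}$.

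One technical point I would address carefully is the mode of sampling. If $S$ is drawn without replacement (a set of $s$ distinct vertices), then $c_1$ is hypergeometrically distributed, but the same Chernoff-type tail bounds hold (e.g., by the standard fact that hypergeometric tails are dominated by binomial tails, or by Hoeffding's inequality for sampling without replacement); I would cite this and proceed identically. If $S$ is a multiset drawn with replacement, $c_1$ is exactly $\mathrm{Bin}(s,\tilde c_1/\np)$ and the binomial Chernoff bound applies directly. Either way the exponent $\Omega(\eps^2 s)=\Omega(\log n)$ is what matters.

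I do not expect a serious obstacle here — this is a routine sampling lemma — but the one place to be slightly careful is bookkeeping the constants: one must check that $(3/5-2\eps)$ and $(3/5-\eps)$ (resp. $(1/6-2\eps)$ and $(1/6-\eps)$) are genuinely separated by $\Omega(\eps)$, which they are by construction, and that $\eps$ being ``a sufficiently small constant'' keeps all these thresholds in $(0,1)$ and keeps the means bounded away from $0$ and $1$ so the Chernoff bounds are clean. The mild asymmetry between the two bullets (the first assumes $\tilde c_i$ at least the ``true'' threshold $3\np/5$ or $\np/6$ and concludes $c_i\ge$ the shifted threshold; the second assumes $\tilde c_i$ strictly below a $2\eps$-shifted threshold and concludes $c_i<$ the $\eps$-shifted threshold) is exactly what gives the needed $\eps s$ slack in both directions, so no extra work is needed there.
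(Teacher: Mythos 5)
Your proposal is correct and follows essentially the same route as the paper: fix $(u,v)$, treat $\tilde c_1,\tilde c_2$ as deterministic given the oracle's fixed randomness, express $c_1,c_2$ as sums of $s$ indicators over the sample $S$ with mean $s\tilde c_i/\np$, and apply an additive Chernoff/Hoeffding bound with deviation $\eps s$, using $s=20\log n/\eps^2$ to drive the failure probability to $n^{-\Omega(1)}$ in each of the four cases. You are somewhat more careful than the paper about the sampling mode (with vs.\ without replacement, hypergeometric domination) and about explicitly bounding the correct event $\{c_1\geq(3/5-\eps)s\}$ in the second bullet, but these are only cosmetic tightenings of the same argument.
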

\begin{proof}
    The lemma is a direct application of Chernoff bound, and we prove the quantities for $c_1$ and $\tilde{c}_1$ only since the relationships between $c_2$ and $\tilde{c}_2$ follow from the same argument. For each fixed pair of $(u,v)$ as the inputs to \Cref{alg:tester-strong}, let $\tilde{T} \subset \Vp$ be the set of vertices in $\Vp$ reporting ``$v$ splits away from $(u,t)$''. It is clear that $\card{\tilde{T}}=\tilde{c}_1$. Furthermore, define $X_t$ as the indicator random variable for the sampled vertex $t\in \tilde{T} \cap S$ and $X$ as the total number of answers reporting ``$v$ splits away from $(u,t)$'' in $S$. We have $\expect{X}=\sum_{x\in S} \Pr(X_t = 1) = \frac{s\cdot \tilde{c}_1}{\np}$.
    
    If $\tilde{c}_1\geq 3\np/5$, we have that $\expect{X}\geq \frac{3s}{5}$, and $X$ is summation of independent indicator random variables. Therefore, by Chernoff bound we have
    \begin{align*}
        \Pr\paren{X<(\frac{3}{5}-\eps)\cdot s}\leq \Pr\paren{X-\expect{X}<\eps\cdot s}&\leq \exp\paren{-2\cdot \frac{\eps^2 s^2}{s}} \leq \exp\paren{-40 \log{n}} \leq \frac{1}{10}\cdot \frac{1}{n^3},
    \end{align*}
    where the third inequality is by the choice that $s=20\log{n}/\eps^2$. On the other hand, if $\tilde{c}_1 < (3/5-2\eps)\cdot \np$, we have that $\expect{X}<(3/5-2\eps)\cdot s$. Therefore, by Chernoff bound, we have
    \begin{align*}
        \Pr\paren{X\geq \frac{3s}{5}}\leq \Pr\paren{X-\expect{X}\geq \eps\cdot s}&\leq \exp\paren{-2\cdot \frac{\eps^2 s^2}{s}} \leq \exp\paren{-40 \log{n}} \leq \frac{1}{10}\cdot \frac{1}{n^3},
    \end{align*}
    where the third inequality is by the choice that $s=20\log{n}/\eps^2$.
\end{proof}

Using \Cref{lem:sampling-approx-counterpart-test}, we now present the following key lemma for the behavior of the ``counterpart-test'' of \Cref{alg:strong-partial-tree}.
\begin{lemma}
\label{lem:counterpart-test-strong}
For any composable $\Vp$ such that $\card{\Vp} \geq 50000 \log{n}$, let $\ell^*$ be the maximal level that $\Vsmall_{\ellstar}$ contains a vertex in $\Vbefore{\np/5}$. With high probability, Line~\ref{line:test-good-T-new} in \Cref{alg:strong-partial-tree} satisfies the following properties.
\begin{enumerate}[label=\alph*).,leftmargin=25pt]
\item \label{line:split-balance-new-later} For any vertex $u\in \cup_{i=\ellstar+1}^{\infty} \Vsmall_{i}$, there are
\begin{itemize}
\item No vertex $v \in \cup_{i=\ellstar+1}^{\infty} \Vsmall_{i}$ can be added to $T$ by Line~\ref{line:test-good-T-new} of \Cref{alg:strong-partial-tree}.
\item No vertex $v \in \cup_{i=1}^{\ellstar-1} \Vsmall_{i}$ can be added to $T$ by Line~\ref{line:test-good-T-new} of \Cref{alg:strong-partial-tree}.
\end{itemize}
\item \label{line:split-balance-new-middle} For every level $\ell\leq \ellstar$ and vertices $u\in \Vsmall_{\ell}$, with high probability, there are
\begin{itemize}
\item No vertex $v\in \Vsmall_{\ell}$ can be added to $T$ by Line~\ref{line:test-good-T-new} of \Cref{alg:strong-partial-tree}.
\item No vertex $v\in \Vsmall_{k}$ for $k<\ell$ can be added to $T$ by Line~\ref{line:test-good-T-new} of \Cref{alg:strong-partial-tree}.
\end{itemize}
\item \label{line:split-balance-new-earlier} There exists a $\tilde{\ell}\leq \ellstar$ such that $\card{\cup_{i=1}^{\tilde{\ell}} \Vsmall_{i}}\geq \frac{\np}{25}$, and for any $\ell\leq \tilde{\ell}$ and any vertex $u\in \Vsmall_{\ell}$, with high probability, all vertices $v\in \Vsmall_{k}$ for $k>\ell$ are added to $T$ by Line~\ref{line:test-good-T-new} of \Cref{alg:strong-partial-tree}.
% \begin{itemize}
% \item 
% \end{itemize}
\end{enumerate}
\end{lemma}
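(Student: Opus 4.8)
I would first discard both the oracle noise and the sampling, prove a clean statement about the \emph{exact} split-away counts in $\cTstar(\Vp)$, and only then reinstate noise and sampling via concentration. Write $\Vsmall_1,\Vsmall_2,\dots$ for the small trees of $\cTstar(\Vp)$, let $\ell_x$ be the index with $x\in\Vsmall_{\ell_x}$, and set $n_\ell=\card{\Vsmall_\ell}$, $N_{<\ell}=\sum_{i<\ell}n_i$, $N_{\le\ell}=N_{<\ell}+n_\ell$, $R_\ell=\Vp\setminus\bigcup_{i\le\ell}\Vsmall_i$. Two elementary facts recur: $n_\ell\le\np/2$ for every $\ell$ (each $\Vsmall_\ell$ is the smaller side of its split), and by \Cref{def:small-tree-split,def:small-tree-induced-leaves} the defining property of $\ellstar$ gives $N_{<\ellstar}<\np/5\le N_{\le\ellstar}$. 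Since $\Vp$ is composable, \Cref{obs:split-away-preserv} says the oracle's ground truth on a triplet inside $\Vp$ equals the split-away order in $\cTstar(\Vp)$, and that order depends only on the three levels: the vertex of strictly minimum level splits away; if the minimum is shared by exactly two of them, the third one splits away; and if all three levels coincide, the answer is governed by the unknown internal clustering of that single small tree. Reading this off for a fixed baseline $u$ and test vertex $v$ yields, up to an $O(1)$ correction for degenerate triplets with $t\in\{u,v\}$, the exact counters $c_1^{\mathrm{true}}=\#\{t:v\text{ splits away from }(u,t)\}$ and $c_2^{\mathrm{true}}=\#\{t:t\text{ splits away from }(u,v)\}$: if $\ell_v<\ell_u$ then $c_1^{\mathrm{true}}=\card{R_{\ell_v}}=\np-N_{\le\ell_v}$ and $c_2^{\mathrm{true}}=N_{<\ell_v}$; if $\ell_v>\ell_u$ then $c_1^{\mathrm{true}}=n_{\ell_u}$ and $c_2^{\mathrm{true}}=N_{<\ell_u}$; and if $\ell_v=\ell_u=\ell$ then $c_1^{\mathrm{true}}\le n_\ell$ and $c_2^{\mathrm{true}}\ge\np-n_\ell$.

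\emph{Noise and sampling.} Each oracle answer is independently correct with probability $9/10$ (the realization fixed once), so for fixed $(u,v)$ the counter $\tilde c_i$ computed over all of $\Vp$ is a sum of independent indicators with $\Exp[\tilde c_i]\in[\tfrac9{10}c_i^{\mathrm{true}},\,\tfrac9{10}c_i^{\mathrm{true}}+\tfrac1{10}\np]$: every true-positive $t$ is reported with probability at least $9/10$ and every false-positive $t$ with probability at most $1/10$, the adversary being confined to that budget. A Chernoff bound, union-bounded over all $O(n^2)$ pairs and both counters, gives $\card{\tilde c_i-\Exp[\tilde c_i]}\le\delta$ with high probability for $\delta=O(\sqrt{\np\log n})\le\np/200$, using $\np\ge 50000\log n$. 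Feeding these bounds into \Cref{lem:sampling-approx-counterpart-test} (applied to each counter, as in its proof) turns the test of \Cref{alg:tester-strong} into the following, up to an $O(\eps)\cdot s$ slack: $v$ is \emph{added} to $T$ iff $\tfrac9{10}c_1^{\mathrm{true}}+\tfrac1{10}\np<\tfrac35\np$ and $\tfrac9{10}c_2^{\mathrm{true}}+\tfrac1{10}\np<\tfrac16\np$, and $v$ is \emph{not added} whenever one of these two quantities exceeds its threshold by a constant fraction of $\np$.

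\emph{The three regimes and conclusion.} It now suffices to plug the exact counts into the two criteria. For the first claim, if $u\in R_{\ellstar}$ and $v\in R_{\ellstar}$ then $\min(\ell_u,\ell_v)\ge\ellstar+1$, so $c_2^{\mathrm{true}}\ge N_{\le\ellstar}\ge\np/5$ (or $\ge\np/2$ if $\ell_u=\ell_v$), and $v$ is not added; if $u\in R_{\ellstar}$ and $v\in\bigcup_{i<\ellstar}\Vsmall_i$ then $\ell_v\le\ellstar-1<\ell_u$, so $c_1^{\mathrm{true}}=\np-N_{\le\ell_v}\ge\np-N_{<\ellstar}>\tfrac45\np$, and again $v$ is not added. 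For the second claim, with $u\in\Vsmall_\ell$ and $\ell\le\ellstar$, the case $v\in\Vsmall_\ell$ gives $c_2^{\mathrm{true}}\ge\np-n_\ell\ge\np/2$ and the case $v\in\Vsmall_k$ with $k<\ell\le\ellstar$ gives $c_1^{\mathrm{true}}=\np-N_{\le k}>\tfrac45\np$, so $v$ is not added in either. For the third claim, take $\tilde\ell$ to be the least level with $N_{\le\tilde\ell}\ge\np/25$; then $\tilde\ell\le\ellstar$ (since $N_{\le\ellstar}\ge\np/5$), the size bound $\card{\bigcup_{i\le\tilde\ell}\Vsmall_i}\ge\np/25$ holds, and $N_{<\ell}<\np/25$ for every $\ell\le\tilde\ell$; hence for $u\in\Vsmall_\ell$ with $\ell\le\tilde\ell$ and any $v\in\Vsmall_k$ with $k>\ell$ we get $c_1^{\mathrm{true}}=n_\ell\le\np/2$, so $\tfrac9{10}c_1^{\mathrm{true}}+\tfrac1{10}\np\le\tfrac{11}{20}\np<\tfrac35\np$, and $c_2^{\mathrm{true}}=N_{<\ell}<\np/25$, so $\tfrac9{10}c_2^{\mathrm{true}}+\tfrac1{10}\np<\tfrac16\np$, both with a constant-fraction margin that absorbs $\delta$ and the $O(\eps)s$ slack once $\eps$ is a small enough absolute constant; thus $v$ is added. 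A final union bound over the concentration events and the invocations of \Cref{lem:sampling-approx-counterpart-test} completes the proof.

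\emph{Main obstacle.} The delicate part is the exact-count step: getting the split-away vertex right in every configuration of the three levels — the two-way tie at the minimum level is exactly what makes $c_1^{\mathrm{true}}$ jump between $n_{\ell_u}$ and $\card{R_{\ell_v}}$ according to the sign of $\ell_v-\ell_u$ — and then checking that the thresholds $3/5$ and $1/6$ are positioned so that $\tfrac9{10}\cdot\tfrac12+\tfrac1{10}=\tfrac{11}{20}$ stays strictly below $3/5$ while the ``large count'' regimes ($\ge\tfrac45\np$, $\ge\np/5$) stay strictly above their thresholds, leaving room for the adversarial tenth of the answers and for the sampling error. The concentration in the second step is routine, but one must use only the one-sided rates (true-positive rate $\ge 9/10$, false-positive rate $\le 1/10$) and never treat a noisy oracle answer as a fair coin.
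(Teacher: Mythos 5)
Your proof is correct and follows the same route as the paper: invoke \Cref{lem:sampling-approx-counterpart-test} to transfer from the sampled counters to the full-set counters, identify the exact ground-truth split-away counts in each level regime, and apply Chernoff to the noisy full-set counters. Your explicit level-based characterization of the exact counts (strict minimum level splits away; a two-way tie at the minimum makes the third vertex split away) is a clean consolidation of the casework the paper carries out implicitly within each of the three sub-proofs, and your uniform one-sided additive concentration bound with $\delta = O\bigl(\sqrt{\np\log n}\bigr)$ handles the small-expectation regime more cleanly than the paper's conditioning in part~\ref{line:split-balance-new-earlier} on the realized count being at least $(30000-100000\eps)\log n$ before applying a multiplicative Chernoff bound.
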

\begin{proof}
% The proof is essentially the same as \Cref{lem:counterpart-test}, and the only difference is that the set $\Vh$ is always $\Vp$. We provide the proof here for the sake of completeness.
By \Cref{lem:sampling-approx-counterpart-test}, we only need to prove the split-away properties on $\Vp$, and the split-away properties on $S$ follows.
Let $\ellstar$ be the maximal level that the $\Vsmall_{\ellstar}$ contains a vertex in $\Vbefore{\np/5}$, and suppose the split on this level results in $(S^{\ellstar}_{l}, S^{\ellstar}_{r})$. We assume w.log. that $\card{S^{\ellstar}_{l}}\leq \card{S^{\ellstar}_{r}}$ so that $S^{\ellstar}_{l}$ becomes $\Vsmall_{\ellstar}$. We first observe a structural property.
\begin{observation}
\label{obs:split-structure}
The size of the vertices in $\Vh\setminus \Vbefore{\np/5}$ (and equivalently $\cup_{i=\ellstar+1}^{\infty} \Vsmall_{i}$) satisfies 
\[\frac{2\np}{5} < \card{\cup_{i=\ellstar+1}^{\infty} \Vsmall_{i}} \leq \frac{4\np}{5}.\]
\end{observation}
\begin{proof}
The upper bound follows from $\card{\Vbefore{\np/5}}\geq \np/5$, as otherwise $\Vbefore{\np/5}$ will not have enough vertices. For the lower bound, note that by level $\ellstar-1$, the size there are still \emph{more than} $\frac{4\np}{5}$ vertices remain to be decided. Also, note that  $\cup_{\ellstar+1}^{\infty} \Vsmall_{i}$ collectively forms the `larger subtree' w.r.t. $\Vsmall_{\ellstar}$. As such, its size has to be \emph{more than} $\frac{1}{2}\cdot \frac{4\np}{5} = \frac{2\np}{5}$. %\myqed{\Cref{obs:split-structure}}.

\end{proof}
\noindent
Note that by the definition of small trees, we also have
$\card{\Vsmall_{\ell}}\leq \frac{\np}{2}$
for any $\ell$.
We now proceed one-by-one to the proofs of \Cref{line:split-balance-new-later}, \Cref{line:split-balance-new-middle}, and \Cref{line:split-balance-new-earlier}. 

\paragraph{Proof of \Cref{line:split-balance-new-later}} Fix any vertex $u\in \cup_{i=\ellstar+1}^{\infty} \Vsmall_{i}$. For the first statement, note that for every $v\in \cup_{i=\ellstar+1}^{\infty} \Vsmall_{i}$, all the vertices in $\cup_{i=1}^{\ellstar} \Vsmall_{i}$ are \emph{splitting away} from $(u,v)$. Define $X_{u,v,t}$ as the indicator random variable answered by $\cO$ as ``$t$ splits away from $(u,v)$'', and define $X_{u,v}=\sum_{t\in\Vp}X_{u,v,t}$ as the total number of vertex split away from $(u,v)$. By \Cref{obs:split-structure}, the expected number of split away reported by oracle $\cO$ is at least
\begin{align*}
\expect{X_{u,v}}\geq \frac{9}{10}\cdot \frac{\np}{5} = \frac{9}{50}\cdot \np.
\end{align*}
Note also that $X_{u,v}$ is a summation of $0/1$ independent random variables. As such, we can apply Chernoff bound to get that
\begin{align*}
\Pr\paren{X_{u,v}\leq \frac{\np}{6}} & = \Pr\paren{X_{u,v}\leq \frac{25}{27}\cdot \expect{X_{u,v}}} \\
& \leq \exp\paren{-\frac{(2/27)^2}{3}\cdot \expect{X_{u,v}}}\\
& \leq \frac{1}{10}\cdot \frac{1}{n^3}.
\tag{$\frac{9}{50}\cdot \np \geq 9000 \log{n}$ using the lower bound on the size of $\Vp$}
\end{align*}
For the second statement, note that for any $t\in \cup_{i=\ellstar}^{\infty} \Vsmall_{i}$, $v$ actually splits away from $(u,t)$. As such, there is at least $\frac{4\np}{5}$ vertices such that ``$v$ splits away from $(u,t)$'' -- this is because the LCA between $(u,v)$ has to be higher than the LCA between $(u,t)$ for any $t\in \cup_{i=\ellstar}^{\infty} \Vsmall_{i}$. As such, define $Y_{u,v}$ as the number of total answers of ``$v$ splits away from $(u,t)$'' by $\cO$, we again have
\begin{align*}
\expect{Y_{u,v}} \geq \frac{9}{10}\cdot \frac{4\np}{5}=\frac{18}{25}\cdot \np.
\end{align*}
As such, we again have
\begin{align*}
\Pr\paren{Y_{u,v}\leq \frac{3\np}{5}} &= Pr\paren{Y_{u,v}\leq \frac{15}{18}\cdot \expect{X_{u,v}}}\\
&\leq \frac{1}{10}\cdot \frac{1}{n^3}. \tag{$\frac{9}{50}\cdot \np \geq 9000 \log{n}$ using the lower bound on the size of $\Vp$}
\end{align*}
By a union bound over at most $n$ vertices, no vertex $v\in \cup_{i=1}^{\ellstar-1} \Vsmall_{i}$ or $v\in \cup_{i=\ellstar+1}^{\infty} \Vsmall_{i}$ can pass the test and be recorded as a `counterpart'. Furthermore, by \Cref{lem:sampling-approx-counterpart-test}, if $X_{u,v}>\np/6$ and $Y_{u,v}>3\np/5$, then such $u$ cannot pass the test on the sampled set $S$, which is as desired.
This part of the proof can be visualized as in \Cref{fig:counterpart-test-later}.

\paragraph{Proof of \Cref{line:split-balance-new-middle}}
We now show that vertices $v \in \cup_{i=1}^{\ell} {\Vsmall_{i}}$ cannot pass the ``counterpart'' test. For vertices in $v \in \Vsmall_{\ell}$, we claim that there are too many vertices $t$ that split away from $(u,v)$. To see this, let us define $A_{u,v}$ as the total number of answers of ``$t$ splits away from $(u,v)$'' answered by $\cO$. Note that for any vertex $t \in \cup_{i=\ellstar+1}^{\infty} \Vsmall_{i}$, the answer is always ``yes'' since $\ell^{*}\geq \tilde{\ell}\geq \ell$. Therefore, we can lower bound the expectation of $A_{u,v}$ with $\expect{A_{u,v}}\geq \frac{2}{5}\np\cdot \frac{9}{10} \geq \frac{9}{25}\cdot \np$. Therefore, we can apply Chernoff bound to get
\begin{align*}
\Pr\paren{A_{u,v}\leq \frac{\np}{6}} & \leq \exp\paren{-\frac{(1/2)^2}{3}\cdot \expect{A_{u,v}}}\leq \frac{1}{10}\cdot \frac{1}{n^3}.
\end{align*}
We now turn to the vertices that are in $\cup_{i=1}^{\ell-1} {\Vsmall_{i}}$. Note that by definition, there is $\card{\cup_{i=\ell}^{\infty} {\Vsmall_{i}}} \geq \card{\cup_{i=\ellstar}^{\infty} {\Vsmall_{i}}} \geq \frac{4}{5}\cdot \np$. For any $t \in \cup_{i=\ellstar}^{\infty} {\Vsmall_{i}}$, $v$ splits away from $(u,t)$. As such, define $B_{u,v}$ as the number of answers by $\cO$ with ``$v$ splits away from $(u,t)$'', there is $\expect{B_{u,v}}\geq \frac{18}{25}\cdot \np$. Once again, by applying Chernoff bound, there is % \chen{constants are not correct -- come back and fix them later}
\begin{align*}
\Pr\paren{B_{u,v}\leq \frac{3\np}{5}} & \leq \exp\paren{-\frac{(1/5)^2}{3}\cdot \expect{B_{u,v}}}\leq \frac{1}{10}\cdot \frac{1}{n^3}.
\end{align*}
Applying a union bound over the cases and all $\np\leq n$ vertices and using \Cref{lem:sampling-approx-counterpart-test} would lead to the desired statements.
This part of the proof can be visualized as in \Cref{fig:counterpart-test-middle}.

\paragraph{Proof of \Cref{line:split-balance-new-earlier}} 
We fix $\tilde{\ell}$ as the minimum level such that the union of $\cup_{i=1}^{\tilde{\ell}}\Vsmall_{i}$ has at least $\np/25$ vertices. Let $\ell\leq \tilde{\ell}$, and for any vertex $v$ in $\Vsmall_{k}$ for $k>\ell$, define the following two random variables: $X_{u,v}$ as the number of answers that ``$t$ splits away from $(u,v)$'', and $Y_{u,v}$ as the number of answers that ``$v$ splits away from $(u,t)$''. We shall show that both terms are not large and $v$ can always pass the test. (Note that $X$ and $Y$ are overloaded and are \emph{not} related to their meaning in the proof of Line~\ref{line:split-balance-new-middle}.)

% By definition, there must be $\tilde{\ell}\leq \ellstar$ since the former requires a smaller number of induced vertices. We now conduct a cases analysis as follows.

% \begin{itemize}[leftmargin=20pt]
% \item \textbf{Case A). $\tilde{\ell}=\ellstar$}
% \end{itemize}

Note that by definition, we have $\card{\cup_{i=1}^{\tilde{\ell}-1} \Vsmall_{i}}<\frac{\np}{25}$. Since $\ell\leq \tilde{\ell}$, for any vertex $u \in \cup_{i=1}^{\ell}\Vsmall_{i}$, only vertices $t\in \cup_{i=1}^{\ell-1} \Vsmall_{i}$ are \emph{actually} splitting away from $(u,v)$ for $v \in \cup_{i=\ell}^{\infty} \Vsmall_{i}$. As such, we have 
$\expect{X_{u,v}}\leq (\frac{1}{10}+\frac{1}{25})\cdot \np = \frac{7}{50}\cdot \np$ by the correct probability of $\cO$.
For $X_{u,v}$, we also note that if the answer is at most $(30000-100000\eps)\cdot \log{n}$, then by \Cref{lem:sampling-approx-counterpart-test}, the vertex passes the test with high probability. Therefore, we assume $X_{u,v}\geq \expect{X_{u,v}}\geq (8000-100000\eps)\cdot \log{n}$, and we again apply Chernoff for the tail bound:
\begin{align*}
\Pr\paren{X_{u,v}\geq \paren{1/6-2\eps}\cdot \np} & \leq \exp\paren{-\frac{50\cdot (2/75-2\eps)^2}{7\cdot 3}\cdot \expect{Y_{u,v}}}\\
& \leq \frac{1}{10}\cdot \frac{1}{n^3} \tag{using $\expect{Y_{u,v}}\geq (8000-100000\eps)\cdot \log{n}$ and pick $\eps$ sufficiently small}. 
\end{align*}
For $Y_{u,v}$, note that only $t\in \Vsmall_{\tilde{\ell}}$ can report ``$v$ splits away from $(u,t)$'' since the LCA between $(u,t)$ is at least $\tilde{\ell}$ for any other $v$. Therefore, the number of signals we can possibly get is $\np/2$ plus the noise induced by the oracle $\cO$. As such, we again have 
$\expect{Y_{u,v}}\leq 11\np/20$.
Also, note that if $Y_{u,v}$ is less than $(30000-100000\eps)\cdot \log{n}$, the vertex $v$ would always pass the test, which allows us to assume w.log. that $\expect{Y_{u,v}}\geq (30000-100000\eps)\cdot \log{n}$. Therefore, we can again apply Chernoff bound to show
\begin{align*}
\Pr\paren{Y_{u,v}\geq (\frac{3}{5}-2\eps)\cdot \np} & \leq \exp\paren{-\frac{20\cdot (1/20-2\eps)^2}{3}\cdot \expect{Y_{u,v}}}\\
& \leq \frac{1}{10}\cdot \frac{1}{n^3} \tag{using $\expect{Z_{u,v}}\geq (30000-100000\eps)\cdot \log{n}$ and pick $\eps$ sufficiently small}.
\end{align*}
Therefore, we could apply \Cref{lem:sampling-approx-counterpart-test} to argue that with high probability, the vertex $v$ would pass the test. 
This part of the proof can be visualized as in \Cref{fig:counterpart-test-earlier}.

Combining the analysis of \Cref{line:split-balance-new-later}, \Cref{line:split-balance-new-middle}, and \Cref{line:split-balance-new-earlier} gives us the desired proof of \Cref{lem:counterpart-test-strong}.
\end{proof}

\begin{figure}
\centering
\begin{subfigure}{0.45\textwidth}
  \includegraphics[scale=0.40]{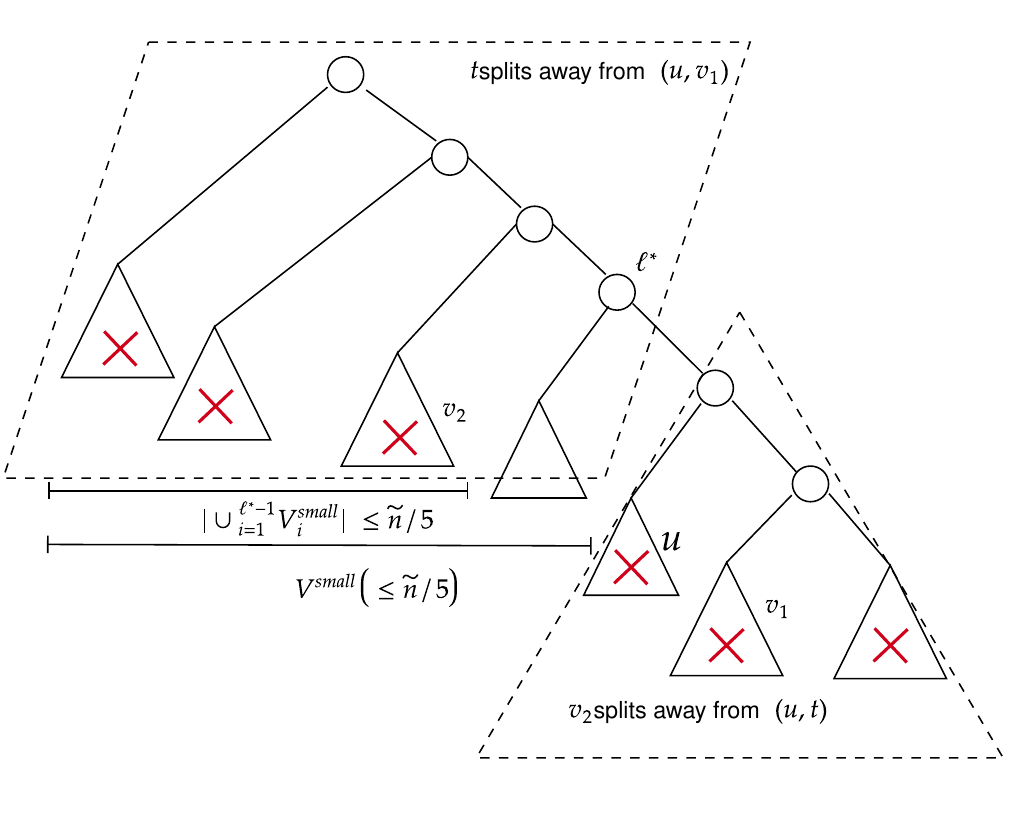}
  \caption{\centering Illustration of \Cref{line:split-balance-new-later} of \Cref{lem:counterpart-test-strong} with $u\in \cup_{i=\ellstar+1}^{\infty} \Vsmall_{i}$.}
  \label{fig:counterpart-test-later}
\end{subfigure}
\begin{subfigure}{0.45\textwidth}
  \includegraphics[scale=0.40]{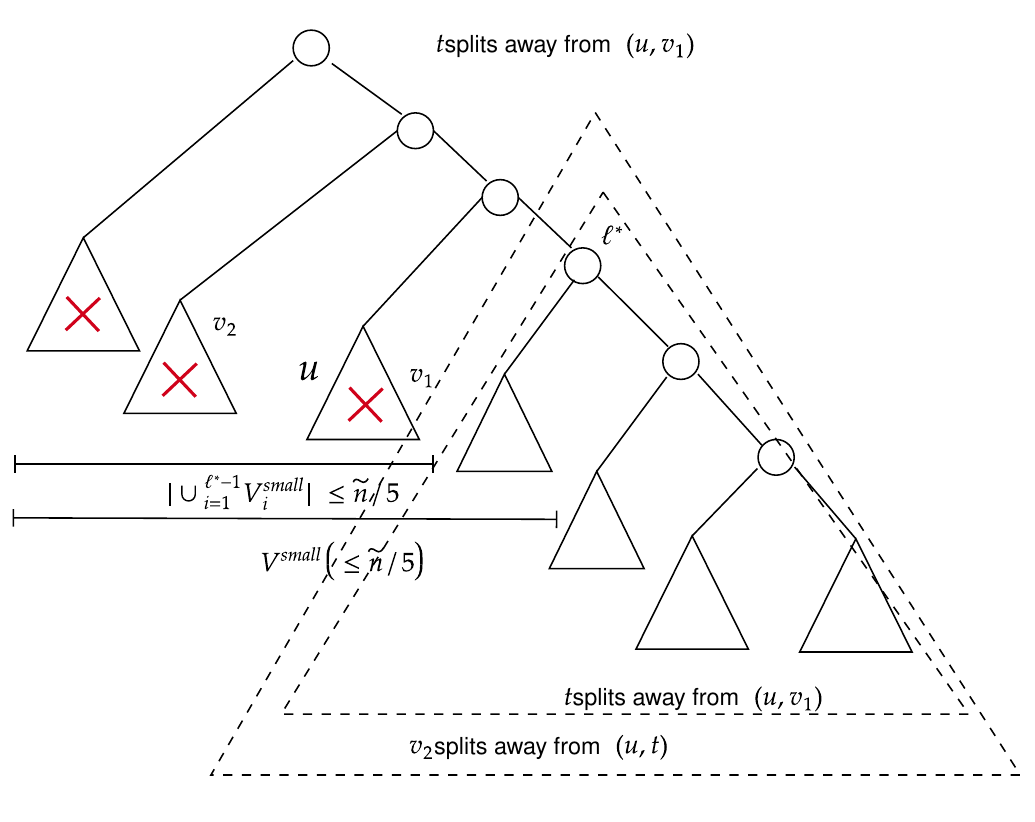}
  \caption{\centering Illustration of \Cref{line:split-balance-new-middle} of \Cref{lem:counterpart-test-strong} with $u\in \Vsmall_{\ell}$ for some $\ell\leq \ellstar$.}
  \label{fig:counterpart-test-middle}
\end{subfigure}
\begin{subfigure}{0.45\textwidth}
  \includegraphics[scale=0.40]{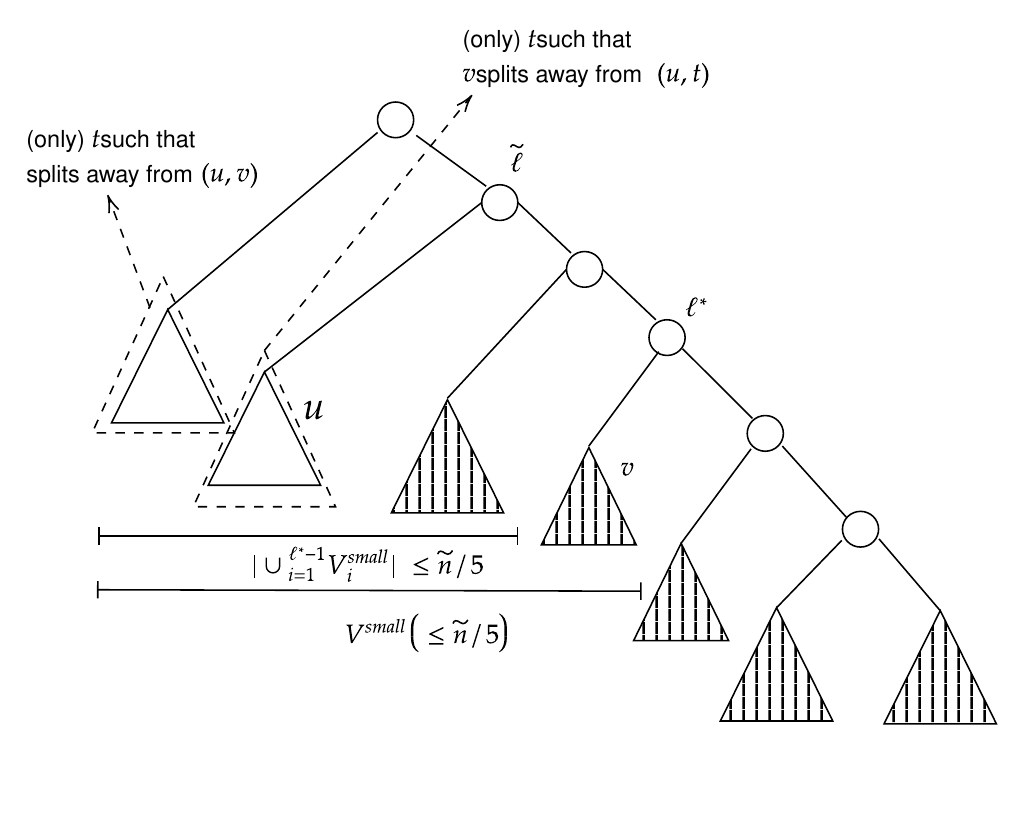}
  \caption{\centering Illustration of \Cref{line:split-balance-new-earlier} of \Cref{lem:counterpart-test-strong} with $u\in \Vsmall_{\ell}$ for $\ell\leq \tilde{\ell}$ as prescribed in \Cref{lem:counterpart-test-strong}.}
  \label{fig:counterpart-test-earlier}
\end{subfigure}
\begin{subfigure}{0.45\textwidth}
  \includegraphics[scale=0.40]{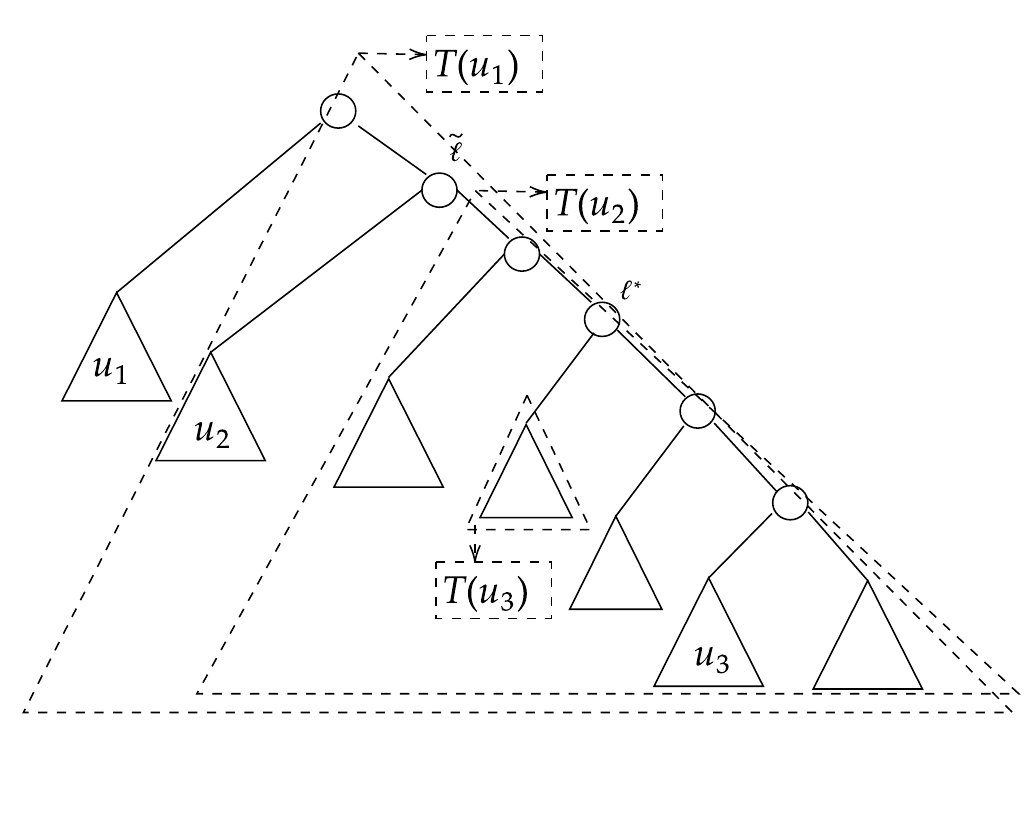}
  \caption{\centering The ``counterpart'' trees generated by vertices $u_1$, $u_2$, and $u_3$. We can simply take the largest partition to guarantee the root cut.}
  \label{fig:counterpart-test-consequence}
\end{subfigure}
\caption{An illustration of the analysis we used in the proof of \Cref{lem:counterpart-test-strong}.}
\label{fig:counterpart-test}
\end{figure}

Using \Cref{lem:counterpart-test-strong}, we can now argue that with high probability, the algorithm always correctly identifies the vertex split from the root as long as the size is at least $50000\log{n}$. More formally, we have
\begin{lemma}
\label{lem:root-cut-correct-identify}
For any composable $\Vp$ such that $\card{\Vp}\geq 50000 \log{n}$, let $\cTstar(\Vp)$ be the optimal HC tree restricted to $\Vp$, and suppose the root split of $\cTstar(\Vp)$ is $(S^*_{l}, S^*_{r})$. Then, the output $(T^*, \Vp\setminus T^*)$ of Line~\ref{line:root-cut} is exactly $(S^*_{l}, S^*_{r})$.
\end{lemma}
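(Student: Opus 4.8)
The plan is to condition on the high-probability events guaranteed by \Cref{lem:counterpart-test-strong} and then argue by a case analysis over which small tree the baseline vertex $u$ belongs to in the small-tree split order of $\cTstar(\Vp)$. Relabeling the two sides of the root split if necessary, assume without loss of generality that $\card{S^*_l}\leq\card{S^*_r}$, so that $\Vsmall_1=S^*_l$ and $\cup_{i\geq 2}\Vsmall_i=S^*_r$. Let $\ellstar$ and $\tilde{\ell}$ be as in \Cref{lem:counterpart-test-strong}, so $1\leq\tilde{\ell}\leq\ellstar$; note also that $\Vsmall_i\neq\emptyset$ for every $i\leq\ellstar$, since otherwise the restricted tree would not reach level $\ellstar$.

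The first step is to pin down the set $T$ produced by Line~\ref{line:test-good-T-new} for a baseline vertex inside the true smaller side. For any $u\in\Vsmall_1$, since $1\leq\tilde{\ell}$, \Cref{line:split-balance-new-middle} of \Cref{lem:counterpart-test-strong} says no vertex of $\Vsmall_1$ enters $T$ (and there is no level below $1$), while \Cref{line:split-balance-new-earlier} says every vertex of $\cup_{k\geq 2}\Vsmall_k$ enters $T$; hence $T=\cup_{k\geq 2}\Vsmall_k=S^*_r$ for such $u$. In particular, after the loop over all $u\in\Vp$ in \Cref{alg:strong-partial-tree}, we have $\card{T^*}\geq\card{S^*_r}$.

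The second step is to bound $\card{T}$ for all remaining baseline vertices and to characterize the equality cases. If $u\in\Vsmall_\ell$ with $2\leq\ell\leq\ellstar$, then \Cref{line:split-balance-new-middle} gives $T\subseteq\cup_{k>\ell}\Vsmall_k\subsetneq\cup_{k\geq 2}\Vsmall_k=S^*_r$, where the proper inclusion uses $\Vsmall_2\neq\emptyset$; so $\card{T}<\card{S^*_r}$. If $u\in\cup_{i>\ellstar}\Vsmall_i$, then \Cref{line:split-balance-new-later} gives $T\subseteq\Vsmall_{\ellstar}$, so $\card{T}\leq\card{\Vsmall_{\ellstar}}\leq\np/2\leq\card{S^*_r}$; moreover equality throughout forces $\card{\Vsmall_{\ellstar}}=\np/2$, which (as $\Vsmall_{\ellstar}$ is the smaller side of a split of a set of size at most $\np$) is possible only when $\ellstar=1$ and the root split is perfectly balanced, in which case $T\subseteq\Vsmall_1=S^*_l$ with $\card{T}=\card{S^*_l}$, i.e.\ $T=S^*_l$.

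Combining the two steps, $\max_u\card{T}=\card{S^*_r}$, and any $u$ attaining this maximum yields either $T=S^*_r$ or (only in the balanced case above) $T=S^*_l$; in both cases $\{T^*,\Vp\setminus T^*\}=\{S^*_l,S^*_r\}$, so Line~\ref{line:root-cut} returns exactly the root split of $\cTstar(\Vp)$. The only delicate point, and the main obstacle, is the handling of ties: because Line~\ref{line:root-cut} breaks ties arbitrarily, one must verify that every maximum-size candidate $T$ is in fact one side of the true root cut, which is precisely the strict-inequality bookkeeping carried out in the second step.
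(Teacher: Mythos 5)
Your proof is correct, and it follows essentially the same route as the paper's: condition on the high-probability guarantee of \Cref{lem:counterpart-test-strong}, normalize so that $\Vsmall_1=S^*_l$, and do a case analysis on the level of the baseline vertex $u$, showing that the maximum-size $T$ is always one side of the root split. The one place where you go beyond the paper is the tie/equality bookkeeping: the paper asserts that for $u\in S^*_r$ the induced $T$ is ``necessarily smaller'' than $S^*_r$, which glosses over the balanced case $\card{S^*_l}=\card{S^*_r}$ where a $u\in S^*_r$ can actually tie by producing $T=S^*_l$ (hence the algorithm's ``breaking ties arbitrarily''). You handle this explicitly by tracing the equality chain $\card{T}\le\card{\Vsmall_{\ellstar}}\le\np/2\le\card{S^*_r}$ and showing equality forces $\ellstar=1$ and $T=S^*_l$, so the tie still yields the correct partition. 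That is a genuine strengthening of the exposition, though the overall argument structure and the key lemma used are the same.
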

\begin{proof}
Assume without the loss the generality that $S^*_{l}\leq S^*_{r}$. Note that for any vertex $u\in S^*_{l}$, it is also among the vertex of $\Vbefore{\np/25}$. As such, by \Cref{lem:counterpart-test-strong}, the entire vertex set of $S^*_{r}$ is returned. On the other hand, for any vertex $u\in S^*_{r}$, we claim the induced set $T$ is necessarily smaller than $S^*_{r}$. To see this, note that by \Cref{lem:counterpart-test-strong}, the only possible case for Line~\ref{line:test-good-T-new} to \emph{not} return a \emph{subset} of $S^*_{r}$ is if $u\in \Vsmall_{\ell}$ such that $\ell>\ell^*$ and $\ell^*=1$. However, in such a case, the algorithm can at most return the set $S^*_{l}$, which is necessarily smaller than the size of $S^*_{r}$. As such, the maximum size of the set is attained by picking the vertex from $S^*_{l}$, which implies that the return rule of Line~\ref{line:root-cut} gives the partition $(S^*_{l}, S^*_{r})$.
\end{proof}

In essence, \Cref{lem:root-cut-correct-identify} is the very natural consequence of \Cref{lem:counterpart-test-strong}. This can be visualized in \Cref{fig:counterpart-test-consequence} of \Cref{fig:counterpart-test} -- the $T$ with the largest size is always the set of vertices induced by $u_1$, which is exactly what we want.

\paragraph{Finalizing the proof of \Cref{thm:strong-partial-tree}.}
In the beginning, we have $\Vp = V$, and the algorithm identifies the partition root partition for $V$ on $\cTstar$ with probability at least $1-\frac{1}{n^2}$. Subsequently, on each level of the recursion, as long as $\card{\Vp}\geq 50000 \log{n}$, the guarantee will hold. By \Cref{fct:num-splits-tree}, there are at most $n$ internal nodes in such a tree. Therefore, the high probability event holds for every split in \Cref{alg:strong-partial-tree}. 

Conditioning on the high probability event, any super-vertex has to induce a maximal subtree in $\cTstar$, which satisfies the requirement for the tree to be \emph{strongly consistent} with $\cTstar$. By \Cref{lem:strong-tree-run-time}, \Cref{alg:strong-partial-tree} deterministically takes at most $O(n^3)$ queries and $O(n^3\log{n}/\eps^2) = O(n^3\log{n})$ time by the choice of $\eps=\Theta(1)$.

Finally, we observe that \Cref{alg:strong-partial-tree} only takes $O(n\log{n})$ memory to implement. 
To see this, observe that for any recursive call of \Cref{alg:strong-partial-tree}, for any fixed $u$ and $v$, the subroutine \texttt{counterpart-tester-strong} can be implemented in $O(n\log{n})$ bits of space; furthermore, the space can be re-used for different $u$ and $v$ vertices. 
As such, we only need to maintain the set $T$ and its size for every $u$, which takes $O(n \log^2{n})$ space. In fact, we can maintain this in $O(n\log{n})$ space by keeping the set $T$ with the largest size and re-using the space on the fly. After each recursive call, we can free up the space of $\Vp$ to store $\Vp\setminus T^*$ and $T^*$ instead. On a separate $O(n\log{n})$-sized space, we can keep writing down the HC tree with 
\begin{enumerate}[label=\roman*).]
\item A name of the node that takes $O(\log{n})$ bits of memory;
\item The set of vertices induced by this node.
\end{enumerate}
In each recursion, we re-use the space of $ii)$ by erasing the set of vertices induced by an internal node $x$ once the children of $x$ are written down. In the end, the sets of vertices are only written on the super-vertices. As such, the entire partial HC tree can be stored in $O(n\log{n})$ memory as well.

\section{The Algorithm for the Weakly Consistent Partial HC Tree: Proof of \texorpdfstring{\Cref{thm:weak-partial-tree}}{weakp}}
\label{sec:weak-partial-tree}
We present the considerably more involved algorithm for the weakly consistent partial tree construction in \emph{near-linear} time. We first remind the readers of the main theorem of this result.

\thmweakpartialtree*

Our main effort is to show the algorithm that satisfies the guarantee of \Cref{thm:weak-partial-tree}.
As we have discussed in the high-level overview, the algorithm is divided into the ``split'' and the ``merging'' parts. We first show an algorithm that given a composable subset of vertices $\Vp \subseteq V$ such that $\card{\Vp}\geq \Omega(\log{n})$, return a set $T$ that induces a \emph{maximal subtree} in $\cT(\Vp)$. As such, we can recurse on $T$ and $\Vp \setminus T$ to gradually reduce the sizes to at most $O(\log{n})$. Then, in the second part, we show how to `glue' the subtrees together to eventually form a partial tree that is consistent with $\cTstar$. The overall structure of the algorithm can be shown as \Cref{alg:partial-tree}.

\begin{algorithm}
\caption{$\ptalg{\Vp, \Vh}{\cO}$: an algorithm for partial tree construction} \label{alg:partial-tree}
\KwIn{Input vertex set $\Vp$; input horizon vertex set $\Vh$; a splitting oracle $\cO$ as prescribed in \Cref{def:orl-model}}
\KwOut{A partial hierarchical clustering tree for $\Vp$.}
Initialize $\Vp=V$, $\Vh=V$.\;
If $\card{\Vp}<50000 \log{n}$, return a super-vertex (defined as in \Cref{def:partial-tree}).\;
Get the partition of $T$ such that $\cTree{T}$ is a complete subtree, i.e. $(T, \Vp\setminus T, \ustar, \Vh) \gets \splitalg{\Vp, \Vh}{\cO}$ (\Cref{alg:split})\;
Recurse on $T$ and $\Vp\setminus T$, i.e., \[\cT_{T} \gets \ptalg{T, T}{\cO}; \qquad \cT_{\Vp\setminus T} \gets \ptalg{\Vp\setminus T, \Vh}{\cO}\]
for the respective HC trees\;
Run the merging algorithm $\cT_{\Vp}\gets \treemerge{\cT_{T}}{\cT_{\Vp\setminus T}}{\ustar}$ (\Cref{alg:merge}).
\end{algorithm}

We will introduce and analyze the splitting and the merging algorithms in the subsequent sections.

\subsection{An algorithm to split the vertices}
\label{subsec:split-alg}
We first discuss our algorithm that produces complete subtrees for a given set of vertices $\Vp$. For the clarity of presentation, we use $\np$ to denote the size of $\Vp$, i.e., $\np=\card{\Vp}$.

\subsubsection{The algorithm} 
We now present the actual algorithm. We first give two ``tester'' subroutines that serves as a more general version of the \texttt{counterpart-tester-strong} algorithm we used in \Cref{alg:strong-partial-tree}.

\FloatBarrier
\begin{algorithm}
\caption{$\testalg{u}{t}{\textnormal{threshold}_1}{\textnormal{threshold}_2}$: an algorithm to test whether $t$ is among the ``counterpart'' of $u$} \label{alg:tester}
\KwIn{Vertex set $\Vh$; a splitting oracle $\cO$; baseline vertex $u$; test vertex $v$; $\text{threshold}_1\in (0,\np)$, $\text{threshold}_2\in (0,\np)$}
\KwOut{Whether $v$ is a ``counterpart'' of $u$}
Initialize counters $c_1\gets 0$, $c_2 \gets 0$\;
\For{Every vertex $t\in \Vh$}{
If $v$ splits away from $(u,t)$, increase $c_1$ by 1\;
If $t$ splits away from $(u,v)$, increase $c_2$ by 1\;
}
\If{$c_1\leq \textnormal{threshold}_1$ and $c_2 \leq \textnormal{threshold}_2$}{
Return ``$v$ is a counterpart of $u$''.
}
\end{algorithm}

% It is not immediate clear what intuitively the name ``counterpart'' stands for. To help the readers understand, consider the thought process that we write down the split on the root as the smaller and the larger subtrees; then, we recursively further write down the split induced by the node of the larger side (again to small and large subtrees). If we pick any vertex $v$ from the ``smaller subtree'' at some node $x$, then \Cref{alg:tester} will return the ``bigger subtree'' on $x$ with appropriate parameters. A more formal description of this thought process is in the small-tree splitting order defined in \Cref{def:small-tree-split}.
\Cref{alg:tester} is very similar to \Cref{alg:tester-strong}, and the difference is that instead of testing on $\Vp$ set, we query on the $\Vh$ set instead. This is more than a simple notation change: it allows us to separate the vertices ``to be tested'' (e.g., vertices in $\Vp$) vs. the set we ``used in the tests'' (e.g., vertices in $\Vh$).

Next, we introduce the tester that given vertices $u\in \Vsmall_{\ell_{1}}$ and $t \in \Vsmall_{\ell_{2}}$, returns whether $\ell_2 < \ell_1$, i.e., whether $t$ is ``split earlier'' in the small-tree split order of $u$. 

\FloatBarrier
\begin{algorithm}
\caption{$\testalgprev{u}{t}{\textnormal{threshold}}$: an algorithm to test whether $t$ is among the ``predecessor'' of $u$} \label{alg:tester-prev}
\KwIn{Vertex set $\Vh$; a splitting oracle $\cO$; baseline vertex $u$; test vertex $t$; $\text{threshold}\in (0,\np)$}
\KwOut{Whether $t$ is a ``predecessor'' of $u$}
Initialize counters $c\gets 0$ \;
\For{Every vertex $s\in \Vh$}{
If $t$ splits away from $(u,s)$, increase $c$ by 1\;
}
\If{$c\geq \textnormal{threshold}$}{
Return ``$t$ is a predecessor of $u$''.
}
\end{algorithm}

Note that unlike \Cref{alg:tester}, the vertex $t$ passes the test in \Cref{alg:tester-prev} if the number of ``split away'' is lower-bounded. With the $\testalgprev{u}{t}{\textnormal{threshold}}$ algorithm, we can now define the following algorithm that tests whether a root split has happened in the previous iteration.

\FloatBarrier
\begin{algorithm}
\caption{$\testorphan$: an algorithm to obtain any vertex that is split earlier than the orphaned vertex set in the small-tree split order} \label{alg:orphan-test}
\KwIn{Vertex set $\Vp$ of size $\np$; the horizon set $\Vh$ of size $\nh$; a splitting oracle $\cO$}
\KwOut{Vertex $X^*$ that is either empty or contain ``predecessor'' of orphaned sets}
% \nonl\algcomment{Finding a set of good vertices $\ustar$}\\
Sample a set $U'$ of $100 \log{n}$ vertices from $\Vp$ using \emph{fresh randomness}\;
\For{$u' \in U'$}{
\For{$v \in \Vh$}{
\If{$\nh\leq 3\np$}{
$\textnormal{threshold-pred} \gets 3\np/2$\;
}
\Else{
$\textnormal{threshold-pred} \gets 2\nh/3$\;
}
Add $x$ to $X$ if $\testalgprev{u}{v}{\textnormal{threshold-pred}}$ (\Cref{alg:tester-prev}) reports ``$v$ is a predecessor of $u$''\label{line:pred-test}\;
}
Record the size $\card{X}$ for this choice of $u'$\;
}
Return $X^*$ as the largest size of $X$\;
\end{algorithm}

We now continue to the presentation of our tree-split algorithm. The full algorithm is as \Cref{alg:split}.

\FloatBarrier
\begin{algorithm}
\caption{$\splitalg{\Vp, \Vh}{\cO}$: an algorithm that splits $\Vp$ to $T$ and $\Vp\setminus T$} \label{alg:split}
\KwIn{Vertex set $\Vp$ of size $\np$; the horizon set $\Vh$ of size $\nh$; a splitting oracle $\cO$}
\KwOut{Vertex sets $T$, $\Vp\setminus T$; new horizon set $\Vh$}
% \nonl\algcomment{Finding a set of good vertices $\ustar$}\\
\label{line:split-sample}Sample a set $U$ of $500 \log{n}$ vertices from $\Vp$\;
\For{$u \in U$}{
Initialize $T \gets \emptyset$\;
\For{$v \in \Vh$}{ \label{line:test-good-T} 
Run $\testalg{u}{v}{3\nh/5}{\nh/6}$\; 
Add $v$ to $T$ if $v$ is a ``counterpart'' of $u$\;
}
Record the size $\card{T}$ for this choice of $u$\;
}
Pick $(T^*, \Vp\setminus T^*)$ such that $T^*\gets T$ is with the largest size\; 
\If{$T^* \cap \Vp = \emptyset$}{
\nonl\algcomment{Test whether a root split happened in the last iteration}\;
$X^* \gets \testorphan$  (using \Cref{alg:orphan-test})\;
\If{$X^* \neq \emptyset$}{\label{line:large-orphan}
\nonl\algcomment{The case of non-root split}\;
Pick an arbitrary $\ustar \in X^*$\;
Initialize $T^* \gets \emptyset$\;
\For{$v \in \Vh$}{
Run $\testalg{\ustar}{v}{3\nh/5}{\nh/6}$\; 
Add $v$ to $T^*$ if $v$ is a ``counterpart'' of $\ustar$\;
}
Output with the same rules of the $T^*\cap \Vp \neq \emptyset$ case\;
}
\Else{ 
\nonl\algcomment{Case for root split}\;
Let $\Vh\gets \Vp$, run and output with $\splitalg{\Vp, \Vp}{\cO}$ \label{line:recursion-call}\;
Enforce a \emph{single level} of recursion call (return ``FAIL'' if $\Vp=\Vh$ and the algorithm enters the above line again with $\Vp=\Vh$)\;
}
}
\Else{ \label{line:output-lines}
\nonl\algcomment{Keep the current horizon}\;
Output $(T^*\cap \Vp, \Vp\setminus T^*)$ as the partition and keep the same $\Vh$\; 
Output $\ustar$ as the vertex $u \in U$ corresponding to the output $T^*$\; 
}
\end{algorithm}
\FloatBarrier

\noindent
Note that in Line~\ref{line:split-sample}, we sample $500 \log{n}$ as opposed to $500 \log{\np}$ vertices (this is on purpose as opposed to being a typo). % Also, by setting $\text{threshold}_2 = \np$ in Line~\ref{line:test-good-u}, we essentially only count how many ``other vertices'' splits away from $(u,v)$. 
The formal guarantee of the tree split algorithm is as follows.

\begin{lemma}
\label{lem:tree-split}
Suppose $\Vp$ is a composable set with size $\np$ that satisfies the \emph{weak contraction property} as prescribed in \Cref{def:weakly-consistent-tree}, and suppose $\np \geq 50000 \log{n}$. Furthermore, let $\Vh$ be a single composable set (e.g., $\Vh$ induces a maximal tree in $\cTstar$) of $\nh$ vertices, and suppose 
\[\Vp =\cup_{i=1}^{\ell} \Vsmall_{i}\] 
for some $\ell$ in the small-tree split order of $\Vh$. Then, given a splitting oracle $\cO$ with correct probability at least $9/10$, \Cref{alg:split} with high probability outputs composable sets $T^{*}\cap \Vp$ and $\Vp\setminus T^{*}$ and a vertex $\ustar$ such that 
\begin{enumerate}[leftmargin=12pt]
\item\label{eff-split:T} $T^*$ induces a \emph{single} maximal subtree in $\cTstar(\Vp)$. 
\item\label{eff-split:out-degree} $\Vp\setminus T^{*}$ satisfies the \emph{weak contraction property} as prescribed in \Cref{def:weakly-consistent-tree}, and in particular, the subtrees induced by $\Vp\setminus T^{*}$ have
\begin{enumerate}
\item one edge connected to the node $\pa{\lcatreeset{\Vp}{\cTstar}}$ (if it exist).
\item one edge connected to the node that induces $T^{*}$ in $\cTstar$.
\end{enumerate}
\item\label{eff-split:easy-to-merge} The lowest common ancestor between the nodes in $T$ is an (immediate) child node of the lowest common ancestor between $T\cup \{\ustar\}$ in $\cTstar(\Vp)$, i.e.,
\[\lcatreeset{T\cup \{\ustar\}}{\cTstar(\Vp)} = \pa{\lcatreeset{T}{\cTstar(\Vp)}}. \]
\item \label{eff-split:size-bounds} Size properties: \emph{at least one} of the following guarantees hold.
\begin{enumerate}
\item \label{case:large-orphan} The new set $\Vp\setminus T^*$ has at least $\frac{99}{100}$ fraction of vertices that are orphaned vertices, i.e., the new $\Vorphan$ set accounts of at least $\frac{99}{100}$ fraction of vertices in $\Vp\setminus T^*$;
\item \label{case:large-T-star} The size of $T^*$ satisfies the following properties:
\begin{enumerate}
    \item\label{eff-split:size-lb} The size of $T^{*}\cap \Vp$ is at least $\frac{1}{200}\cdot \np$, i.e., 
    \[\card{T^{*}\cap \Vp} \geq \frac{1}{200}\cdot \np.\]
    \item\label{eff-split:size-ub} If $\Vh=\Vp$, the size of $T^{*}\cap \Vp$ is at most $(1-\frac{1}{10000 \log^2{\np}})\cdot \np$, i.e.,
    \[\card{T^{*}\cap \Vp} \leq (1-\frac{1}{10000 \log^2{\np}})\cdot \np.\]
\end{enumerate}
\end{enumerate}
% \item The lowest common ancestor of $T$ in $\cTstar(\Vp)$ on a \emph{lower level} than the lowest common ancestor of $\Vp\setminus T$ in $\cTstar(\Vp)$, i.e., 
% \[\distr{\cTstar(\Vp)}{\lcatreeset{\cTstar(\Vp)}{T}}\leq \distr{\cTstar(\Vp)}{\lcatreeset{\cTstar(\Vp)}{\Vp \setminus T}}.\]
\end{enumerate}
Furthermore, case \cref{case:large-T-star} always happens if $\Vh = \Vp$, and the algorithm runs in time $O(\nh^2 \cdot \log n)$.
\end{lemma}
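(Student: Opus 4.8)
The plan is to reduce the whole statement to a careful analysis of the counterpart-tester (\Cref{alg:tester}) relative to the small-tree split order of the \emph{horizon} set $\Vh$, and then to track each of the three branches of \Cref{alg:split} in turn. Since $\Vh$ is a single composable set, fix the small-tree split order $\Vsmall_1,\Vsmall_2,\dots$ of $\cTstar(\Vh)$ as in \Cref{def:small-tree-split}; by hypothesis $\Vp=\cup_{i=1}^{\ell_0}\Vsmall_i$ is a prefix of this order. The first step is to restate \Cref{lem:counterpart-test-strong} with the base set taken to be $\Vh$ rather than $\Vp$: for a baseline vertex $u\in\Vsmall_\ell$ lying in the first $\Theta(\nh)$ vertices of the order, the counterpart-tester with thresholds $(3\nh/5,\nh/6)$ recovers, with high probability, exactly $T=\cup_{k>\ell}\Vsmall_k$, by the same Chernoff argument over the $\nh$ oracle answers (each correct with probability $9/10$) plus a union bound over the at most $\binom{n}{3}$ triplets. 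The point is that $\cup_{k>\ell}\Vsmall_k$ is precisely the larger child subtree at level $\ell$, hence a single maximal subtree of $\cTstar(\Vh)$, and once intersected with $\Vp$ it is a single maximal subtree of $\cTstar(\Vp)$ -- this gives property~\ref{eff-split:T} -- while its lowest common ancestor is an immediate child of $\lcatreeset{\{u\}\cup T}{\cTstar}$, giving property~\ref{eff-split:easy-to-merge}.

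Next I would handle the normal branch, where some sampled $u\in U$ yields $T^{*}$ with $T^{*}\cap\Vp\neq\emptyset$. Because $\np\ge 50000\log n$, a uniform sample of $500\log n$ vertices from $\Vp$ meets every prefix of the small-tree order of size $\Omega(\np)$ with high probability; since $|\Vsmall_i|\le\tfrac12|\cup_{j\ge i}\Vsmall_j|$, the maximizing choice of $u$ then forces $T^{*}$ to be a single maximal subtree of size at least $\tfrac1{200}\np$ whenever such a subtree meets $\Vp$, which is the size lower bound~\ref{eff-split:size-lb}. When $\Vh=\Vp$ this case always applies, since then $\Vp$ is itself a maximal subtree and carries no orphaned vertices; the size upper bound~\ref{eff-split:size-ub} then follows from the fact that the first small tree $\Vsmall_1$ has size at least $\np/(10000\log^2\np)$ -- otherwise the prefix of size $\np/25$ would span $\omega(\log^2 n)$ levels, so the random sample would with good probability land strictly before the level realizing the largest counterpart, contradicting maximality; I would make this quantitative using the bound on the number of levels a prefix of a given size can span. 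For property~\ref{eff-split:out-degree} I would simply bookkeep the weak contraction property: $\Vp$ entered satisfying it, $T^{*}$ is exactly one maximal subtree peeled off, so $\Vp\setminus T^{*}$ gains at most the one edge to the root node of $T^{*}$ and keeps at most the one pre-existing edge to $\pa{\lcatreeset{\Vp}{\cTstar}}$.

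The delicate part is the degenerate branch, where $T^{*}\cap\Vp=\emptyset$ for every sampled $u$. This forces the counterpart in $\cTstar(\Vh)$ of every sampled vertex to lie outside $\Vp$, which can only happen when the orphaned block of $\Vp$ occupies so large a fraction that the sample missed a non-orphan predecessor -- exactly the alternative~\ref{case:large-orphan}, which I would quantify to get the claimed $\tfrac{99}{100}$ fraction. I would then prove that $\testorphan$ (\Cref{alg:orphan-test}) correctly tells the two sub-cases apart: with fresh randomness a sample of $100\log n$ vertices from $\Vp$ hits the large orphan block with high probability, and for such a vertex $u'$ the predecessor-tester (\Cref{alg:tester-prev}) returns a nonempty $X^{*}$ if and only if some vertex of $\Vh$ is split strictly earlier than $u'$ in the order, which holds iff $\Vp$ is not a clean root cut of $\Vh$; the two threshold regimes ($3\np/2$ when $\nh\le3\np$, else $2\nh/3$) are chosen so the true ``split-away'' signal dominates the oracle noise in both, and I would verify both Chernoff computations. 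In the non-root sub-case, rerunning the counterpart test with a genuine predecessor $\ustar$ again produces, by the first paragraph, a single maximal subtree satisfying properties~\ref{eff-split:T}--\ref{eff-split:easy-to-merge}, and now $T^{*}\cap\Vp$ absorbs the orphaned block so that $\Vp\setminus T^{*}$ is back to having out-degree at most two, restoring property~\ref{eff-split:out-degree}. In the root sub-case, resetting $\Vh\gets\Vp$ makes $\Vp$ its own maximal horizon, so the single enforced level of recursion falls into the $\Vh=\Vp$ analysis already done, and it terminates because after the reset $\Vp$ is a single composable set and the $T^{*}\cap\Vp=\emptyset$ branch cannot recur.

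The main obstacle I expect is precisely this degenerate branch: pinning down the constants so that $\testorphan$ provably separates ``a root split happened in the previous iteration'' from ``it did not'', and then showing that in the non-root case the new $T^{*}$ provably swallows the orphaned block so that the weak contraction property of \Cref{def:weakly-consistent-tree} is re-established -- this is where the horizon-set bookkeeping is most intricate and where the numeric thresholds ($\tfrac{99}{100}$, the two predecessor thresholds, the $\tfrac1{200}$ and $1-\tfrac1{10000\log^2\np}$ size bounds) have to be reconciled simultaneously. The runtime claim $O(\nh^{2}\log n)$ is routine: each counterpart or predecessor test makes $O(\nh)$ oracle lookups, the algorithm runs $O(\log n)$ baseline vertices ($|U|=500\log n$, plus $100\log n$ inside $\testorphan$) each against $O(\nh)$ test vertices, and the single enforced level of recursion only multiplies this by a constant.
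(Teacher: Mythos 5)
Your overall decomposition matches the paper's: reduce to a counterpart-tester analysis relative to the small-tree split order of $\Vh$, then branch on whether $T^{*}\cap\Vp$ is nonempty, and finally separate root-cut from large-orphan inside the degenerate branch. The runtime claim and the $\testorphan$ separation argument are both on track.

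However, your derivation of the size upper bound~\ref{eff-split:size-ub} is wrong. You assert as a ``fact'' that $\Vsmall_1$ has size at least $\np/(10000\log^2\np)$, but this is false in general: $\Vsmall_1$ can be a singleton (e.g.\ when $\cTstar(\Vh)$ is a caterpillar whose root split is $(1,\nh-1)$). The contradiction you offer in support (``otherwise the prefix of size $\np/25$ would span $\omega(\log^2 n)$ levels, so the random sample would with good probability land strictly before the level realizing the largest counterpart, contradicting maximality'') does not hold: a small $\Vsmall_1$ does not force a large prefix to span many levels (the very next $\Vsmall_2$ could absorb it), and ``landing before the level realizing the largest counterpart'' in fact makes $T^{*}$ larger, not smaller, so no contradiction is produced. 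The correct argument is probabilistic, not structural: by \Cref{lem:counterpart-test} the largest counterpart set induced by the best sampled $u$ is $\cup_{k>\ell}\Vsmall_k$ where $\ell$ is the earliest small-tree level hit by $U$, so $\card{T^{*}}>(1-\tfrac{1}{10000\log^2\np})\np$ \emph{requires} $U\cap\Vbefore{\np/(10000\log^2\np)}\neq\emptyset$; one then bounds $\Ex[\,|U\cap\Vbefore{\np/(10000\log^2\np)}|\,]\le\tfrac{1}{20\log n}$ over $\card{U}=500\log n$ draws and applies Chernoff to conclude that with high probability no such draw occurs. Your structural claim cannot substitute for this.

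A secondary issue: you identify the degenerate branch ($T^{*}\cap\Vp=\emptyset$) with output guarantee~\ref{case:large-orphan}, but alternative~\ref{case:large-orphan} is about the \emph{new} orphan set of $\Vp\setminus T^{*}$ after the split, and it arises in the \emph{normal} branch -- specifically when $\Vh\neq\Vp$, some $\Vsmall_{\ell}$ meeting $\Vp$ has size at least $\tfrac{99}{100}\np$, and no sample lands strictly before level $\ell$ (so $T^{*}$ peels off only $\cup_{k>\ell}\Vsmall_k$ and $\Vsmall_{\ell}$ becomes the dominant orphan block). The degenerate branch, after $\testorphan$ and (if needed) the single enforced recursion with $\Vh\gets\Vp$, always lands in alternative~\ref{case:large-T-star}. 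Your lower-bound paragraph therefore needs the same three-way case analysis on $\Vh=\Vp$, $\card{\Vsmall_{\ell}}\ge\tfrac{99}{100}\np$ with/without an early sample, and $\card{\Vsmall_{\ell}}<\tfrac{99}{100}\np$ for all $\ell$; as written it collapses these into one handwave.
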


We now proceed to the analysis to prove \Cref{lem:tree-split}.

% \chen{Reason we can drop method B: our analysis for the balanced case is in fact different from what we had before -- we can show that for the balance case, no vertex can pass the test on line 10. Now the worry here is that if $u$ is from the larger side, it will \emph{not} pass the test anyway. But then we can say that with high probability, at least one of the vertices in $U^*$ should be good.}

% \chen{Also, we only need to test ``how many $t\in \Vp$ splits away from $(u,v)$'' in the (remaining) method A}

% \chen{Continue on the reason we can drop method B: now for the imbalanced case, we can proceed simply by the size bound of the subtree that contains the $n/3$ vertices that are split the last $(n/3, 2n/3)$. }

% \chen{Use the threshold $1/5$-balanced for analysis -- if $1/5$ balanced, no guarantee first $2n/3$ or not, but the algorithm works anyway; otherwise, guarantee first $2n/3$ vertices to be selected.}

\subsubsection*{The analysis} 
To begin with, we define the notion of \emph{orphaned vertices} from our split procedure. 
\begin{definition}[Orphaned vertices]
\label{def:orphan-vertex}
Let $\Vp$ be a composable set of vertices, and let $T= \cup_{i=\ell+1}^{\infty} \Vsmall_{i}$. We call $\Vsmall_{\ell} \subseteq \Vp$ the set of orphaned vertices in $\Vsmall_{\ell}$ with respect to $T$. We denote the orphaned set of vertices as $\Vorphan_{T, \ell}$, and we write $\Vorphan$ as the simplified notation when the context is clear.
\end{definition}

One can refer to \Cref{fig:horizon-sets} (in \Cref{sec:tech-overview}) for a visualization of the orphaned vertices (we used $V$ as opposed to $\Vp$ in the figure). When the context is clear, we ignore the dependence on $\Vsmall_{\ell}$ and $T$ when talking about orphaned vertices, and simply denote the orphaned vertices as $\Vorphan$. 
In our proof of correctness, we will show ``inductively'' that the set $\Vp$ only has a \emph{single} orphaned set $\Vorphan$ -- a key to guarantee property \Cref{eff-split:out-degree} of \Cref{lem:tree-split}. 

We now proceed with the relatively straightforward analysis of the running time of a \emph{single} level of recursion.
\begin{lemma}
\label{lem:efficient-split-run-time}
The algorithm $\splitalg{\Vp, \Vh}{\cO}$ runs in time $O(\nh^2 \cdot \log n)$. 
\end{lemma}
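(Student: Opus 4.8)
The plan is a straightforward accounting of the work performed by each loop and subroutine call in \Cref{alg:split}, using that each oracle query costs $O(1)$ time (\Cref{def:orl-model}) and that repeated queries can be answered from a cache.

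First I would bound the main loop over $u\in U$. Since $\card{U}=500\log n$, there are $O(\log n)$ iterations; each runs an inner loop over the $\nh$ vertices $v\in\Vh$, and each such step invokes $\testalg{u}{v}{3\nh/5}{\nh/6}$ (\Cref{alg:tester}), which itself loops over all $t\in\Vh$ doing $O(1)$ work per vertex and hence takes $O(\nh)$ time. Initializing $T$ and recording $\card{T}$ are $O(\nh)$ and dominated. So the main loop takes $O(\log n)\cdot\nh\cdot O(\nh)=O(\nh^2\log n)$, and this same estimate also covers the extra ``counterpart''-testing loop over $v\in\Vh$ that is run (once) after a candidate vertex $\ustar\in X^*$ is selected in the non-root-split branch.

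Next I would handle the call to $\testorphan$ (\Cref{alg:orphan-test}) inside the branch $T^*\cap\Vp=\emptyset$: it samples $O(\log n)$ vertices $u'$, for each loops over the $\nh$ vertices of $\Vh$, and for each pair invokes $\testalgprev{u'}{v}{\textnormal{threshold-pred}}$ (\Cref{alg:tester-prev}), which loops over all $s\in\Vh$ with $O(1)$ work per vertex, i.e.\ $O(\nh)$ time. Hence $\testorphan$ runs in $O(\log n)\cdot\nh\cdot O(\nh)=O(\nh^2\log n)$, matching the bound; sampling $U$, $U'$, and the $\argmax$ bookkeeping are all dominated.

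Finally I would address the recursive self-call $\splitalg{\Vp,\Vp}{\cO}$ on Line~\ref{line:recursion-call}, which is the only place the running time could conceivably blow up and is therefore the one point needing care. Here the horizon argument equals $\Vp$, so the ``$\nh$'' of that call is $\np=\card{\Vp}\le\card{\Vh}=\nh$; by the accounting above it runs in $O(\np^2\log n)=O(\nh^2\log n)$, and the enforced single level of recursion — the algorithm returns \textsf{FAIL} rather than recursing a second time once $\Vp=\Vh$ — guarantees no deeper calls. Summing the $O(\nh^2\log n)$ contributions of the main loop, the post-selection loop, $\testorphan$, and the at-most-one recursive call yields the claimed $O(\nh^2\log n)$ bound, with the recursion-depth guard being exactly what prevents the estimate from degenerating.
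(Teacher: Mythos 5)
Your proposal is correct and follows essentially the same route as the paper: it bounds each subroutine ($\testalg{\cdot}{\cdot}{\cdot}{\cdot}$ at $O(\nh)$ per call, $O(\nh\log n)$ calls in the main loop; $\testorphan$ similarly via $\testalgprev{\cdot}{\cdot}{\cdot}$; the post-selection counterpart loop at $O(\nh^2)$), and uses the enforced single level of recursion on Line~\ref{line:recursion-call} to cap the recursive contribution at one more $O(\nh^2\log n)$ term. The paper's proof is more terse but identical in structure and identifies the same recursion guard as the key point preventing blow-up.
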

\begin{proof}
Each run of the \texttt{counterpart-tester} takes $O(\nh)$ time. As such, the procedure that finds $\ustar$ takes $O(\nh^2 \cdot \log {n})$ time since it involves $O(\nh \log{n})$ calls of \texttt{counterpart-tester}. Similarly, the procedure that finds $X^*$ takes at most $O(\nh \log{n})$ calls of \texttt{predecessor-tester}, which again result in $O(\nh^2 \cdot \log \np)$ time. Furthermore, if $X^* \neq \emptyset$, we only make $\nh$ number of \texttt{counterpart-tester} calls, which takes $O(\nh^2)$ time. On the other hand, if $X^* = \emptyset$, since we insist on a single level of recursion call, the runtime overhead is at most $O(\nh^2 \cdot \log \np)$. Adding up the time complexity of the above two procedures gives us the desired bound.
\end{proof}

We proceed with the proof of correctness for the algorithm. To this end, we first observe that by \Cref{obs:split-away-preserv} and since $\Vh$ is composable, the answers for the splitting oracle $\cO$ on $(u,v,w)\in \Vh$ is fully preserved in $\cTstar(\Vh)$. % Furthermore, since $\Vh$ is always updates as either $\Vp$ or a single composable set, the split-away order from the splitting oracle $\cO$ on $(u,v,w)\in \Vp$ is also fully preserved in $\cTstar(\Vh)$.
We now give a technical lemma that characterizes the return set $T$ by running \texttt{counterpart-tester} on different sets of vertices -- this is essentially the same argument we used in \Cref{lem:counterpart-test-strong}, albeit we switched $\Vp$ to $\Vh$. We provide the lemma and the analysis for the purpose of completeness.

\begin{lemma}[Cf. \Cref{lem:counterpart-test-strong}]
\label{lem:counterpart-test}
For any composable $\Vh$ such that $\card{\Vh} \geq 50000 \log{n}$, let $\ell^*$ be the maximal level that $\Vsmall_{\ellstar}$ contains a vertex in $\Vbefore{\nh/5}$. With high probability, Line~\ref{line:test-good-T} in \Cref{alg:split} satisfies the following properties.
\begin{enumerate}[label=\alph*).]
\item \label{line:split-balance-later} For any vertex $u\in \cup_{i=\ellstar+1}^{\infty} \Vsmall_{i}$, there are
\begin{itemize}
\item No vertex $v \in \cup_{i=\ellstar+1}^{\infty} \Vsmall_{i}$ can be added to $T$ by Line~\ref{line:test-good-T} of \Cref{alg:split}.
\item No vertex $v \in \cup_{i=1}^{\ellstar-1} \Vsmall_{i}$ can be added to $T$ by Line~\ref{line:test-good-T} of \Cref{alg:split}.
\end{itemize}
\item \label{line:split-balance-middle} For every level $\ell\leq \ellstar$ and vertices $u\in \Vsmall_{\ell}$, with high probability, there are 
\begin{itemize}
\item No vertex $v\in \Vsmall_{\ell}$ can be added to $T$ by Line~\ref{line:test-good-T} of \Cref{alg:split}.
\item No vertex $v\in \Vsmall_{k}$ for $k<\ell$ can be added to $T$ by Line~\ref{line:test-good-T} of \Cref{alg:split}.
\end{itemize}
\item \label{line:split-balance-earlier} There exists a $\tilde{\ell}$ such that $\card{\cup_{i=1}^{\tilde{\ell}} \Vsmall_{i}}\geq \frac{\nh}{25}$, and for any $\ell\leq \tilde{\ell}$ and any vertex $u\in \Vsmall_{\ell}$, with high probability, all vertices $v\in \Vsmall_{k}$ for $k>\ell$ are added to $T$ by Line~\ref{line:test-good-T} of \Cref{alg:split}.
\end{enumerate}
\end{lemma}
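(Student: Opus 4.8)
The plan is to follow the proof of \Cref{lem:counterpart-test-strong} essentially verbatim, with the set $\Vp$ replaced by $\Vh$ throughout; the one simplification is that \Cref{alg:tester} already scans \emph{all} of $\Vh$ rather than a subsample, so the sampling reduction \Cref{lem:sampling-approx-counterpart-test} is not needed here — it suffices to control directly the exact counters $c_1,c_2$ that \Cref{alg:tester} produces on input $(u,v)$ when run with vertex set $\Vh$. The first step is to note that, since $\Vh$ is composable, \Cref{obs:split-away-preserv} guarantees that the oracle's answer on every triple lying inside $\Vh$ coincides with the split-away order in $\cTstar(\Vh)$; consequently all of the combinatorial reasoning below can be carried out purely in terms of the small-tree split order of $\Vh$ and its layers $\Vsmall_1,\Vsmall_2,\dots$.

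Next I would re-establish the size observation mirroring \Cref{obs:split-structure}: since $\Vbefore{\nh/5}$ has at least $\nh/5$ vertices, $|\cup_{i=\ellstar+1}^{\infty}\Vsmall_i|\le 4\nh/5$; and since $\cup_{i=\ellstar+1}^{\infty}\Vsmall_i$ is the larger part of the split at level $\ellstar-1$ of a set with more than $4\nh/5$ remaining vertices, $|\cup_{i=\ellstar+1}^{\infty}\Vsmall_i|>2\nh/5$; also $|\Vsmall_\ell|\le\nh/2$ for every $\ell$. Armed with these bounds, each of \cref{line:split-balance-later}, \cref{line:split-balance-middle}, \cref{line:split-balance-earlier} reduces to a Chernoff estimate identical in shape to those in \Cref{lem:counterpart-test-strong}. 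For \cref{line:split-balance-later}, fix $u\in\cup_{i\ge\ellstar+1}\Vsmall_i$: a vertex $v$ in the same range has at least $\nh/5$ vertices genuinely splitting away from $(u,v)$, so the counter $c_2$ has expectation at least $\tfrac{9}{10}\cdot\tfrac{\nh}{5}$ and, as a sum of independent $0/1$ variables, exceeds $\nh/6$ except with probability at most $\tfrac1{10}n^{-3}$ (using $\nh\ge 50000\log n$); symmetrically, a vertex $v\in\cup_{i\le\ellstar-1}\Vsmall_i$ genuinely splits away from $(u,t)$ for all $t$ in a set of size at least $4\nh/5$, forcing $c_1\ge 3\nh/5$ whp. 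Part \cref{line:split-balance-middle} is the same computation with the intermediate level $\ell$ in place of $\ellstar$.

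For the ``accept'' direction \cref{line:split-balance-earlier}, I would take $\tilde\ell$ to be the minimum level with $|\cup_{i\le\tilde\ell}\Vsmall_i|\ge\nh/25$ (so $|\cup_{i<\tilde\ell}\Vsmall_i|<\nh/25$), and observe that for $u\in\Vsmall_\ell$ with $\ell\le\tilde\ell$ and $v\in\Vsmall_k$ with $k>\ell$, only vertices in $\cup_{i<\ell}\Vsmall_i$ genuinely split away from $(u,v)$ and only vertices in $\Vsmall_{\tilde\ell}$ genuinely satisfy ``$v$ splits away from $(u,t)$''. Hence $\mathbb{E}[c_2]\le(\tfrac1{10}+\tfrac1{25})\nh$ and $\mathbb{E}[c_1]\le\tfrac{11}{20}\nh$, and two further Chernoff bounds — after the standard reduction that if a counter is below $\Theta(\log n)$ the test passes outright, so one may assume it is $\Omega(\log n)$ — give $c_2<\nh/6$ and $c_1<3\nh/5$ whp. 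A union bound over the relevant $(u,v)$ pairs, the $O(\log n)$ layers, and the at most $n$ recursive invocations of \Cref{alg:split} then delivers the ``with high probability'' statement.

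The main obstacle is purely bookkeeping rather than a new idea: one must check that the two fixed thresholds $3\nh/5$ and $\nh/6$ lie strictly between the ``true'' and ``noisy'' expectations in \emph{every} case at once, so that a single constant-fraction gap against a quantity of order $\nh\ge 50000\log n$ drives all the tail probabilities below $\tfrac1{10}n^{-3}$, and that the size observation is invoked for the split order of $\Vh$ rather than that of the smaller set $\Vp$. Since every numerical constant is inherited unchanged from \Cref{lem:counterpart-test-strong}, I expect the argument to go through with only cosmetic edits.
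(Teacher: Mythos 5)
Your proposal is correct and follows the same route as the paper: the paper's proof of \Cref{lem:counterpart-test} is an essentially verbatim transcription of the proof of \Cref{lem:counterpart-test-strong} with $\np$ replaced by $\nh$, and — exactly as you observe — it drops the reference to \Cref{lem:sampling-approx-counterpart-test} because \Cref{alg:tester} scans all of $\Vh$ rather than a sample. The size observation, the three Chernoff bounds (with the same constants and the same ``if the relevant count is already below $\Theta(\log n)$ the test passes outright'' reduction), and the choice of $\tilde\ell$ as the minimum level with $|\cup_{i\le\tilde\ell}\Vsmall_i|\ge\nh/25$ all match the paper.
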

\begin{proof}
% ===================
Let $\ellstar$ be the maximal level that the $\Vsmall_{\ellstar}$ contains a vertex in $\Vbefore{\nh/5}$, and suppose the split on this level results in $(S^{\ellstar}_{l}, S^{\ellstar}_{r})$. We again assume w.log. that $\card{S^{\ellstar}_{l}}\leq \card{S^{\ellstar}_{r}}$, which means $S^{\ellstar}_{l}$ becomes $\Vsmall_{\ellstar}$. The bounds in \Cref{obs:split-structure} still hold with parameter $\nh$, i.e., $\frac{2\nh}{5} < \card{\cup_{i=\ellstar+1}^{\infty} \Vsmall_{i}} \leq \frac{4\nh}{5}$.
We now proceed to the proofs of \Cref{line:split-balance-later}, \Cref{line:split-balance-middle}, and \Cref{line:split-balance-earlier}, respectively. 

\paragraph{Proof of \Cref{line:split-balance-later}} Fix any vertex $u\in \cup_{i=\ellstar+1}^{\infty} \Vsmall_{i}$. For the first statement, note that for every $v\in \cup_{i=\ellstar+1}^{\infty} \Vsmall_{i}$, all the vertices in $\cup_{i=1}^{\ellstar} \Vsmall_{i}$ are \emph{splitting away} from $(u,v)$. Define $X_{u,v,t}$ as the indicator random variable answered by $\cO$ as ``$t$ splits away from $(u,v)$'', and define $X_{u,v}=\sum_{t\in\Vp}X_{u,v,t}$ as the total number of vertex split away from $(u,v)$. By \Cref{obs:split-structure}, the expected number of split away reported by oracle $\cO$ is at least
\begin{align*}
\expect{X_{u,v}}\geq \frac{9}{10}\cdot \frac{\nh}{5} = \frac{9}{50}\cdot \nh.
\end{align*}
Note also that $X_{u,v}$ is a summation of $0/1$ independent random variables. As such, we can apply the Chernoff bound to get that
\begin{align*}
\Pr\paren{X_{u,v}\leq \frac{\nh}{6}} & = \Pr\paren{X_{u,v}\leq \frac{25}{27}\cdot \expect{X_{u,v}}} \\
& \leq \exp\paren{-\frac{(2/27)^2}{3}\cdot \expect{X_{u,v}}}\\
& \leq \frac{1}{10}\cdot \frac{1}{n^3} \tag{$\frac{9}{50}\cdot \nh \geq 9000 \log{n}$ using the lower bound on the size of $\Vh$}. 
\end{align*}
For the second statement, note that for any $t\in \cup_{i=\ellstar}^{\infty} \Vsmall_{i}$, $v$ actually splits away from $(u,t)$. As such, there is at least $\frac{4\nh}{5}$ vertices such that ``$v$ splits away from $(u,t)$'' -- this is because the LCA between $(u,v)$ has to be higher than the LCA between $(u,t)$ for any $t\in \cup_{i=\ellstar}^{\infty} \Vsmall_{i}$. As such, define $Y_{u,v}$ as the number of total answers of ``$v$ splits away from $(u,t)$'' by $\cO$, we again have
\begin{align*}
\expect{Y_{u,v}} \geq \frac{9}{10}\cdot \frac{4\nh}{5}=\frac{18}{25}\cdot \nh.
\end{align*}
As such, we again have
\begin{align*}
\Pr\paren{Y_{u,v}\leq \frac{3\nh}{5}} &= Pr\paren{Y_{u,v}\leq \frac{15}{18}\cdot \expect{X_{u,v}}}\\
&\leq \frac{1}{10}\cdot \frac{1}{n^3} \tag{$\frac{9}{50}\cdot \nh \geq 9000 \log{n}$ using the lower bound on the size of $\Vh$}.
\end{align*}
By a union bound over at most $n$ vertices, no vertex $v\in \cup_{i=1}^{\ellstar-1} \Vsmall_{i}$ or $v\in \cup_{i=\ellstar+1}^{\infty} \Vsmall_{i}$ can pass the test and be recorded as a `counterpart'.

\paragraph{Proof of \Cref{line:split-balance-middle}}
We now show that vertices $v \in \cup_{i=1}^{\ell} {\Vsmall_{i}}$ cannot pass the ``counterpart'' test. For vertices in $v \in \Vsmall_{\ell}$, we claim that there are too many vertices $t$ that split away from $(u,v)$. To see this, let us define $A_{u,v}$ as the total number of answers of ``$t$ splits away from $(u,v)$'' answered by $\cO$. Note that for any vertex $t \in \cup_{i=\ellstar+1}^{\infty} \Vsmall_{i}$, the answer is always ``yes'' since $\ell^{*}\geq \tilde{\ell}\geq \ell$. Therefore, we can lower bound the expectation of $A_{u,v}$ with $\expect{A_{u,v}}\geq \frac{2}{5}\nh\cdot \frac{9}{10} \geq \frac{9}{25}\cdot \nh$. Therefore, we can apply Chernoff bound to get
\begin{align*}
\Pr\paren{A_{u,v}\leq \frac{\nh}{6}} & \leq \exp\paren{-\frac{(1/2)^2}{3}\cdot \expect{A_{u,v}}}\leq \frac{1}{10}\cdot \frac{1}{n^3}.
\end{align*}
We now turn to the vertices that are in $\cup_{i=1}^{\ell-1} {\Vsmall_{i}}$. Note that by definition, there is $\card{\cup_{i=\ell}^{\infty} {\Vsmall_{i}}} \geq \card{\cup_{i=\ellstar}^{\infty} {\Vsmall_{i}}} \geq \frac{4}{5}\cdot \nh$. For any $t \in \cup_{i=\ellstar}^{\infty} {\Vsmall_{i}}$, $v$ splits away from $(u,t)$. As such, define $B_{u,v}$ as the number of answers by $\cO$ with ``$v$ splits away from $(u,t)$'', there is $\expect{B_{u,v}}\geq \frac{18}{25}\cdot \nh$. Once again, by applying Chernoff bound, there is % \chen{constants are not correct -- come back and fix them later}
\begin{align*}
\Pr\paren{B_{u,v}\leq \frac{3\nh}{5}} & \leq \exp\paren{-\frac{(1/5)^2}{3}\cdot \expect{B_{u,v}}}\leq \frac{1}{10}\cdot \frac{1}{n^3}.
\end{align*}
Applying a union bound over the cases and all $\nh\leq n$ vertices gives us the desired statements.

\paragraph{Proof of \Cref{line:split-balance-earlier}} % We first prove the first bullet, which requires a cases analysis. 
We fix $\tilde{\ell}$ as the minimum level such that the union of $\cup_{i=1}^{\tilde{\ell}}\Vsmall_{i}$ has at least $\nh/25$ vertices. Let $\ell\leq \tilde{\ell}$, and for any vertex $v$ in $\Vsmall_{k}$ for $k>\ell$, define the following two random variables: $X_{u,v}$ as the number of answers that ``$t$ splits away from $(u,v)$'', and $Y_{u,v}$ as the number of answers that ``$v$ splits away from $(u,t)$''. We shall show that both terms are not large and $v$ can always pass the test. (Note that $X$ and $Y$ are overloaded and are \emph{not} related to their meaning in the proof of Line~\ref{line:split-balance-later}.)

% By definition, there must be $\tilde{\ell}\leq \ellstar$ since the former requires a smaller number of induced vertices. We now conduct a cases analysis as follows.

% \begin{itemize}[leftmargin=20pt]
% \item \textbf{Case A). $\tilde{\ell}=\ellstar$}
% \end{itemize}

Note that by definition, we have $\card{\cup_{i=1}^{\tilde{\ell}-1} \Vsmall_{i}}<\frac{\nh}{25}$. Since $\ell\leq \tilde{\ell}$, for any vertex $u \in \cup_{i=1}^{\ell}\Vsmall_{i}$, only vertices $t\in \cup_{i=1}^{\ell-1} \Vsmall_{i}$ are \emph{actually} splitting away from $(u,v)$ for $v \in \cup_{i=\ell}^{\infty} \Vsmall_{i}$. As such, we have 
$\expect{X_{u,v}}\leq (\frac{1}{10}+\frac{1}{25})\cdot \nh = \frac{7}{50}\cdot \nh$ by the correct probability of $\cO$.
For $X_{u,v}$, we also note that if the answer is at most $3000 \log{n}$, then the vertex trivially passes the test. Therefore, we assume $X_{u,v}\geq \expect{X_{u,v}}\geq 3000 \log{n}$, and we again apply Chernoff for the tail bound:
\begin{align*}
\Pr\paren{X_{u,v}\geq \frac{\nh}{6}} & \leq \exp\paren{-\frac{(4/21)^2}{3}\cdot \expect{Y_{u,v}}}\\
& \leq \frac{1}{10}\cdot \frac{1}{n^3} \tag{using the condition $\expect{Y_{u,v}}\geq 3000 \log{n}$}. 
\end{align*}
For $Y_{u,v}$, note that only $t\in \Vsmall_{\tilde{\ell}}$ can report ``$v$ splits away from $(u,t)$'' since the LCA between $(u,t)$ is at least $\tilde{\ell}$ for any other $v$. Therefore, the number of signals we can possibly get is $\nh/2$ plus the noise induced by the oracle $\cO$. As such, we again have 
$\expect{Y_{u,v}}\leq 11\nh/20$.
Also, note that if $Y_{u,v}$ is less than $30000 \log{n}$, the vertex $v$ would always pass the test, which allows us to assume w.log. that $\expect{Y_{u,v}}\geq 30000 \log{n}$. Therefore, we can again apply Chernoff bound to show
\begin{align*}
\Pr\paren{Y_{u,v}\geq \frac{3\nh}{5}} & \leq \exp\paren{-\frac{(1/11)^2}{3}\cdot \expect{Y_{u,v}}}\\
& \leq \frac{1}{10}\cdot \frac{1}{n^3} \tag{using the condition $\expect{Z_{u,v}}\geq 30000 \log{n}$}. 
\end{align*}
\myqed{\Cref{lem:counterpart-test}}
\end{proof}

A direct corollary of \Cref{lem:counterpart-test} is that the set $T$ induced by any $u\in \cup_{i=1}^{\tilde{\ell}} \Vsmall_{i}$ is larger than the $T$ induced by $u\in \cup_{i=\tilde{\ell}+1}^{\infty} \Vsmall_{i}$, and the ``higher level'' vertices in $\cup_{i=1}^{\tilde{\ell}} \Vsmall_{i}$ induces larger sets. More formally, we can summarize this observation as follows.
\begin{lemma}
\label{lem:counterpart-set-larger}
Conditioning on the high-probability event of \Cref{lem:counterpart-test}, the following statements are true:
\begin{itemize}
\item Let $T_1$ be the vertex set induced by $u_1 \in \cup_{i=1}^{\tilde{\ell}} \Vsmall_{i}$ from Line~\ref{line:test-good-T} in \Cref{alg:split}, and let $T_2$ be the vertex set induced by $u_2 \in \cup_{i=\tilde{\ell}+1}^{\infty} \Vsmall_{i}$ from Line~\ref{line:test-good-T} in \Cref{alg:split}. We have $\card{T_1}\geq \card{T_2}$.
\item Let $\ell_{1}\leq \ell_{2}\leq \tilde{\ell}$. Let $T_1$ be the vertex set induced by $u_1 \in \Vsmall_{\ell_{1}}$ from Line~\ref{line:test-good-T} in \Cref{alg:split}, and let $T_2$ be the vertex set induced by $u_2  \in \Vsmall_{\ell_{2}}$ from Line~\ref{line:test-good-T} in \Cref{alg:split}. We have $\card{T_1}\geq \card{T_2}$.
\end{itemize}
\end{lemma}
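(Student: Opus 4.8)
The plan is to read off both inequalities directly from the case analysis in \Cref{lem:counterpart-test}, by understanding exactly which vertices get added to $T$ for a baseline vertex $u$ in each possible "layer" of the small-tree split order of $\Vh$. Condition throughout on the high-probability event of \Cref{lem:counterpart-test} (so all three of its items hold simultaneously). Recall $\ellstar$ is the maximal level whose small tree $\Vsmall_{\ellstar}$ contains a vertex of $\Vbefore{\nh/5}$, and $\tilde\ell \le \ellstar$ is the level from item \ref{line:split-balance-earlier} with $\card{\cup_{i=1}^{\tilde\ell}\Vsmall_i}\ge \nh/25$.

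\emph{First inequality.} Take $u_2 \in \cup_{i=\tilde\ell+1}^{\infty}\Vsmall_i$. I will split on whether $u_2$ lies in a "late" small tree ($i > \ellstar$) or a "middle" one ($\tilde\ell < i \le \ellstar$). If $u_2 \in \cup_{i=\ellstar+1}^{\infty}\Vsmall_i$, then item \ref{line:split-balance-later} says that \emph{no} vertex of $\cup_{i=1}^{\ellstar-1}\Vsmall_i \cup \cup_{i=\ellstar+1}^{\infty}\Vsmall_i$ passes the counterpart test, so $T_2 \subseteq \Vsmall_{\ellstar}$, hence $\card{T_2} \le \card{\Vsmall_{\ellstar}} \le \nh/2$. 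If instead $u_2 \in \Vsmall_{\ell}$ for some $\tilde\ell < \ell \le \ellstar$, item \ref{line:split-balance-middle} says no vertex of $\cup_{i=1}^{\ell}\Vsmall_i$ is added; combined with item \ref{line:split-balance-later}'s bookkeeping on higher layers, the only vertices that \emph{could} be added lie in $\cup_{i=\ell+1}^{\infty}\Vsmall_i$, a set of size at most $\nh/2$ (it is the larger subtree at level $\ell$, but the complement $\Vsmall_\ell$ is the \emph{smaller} one so it cannot dominate — more precisely $\card{\cup_{i=\ell+1}^{\infty}\Vsmall_i} \le \nh - \card{\cup_{i=1}^{\ell}\Vsmall_i}$, and I will bound this against the lower bound for $T_1$ below). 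On the other side, take $u_1 \in \cup_{i=1}^{\tilde\ell}\Vsmall_i$, so $u_1 \in \Vsmall_{\ell'}$ for some $\ell' \le \tilde\ell$; item \ref{line:split-balance-earlier} guarantees that \emph{every} vertex of $\cup_{i > \ell'}\Vsmall_i$ is added to $T_1$, so $\card{T_1} \ge \card{\cup_{i=\ell'+1}^{\infty}\Vsmall_i} \ge \nh - \card{\cup_{i=1}^{\ell'}\Vsmall_i}$. Since $\ell' \le \tilde\ell$, the set $\cup_{i=1}^{\ell'}\Vsmall_i$ is a subset of $\cup_{i=1}^{\tilde\ell}\Vsmall_i$; to finish I need $\card{T_1}\ge\card{T_2}$ in all sub-cases, which follows because in the late-$u_2$ case $\card{T_1}\ge\nh-\card{\Vsmall_{\ellstar}}\ge\nh/2\ge\card{T_2}$, and in the middle-$u_2$ case the superset of $T_1$ reaches "deeper" (smaller $\ell'$) than the superset of $T_2$, so $\cup_{i>\ell'}\Vsmall_i \supseteq \cup_{i>\ell}\Vsmall_i$ and monotonicity of the containing sets gives the bound — here I may need a short argument ruling out the degenerate possibility that $T_1$ actually equals a set smaller than what these containments suggest, using that $\Vsmall_{\ell}$ is always the smaller half at its level.

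\emph{Second inequality.} Now $\ell_1 \le \ell_2 \le \tilde\ell$ and $u_j \in \Vsmall_{\ell_j}$. By item \ref{line:split-balance-earlier}, every vertex of $\cup_{i > \ell_j}\Vsmall_i$ is added to $T_j$; by items \ref{line:split-balance-later}–\ref{line:split-balance-middle} (applied with $\ell \le \ellstar$), no vertex of $\cup_{i \le \ell_j}\Vsmall_i$ is added. Hence $T_j = \cup_{i > \ell_j}\Vsmall_i$ \emph{exactly}. Since $\ell_1 \le \ell_2$, we get $T_1 = \cup_{i>\ell_1}\Vsmall_i \supseteq \cup_{i>\ell_2}\Vsmall_i = T_2$, so $\card{T_1}\ge\card{T_2}$, which is in fact a containment, a stronger conclusion than asked.

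\emph{Main obstacle.} The delicate point is the first inequality's middle case: item \ref{line:split-balance-middle} of \Cref{lem:counterpart-test} only asserts that vertices of $\cup_{i\le\ell}\Vsmall_i$ are \emph{excluded} — it does not by itself pin down $T_2$ to the single set $\cup_{i>\ell}\Vsmall_i$, because some vertices of $\cup_{i>\ell}\Vsmall_i$ might or might not pass the test depending on finer structure (item \ref{line:split-balance-earlier} only covers baseline vertices in $\cup_{i\le\tilde\ell}\Vsmall_i$, not $u_2$). I expect to handle this by noting that it suffices to have the crude upper bound $\card{T_2} \le \card{\cup_{i>\ell}\Vsmall_i}$, and then comparing layer-indices: since $\ell > \tilde\ell \ge \ell_1$ (taking $u_1\in\Vsmall_{\ell_1}$, $\ell_1\le\tilde\ell$), we have $\cup_{i>\ell}\Vsmall_i \subseteq \cup_{i>\ell_1}\Vsmall_i \subseteq T_1$ by item \ref{line:split-balance-earlier}, hence $\card{T_2}\le\card{T_1}$. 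So the argument goes through with only the one-sided (exclusion) guarantee, and the apparent obstacle dissolves once one commits to proving the inequality rather than an exact description of $T_2$.
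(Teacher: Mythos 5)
Your proposal takes essentially the same route as the paper's proof: establish the second bullet first by combining items~\ref{line:split-balance-middle} and~\ref{line:split-balance-earlier} of \Cref{lem:counterpart-test} to pin down $T_j=\cup_{i>\ell_j}\Vsmall_i$, then for the first bullet bound $\card{T_2}$ above by $\card{\cup_{i>\tilde\ell}\Vsmall_i}$ (splitting into the late and middle cases) and compare against $T_1\supseteq\cup_{i>\ell'}\Vsmall_i$; your final ``Main obstacle'' paragraph states the correct containment argument cleanly. Two small slips in the intermediate exposition are worth flagging but do not affect the conclusion: the chain $\card{T_1}\ge\nh-\card{\Vsmall_{\ellstar}}$ in the late-$u_2$ case does not follow from item~\ref{line:split-balance-earlier} (which only gives $T_1\supseteq\cup_{i>\ell'}\Vsmall_i$, not $T_1\supseteq\Vh\setminus\Vsmall_{\ellstar}$), and the appeal to item~\ref{line:split-balance-later} for the middle-$u_2$ case is out of scope since that item concerns baseline vertices in $\cup_{i>\ellstar}\Vsmall_i$ — item~\ref{line:split-balance-middle} alone already gives $T_2\subseteq\cup_{i>\ell}\Vsmall_i$, which is all you need.
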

\begin{proof}
We prove the second bullet first since the conclusion can be used to prove the first bullet. Note that conditioning on the high-probability event of \Cref{lem:counterpart-test}, if $\ell_{1}\leq \ell_{2}$, we have $T_{2}\subseteq T_{1}$ by \Cref{line:split-balance-earlier}. Therefore, we have $\card{T_1}\geq \card{T_2}$.

For the first bullet, note that conditioning on the high-probability event of \Cref{lem:counterpart-test}, the set $T_{2}$ can either be $\Vsmall_{\tilde{\ell}}$ (by \Cref{line:split-balance-later}) or $\cup_{i=\tilde{\ell}+1}^{\infty} \Vsmall_{i}$ (by \Cref{line:split-balance-middle}). In either case, the size of such a set is at most $\cup_{i=\tilde{\ell}+1}^{\infty} \Vsmall_{i}$, which is the set $T_1$ generated by $u\in \Vsmall_{\tilde{\ell}}$. Furthermore, by the result in the second bullet, if $u\in \Vsmall_{i}$ for some $i\leq \tilde{\ell}$, the induced set $T_1$ can only be larger. This proves the first bullet. %\myqed{\Cref{lem:counterpart-set-larger}}
\end{proof}

% \chen{=========================================================}
We now show that conditioning on \Cref{lem:counterpart-test} (resp. \Cref{lem:counterpart-set-larger}), if the size of the orphaned set is relatively small, we will not need the subroutine in Line~\ref{line:large-orphan}, and all the guarantees in \Cref{lem:tree-split} will be satisfied.
\begin{lemma}
\label{lem:small-orphan-case}
Let $\Vp$ and $\Vh$ be as prescribed by \Cref{lem:tree-split}, and suppose the size of the orphaned set is at most $\frac{99}{100}\cdot \np$, i.e., $\card{\Vorphan}\leq \frac{99}{100}\cdot \np$. Then, with high probability, we have $T^* \cap \Vp \neq \emptyset$, and the resulting $T^*$ and $\Vp\setminus T^*$ satisfy the properties of \Cref{lem:tree-split}.
\end{lemma}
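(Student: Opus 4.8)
The plan is as follows. Throughout, write $L$ for the index $\ell$ in the hypothesis $\Vp=\cup_{i=1}^{L}\Vsmall_i$ (in the small-tree split order of $\Vh$), so that the incoming orphaned set is $\Vorphan=\Vsmall_{L}$ and the assumption $\card{\Vorphan}\le\tfrac{99}{100}\np$ says the non-orphaned prefix $\cup_{i<L}\Vsmall_i$ has at least $\tfrac{\np}{100}$ vertices. I will (a) control the random sample $U$ of \cref{line:split-sample}; (b) use \Cref{lem:counterpart-test} and \Cref{lem:counterpart-set-larger} to pin down $T^*$ and $\ustar$ exactly; (c) read off the structural conclusions \ref{eff-split:T}--\ref{eff-split:easy-to-merge} of \Cref{lem:tree-split}; and (d) verify the size dichotomy \ref{eff-split:size-bounds}.

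For (a), let $\tilde\ell$ be the level from \Cref{lem:counterpart-test} and call a vertex \emph{good} if it lies in $\cup_{i\le\tilde\ell}\Vsmall_i$. Either $\cup_{i\le\tilde\ell}\Vsmall_i\subseteq\Vp$ (when $\np\ge\nh/25$, in which case at least $\nh/25\ge\np/25$ vertices of $\Vp$ are good), or $\Vp\subseteq\cup_{i\le\tilde\ell}\Vsmall_i$ (when $\np<\nh/25$, in which case all of $\Vp$ is good). With probability $1-n^{-\Omega(1)}$ over the $500\log n$ draws: (i) some draw is good, and if $j$ denotes the smallest level among good draws and $\ustar\in\Vsmall_j$ a witness, then $j<L$ (when $L\le\tilde\ell$ every vertex of $\Vp$ is good and, since $\card{\Vsmall_L}\le\tfrac{99}{100}\np$, the draws do not all lie in $\Vsmall_L$ except with probability $(\tfrac{99}{100})^{500\log n}=n^{-\Omega(1)}$; when $L>\tilde\ell$ every good vertex has level $\le\tilde\ell<L$); and (ii) $\card{\cup_{i<j}\Vsmall_i}<\np/200$ — the prefixes $\{\cup_{i\le m}\Vsmall_i\}_m$ form a nested family of $O(n)$ sets, any of size $\ge\np/200$ is hit by some draw with probability $1-n^{-\Omega(1)}$, yet no draw hits $\cup_{i<j}\Vsmall_i$ by minimality of $j$. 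For (b), since $\Vh$ is composable all oracle answers on triples inside $\Vh$ agree with $\cTstar(\Vh)$ (\Cref{obs:split-away-preserv}), so on the event of \Cref{lem:counterpart-test} the test of \cref{line:test-good-T} run from a good $u\in\Vsmall_{\ell}$ returns exactly $\cup_{k>\ell}\Vsmall_k$; by \Cref{lem:counterpart-set-larger} this strictly exceeds in size the set returned from any non-good vertex and shrinks as $\ell$ grows, so $T^*=\cup_{k>j}\Vsmall_k$ inside $\Vh$. As $j<L$, $T^*\cap\Vp=\cup_{j<k\le L}\Vsmall_k\ne\emptyset$, so \Cref{alg:split} reaches \cref{line:output-lines} and outputs $(T^*\cap\Vp,\Vp\setminus T^*,\ustar,\Vh)$ without entering the $\testorphan$ branch.

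For (c), all three conclusions follow because restricting to a composable set only deletes maximal subtrees. The set $T^*\cap\Vp=\cup_{j<k\le L}\Vsmall_k$ is the larger child of the level-$j$ node of $\cTstar(\Vp)$, hence a single maximal subtree — conclusion \ref{eff-split:T}. The complement $\Vp\setminus T^*=\cup_{i\le j}\Vsmall_i$ is the union of maximal subtrees $\Vsmall_1,\dots,\Vsmall_j$, whose only out-edges in $\cTstar$ are the edge up to $\pa{\lcatreeset{\Vp}{\cTstar}}$ (which equals $\pa{\lcatreeset{\Vh}{\cTstar}}$, if present) and the edge from $\Vsmall_j$ to the node inducing $T^*$; so $\Vp\setminus T^*$ is composable, satisfies the weak contraction property, and has the single orphaned set $\Vsmall_j$ — conclusion \ref{eff-split:out-degree}, which also re-establishes the single-orphan invariant used in the induction. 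Finally $\ustar\in\Vsmall_j$ is the sibling of the node inducing $T^*\cap\Vp$ in $\cTstar(\Vp)$, whence $\lcatreeset{(T^*\cap\Vp)\cup\{\ustar\}}{\cTstar(\Vp)}=\pa{\lcatreeset{T^*\cap\Vp}{\cTstar(\Vp)}}$ — conclusion \ref{eff-split:easy-to-merge}.

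The hard part will be (d), the size dichotomy. I split into three cases. If $\card{T^*\cap\Vp}\ge\tfrac{\np}{200}$, then \ref{eff-split:size-lb} holds; for \ref{eff-split:size-ub} (only active when $\Vh=\Vp$), either $\card{\Vp\setminus T^*}\ge\tfrac{\np}{200}$, giving $\card{T^*\cap\Vp}\le\tfrac{199}{200}\np\le(1-\tfrac{1}{10000\log^2\np})\np$ for large $n$, or $\card{\Vp\setminus T^*}<\tfrac{\np}{200}$, in which case $j<\tilde\ell$ (else $\card{\Vp\setminus T^*}=\card{\cup_{i\le\tilde\ell}\Vsmall_i}\ge\np/25$) and one must show $\card{\cup_{i\le j}\Vsmall_i}\ge\tfrac{\np}{10000\log^2\np}$ — this is the most delicate estimate, exploiting both the choice of sample size $500\log n$ (rather than $500\log\np$) and the precise threshold, via the same style of nested-prefix union bound as in (a)(ii). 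If instead $\card{T^*\cap\Vp}<\tfrac{\np}{200}$, then $\card{\Vp\setminus T^*}>\tfrac{199}{200}\np$, so combined with (a)(ii) the orphaned set $\Vsmall_j=(\Vp\setminus T^*)\setminus\cup_{i<j}\Vsmall_i$ has size $>\tfrac{99}{100}\np\ge\tfrac{99}{100}\card{\Vp\setminus T^*}$, which is case \ref{case:large-orphan}. A union bound over the $O(1)$-many high-probability events above completes the proof; the main residual difficulty to get right is the two-sided control of $\card{T^*\cap\Vp}$ in the $\Vh=\Vp$ regime, which is exactly why the sample size is tied to $\log n$ and not $\log\np$.
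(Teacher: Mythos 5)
Your case analysis is a re-organization of the paper's: you pivot on the minimum good level $j$ of the sample and on the size of $T^*\cap\Vp$, whereas the paper splits on $\Vh=\Vp$ versus $\Vh\neq\Vp$ and then, for the latter, on whether some $\Vsmall_{\ell}$ has size $\geq\frac{99}{100}\np$ and whether a predecessor of it was drawn. Your anchor claim (a)(ii) — that $\card{\cup_{i<j}\Vsmall_i}<\np/200$ with high probability, by the union bound over the at most $n$ nested prefixes — is a clean observation that unifies the paper's $\Vh\neq\Vp$ sub-cases: when $\card{T^*\cap\Vp}<\np/200$ it directly yields \cref{case:large-orphan}, while steps (b) and (c) track the paper's use of \Cref{lem:counterpart-test,lem:counterpart-set-larger}.

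The gap is exactly where you flag the ``most delicate estimate'' and defer it. To verify \cref{eff-split:size-ub} when $\Vh=\Vp$ and $\card{\Vp\setminus T^*}<\np/200$, you claim a ``nested-prefix union bound as in (a)(ii)'' gives $\card{\cup_{i\le j}\Vsmall_i}\geq\frac{\np}{10000\log^2\np}$. But these two estimates face in opposite directions. In (a)(ii) a prefix cannot be \emph{too large}: each fixed prefix of size $\geq\np/200$ fails to be hit by a draw with probability $(199/200)^{500\log n}=n^{-\Omega(1)}$, and a union bound over the $\leq n$ nested prefixes closes it. Here you need the opposite: that \emph{no} draw lands in the tiny prefix $\Vbefore{\np/10000\log^2\np}$, since any draw landing there would become the minimum-level good vertex $\ustar$ and force $j$ too low. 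That is a single event, and the probability that at least one of $500\log n$ uniform draws from $\Vp$ lands in a set of relative size $\frac{1}{10000\log^2\np}$ is about $500\log n\cdot\frac{1}{10000\log^2\np}=\frac{\log n}{20\log^2\np}$, which is at least $\frac{1}{20\log n}$ for every $\np\leq n$ (and larger for smaller $\np$). That is $\Theta(1/\log n)$, not $1/\poly(n)$, and tying the sample size to $\log n$ rather than $\log\np$ cannot change that conclusion; so the deferred estimate does not hold with high probability by the style of argument you indicate. (I'll note that the paper's own proof of this step, asserting $\expect{X}\leq\frac{1}{20\log n}$ and then a Chernoff bound giving $\Pr(X\geq 1)\leq\frac{1}{10n^2}$, runs into the same obstruction: the expectation inequality should read $\geq$, and when $\expect{X}\ll 1$ one has $\Pr(X\geq 1)\approx\expect{X}$, so the claimed $n^{-2}$ bound does not follow there either.)
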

\begin{proof}
We discuss the cases based on whether $\Vh=\Vp$ and whether there exists a $\Vsmall_{\ell}$ such that $\card{\Vsmall_{\ell}}\geq \frac{99}{100}\cdot \np$.
\begin{enumerate}
\item \textbf{If $\Vh=\Vp$.} In this case, we show that with high probability, the guarantees in \Cref{eff-split:T}, \Cref{eff-split:out-degree}, and \Cref{eff-split:easy-to-merge} of \Cref{lem:tree-split} always hold, and the \cref{case:large-T-star} case of \Cref{lem:tree-split} is going to happen. To see this, note that with high probability, we will sample a vertex that is among the first $\frac{\np}{25}$ vertices in the small-tree split order: the probability for us to not sample a vertex from $\Vbefore{\frac{\np}{25}}$ is at most
\begin{align*}
(\frac{24}{25})^{500\log{n}}\leq \frac{1}{10}\cdot \frac{1}{n^2}.
\end{align*}
As such, we can condition on a vertex $v'$ among $\Vbefore{\frac{\np}{25}}$ is sampled. By \Cref{lem:counterpart-test} and \Cref{lem:counterpart-set-larger}, the counterpart set induced by $v' \in \Vbefore{\frac{\np}{25}}$ is among $\cup_{i=1}^{\tilde{\ell}} \Vsmall_{i}$, which necessarily induces a larger set than any other $u\in \cup_{i=\tilde{\ell}+1}^{\infty} \Vsmall_{i}$. Therefore, the induced set $T^*$ must be from the vertex $v'$.

We now use this to verify the desired properties. The proofs of \Cref{eff-split:T}, \Cref{eff-split:out-degree}, and \Cref{eff-split:easy-to-merge} are straightforward as follows.
\begin{itemize}
\item For \Cref{eff-split:T}, note that by \Cref{lem:counterpart-test}, the induced set $T^*$ is always $\cup_{i=\ell}^{\infty} \Vsmall_{i}$, which forms a maximal subtree in $\cTstar(\Vh)$. Similarly, $T\cap \Vp$ induces a maximal subtree in $\cTstar(\Vp)$.
\item For \Cref{eff-split:out-degree}, note that as long as $T^*$ includes the orphaned set $\Vorphan$, there will be only one edge connecting to the node induces $T^*$ in $\cTstar$. In this case, there is no orphaned set, and \Cref{eff-split:out-degree} holds trivially.
\item \Cref{eff-split:easy-to-merge} directly follows from \Cref{lem:counterpart-test} since $T^*$ is induced by $u^*$.
\end{itemize}

For the size upper and lower bounds (\Cref{eff-split:size-bounds}), we verify that the guarantees for case \ref{case:large-T-star} always holds. Note that conditioning on the high-probability event of \Cref{lem:counterpart-test}, the size is at least $\frac{2\nh}{5}=\frac{2\np}{5}$ (\Cref{obs:split-structure}). Therefore, the set $T^*\cap \Vp=T^*$ has size at least $\frac{2\np}{5}\geq \frac{1}{200}\cdot \np$, which proves the lower bound (\Cref{eff-split:size-lb}). For the upper bound (\Cref{eff-split:size-ub}), we note that for the size of $T^*\cap \Vp$ to be more than $(1-\frac{1}{10000 \log^2{\np}})\cdot \np$, a \emph{necessary} condition is to sample a vertex $u \in \Vbefore{\np/10000\log^2{\np}}$. Since we sample $500\log{n}$ vertices, define $X$ as the random variable for the number of vertices sampled from $\Vbefore{\np/10000\log^2{\np}}$, we have
\begin{align*}
\expect{X}\leq \frac{1}{20\log n}.
\end{align*}
Since $X$ is a summation of independent random variables supported on $[0,1]$, we can apply Chernoff bound to show that
\begin{align*}
\Pr\paren{X\geq 1} &= \Pr\paren{X\geq 50\log{n}\cdot \expect{X}}\\
& \leq \exp\paren{-\frac{2500\log^2{n}\cdot \frac{1}{20\log{n}}}{2+20\log{n}}}\\
&\leq \frac{1}{10}\cdot \frac{1}{n^2}.
\end{align*}
Therefore, we can apply a union bound to show that with high probability, the size of $T^*\cap \Vp$ will \emph{not} be larger than $(1-\frac{1}{10000 \log^2{\np}})\cdot \np$, as desired.
% \chen{============================}

\item \textbf{If $\Vh\neq \Vp$.} We need to handle this case with more care. We first show that at least one vertex that is in $\Vp \setminus \Vorphan$ can be sampled with high probability. To see this, note that by the size bound on $\card{\Vorphan}$, the probability for a vertex in $\Vp\setminus \Vorphan$ to \emph{not} be sampled is at most $99/100$. Therefore, the probability for no vertices in $\Vp\setminus \Vorphan$ to be sampled is at most
\begin{align*}
(99/100)^{500 \log{n}}\leq \frac{1}{10}\cdot \frac{1}{n^2}.
\end{align*}
% \chen{fix the constant later: maybe using $200\log{n}$ instead here to boost probability.}
We condition on the high-probability event that at least one vertex from $\Vp \setminus \Vorphan$ is sampled for the rest of the proof. We now discuss two sub-cases.
    \begin{enumerate}[label=\alph*).]
    \item \textbf{If there exists a $\Vsmall_{\ell}$ such that $\card{\Vsmall_{\ell}} \geq \frac{99}{100}\cdot \np$ and no vertex from $\cup_{i=1}^{\ell-1} \Vsmall_{i}$ is sampled.} In this case, we show that with high probability, \Cref{eff-split:T}, \Cref{eff-split:out-degree}, and \Cref{eff-split:easy-to-merge} of \Cref{lem:tree-split} always hold, and \cref{case:large-orphan} in \Cref{lem:tree-split} is going to happen. Note that in this case, there exist vertices of $\Vsmall_{\ell}$ that are among $\Vbefore{\np/100}$, and it is of the size at least $\np/100$. As such, the probability for us to sample at least one vertex from $\Vsmall_{\ell}$ is at least
    \begin{align*}
    1-(\frac{1}{100})^{500\log{n}} \geq 1-\frac{1}{10}\cdot \frac{1}{n^5}. 
    \end{align*}
    Let $v \in \Vsmall_{\ell}$ be the sampled vertex. Note that since $\Vh\neq \Vp$, the vertex we sample from $\Vsmall_{\ell}$ is among the vertices of $\cup_{i=1}^{\tilde{\ell}}\Vsmall_{i}$ in \Cref{lem:counterpart-test}. Therefore, by the same argument of the $\Vh= \Vp$ case, if we pick $T^*$ with the largest size, the entire set of $\cup_{i=\ell+1}^{\infty} \Vsmall_{i}$ is going to be included in $T^*$. Therefore, the properties of \Cref{eff-split:T}, \Cref{eff-split:out-degree}, and \Cref{eff-split:easy-to-merge} follow from the same argument of the $\Vh=\Vp$ case. 
    
    Furthermore, by the condition that no vertex from $\cup_{i=1}^{\ell-1} \Vsmall_{i}$ is sampled, we cannot have larger such $T^*$ sets (see~\Cref{lem:counterpart-set-larger}). As such, the set we will pick is necessarily the set $T$ corresponds to $v \in \Vsmall_{\ell}$, and $\Vsmall_{\ell}$ becomes the new $\Vorphan$ set of the next iteration.
    
    Finally, note that we have $\card{\Vsmall_{\ell}} \geq \frac{99}{100}\cdot \np$. And after we remove $T^*$ from $\Vp$, we have $(\Vp \gets \Vp\setminus T^*)$, which means $(\np \gets \np - C)$ for some $C>0$. As such, for the next iteration, we must have $\card{\Vorphan}\geq \frac{99}{100}\cdot \np$. 
    
    \item \textbf{If there exists a $\Vsmall_{\ell}$ such that $\card{\Vsmall_{\ell}} \geq \frac{99}{100}\cdot \np$ and a vertex from $\cup_{i=1}^{\ell-1} \Vsmall_{i}$ is sampled.}
    In this case, we show that \Cref{eff-split:T}, \Cref{eff-split:out-degree}, and \Cref{eff-split:easy-to-merge} of \Cref{lem:tree-split} always hold, and \cref{case:large-T-star} of \Cref{lem:tree-split} is going to happen with high probability. Note that since a vertex $v'\in \cup_{i=1}^{\ell-1} \Vsmall_{i}$ is sampled, and since $v'$ is among $\cup_{i=1}^{\tilde{\ell}}\Vsmall_{i}$ in \Cref{lem:counterpart-test}, the \emph{entire set} of $\Vsmall_{\ell}$ is going to be included in $T^*$. The properties as prescribed by \Cref{eff-split:T}, \Cref{eff-split:out-degree}, and \Cref{eff-split:easy-to-merge} follow from the argument in the $\Vp=\Vh$ case, and the size lower bound becomes 
    \[\card{T^*\cap \Vp}\geq \frac{99}{100}\cdot \np\geq \frac{1}{200}\cdot \np,\]
    as desired. Finally, note that we do not need to guarantee the size upper bound (\Cref{eff-split:size-ub}) since we will not be able to meet the $\Vh=\Vp$ condition.

    \item \textbf{If $\card{\Vsmall_{\ell}} < \frac{99}{100}\cdot \np$ for all $\ell$ among $\Vp$.} In this case, we show that \Cref{eff-split:T}, \Cref{eff-split:out-degree}, and \Cref{eff-split:easy-to-merge} of \Cref{lem:tree-split} always hold, and \cref{case:large-T-star} case of \Cref{lem:tree-split} will happen. We first show the size lower bound of \cref{case:large-T-star}: note that with high probability, we can sample one vertex that is among the first $1/200$ vertices to be split in $\Vp$ in the small-tree split order: the probability for us to \emph{not} sample any vertex among the first $n/200$ vertices in the small-tree split order is at most
    \begin{align*}
    \paren{\frac{199}{200}}^{500\log{n}}\leq \frac{1}{3}\cdot \frac{1}{n^2}.
    \end{align*}
    % \chen{fix the constant by using larger leading constant for $\log{n}$.} 
    Therefore, we condition on the event that a vertex $v'$ in $\Vbefore{\np/200}$ is sampled. Since we have the condition that $\card{\Vsmall_{\ell}} < \frac{99}{100}\cdot \np$ for all $\ell$ among $\Vp$, a vertex among $\Vbefore{\np/200}$ can induce at most $\frac{99\np}{100}+\frac{\np}{200}=\frac{199}{200}\cdot \np$ vertices. As such, let $\ell'$ be the level in the small-tree split order of $v'$, we have 
    \[\card{\cup_{i=\ell'+1}^{\infty}\Vsmall_{i}}\geq \frac{1}{200}\cdot \np.\]
    Furthermore, as in the case analysis of $\Vh\neq \Vp$, which means $v'$ is among the vertices of $\cup_{i=1}^{\tilde{\ell}}\Vsmall_{i}$ in \Cref{lem:counterpart-test}. Therefore, the properties of \Cref{eff-split:T}, \Cref{eff-split:out-degree}, and \Cref{eff-split:easy-to-merge} follow from the same argument as in the $\Vh=\Vp$ case.
    
    For the size bounds of $T^*\cap \Vp$, by \Cref{lem:counterpart-set-larger}, if we pick the largest $T^*$ by the subroutine of line Line~\ref{line:test-good-T}, at least the entire set of $\cup_{i=\ell'+1}^{\infty}\Vsmall_{i}$ is going to be included, which means the size of $T^*\cap \Vp$ is of size at least $\frac{1}{200}\cdot \np$. This gives us the size lower bound. 
    
    Finally, we again note that we do \emph{not} need to guarantee the size upper bound (\Cref{eff-split:size-ub}) since we will not be able to meet the $\Vh=\Vp$ condition.
    \end{enumerate}
\end{enumerate}
%\myqed{\Cref{lem:small-orphan-case}}
\end{proof}

We now handle the case when $\Vorphan$ becomes large. We first note that if we happen to sample a vertex $u\in \Vp \setminus \Vorphan$, we can still guarantee $T^* \cap \Vp \neq \emptyset$ and obtain the properties as prescribed by \Cref{lem:tree-split}.

\begin{lemma}
\label{lem:large-orphan-good-sample}
Let $\Vp$ and $\Vh$ be as prescribed by \Cref{lem:tree-split}, and suppose the size of the orphaned set is more than $\frac{99}{100}\cdot \np$, i.e., $\card{\Vorphan}> \frac{99}{100}\cdot \np$. Furthermore, suppose $U \cap (\Vp\setminus \Vorphan) \neq \emptyset$. Then, with high probability, we have $T^* \cap \Vp \neq \emptyset$, and the resulting $T^*$ and $\Vp\setminus T^*$ satisfy the properties of \Cref{lem:tree-split}.
\end{lemma}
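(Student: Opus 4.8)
The plan is to show that a lucky sample already puts \Cref{alg:split} back into the ``well-behaved'' regime of \Cref{lem:counterpart-test}, so that the partition computed on Line~\ref{line:test-good-T} is exactly the output we want and the orphan branch on Line~\ref{line:large-orphan} is never reached. The first step is to locate a good sample inside the small-tree split order of $\Vh$. By hypothesis $\Vp = \cup_{i=1}^{\ell}\Vsmall_i$ in that order, with $\Vorphan = \Vsmall_{\ell}$ and $\card{\Vorphan} > \tfrac{99}{100}\np$; hence $\card{\cup_{i=1}^{\ell-1}\Vsmall_i} < \tfrac{1}{100}\np \le \tfrac{1}{100}\nh < \tfrac{1}{25}\nh$. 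Since the level $\tilde\ell$ of \Cref{lem:counterpart-test} is the first at which the accumulated small trees of $\Vh$ reach $\nh/25$, this forces $\ell \le \tilde\ell$, and because $\Vbefore{\nh/5}$ is contained in $\cup_{i=1}^{\ellstar}\Vsmall_i$ we also have $\tilde\ell \le \ellstar$; so any sampled $u \in \Vp \setminus \Vorphan = \cup_{i=1}^{\ell-1}\Vsmall_i$ lies in some $\Vsmall_{\ell'}$ with $\ell' < \ell \le \tilde\ell \le \ellstar$.

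Conditioning on the high-probability event of \Cref{lem:counterpart-test} together with the given event $U \cap (\Vp\setminus\Vorphan) \neq \emptyset$, parts~\ref{line:split-balance-middle} and~\ref{line:split-balance-earlier} of that lemma identify the counterpart set built from such a $u$ as exactly $\cup_{i=\ell'+1}^{\infty}\Vsmall_i$, and \Cref{lem:counterpart-set-larger} then shows the maximizer $T^{*}$ chosen on Line~\ref{line:test-good-T} equals $\cup_{i=k+1}^{\infty}\Vsmall_i$ for the smallest level $k$ among sampled vertices, which satisfies $k \le \ell' < \ell$. In particular $\Vsmall_{\ell} = \Vorphan \subseteq T^{*}$, so $T^{*}\cap\Vp \supseteq \Vorphan \neq \emptyset$; the algorithm therefore skips the orphan subroutine and executes Line~\ref{line:output-lines}, returning the partition $(T^{*}\cap\Vp,\ \Vp\setminus T^{*}) = (\cup_{i=k+1}^{\ell}\Vsmall_i,\ \cup_{i=1}^{k}\Vsmall_i)$, the unchanged horizon $\Vh$, and a vertex $\ustar \in \Vsmall_k$. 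This bootstrapping step is the heart of the argument; what remains is to read off the guarantees of \Cref{lem:tree-split}.

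For the structural properties I would argue exactly as in the $\Vh = \Vp$ branch of \Cref{lem:small-orphan-case}: the restriction/contraction of \Cref{def:restrict-tree} turns $\cup_{i=k+1}^{\ell}\Vsmall_i$ into a single maximal subtree of $\cTstar(\Vp)$ (property~\ref{eff-split:T}), and since $\ustar \in \Vsmall_k$ sits on the sibling branch one level above this subtree, $\lcatreeset{(T^{*}\cap\Vp)\cup\{\ustar\}}{\cTstar(\Vp)} = \pa{\lcatreeset{T^{*}\cap\Vp}{\cTstar(\Vp)}}$ (property~\ref{eff-split:easy-to-merge}). The step I expect to require the most care is property~\ref{eff-split:out-degree}: the key fact is that $T^{*}$ \emph{absorbs} the old orphaned set, so $\Vp \setminus T^{*} = \cup_{i=1}^{k}\Vsmall_i$ is attached to the rest of $\cTstar$ only through the inherited edge to $\pa{\lcatreeset{\Vp}{\cTstar}}$ and the edge to the node inducing $T^{*}$, with $\Vsmall_k$ as its new orphaned set; I will have to verify that no third dangling edge is introduced and that the weak contraction property of \Cref{def:weakly-consistent-tree} is preserved, which is the delicate bookkeeping here. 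Finally, a nonempty orphaned set forces $\Vh \neq \Vp$ (in the $\Vh = \Vp$ analysis of \Cref{lem:small-orphan-case} there is no orphaned set), so only the lower bound of case~\ref{case:large-T-star} in property~\ref{eff-split:size-bounds} is required, and it holds trivially since $\card{T^{*}\cap\Vp} \ge \card{\Vorphan} > \tfrac{99}{100}\np \ge \tfrac{1}{200}\np$; a union bound over the $O(n)$ error events of \Cref{lem:counterpart-test} then completes the high-probability claim.
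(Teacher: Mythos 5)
Your proof is correct and follows essentially the same approach as the paper: condition on the high-probability event of \Cref{lem:counterpart-test}, argue via \Cref{lem:counterpart-set-larger} that the smallest-level sampled vertex drives the maximizer $T^{*}$, observe that $T^{*}$ absorbs $\Vorphan$ (so the orphan branch is skipped), and then transfer properties (1)--(4) of \Cref{lem:tree-split} by the same structural and size arguments used in \Cref{lem:small-orphan-case}. Your one streamlining—that $\Vp\setminus\Vorphan$ is automatically contained in $\Vbefore{\nh/25}$ since $\card{\Vp\setminus\Vorphan}<\np/100\le\nh/25$, so the hypothesis alone already places a sampled vertex at a level $<\tilde{\ell}$—is valid and dispenses with the paper's separate Chernoff-bound step establishing $U\cap\Vbefore{\nh/25}\neq\emptyset$ w.h.p.
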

\begin{proof}
In the lemma statement, we have already conditioned on a vertex sampled from $\Vp\setminus \Vorphan$. Furthermore, we can again show that the probability for us to sample a vertex among the first $\nh/25$ in the small-tree split order is at least
\[1-(\frac{24}{25})^{500\log{n}}\geq 1-\frac{1}{10}\cdot \frac{1}{n^2}.\]
The events of $U \cap (\Vp\setminus \Vorphan) \neq \emptyset$ and $U\cap \Vbefore{\nh/25} \neq \emptyset$ are \emph{not} independent. Nevertheless, we can still apply a \emph{union bound} and argue that both events happen with high probability. 

Conditioning on the high-probability events as above, we can argue by \Cref{lem:counterpart-test} and \Cref{lem:counterpart-set-larger} that the entire set of $\Vorphan$ is going to be included by the subroutine as defined in line~Line~\ref{line:test-good-T} of \Cref{alg:split}, which gives us the size lower bound. We do \emph{not} need to guarantee the size upper bound since we cannot meet the condition of $\Vh=\Vp$.

Finally, for properties of \Cref{eff-split:T}, \Cref{eff-split:out-degree}, and \Cref{eff-split:easy-to-merge}, we can repeat our proofs in \Cref{lem:small-orphan-case}. We provide the analysis again for the purpose of self-contained proof.
\begin{itemize}
\item For \Cref{eff-split:T}, note that by \Cref{lem:counterpart-test}, the induced set $T^*$ is always $\cup_{i=\ell}^{\infty} \Vsmall_{i}$. Therefore, $T\cap \Vp$ induces a maximal subtree in $\cTstar(\Vp)$.
\item For \Cref{eff-split:out-degree}, note that as long as $T^*$ includes the orphaned set $\Vorphan$, there will be only one edge connecting to the node induces $T^*$ in $\cTstar$. This is exactly what we proved in the lemma.
\item \Cref{eff-split:easy-to-merge} directly follows from \Cref{lem:counterpart-test} since $T^*$ is induced by $u^*$.
\end{itemize}
This concludes the proof of \Cref{lem:large-orphan-good-sample}. %\myqed{\Cref{lem:large-orphan-good-sample}}
\end{proof}

By \Cref{lem:large-orphan-good-sample}, the only case of concern now is when $T^* \cap \Vp = \emptyset$, i.e., $\card{\Vorphan}> \frac{99}{100}\cdot \np$ and $U$ does \emph{not} contain samples from $\Vp \setminus \Vorphan$. We now show that our procedure in Line~\ref{line:large-orphan} can effectively distinguish between the cases of $\Vp \setminus \Vorphan=\emptyset$ (root cut) and $\Vp \setminus \Vorphan$ being small.

\begin{lemma}
\label{lem:large-orphan-bad-sample}
Let $\Vp$ and $\Vh$ be as prescribed by \Cref{lem:tree-split}, and suppose the size of the orphaned set is more than $\frac{99}{100}\cdot \np$, i.e., $\card{\Vorphan}> \frac{99}{100}\cdot \np$. Furthermore, suppose $T^* \cap \Vp = \emptyset$ in Line~\ref{line:large-orphan} of \Cref{alg:split}. Then, the following statements are true.
\begin{enumerate}[label=(\roman*).]
\item\label{item:large-orphan-non-root} If $\Vp \setminus \Vorphan \neq \emptyset$, with high probability, we have $X^*\neq \emptyset$ and $X^* \subseteq (\Vp \setminus \Vorphan)$, i.e., $X^*$ only contains vertices in $\Vp$ but not in $\Vorphan$.
\item\label{item:large-orphan-root} If $\Vp \setminus \Vorphan = \emptyset$, with high probability, we have $X^* = \emptyset$.
\end{enumerate}
\end{lemma}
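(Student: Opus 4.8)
The plan is to characterize, for each fixed ``baseline'' vertex $u'\in\Vp$, precisely which $v\in\Vh$ are flagged as a ``predecessor'' of $u'$ by $\testalgprev{u'}{v}{\cdot}$ (\Cref{alg:tester-prev}), and then read off both parts from this characterization. First I would fix notation along the small-tree split order $\Vsmall_1,\Vsmall_2,\dots$ of $\Vh$ — which, since $\Vh$ is composable, governs every oracle answer on a triplet inside $\Vh$ by \Cref{obs:split-away-preserv} — and recall from \Cref{lem:tree-split} that $\Vp=\cup_{i\le\ell}\Vsmall_i$ in this order, with the orphaned block being the last one, $\Vorphan=\Vsmall_\ell$ (\Cref{def:orphan-vertex}). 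Two size facts will be used throughout: $\card{\Vp\setminus\Vorphan}<\tfrac{1}{100}\np$ (as $\card{\Vorphan}>\tfrac{99}{100}\np$ and $\Vorphan\subseteq\Vp$); and, because $\Vsmall_\ell$ is the \emph{smaller} side of its own split and has more than one vertex (using $\np\ge 50000\log n$), $\nh-\np=\card{\cup_{i>\ell}\Vsmall_i}\ge\card{\Vorphan}>\tfrac{99}{100}\np$, so $\nh>\tfrac{199}{100}\np$. In particular $\Vh\neq\Vp$, which is consistent with \Cref{lem:small-orphan-case} since here $T^*\cap\Vp=\emptyset$.

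The key step is a counting claim: for $u'\in\Vsmall_k$ and $v\in\Vsmall_j$, a short LCA computation in the split order shows that the \emph{true} number of $s\in\Vh$ with ``$v$ splits away from $(u',s)$'' is at least $\nh-\card{\cup_{i\le j}\Vsmall_i}\ge\nh-\tfrac{1}{100}\np$ when $j<k$ (a genuine predecessor), and at most $\card{\Vsmall_k}\le\np$ when $j\ge k$. Since the oracle errs on distinct triplets independently with probability at most $1/10$, the counter computed by $\testalgprev{u'}{v}{\cdot}$ is a sum of $\nh$ independent $\{0,1\}$ variables with expectation at least $\tfrac{9}{10}(\nh-\tfrac{1}{100}\np)$ in the first case and at most $\np+\tfrac{1}{10}\nh$ in the second. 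I would then check, in each of the two threshold regimes of \Cref{alg:orphan-test} (threshold $3\np/2$ when $\nh\le 3\np$, threshold $2\nh/3$ when $\nh>3\np$), that $\nh>\tfrac{199}{100}\np$ places the predecessor expectation comfortably above, and the non-predecessor expectation comfortably below, the threshold; a Chernoff bound (the additive Hoeffding form suffices), together with $\np\ge 50000\log n$ and a union bound over the $\le 100\log n$ sampled baselines $u'$ and the $\le n$ candidates $v$, then yields w.h.p.\ that every genuine predecessor of any sampled $u'$ passes the test and every non-predecessor fails it.

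Given this, both parts fall out. For part (i): each fresh draw into $U'$ lands in $\Vorphan$ with probability $>\tfrac{99}{100}$, so w.h.p.\ some sampled $u'$ lies in $\Vorphan=\Vsmall_\ell$, and for that $u'$ its set of predecessors is exactly $\cup_{i<\ell}\Vsmall_i=\Vp\setminus\Vorphan\neq\emptyset$, whence $X^*\neq\emptyset$; simultaneously, for \emph{every} sampled $u'\in\Vsmall_k$ its predecessor set is $\cup_{i<k}\Vsmall_i\subseteq\cup_{i<\ell}\Vsmall_i=\Vp\setminus\Vorphan$, so each candidate $X$ — and hence the size-maximizer $X^*$ — is contained in $\Vp\setminus\Vorphan$. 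For part (ii): $\Vp\setminus\Vorphan=\emptyset$ forces $\ell=1$, i.e.\ $\Vp=\Vsmall_1$ is one immediate child of the root of $\cTstar(\Vh)$; then for every $u'\in\Vp$ (so $k=1$) there is no index $j<k$, so no $v\in\Vh$ is a predecessor, every true count is $\le\np$, and the same Chernoff estimate makes every test fail w.h.p., giving $X^*=\emptyset$.

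I expect the main obstacle to be the sampling bookkeeping rather than the concentration. Because $\Vp\setminus\Vorphan$, though small, is not negligible relative to $100\log n$ fresh samples, it is in fact likely that $U'$ contains baselines \emph{outside} $\Vorphan$, so it does not suffice to reason only about $u'\in\Vorphan$: one must establish the uniform containment $X(u')\subseteq\Vp\setminus\Vorphan$ for \emph{all} sampled $u'$ so that the size-maximizer inherits it. A secondary subtlety is extracting $\nh>\tfrac{199}{100}\np$ from $\card{\Vorphan}>\tfrac{99}{100}\np$ (via the fact that an orphaned block is the smaller side of its own split), which is exactly what makes the \emph{constant} threshold $3\np/2$ separate predecessors from non-predecessors in the $\nh\le 3\np$ regime; and, as usual, one should note that the oracle's fixed-but-random answer table still permits Chernoff bounds over distinct triplets even though the algorithm may reuse individual answers.
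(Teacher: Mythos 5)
Your proposal is correct and follows essentially the same route as the paper: characterize the true split-away counts for $(u',v)$ along the small-tree split order of $\Vh$, show via Chernoff and a union bound (over sampled baselines and candidates, in both threshold regimes) that exactly the strict predecessors of each sampled $u'$ pass the test, conclude $X^*\subseteq\Vp\setminus\Vorphan$ for every sampled $u'$, and then read off the two cases using the sampling argument on $\Vorphan$. The main presentational differences are cosmetic — you make the size bound $\nh>\tfrac{199}{100}\np$ an explicit standalone fact (the paper uses it implicitly when lower-bounding $\expect{Y_v}$) and you index by both the level of $u'$ and of $v$ — but the underlying lemma decomposition, the threshold analysis in the $\nh\le 3\np$ and $\nh>3\np$ regimes, and the final sampling step all match the paper's argument.
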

\begin{proof}
Consider the small-tree splitting order of $\Vh$, and let $u\in \Vsmall_{\ell}$ for some $\ell\leq \tilde{\ell}$, where $\Vsmall_{\tilde{\ell}}=\Vorphan$. We prove that with high probability, $i).$ no vertices $v \in \Vp \cap (\cup_{i=\ell}^{\infty} \Vsmall_{\ell})$ can be added to $X$ by Line~\ref{line:pred-test}; and $ii).$ all vertices in $v\in \Vp \cap (\cup_{i=1}^{\ell-1} \Vsmall_{\ell})$ are added to $X$ by Line~\ref{line:pred-test}. (Note that this is why we name the subroutine as a ``predecessor'' test.)

We first observe that by our definition, there is $\Vorphan \subseteq \Vp \cap (\cup_{i=\ell}^{\infty} \Vsmall_{\ell})$. Too see $i)$, note that any $v\in \Vp \cap (\cup_{i=\ell}^{\infty} \Vsmall_{\ell})$ can only split away from $(u,t)$ for $t \in \Vsmall_{\ell}$: this is true since for every $t\not\in \Vsmall_{\ell}$, the lowest common ancestor between $(u,t)$ induces a subtree in $\Vh$ that contains $v$. Moreover, since $\Vsmall_{\ell}\subseteq \Vp$, we have $\card{\Vsmall_{\ell}}\leq \np$. Define $X_{v}$ as the number of answers ``$v$ splits away from $(u,t)$'' for $t\in \Vh$ from $\cO$. By our choice of the parameter \emph{threshold-pred}, we assume w.log. $X_{v}\geq \np$ since otherwise $v$ will not join $X$ anyway. If $\nh\leq 3\np$, we have $\expect{X_{v}}\leq \np+\frac{\nh}{10}=\frac{13}{10}\cdot \np$ in expectation.  Since $X_{v}$ is a summation of independent random variables supported on $\{0,1\}$, we can apply Chernoff bound to obtain that
\begin{align*}
\Pr\paren{X_{v}\geq \frac{3}{2}\cdot \np} &= \Pr\paren{X_{v}\geq \frac{15}{13}\cdot \expect{X_{v}}}\\
&\leq \exp\paren{-\frac{(2/13)^2}{3}\cdot \expect{X_{v}}}\\
&\leq \frac{1}{10}\cdot \frac{1}{n^3}. \tag{using the condition on the lower bound of $X_{v}$}
\end{align*}
On the other hand, if $\nh> 3\np$, we have $\expect{X_{v}}\leq \frac{\nh}{3}+\frac{\nh}{10}=\frac{13}{30}\cdot \nh$. We again assume w.log. that $X_{v}\geq \np$, and we can apply Chernoff bound to obtain that
\begin{align*}
\Pr\paren{X_{v}\geq \frac{2}{3}\cdot \nh} &= \Pr\paren{X_{v}\geq \frac{20}{13}\cdot \expect{X_{v}}}\\
&\leq \exp\paren{-\frac{(8/13)^2}{3}\cdot \expect{X_{v}}}\\
&\leq \frac{1}{10}\cdot \frac{1}{n^3}. \tag{using the condition on the lower bound of $X_{v}$}
\end{align*}
Therefore, we can apply a union bound and argue that with high probability, no vertices $v \in \Vp \cap (\cup_{i=\ell}^{\infty} \Vsmall_{\ell})$ can be added to $X$ by Line~\ref{line:pred-test}, as desired by $i)$.

We now proceed to show $ii).$ all vertices in $v\in \Vp \cap (\cup_{i=1}^{\ell-1} \Vsmall_{\ell})$ are added to $X$ by Line~\ref{line:pred-test}. Note that $v\in \Vp \cap (\cup_{i=1}^{\ell-1} \Vsmall_{\ell})$ implies $v\in \Vp\setminus \Vorphan$. Therefore, $v$ splits from $(u,t)$ for every $t$ in the orphaned set \emph{and} for every $t$ as the sibling of $\Vorphan$, i.e., $t \in \cup_{i=\tilde{\ell}+1}^{\infty} \Vsmall_{i}$. Define $Y_{v}$ as the number of answers ``$v$ splits away from $(u,t)$'' for $t \in \Vh$ from $\cO$. By our choice of the parameter \emph{threshold-pred}, if $\nh\leq 3\np$, since we have $\card{\Vorphan}\geq \frac{99}{100}\np$ and $\card{\cup_{i=\tilde{\ell}+1}^{\infty} \Vsmall_{i}}\geq \card{\Vorphan}$, we have $\expect{Y_{v}}\geq \frac{9}{10}\cdot \frac{199}{100}\cdot \np$ in expectation. Since $Y_{v}$ is a summation of independent random variables supported on $\{0,1\}$, we can apply Chernoff bound to obtain that
\begin{align*}
\Pr\paren{Y_{v}\leq \frac{3}{2}\cdot \np} &= \Pr\paren{Y_{v}\leq \frac{17}{15}\cdot \expect{Y_{v}}}\\
&\leq \exp\paren{-\frac{(2/15)^2}{3}\cdot \expect{Y_{v}}}\\
&\leq \frac{1}{10}\cdot \frac{1}{n^3}. \tag{using $\expect{Y_{v}}\geq \frac{17}{10}\cdot \np$}
\end{align*}
On the other hand, if $\nh> 3\np$, we have at least $\nh-\frac{1}{100}\np \geq \frac{299}{300}\nh$ vertices $t$ such that $v$ splits away from $(u,t)$. Therefore, we have $\expect{Y_{v}}\geq \frac{9}{10}\cdot \frac{299}{300}\cdot \nh\geq \frac{4}{5}\cdot \nh$ in expectation. Therefore, we can again apply Chernoff bound to obtain that
\begin{align*}
\Pr\paren{Y_{v}\leq \frac{2}{3}\cdot \nh} &= \Pr\paren{Y_{v}\geq \frac{5}{6}\cdot \expect{Y_{v}}}\\
&\leq \exp\paren{-\frac{(1/6)^2}{3}\cdot \expect{Y_{v}}}\\
&\leq \frac{1}{10}\cdot \frac{1}{n^3}. \tag{using $\expect{Y_{v}}\geq \frac{4}{5}\cdot \nh\geq \frac{4}{5}\cdot \np$}
\end{align*}
Therefore, we can apply a union bound to obtain the desired statement on $ii)$.

By our statements in $i)$ and $ii)$ as above, we can already conclude that $X^* \subseteq (\Vp \setminus \Vorphan)$. Therefore, \Cref{item:large-orphan-root} of \Cref{lem:large-orphan-bad-sample} follows straightforwardly since if $\Vp \setminus \Vorphan=\emptyset$, any of its subset can only be empty as well. For \Cref{item:large-orphan-non-root}, what remains to show is that with high probability, there is $X^* \neq \emptyset$. Note that if $u\in \Vorphan$ and $\Vp \setminus \Vorphan \neq \emptyset$, then by our statements in $i)$ and $ii)$ above, the set $X^*$ will \emph{not} be empty. Since we assume $\card{\Vorphan}\geq \frac{99}{100}\cdot \card{\Vp}$, the probability for $U'$ to \emph{not} have any vertex $u\in \Vorphan$ is at most
\begin{align*}
(\frac{1}{100})^{100\log{n}}\leq \frac{1}{10}\cdot \frac{1}{n^3},
\end{align*}
as desired. Thus, with high probability, $X^*$ is not empty, which proves \Cref{item:large-orphan-non-root} and concludes the proof of \Cref{lem:large-orphan-bad-sample}.
\end{proof}

\begin{proof}[Finalizing the proof of \Cref{lem:tree-split}]
By \Cref{lem:efficient-split-run-time}, the algorithm runs in $\nh^2 \log{n}$ time. For the set of vertices $\Vp$, we either have $\card{\Vorphan}< \frac{99}{100}\cdot \np$ or $\card{\Vorphan}\geq \frac{99}{100}\cdot \np$. In the former case, we apply \Cref{lem:small-orphan-case}, and all guarantees in \Cref{lem:tree-split} are satisfied. Otherwise, if $\card{\Vorphan}\geq \frac{99}{100}\cdot \np$ \emph{and} $U\cap (\Vp \setminus \Vorphan) \neq \emptyset$, by \Cref{lem:large-orphan-good-sample}, we can still satisfy the properties as prescribed by \Cref{lem:tree-split}. 

The only remaining case is if $\card{\Vorphan}\geq \frac{99}{100}\cdot \np$ \emph{and} $U\cap (\Vp \setminus \Vorphan) = \emptyset$. In such a case, the algorithm will enter Line~\ref{line:large-orphan}. If $\Vp\setminus \Vorphan = \emptyset$, then by \Cref{item:large-orphan-root} of \Cref{lem:large-orphan-bad-sample}, the algorithm uses $\Vh = \Vp$ for a single level of recursion call, and the properties of \Cref{lem:tree-split} are satisfied by the guarantees of \Cref{lem:small-orphan-case} (since now $\Vorphan = \emptyset$). Otherwise, if $\Vp\setminus \Vorphan \neq \emptyset$, note that by \Cref{item:large-orphan-non-root} of \Cref{lem:large-orphan-bad-sample}, any arbitrary vertex $u\in U^*$ belongs to $\Vsmall_{\ell}$ for some $\ell<\tilde{\ell}$, such that $\Vsmall_{\tilde{\ell}}=\Vorphan$. Since $\card{\Vorphan}\geq \frac{99}{100}\np$ and $\nh\geq \np$, the vertex $u$ is among $\Vbefore{\nh/100}$. As such, by \Cref{lem:counterpart-test}, with high probability, the vertex set $T^*$ contains all vertices in $\Vorphan$, and the size is sufficiently large to guarantee \Cref{eff-split:size-lb} of \Cref{lem:tree-split}. We do not need to guarantee \Cref{eff-split:size-ub} since $\Vorphan\neq \emptyset$, and we will not meet the $\Vp=\Vh$ condition. The guarantees of \Cref{eff-split:T} and \Cref{eff-split:out-degree} are similarly satisfied since we remove a set $\cup_{i} \Vsmall_{i}$ that contains $\Vorphan$. Finally, by a similar argument as we used in \Cref{lem:small-orphan-case} and \Cref{lem:large-orphan-good-sample}, \Cref{eff-split:easy-to-merge} is satisfied as desired.
%\myqed{\Cref{lem:tree-split}}
\end{proof}

% \item\label{eff-split:size-lb} The size of $T^{*}\cap \Vp$ is at least $\frac{2}{5}\cdot \np$, i.e., 
% \[\card{T^{*}\cap \Vp} \geq \frac{2}{5}\cdot \np.\]
% \item\label{eff-split:size-ub} If $\Vh=\Vp$, the size of $T^{*}\cap \Vp$ is at most $(1-\frac{1}{1000 \log^2{\np}})\cdot \np$, i.e.,
% \[\card{T^{*}\cap \Vp} \leq (1-\frac{1}{1000 \log^2{\np}})\cdot \np.\]
% \item\label{eff-split:T} $T$ induces a \emph{single} maximal subtree in $\cTstar$. 
% \item\label{eff-split:out-degree} $\Vp\setminus T^{*}$ has an out-degree of exactly $1$ in $\cTstar$. 
% \item\label{eff-split:easy-to-merge} The lowest common ancestor between the nodes in $T$ is an (immediate) child node of the lowest common ancestor between $T\cup \{\ustar\}$ in $\cTstar(\Vp)$, i.e.,

\subsection{An algorithm to merge two subtrees}
\label{subsec:glue-alg}

We now move to the algorithm that merges two partial trees. Note that this task is \emph{not} trivial: we use the thought process of small-tree split order in the proof of \Cref{lem:tree-split}, but the \emph{algorithm} $\splitalg{\Vp}{\cO}$ does \emph{not} immediately tell us which node did we ``extract'' the set $T$. As such, it still takes considerable work to merge the two trees on the ``right'' internal node.

Exactly here is why we need the split algorithm $\splitalg{\Vp}{\cO}$ to return the vertex $\ustar$. Note that our goal is essentially to find the lowest common ancestor $x$ between $\ustar$ and $T$, and ``stitch'' the tree $\cT_{T}$ to the node. As such, a natural strategy is to ask whether in $\cTstar$ (resp. $\cTstar(\Vp)$), whether a vertex $v\in \Vp$ \emph{splits away} from $(\ustar,x)$ for $x\in T$. If $\cO$ is to answer the queries correctly, all vertices that are ``outside'' $\lcatreeset{T \cup \{\ustar\}}{\cTstar(\Vp)}$ would answer \emph{yes}, and all vertices that are among the leaves of $\lcatreeset{T \cup \{\ustar\}}{\cTstar(\Vp)}$ would answer \emph{no}. We then use the fact that $T$ is always large enough to beat the noise from $\cO$.

The formal description of the algorithm is as \Cref{alg:merge}.

\FloatBarrier
\begin{algorithm}
\caption{$\treemerge{\cT_{T}}{\cT_{\Vp\setminus T}}{\ustar}$: an algorithm to merge partial trees $\cT_{T}$ and $\cT_{\Vp\setminus T}$.} \label{alg:merge}
\KwIn{Vertex set $\Vp$ of size $\np$; a splitting oracle $\cO$; Partial trees $\cT_{T}$ and $\cT_{\Vp\setminus T}$ constructed by $\splitalg{\Vp, \Vh}{\cO}$; vertex $\ustar$ by $\splitalg{\Vp, \Vh}{\cO}$}
\KwOut{A partial tree on $\cT_{\Vp}$}
Initialize $S' \gets \emptyset$\;
\For{$s \in \Vp\setminus T$}{
Initialize a counter $c_s \gets 0$\;
\For{Every vertex $t \in T$}{
If $s$ splits away from $(\ustar, t)$, increase $c_s$ by 1\;
}
\If{$c_s\leq \frac{1}{2}\cdot \card{T}$}{
Add $s$ to $S'$\;
}
}
Take the lowest common ancestor $x=\lcatreeset{S'}{\cT_{\Vp\setminus T}}$\;
If $S'$ does \emph{not} induces a \emph{maximal tree} in $\cT_{\Vp\setminus T}$, i.e., $\leaves{\cT_{\Vp\setminus T}}{S'}\neq S'$, abort the algorithm and report ``fail''\;
If the algorithm does not fail, split node $x$ into two nodes: the left node induces the subtree of $x$, and the right node induces the subtree $\cT_{T}$.
\end{algorithm}

\FloatBarrier

We now present the guarantees of the tree-merging algorithm.

\begin{lemma}
\label{lem:tree-merge}
Given any composable vertex set $\Vp \subseteq V$ such that $\card{\Vp} = \np \geq 50000 \log{n}$, a splitting oracle $\cO$ with correct probability $9/10$, and suppose $\cT_{\Vp\setminus T}$, $\cT_{T}$, and $\ustar$ are obtained by \Cref{alg:split}. Furthermore, assume that
\begin{enumerate}[label=\roman*).]
\item The high probability events of \Cref{lem:tree-split} happens;
\item The partial tree $\cT_{\Vp\setminus T}$ is \emph{(weakly) consistent with} $\cTstar(\Vp\setminus T)$, $\cT_{T}$ is \emph{(weakly) consistent with} $\cTstar(T)$.
\end{enumerate}
Then, with high probability, \Cref{alg:merge} runs in time $O({\np}^2)$, and outputs a partial tree $\cT_{\Vp}$ that is weakly consistent with $\cTstar(\Vp)$.
\end{lemma}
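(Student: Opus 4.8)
The plan is to establish the three assertions of \Cref{lem:tree-merge} --- the $O(\np^2)$ running time, the correctness of the merge, and the weak consistency of the output --- mostly independently, with the bulk of the work in identifying the set $S'$ computed by the loop of \Cref{alg:merge} and in arguing that $S'$ is realized as an honest node of $\cT_{\Vp\setminus T}$.

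The running time is immediate: the outer loop ranges over the $\le\np$ vertices of $\Vp\setminus T$, and for each it issues $\card{T}\le\np$ oracle queries to form $c_s$, for $O(\np^2)$ queries and time; computing $\lcatreeset{S'}{\cT_{\Vp\setminus T}}$, testing $\leaves{\cT_{\Vp\setminus T}}{S'}=S'$, and splitting the node all touch $\cT_{\Vp\setminus T}$, which has $O(\np)$ nodes, so they cost $O(\np)$.

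Next I would pin down $S'$. By \Cref{eff-split:T} of \Cref{lem:tree-split}, $T$ induces a single maximal subtree of $\cTstar(\Vp)$; let $x^\star=\lcatreeset{T}{\cTstar(\Vp)}$ be its root, and by \Cref{eff-split:easy-to-merge} of \Cref{lem:tree-split}, $y^\star:=\lcatreeset{T\cup\{\ustar\}}{\cTstar(\Vp)}=\pa{x^\star}$, so $\ustar$ lies in the sibling subtree of $x^\star$; write $S^\star$ for the leaf set of that sibling node (so $\ustar\in S^\star\subseteq\Vp\setminus T$, and $S^\star$ is precisely the orphaned set $\Vorphan$ produced by this split). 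A short LCA computation in $\cTstar(\Vp)$ then shows: for $s\in S^\star$ and any $t\in T$ we have $\lcatree{s}{t}{\cTstar(\Vp)}=\lcatree{\ustar}{t}{\cTstar(\Vp)}=y^\star$ while $\lcatree{s}{\ustar}{\cTstar(\Vp)}$ lies strictly below $y^\star$, so $t$ --- not $s$ --- splits away from $(\ustar,t)$, i.e. the \emph{true} oracle answer never increments $c_s$; whereas for $s\in(\Vp\setminus T)\setminus S^\star$ the common LCA of $s,\ustar,t$ is a strict ancestor of $y^\star$ equal to $\lcatree{s}{\ustar}{\cTstar(\Vp)}$, so $s$ does split away from $(\ustar,t)$ for every $t\in T$. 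Hence $\Ex[c_s]\le\card{T}/10$ when $s\in S^\star$ and $\Ex[c_s]\ge 9\card{T}/10$ otherwise; since $\card{T}\ge\np/200\ge250\log n$ by \Cref{eff-split:size-lb} of \Cref{lem:tree-split}, an additive Chernoff bound over the $\card{T}$ independent responses followed by a union bound over the $\le n$ choices of $s$ gives $S'=S^\star$ with probability $1-n^{-\Omega(1)}$.

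Finally, conditioned on $S'=S^\star$, I must show the algorithm does not abort and that its output is weakly consistent with $\cTstar(\Vp)$. The first point rests on the structural fact that the orphaned set $S^\star=\Vorphan$ appears as the leaf set of an actual node $m$ of $\cT_{\Vp\setminus T}$ (rather than being buried strictly inside a super-vertex); this is exactly the invariant that the horizon-set mechanism of \Cref{alg:split} is designed to maintain through the recursion of \Cref{alg:partial-tree}, and I would prove it by induction on recursion depth, using \Cref{eff-split:out-degree} of \Cref{lem:tree-split} (so that $\Vp\setminus T$ carries only the single dangling edge towards $x^\star$) together with the weak-contraction property. Granting this, $\leaves{\cT_{\Vp\setminus T}}{\lcatreeset{S'}{\cT_{\Vp\setminus T}}}=\leaves{\cT_{\Vp\setminus T}}{m}=S^\star=S'$, so the test passes and $\cT_T$ is attached as a sibling of $m$; the new internal node then has leaf set $S^\star\cup T=\leaves{\cTstar(\Vp)}{y^\star}$, from which the subtree-preservation clause of \Cref{def:weakly-consistent-tree} for $\cT_{\Vp}$ follows from that of $\cT_{\Vp\setminus T}$ and $\cT_T$ (via \Cref{eff-split:T} and \Cref{eff-split:easy-to-merge}), while the weak-contraction clause survives because no super-vertex is modified and grafting a whole subtree at an existing node preserves the ``consecutive region, out-degree $\le2$'' shape of every super-vertex. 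I expect the main obstacle to be precisely this last step --- formalizing and propagating the invariant that the orphaned region survives as a genuine node of the recursively built partial tree --- together with the related subtlety that in the large-orphan regime (case \cref{case:large-orphan} of \Cref{lem:tree-split}, where $\card T$ need not satisfy the $\np/200$ lower bound) the loop's classification must be re-examined; here one must use that $S^\star$ is then almost all of $\Vp\setminus T$, so that only $O(\log n)$ vertices need be classified as outside $S^\star$, and argue separately that these few are handled correctly with high probability.
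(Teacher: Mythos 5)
Your proof follows the same route as the paper's: identify the orphaned set $\Vorphan$ as the sibling of $T$ under $\lcatreeset{T\cup\{\ustar\}}{\cTstar(\Vp)}$ via \Cref{eff-split:easy-to-merge}, run a per-vertex Chernoff bound on the counts $c_s$ (using $\card{T}\geq\np/200$ from \Cref{eff-split:size-lb}) followed by a union bound to conclude $S'=\Vorphan$, and then assemble weak consistency of the spliced tree from the consistency assumptions on $\cT_T$ and $\cT_{\Vp\setminus T}$. This tracks the paper's argument step-for-step, including the identical way the expectations $\Ex[c_s]\leq\card{T}/10$ and $\Ex[c_s]\geq 9\card{T}/10$ are obtained from the LCA structure.

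You are more careful than the paper on two points. First, you explicitly raise the need to rule out the abort, i.e., to show that $S'=\Vorphan$ is realized as a genuine node of $\cT_{\Vp\setminus T}$; the paper argues only that $S'$ is correctly identified, and never verifies that the maximal-tree test passes. Your plan to establish this by induction via \Cref{eff-split:out-degree} and the weak-contraction invariant is the right idea. Second, you flag that the paper invokes \Cref{eff-split:size-lb} unconditionally to get $\card{T}\geq 200\log n$, without distinguishing cases \cref{case:large-orphan} and \cref{case:large-T-star} of \Cref{lem:tree-split}; in the large-orphan regime this lower bound on $\card{T}$ is simply not provided by \Cref{lem:tree-split}, so the Chernoff step is unjustified. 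That is a genuine observation about a gap in the paper's proof as written. However, your sketched fix --- that only $O(\log n)$ vertices need be classified as outside $S^*$ and can be handled separately --- does not go through: the non-orphan portion of $\Vp\setminus T$ can be as large as $\np/100\gg\log n$, and more fundamentally a small $\card{T}$ destroys concentration for \emph{every} vertex $s$, not just the few outside $S^*$. To close this you would need either an argument that $\card{T}$ is always $\Omega(\log n)$ at the time of a merge call (e.g., by tracing the invariants of the recursion in \Cref{alg:partial-tree}), or a different classification rule in \Cref{alg:merge} that does not hinge on concentration over $T$.
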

\begin{proof}
We first remind the readers of the definition of a partial tree $\pTree$ \emph{weakly consistent} with another tree $\cT$. For $\pTree$ to be consistent with $\cT$, there should be
\begin{enumerate}[label=\alph*).]
\item each super-vertex of $\pTree$ corresponding to a connected subtree in $\cT$ with out-degree at most $2$, and each of the edge connects to either a parent or a sibling node; and 
\item for leaves $(x,y)$ in $\pTree$, let $X$ and $Y$ be the corresponding leaves in $\cT$, the subtrees induced by $\lcatree{x}{y}{\cT}$ and $\lcatreeset{X\cup Y}{\cT}$ contain exactly the same set of leaves. 
\end{enumerate}

We now show that with high probability, the algorithm $\treemerge{\cT_{T}}{\cT_{\Vp\setminus T}}{\ustar}$ outputs a partial tree that satisfies $a)$ and $b)$ w.r.t. $\cTstar(\Vp)$. For $a)$, we note that the super-vertices in $\cTstar(\Vp)$ and $(\cTstar(\Vp\setminus T), \cTstar(T))$ are exactly the same. Furthermore, by the high probability event of \Cref{lem:tree-split}, both $\cTstar(\Vp)$ and $(\cTstar(\Vp\setminus T), \cTstar(T))$ satisfy the \emph{weakly consistent} property. As such, the guarantee of $a)$ follows.

The main work here is to prove the guarantee prescribed by $b)$. To this end, we first observe that for any set of vertices $X$ and $Y$, if $\lcatreeset{X\cup Y}{\cTstar(\Vp)}$ only contain vertices in $T$ (resp. $\Vp\setminus T$), then the assumptions of $\cT_{\Vp\setminus T}$ being consistent with $\cTstar(\Vp\setminus T)$ and $\cT_{T}$ being consistent with $\cTstar(T)$ is sufficient to prove the leaves induced by $\lcatreeset{X\cup Y}{\cT_{\Vp}}$ is the same as the leaves of $\lcatreeset{X\cup Y}{\cTstar(\Vp)}$. This is evident by using the procedure that constructs $\cTstar(\Vp)$ as in \Cref{def:restrict-tree}. 

The final missing piece is the vertex sets $X$ and $Y$ that induce vertices in both $T$ and $\Vp\setminus T$, which is the place where we need to show that the merging algorithm finds the ``correct'' node to merge. Let $S=\Vorphan$ be the orphaned vertices by removing $T$ from $\Vp$. Since we condition on the high probability event of \Cref{lem:tree-split}, there must be an internal vertex $z$, such that $z=\lcatreeset{T\cup\{\ustar\}}{\cTstar(\Vp)}$, and nodes $r_{T}$ and $r_{S}$ such that $i).$ $z=\pa{r_{T}}$ and $z=\pa{r_{S}}$ in $\cTstar(\Vp)$ and $ii).$ the induced leaves of $r_{T}$ is $T$ and the induced leaves of $r_{S}$ is $S$ such that $\ustar\in S$. Furthermore, we have $\card{T}\geq 50000\log{n}\cdot \frac{1}{200} \geq 200\cdot \log{n}$ by \Cref{lem:tree-split}. We now claim that by running $\treemerge{\cT_{T}}{\cT_{\Vp\setminus T}}{\ustar}$, the set $S'$ we recover is exactly the leaves of $S$ (the $\Vorphan$ set of vertices). The detailed analysis is as follows.
\begin{itemize}
\item For each $s\in S$, observe that in $\cTstar$ (and $\cTstar(\Vp)$), $s$ does \emph{not} split away from $(\ustar, t)$ for $t\in T$. Therefore, define $C_{s}$ as the random variable that records ``$s$ split away from $(\ustar, t)$ for $t\in T$'' from $\cO$, we have in expectation $\expect{C_{s}}\leq \frac{1}{10}\cdot \card{T}$. If $C_{s}\leq 20\cdot \log{n}$, it trivially fails the test. Otherwise, if $C_{s}\geq 20\cdot \log{n}$, we can apply Chernoff bound to get that
\begin{align*}
\Pr\paren{C_{s}\geq \frac{1}{2}\cdot \card{T}} &= \Pr\paren{C_{s} \geq 5\cdot \expect{C_{s}}}\\
&\leq \exp\paren{-\frac{5^2}{3}\cdot \expect{C_{s}}} \\
& \leq \frac{1}{5}\cdot \frac{1}{n^3}. \tag{using $C_{s}\geq 20\cdot \log{n}$}
\end{align*}
\item For each $d \not\in S$, observe that in $\cTstar$ (and $\cT(\Vp)$), $d$ \emph{does} split away from $(\ustar, t)$ for $t\in T$. Therefore, define $C_{d}$ as the random variable that records ``$d$ split away from $(\ustar, t)$ for $t\in T$'' from $\cO$, we have in expectation $\expect{C_{d}}\geq \frac{9}{10}\cdot \card{T}$. Therefore, we can apply Chernoff bound to get that
\begin{align*}
\Pr\paren{C_{d}\leq \frac{1}{2}\cdot \card{T}} &= \Pr\paren{C_{d} \leq \frac{5}{9}\cdot \expect{C_{d}}}\\
&\leq \exp\paren{-\frac{(4/9)^2}{3}\cdot \expect{C_{d}}} \\
& \leq \frac{1}{5}\cdot \frac{1}{n^3}. \tag{using $C_{d}\geq 200\cdot \log{n}$}
\end{align*}
\end{itemize}
We can then apply a union bound over at most $\np\leq n$ vertices in $\Vp\setminus T$ to get the desired statement.

Observe that any internal node that induces leaves in both $T$ and $\Vp\setminus T$ has to at least include the whole set of $S$ and $T$. For any leaves $x$ and $y$ in $T_{\Vp}$, let the induced set of vertices in the leaves of $\cTstar(\Vp)$ be $Z=S\cup T\cup P$. By the above argument, $S\cup T$ should be returned in the set of vertices. Furthermore, by the consistency between $\cT_{\Vp\setminus T}$ and $\cTstar(\Vp\setminus T)$, the set $P$ should also be returned. This concludes the proof. 
% \textcolor{red}{
% Proof sketch to follow:
% \begin{itemize}
% \item Remind the readers of the definition of \emph{partial tree consistent with optimal tree} (and in this case, consistent with an artificially-defined tree). 
% \item Then, show that if the vertices in $T$ and $\Vp\setminus T$ are both consistent in $\cT_{\Vp\setminus T}$ and $\cT_{T}$, then, if $\lcatreeset{X}{\cTstar(\Vp)}$ only induces vertices in $\Vp\setminus T$ (resp. $T$), the consistent condition is already satisfied. 
% \item Then, use the property of the algorithm and the guarantee from \Cref{lem:tree-split}: show that with high probability, the node that we stitch the subtree is the right one -- $x \in T$ and $y \in \Vp\setminus T$, the induced set of leaves are exactly the same as in $\cTstar(\Vp)$; then, if the lca of two vertices $u,v$ in $\Vp\setminus T$ also induces leaves in $T$, it must be the same as $u,x$ or $v,x$, which reduces to the case.
% \end{itemize}
% }
%\myqed{\Cref{lem:tree-merge}}
\end{proof}

\subsection{Finalizing the proof of \texorpdfstring{\Cref{thm:weak-partial-tree}}{thm:weak-partial-tree}}
We now move to prove \Cref{thm:weak-partial-tree} for our partial tree construction. We first bound the number of possible internal nodes as $n$ by \Cref{fct:num-splits-tree}. Therefore, we can apply a union bound on \emph{all} splits for \Cref{lem:tree-split}, and argue that with high probability, the events of \Cref{lem:tree-split} hold for every partition. Furthermore, conditioning on \Cref{lem:tree-split} always holds across the splits, we can again apply a union bound to show that \Cref{lem:tree-merge} holds across all internal nodes in the partial tree construction with high probability. We condition on the high probability events of \Cref{lem:tree-split} and \Cref{lem:tree-merge} across the internal nodes for the rest of the proof.

\paragraph{Proof of efficiency.} We now show that the depth of recursive calls on \Cref{lem:tree-split} is $O(\log^3{n})$, i.e., the longest sequence of recursive calls induced by any fixed $\Vp$ is at most $O(\log^3{n})$ in \Cref{alg:partial-tree}. To see this, consider any set of vertices $\Vp$ with size $\np$, and we look into \emph{three} levels of splits on $\Vp$. Suppose the vertices sets are $\Vp\rightarrow (\Vp_{1}, \Vp_{2}, \Vp_{3}, \Vp_{4}, \Vp_{5}, \Vp_{6}, \Vp_{7}, \Vp_{8})$. We claim that there is 
\[\max_{i}\paren{\{\card{\Vp_{i}}\}_{i=1}^{8}}\leq (1-\frac{1}{10000\log^{2}{\np}})\cdot \np.\] 
We prove the above statement by using \Cref{lem:tree-split}. Conditioning on the high probability event of \Cref{lem:tree-split}, if the first split of $\Vp$ falls into the condition of $\Vp=\Vh$, then by \Cref{eff-split:size-lb} and \Cref{eff-split:size-ub} of \Cref{lem:tree-split}, the balanceness of sizes already follows. Otherwise, we can have the following cases
\begin{itemize}
\item If we enter the case of \cref{case:large-T-star}, note that we have $\card{T}\geq \frac{\np}{200}$. As such, in the first split $\Vp \rightarrow (\Vp\setminus T, T)$, we already have $\card{\Vp\setminus T}\leq \frac{199}{200}\cdot \np\leq (1-\frac{1}{10000\log^{2}{\np}})\cdot \np$. The size of $T$ might be large; however, it must have $\Vp=\Vh$ for the first split on $T$ (by \Cref{alg:partial-tree}). Therefore, in the next iteration, the maximum size is at most $(1-\frac{1}{10000\log^{2}{\np}})\cdot \np$ as well. 
\item If we enter the case of \cref{case:large-orphan}, note that the set $T$ of the current iteration is of size at most $\frac{\np}{100}\leq (1-\frac{1}{10000\log^{2}{\np}})\cdot \np$. Furthermore, in the next iteration, the $\Vorphan$ set accounts for at least $\frac{99}{100}$ fraction of vertices. Hence, we can apply \Cref{lem:large-orphan-good-sample,lem:large-orphan-bad-sample} and argue that the split will be in the case of \cref{case:large-T-star} with $T_2$ as the new set $T^*$. Now, we have $\Vp\setminus (T\cup T_2)$ with size at most $\frac{\np}{100}\leq (1-\frac{1}{10000\log^{2}{\np}})\cdot \np$. The new set $T_2 \cap (\Vp\setminus T)$ might be large, but since it forms a \emph{single} maximal subtree in $\cT(T_2 \cap (\Vp\setminus T))$, there is $\Vh=\Vp$, and in the \emph{third} iteration, the maximum size is going be to at most $(1-\frac{1}{10000\log^{2}{\np}})\cdot \np$, as desired.
\end{itemize}

Since the size reduces by a $(1-\frac{1}{10000\log^{2}{\np}})$ factor for every three level of splits, after $60000\log^3{\np}$ recursive calls, we have
\begin{align*}
\text{remaining size}& \leq \np\cdot (1-\frac{1}{10000\log^{2}{\np}})^{20000\log^{3}{\np}}\\
&\leq \np\cdot \exp\paren{-2\log{\np}}\leq O(1),
\end{align*}
to which point the remaining vertices will be collapsed to a super-vertex by our algorithm. Therefore, since $\np\leq n$, the longest sequence of dependent calls is at most $O(\log^3{n})$.

Finally, to complete the proof of efficiency, note that by \Cref{lem:tree-split}, the runtime for each call of \Cref{alg:split} is $O(\nh^2 \cdot \log{n})$. The tree has depth at most $\log^3{n}$; and at each level, the total number of runtime is at most $O(n^2\cdot \log{n})$ since we have $\sum \nh \leq n$. Similarly, each call of the merging algorithm will happen only after the split algorithm, which causes an overall $O(n^2)$ runtime overhead on any level. Therefore, the total runtime is bounded by $O(n^2\cdot \log^{4}{n})=O(n^2 \cdot \polylog{n})$, as desired.

\paragraph{Proof of correctness.} We inductively prove the correctness of \Cref{thm:weak-partial-tree}. On the level of the leaves in \Cref{alg:partial-tree}, by \Cref{lem:tree-split}, if the leave contains more than one vertex, it must be a composable set with out-degree at most $2$ in $\cTstar$, and exactly one of them connecting to a parent node, and the other connecting to the sibling of the orphaned vertex. As such, when we merge two components $X$, $Y$ in which at least one of them is a super-vertex, we can guarantee the out-degree is still at most $2$, and the $\texttt{LCA}$ of $X$ and $Y$ induces the same vertices as on $\cTstar(\lcatreeset{X\cup Y}{\cTstar})$, which implies the partial tree is weakly consistent with $\cTstar(X\cup Y)$. On the other hand, when merging two components who are both \emph{not} super-vertices, we can use \Cref{lem:tree-merge}, and the assumptions of weak consistency come from the guarantees on previous partitions. Therefore, the weak consistency inductively applies to every level of the merging process, which gives the desired correctness guarantee. 

\section*{Acknowledgements}
We thank anonymous ICML reviewers for the helpful comments and suggestions. 
Vladimir Braverman is supported in part by the Naval Research (ONR) grant N00014-23-1-2737 and NSF CNS-2333887 award. 
Samson Zhou is supported in part by NSF CCF-2335411. 
The work was conducted in part while Samson Zhou was visiting the Simons Institute for the Theory of Computing as part of the Sublinear Algorithms program.

\bibliographystyle{alpha}
\bibliography{reference}

\newpage
\appendix

\crefalias{section}{appendix}

\section{Additional Technical Preliminaries}
\label{app:info-theoretic-facts}

\subsection{Concentration inequalities}
\label{sub-app:concentration-inequ}
We now present the standard concentration inequalities used in our proofs. We start from the following standard variant of Chernoff-Hoeffding bound. 

\begin{proposition}[Chernoff-Hoeffding bound]\label{prop:chernoff}
	Let $X_1,\ldots,X_n$ be $n$ independent random variables with support in $[0,1]$. Define $X := \sum_{i=1}^{n} X_i$. Then, for every $\delta \in (0,1]$, there is 
	\begin{align*}
		\Pr\paren{\card{X - \expect{X}} > \delta\cdot \expect{X}} \leq 2 \cdot \exp\paren{-\frac{\delta^{2} \expect{X}}{3}}. 
	\end{align*}
 Furthermore, for every $\delta>0$, there is
     \begin{align*}
		\Pr\paren{\card{X - \expect{X}} > \delta\cdot \expect{X}} \leq 2 \cdot \exp\paren{-\frac{\delta^{2} \expect{X}}{2+\delta}}. 
	\end{align*}
\end{proposition}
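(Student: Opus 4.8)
The final statement to prove is \Cref{prop:chernoff}, the Chernoff--Hoeffding bound. The plan is to deduce it from the standard exponential moment (MGF) method, which is textbook material, so the main content of the proof is just carrying out the computation cleanly.

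First I would recall the exponential moment bound: for independent $X_1,\dots,X_n$ with support in $[0,1]$, and $X=\sum_i X_i$ with $\mu=\expect{X}$, one has for any $\lambda>0$ that $\expect{e^{\lambda X}}=\prod_i \expect{e^{\lambda X_i}}$ by independence, and since $X_i\in[0,1]$ convexity gives $e^{\lambda X_i}\le 1+(e^\lambda-1)X_i$, hence $\expect{e^{\lambda X_i}}\le 1+(e^\lambda-1)\expect{X_i}\le \exp\paren{(e^\lambda-1)\expect{X_i}}$. Multiplying over $i$ yields $\expect{e^{\lambda X}}\le \exp\paren{(e^\lambda-1)\mu}$. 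By Markov's inequality applied to $e^{\lambda X}$, for the upper tail $\Pr\paren{X>(1+\delta)\mu}\le \exp\paren{(e^\lambda-1)\mu-\lambda(1+\delta)\mu}$; optimizing over $\lambda$ by taking $\lambda=\ln(1+\delta)$ gives the classical bound $\Pr\paren{X>(1+\delta)\mu}\le \paren{\frac{e^\delta}{(1+\delta)^{1+\delta}}}^\mu$. A symmetric argument with $\lambda<0$ (using $e^{-\lambda}-1$) handles the lower tail, giving $\Pr\paren{X<(1-\delta)\mu}\le \paren{\frac{e^{-\delta}}{(1-\delta)^{1-\delta}}}^\mu$.

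Then I would convert these multiplicative-form bounds into the two stated exponential forms. For the first inequality (with $\delta\in(0,1]$): using the elementary estimates $\ln(1+\delta)\ge \frac{2\delta}{2+\delta}$ and the standard inequality $(1+\delta)\ln(1+\delta)-\delta\ge \delta^2/3$ for $\delta\in(0,1]$, the upper tail is at most $\exp(-\delta^2\mu/3)$; similarly $(1-\delta)\ln(1-\delta)+\delta\ge \delta^2/3$ for $\delta\in(0,1]$ bounds the lower tail by $\exp(-\delta^2\mu/3)$; summing the two one-sided bounds gives the factor $2$. For the second inequality (valid for all $\delta>0$): I would use the sharper bound $(1+\delta)\ln(1+\delta)-\delta\ge \frac{\delta^2}{2+\delta}$, which holds for all $\delta>0$ (this is exactly the inequality that makes the $2+\delta$ denominator appear), together with the lower-tail bound which is only needed for $\delta\le 1$ but there $\frac{\delta^2}{2+\delta}\le \frac{\delta^2}{3}\le \frac{\delta^2}{2}$ so it is dominated; again summing the one-sided estimates and bounding crudely gives the factor $2$.

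I do not expect a genuine obstacle here since this is a standard inequality; the only mild care needed is (i) getting the one-sided analytic inequalities $(1+\delta)\ln(1+\delta)-\delta\ge \frac{\delta^2}{2+\delta}$ and $(1-\delta)\ln(1-\delta)+\delta\ge \frac{\delta^2}{3}$ right (both are proven by differentiating twice and checking the value/derivative at $\delta=0$), and (ii) noting that the absolute-value (two-sided) statement as written requires combining the upper and lower tail estimates, which is why the constant $2$ is present — one must observe that the upper-tail exponent is always at least as good as the lower-tail exponent in the relevant range, so both one-sided probabilities are bounded by the claimed exponential, and the union bound over the two sides yields the factor $2$. Alternatively, one may simply cite a standard reference for \Cref{prop:chernoff} and omit the computation; I would include a one-paragraph sketch along the above lines for completeness.
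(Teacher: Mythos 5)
The paper states \Cref{prop:chernoff} without proof, simply presenting it as a standard concentration inequality in \Cref{sub-app:concentration-inequ}, so there is no internal argument to compare yours against. Your derivation is the standard textbook proof: the exponential moment method, convexity of $e^{\lambda x}$ on $[0,1]$ to bound the per-summand MGF, optimization at $\lambda=\ln(1+\delta)$ to get the multiplicative form $\paren{e^{\delta}/(1+\delta)^{1+\delta}}^{\mu}$, and then the analytic inequalities $(1+\delta)\ln(1+\delta)-\delta\ge\delta^{2}/3$ (for $\delta\in(0,1]$) and $(1+\delta)\ln(1+\delta)-\delta\ge\delta^{2}/(2+\delta)$ (for all $\delta>0$, via $\ln(1+\delta)\ge 2\delta/(2+\delta)$) together with the corresponding lower-tail estimates and a union bound to absorb the factor $2$. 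This is correct and is exactly what one would cite from, e.g., Mitzenmacher--Upfal or Motwani--Raghavan, so it is a perfectly adequate justification even though the paper itself does not include one.

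One small slip worth fixing: in your treatment of the lower tail for the second form you write $\frac{\delta^{2}}{2+\delta}\le\frac{\delta^{2}}{3}\le\frac{\delta^{2}}{2}$, but the first of these inequalities is reversed for $\delta<1$ (there $2+\delta<3$, so $\delta^{2}/(2+\delta)>\delta^{2}/3$). The chain you actually need is simply $\frac{\delta^{2}}{2+\delta}\le\frac{\delta^{2}}{2}\le(1-\delta)\ln(1-\delta)+\delta$, which holds since $2+\delta\ge 2$, so the conclusion is unaffected; just drop the spurious middle term.
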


\subsection{Standard results for trees}

\begin{fact}
\label{fct:num-splits-tree}
Any binary tree $\cT$ with $n$ leaves contains at most $n$ internal nodes.
\end{fact}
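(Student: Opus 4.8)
The plan is to prove this by a simple counting (handshake) argument, using the fact that in all the trees considered in this paper — standard HC trees and partial HC trees (see \Cref{def:partial-tree}) — every internal node has exactly two children. Let $I$ denote the number of internal nodes of $\cT$ and $n$ the number of leaves, so the tree has $n + I$ nodes in total and, being a tree, exactly $n + I - 1$ edges. On the other hand, every edge is a (parent, child) pair, and each non-root node is the child of exactly one node; equivalently, summing the number of children over the internal nodes counts every edge exactly once, and each internal node has exactly $2$ children. Hence the number of edges also equals $2I$. Equating the two counts gives $2I = n + I - 1$, so $I = n - 1 \le n$, which is the claimed bound.

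An equivalent route is induction on $n$: the base case $n = 1$ gives $I = 0 \le 1$; for $n \ge 2$ the root is internal and its two subtrees have $n_L, n_R \ge 1$ leaves with $n_L + n_R = n$, so by the inductive hypothesis the total number of internal nodes is at most $1 + (n_L - 1) + (n_R - 1) = n - 1 \le n$. The only point that requires any care — and indeed the sole way the statement could fail — is the standing convention that each internal node has exactly two children; without it, a path of degree-one nodes sitting above a single leaf would have arbitrarily many internal nodes with $n = 1$. Since this convention is part of the definition of the (partial) HC trees used throughout, there is no genuine obstacle here, and either argument suffices.
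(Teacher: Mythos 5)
Your proof is correct, and in fact it proves something slightly stronger than what the paper establishes: you show $I = n-1$ exactly (via the handshake count $2I = n+I-1$, or equivalently by induction), whereas the paper only gives the upper bound $I \le n$. The paper's argument is genuinely different in structure: it is a level-by-level contraction scheme. It looks at the internal nodes that are parents of leaves, argues there are at most $n/2$ of them, contracts them with their leaves to obtain a tree with $n/2$ leaves, and repeats, summing the resulting geometric series $\sum_i n/2^i \le n$. Both arguments hinge on the same binary assumption that you correctly flag as essential, but your handshake/induction route is more elementary, yields the tight constant, and avoids the mild imprecision in the paper's contraction step (the number of leaf-parents need not be exactly $n/2$, and the contracted tree need not have exactly $n/2$ leaves, though the inequality direction still works out). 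In short: same conclusion, different and somewhat cleaner method.
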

\begin{proof}
Consider the collection of the internal nodes that are the parents of the leaves. Since the tree is binary, there are \emph{at most} $n/2$ such nodes. Contract all these internal nodes with the leaves, and we can obtain a new tree $\cT'$ such that the number of leaves is $n/2$. As such, we can again count an upper bound for the number of the parents for the leaves in $\cT'$ as $n/4$. We can continue this until we only have the root, and the number of internal nodes is at most 
\begin{align*}
\sum_{i=1}^{n} \frac{n}{2^i} \leq \sum_{i=\infty}^{n} \frac{n}{2^i} = n,
\end{align*}
as desired.
\end{proof}

\subsection{The PRAM and the Massively Parallel Computation (MPC) Models}
\label{sec:pram-mpc-models}
We briefly introduce the parallel models we investigate for the Moseley-Wang objective. In particular, we investigated the classical PRAM model and the Massively Parallel Computation model in our work.

\paragraph{The PRAM model.} The Parallel Random Access Machine (PRAM) model is a widely used theoretical framework in parallel computing. It provides a simplified abstraction of a parallel computer system, where multiple processors work simultaneously to solve a computational problem. In the PRAM model, each processor has direct access to a common memory space (RAM), and communication between processors and the RAM is instantaneous (``parallel'' RAM). Processors can read from and write to any memory location in parallel, hence the term ``Random Access''.

In the PRAM model, there are usually two objectives for algorithm designers to optimize: the total \emph{work}, defined as the total number of elementary operations, and the \emph{depth}, defined as the length of the longest dependent call in the algorithm. In the theoretical abstract version of PRAM, we do \emph{not} care about the number of processors we use in the algorithm.

\paragraph{The MPC model.} The Massively Parallel Computation (MPC) model is a theoretical framework used to analyze algorithms designed for modern parallel computing architectures, e.g., the MapReduce framework. In this model, communications are conducted in \emph{synchronized rounds}, and a machine can communicate with any other machine in a round. Furthermore, each machine can do unlimited local computation between the rounds. Unlike traditional CONGEST models, the communication here is limited only by the \emph{memory size} -- a size $s$ machine cannot send or receive more than $s$ bits of information.

For graph problems, suppose each machine has size $s$, we typically use $\tilde{O}(n^2/s)$ machines, where $O(n^2)$ is the worst-case input size (alternatively, one can also target the more instance-optimal $\tilde{O}(m/s)$ machines). 

Our goal in the MPC model is to minimize two objectives: $i).$ the memory size $s$ of each machine; and $ii).$ the number of parallel rounds. In particular, if our algorithm works with $s=O(n^{\delta})$ memory for any $\delta\in (0,1)$, we call the algorithm \emph{fully scalable} in the MPC model. Typically, the best MPC algorithms would ask for fully scalable memory and $\polylog{n}$ rounds.

\paragraph{A reduction between the PRAM and the MPC algorithms.} The PRAM and the MPC models share a great deal of similarities. And indeed, the following reduction is known.

\begin{proposition}
\label{prop:PRAM-to-MPC}
Suppose there exists a PRAM algorithm that computes a function $f$ with $w(n)$ work and $d(n)$ depth, where $n$ is the input size of $f$. Then, there exists a fully scalable MPC algorithm that computes $f$ with $O(w(n))$ total memory and $O(d)$ rounds. The memory per machine can be made $O(n^\delta)$ for any $\delta\in (0,1)$, and the number of machines is $O(\frac{W(n)}{n^{\delta}}\cdot \polylog{n})$.
\end{proposition}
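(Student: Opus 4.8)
The plan is to establish \Cref{prop:PRAM-to-MPC} via the standard round-by-round simulation of a PRAM algorithm in the MPC model. First I would pass from the work--depth formulation to the processor--time formulation using Brent's scheduling principle: a PRAM algorithm with work $w(n)$ and depth $d(n)$ can be executed on $p := \lceil w(n)/d(n) \rceil$ processors in $O(d(n))$ parallel steps, touching at most $O(w(n))$ distinct memory cells in total, since each elementary operation accesses only $O(1)$ cells. I would then fix the per-machine memory $s = n^{\delta}$ and lay out both the $O(w(n))$ shared memory cells and the $O(p)$ processor states across $O(w(n)/n^{\delta})$ machines, balanced so that no machine holds more than $O(n^{\delta})$ items; translating PRAM-processor indices and memory addresses to machine indices is a routine consistent hashing / arithmetic step.

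The core of the argument is that a single PRAM step can be carried out in $O(1/\delta) = O(1)$ MPC rounds. In one PRAM step each of the $p$ processors issues $O(1)$ read requests, performs a local computation, and issues $O(1)$ write requests, so the total number of request records is $O(p) = O(w(n))$. I would implement the delivery of these requests by \emph{sorting} the records by their target memory address and then using segmented prefix sums and aggregation to (i) gather the current values for the readers, (ii) resolve concurrent writes according to the CRCW convention (arbitrary, min, max, or sum), and (iii) scatter the results back to the requesting processors. Each of sorting and prefix sums on $O(w(n))$ records runs in $O(1/\delta)$ MPC rounds with machines of memory $O(n^{\delta})$ and total memory $O(w(n))$ by the standard MPC primitives, and since $\delta$ is a fixed constant this is $O(1)$ rounds. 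Iterating over the $O(d(n))$ PRAM steps then yields $O(d(n))$ MPC rounds overall, with total memory $O(w(n))$ maintained throughout (the sorting and aggregation subroutines are space-efficient up to constant factors), and the number of machines $O(\tfrac{w(n)}{n^{\delta}} \cdot \polylog n)$ as claimed.

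The step I expect to require the most care is the bookkeeping that keeps every machine within its $O(n^{\delta})$ memory budget across the whole simulation. One must argue that after the routing/sorting phase no machine receives more than $O(n^{\delta})$ records; the subtle case is a ``popular'' memory cell read or written by many processors simultaneously, which must be handled with a broadcast/convergecast aggregation tree of depth $O(1/\delta)$ so that the load is spread out rather than concentrated at one machine. This is essentially the only place where the CRCW semantics (as opposed to EREW) interact nontrivially with the MPC memory constraint, and it is what forces the use of aggregation trees instead of plain point-to-point message passing. The remaining ingredients --- Brent's principle, the $O(1/\delta)$-round sorting and prefix-sum routines, and the index translation --- are routine, so I would invoke them as black boxes (appealing to the standard PRAM-to-MPC simulation results) rather than reproving them, and devote the written proof to the load-balancing accounting and to verifying that each PRAM step is faithfully reproduced.
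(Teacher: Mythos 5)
The paper states \Cref{prop:PRAM-to-MPC} without proof, treating it as a known black-box simulation result from the MPC literature (Karloff--Suri--Vassilvitskii; Goodrich--Sitchinava--Zhang), so there is no in-paper argument to compare against. Your sketch is the correct standard argument: Brent's scheduling converts work--depth to a $p=w(n)/d(n)$-processor, $O(d(n))$-step PRAM; each PRAM step is realized by sorting the $O(p)$ read/write records by target address, resolving reads and concurrent-write semantics with segmented prefix sums, and scattering back, all in $O(1/\delta)=O(1)$ MPC rounds; and hot memory cells are handled with an $O(1/\delta)$-depth aggregation/broadcast tree so no machine exceeds its $O(n^\delta)$ budget.

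Two small points worth making explicit if you write this out. First, the claim that a single round of sorting/prefix-sum on $O(w(n))$ records takes $O(1/\delta)$ rounds implicitly uses $w(n)=\poly(n)$ (so that $\log_{n^\delta} w(n)=O(1/\delta)$); this holds for every PRAM algorithm the paper invokes the proposition on, but it is an assumption, not a theorem. Second, you invoke the stated $O\bigl(\tfrac{w(n)}{n^\delta}\cdot\polylog n\bigr)$ machine count ``as claimed'' without accounting for the $\polylog n$ slack; that factor is just the space overhead of the sorting and aggregation primitives (and of padding the routing instances), and it would be worth a sentence to say so, since on its face the naive count is only $O(w(n)/n^\delta)$. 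Neither issue is a gap in the approach --- the simulation you describe is the right one and matches what the proposition asserts.
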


\section{Discussions about Additional Settings for Our Algorithms}
\label{app:additional-discussion}
We discuss our algorithms in additional settings, which include general success probability (other than $9/10$) and splitting oracle for \emph{approximately optimal} HC trees.

\subsection{General success probabilities}
\label{subsec:general-success-prob}
We use a success probability of ${9}/{10}$ in our algorithms for technical convenience. Here, we discuss algorithms with more general success probabilities. We remark that due to \emph{adversarial} incorrect answers, our algorithm cannot work with $\frac{1}{2}+\eps$ success probability for arbitrary $\eps$. In fact, it is unclear whether \emph{any} algorithm would work with $\frac{1}{2}+\eps$ success probability and \emph{adversarial} incorrect answers. Concretely, suppose that in the optimal HC tree, the first cut is balanced with size ($n/2$, $n/2$). This appears to be a quite easy example. Now, let us fix a vertex $u$ and determine whether a vertex $v$ is on the same side of $u$. If $v$ is on the same side of $u$, there are $n/2$ vertices $w\in V$ such that $w$ splits away from $(u, v)$; conversely, if $v$ is on the opposite side, there is no such $w$ vertex. However, due to \emph{adversarial} incorrect answers, we can report $(n/2-n/3 = n/6)$ such $w$ vertices in the former case, and $n/3$ such w vertices in the latter case. As such, in the above example, the correct probability for the oracle must be at least $3/4$ to get anything meaningful.
Finally, we remark our algorithm would work for any success probability $\frac{1}{2}+C$ for sufficiently large $C=\Omega(1)$: all the analysis will go through with changes in the constants. Furthermore, if we deal with \emph{random} incorrect answers instead, we will be able to work with $\frac{1}{2}+\eps$ success probability for any $\eps=\Omega(1/n)$.

Finally, we give a remark on the discrepancy between success probabilities between learning-augmented HC and other learning-augmented graph algorithms.
In some graph algorithm, e.g., in \cite{BravermanDSW24,Cohen-Addadd0LP24,Dong0V25}, the learning-augmented oracle could work with $1/2+C$ probability for any $C=\Omega(1)$.
The reason our algorithm should work with a sufficiently high success probability is due to the hierarchical structure. For instance, in the paper that studied learning-augmented max-cut (\cite{Dong0V25}), the errors in the algorithm are ``one shot''. However, in the HC problem, if the construction of the partial tree is wrong at any level, the error will propagate to all subsequent nodes, and it is not clear how to control the error if this happens. Therefore, a constant success probability sufficiently larger than $1/2$ is necessary.

\subsection{Splitting oracle with approximately optimal HC trees}
\label{subsec:split-oracle-approx-tree}
A natural extension of our algorithms is to explore HC algorithms with splitting oracles from an \emph{approximately} optimal HC tree. In other words, for a triplet of vertices $(u,v,w)$, the oracle $\cO$ answers the ``splitting away'' query based on an HC tree $\cT$ that achieves $\alpha$-approximation of the optimal tree $\cTstar$. We remark that our algorithms based on the strongly consistent partial HC trees (i.e., the algorithms of \Cref{thm:constant-poly-time-HC,thm:root-loglog-n4-time-HC,thm:hc-das-streaming}) could work with approximation HC trees. In particular, if the splitting oracle is constructed from an $\alpha$-approximation HC tree $\cT$, our algorithm will produce HC trees with an extra $O(\alpha)$ factor on the approximation guarantees.

On the other hand, however, it is not immediately clear whether our algorithms based on weakly consistent partial HC trees could work for oracles from approximate HC trees.
The main difficulty here is that to analyze the revenue decrement induced by the weakly consistent partial HC tree, we need to prove \Cref{lem:MW-opt-structure} that characterizes the revenue structure of the optimal tree. We proved the statement by showing that if the statement is not true, we can increase the revenue, which forms a contradiction with the optimal HC tree (see~\Cref{clm:weight-property-internal-nodes} for details). We cannot easily argue that the same structural statement with an approximately optimal HC tree. This could be an interesting problem to resolve for future work.

\section{Splitting Oracle and Learning Theory}
\label{sec:oracle-and-learning-theory}
In this section, we offer formal learning theory analysis for learning a splitting oracle for the learning-augmented hierarchical clustering problem. 
In particular, we utilize the PAC learning framework to show that a high-quality predictor can be efficiently learned, provided that the input instances are drawn from a specific distribution. 
We remark that similar results have been shown in other settings by~\cite{IzzoSZ21,ChenSVZ22,ErgunFSWZ22,GrigorescuLSSZ22}. 
Thus although the results of this section are by now standard techniques, they still provide an end-to-end framework for designing learning-augmented algorithms. 

First, we suppose that there exists an underlying distribution $\calD$, from which our input is drawn. 
In particular, $\calD$ generates independent instances for hierarchical clustering, corresponding to the setting where similar instances of hierarchical clustering are being solved. 
Note that this is exactly the setting where we would like to apply learning-augmented algorithms. 
If there is instead generalization failure or distribution-shift, then inherently machine learning models will perform poorly. 

Then our goal is to efficiently learn a predictor $f$ from a family $\calF$ of possible functions, where the input to each predictor $f$ is a weighted undirected graph $G=(V,E,w)$ and three specific nodes, and the output is a feature vector. 
We remark that each input instance $G$ can be encoded as a vector in $\mathbb{R}^{n^2+n}$, by first considering the weighted $n\times n$ adjacency matrix of the graph. 
We can then flatten the matrix into a vector of dimension $n^2$ and then append a $3$-sparse binary vector of length $n$, corresponding to the three vertices in the input. 
We also assume that the output of $f$ has at most $n$ dimension, indicating a binary vector for which vertex should be split from the other two vertices. 

We define a loss function $L:f\times G\to\mathbb{R}$, which intuitively defines how accurate a predictor $f$ performs on each input instance $G$. 
For example, $f$ can represent the splitting oracle on $G$ and $L$ can denote the number of inaccurate responses compared to the best hierarchical clustering on $G$. 

Now our goal is to learn the function $f \in \mathcal{F}$, which minimizes the following objective:
\begin{equation}
\label{eq:loss-f}
\EEx{G\sim\calD,(x,y,z)\in V^3}{L(f_G(x,y,z))}.
\end{equation}
Let $f^*$ be an optimal function in $\calF$,, so that $f^*=\argmin\EEx{G\sim\calD,(x,y,z)\in V^3}{L(f_G(x,y,z))}$ is a minimizer of the above objective. 
Assuming that for each graph instance $G$, triplet $(x,y,z)\in V$, and each $f\in\calF$, we can efficiently compute $f_G(x,y,z)$ as well as $L(f_G(x,y,z))$, say in polynomial time $T(n)$, then we have:
\begin{proposition}
\label{thm:learn:oracle}
    There exists an algorithm that uses $\poly\left(T(n),\frac{1}{\eps}\right)$ samples and returns a function $\hat{f}$, such that with probability at least $\frac{9}{10}$,
\[ \EEx{G\sim\calD,(x,y,z)\in V^3}{L(\hat{f}_G(x,y,z))}\le\min_f\EEx{G\sim\calD,(x,y,z)\in V^3}{L(f_G(x,y,z))} + \eps.\]
\end{proposition}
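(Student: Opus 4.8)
The plan is to cast this as a standard agnostic PAC-learning / uniform-convergence argument over the function class $\calF$, combined with empirical risk minimization (ERM). First I would observe that the quantity we wish to control, namely $\EEx{G\sim\calD,(x,y,z)\in V^3}{L(f_G(x,y,z))}$, is an expectation of a bounded random variable: since the output of $f$ is a binary vector of length at most $n$ and $L$ counts inaccurate responses (or any similarly normalized loss), we may assume $L\in[0,M]$ for $M=\poly(n)$ — after rescaling, $L\in[0,1]$. Thus for any fixed $f$, drawing $m$ i.i.d. samples $(G_i,(x_i,y_i,z_i))$ and forming the empirical average $\widehat{L}_m(f)=\frac1m\sum_{i=1}^m L(f_{G_i}(x_i,y_i,z_i))$ gives, by Hoeffding's inequality (\Cref{prop:chernoff} in the appendix), concentration of $\widehat{L}_m(f)$ around its mean to within $\eps/2$ with probability $1-\delta$ once $m=\Omega\!\paren{\frac{1}{\eps^2}\log\frac1\delta}$.

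The second step is to upgrade pointwise concentration to \emph{uniform} concentration over all $f\in\calF$, which is what ERM needs. Here I would invoke the standard agnostic-PAC sample-complexity bound: if $\calF$ has finite pseudodimension (or, in the realizable-to-agnostic reduction, VC-type complexity) $d_{\calF}$, then $m=\Omega\!\paren{\frac{d_{\calF}+\log(1/\delta)}{\eps^2}}$ samples suffice so that with probability $\ge 1-\delta$, $\sup_{f\in\calF}\card{\widehat{L}_m(f)-\Exp[L(f)]}\le \eps/2$. Since the paper has packaged all the implementation assumptions into "$f_G(x,y,z)$ and $L(f_G(x,y,z))$ computable in time $T(n)$," the natural reading is that the relevant complexity of $\calF$ is also controlled by $\poly(T(n))$ — e.g., a neural predictor with $\poly(T(n))$ parameters has pseudodimension $\poly(T(n))$ — so the sample complexity becomes $\poly\!\paren{T(n),\frac1\eps}$ as claimed, and I would state this as the (standard) hypothesis feeding into the theorem. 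The output $\hat f$ is then the empirical risk minimizer $\hat f=\argmin_{f\in\calF}\widehat{L}_m(f)$, computable in time $\poly(T(n),m)$ by enumerating/optimizing over the representation of $\calF$.

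The third step closes the argument by the usual two-sided comparison: let $f^*=\argmin_f \Exp[L(f)]$. On the high-probability event of uniform convergence,
\begin{align*}
\Exp[L(\hat f)] &\le \widehat{L}_m(\hat f) + \tfrac\eps2 \le \widehat{L}_m(f^*) + \tfrac\eps2 \le \Exp[L(f^*)] + \eps,
\end{align*}
where the middle inequality is optimality of $\hat f$ for the empirical risk. Setting $\delta=\tfrac1{10}$ gives the stated success probability $\tfrac9{10}$, and the sample count is $\poly(T(n),\tfrac1\eps)$. I would phrase all of this at the level of "invoke the standard agnostic PAC learning guarantee," citing the same template used by \cite{IzzoSZ21,ChenSVZ22,ErgunFSWZ22,GrigorescuLSSZ22}, rather than re-deriving uniform convergence.

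\textbf{Main obstacle.} The only genuine subtlety — and the step I would be most careful to state as an assumption rather than prove — is bounding the statistical complexity of $\calF$ (pseudodimension / Rademacher complexity) by $\poly(T(n))$; without \emph{some} such structural hypothesis on $\calF$, no finite-sample guarantee is possible (an arbitrary family of predictors can shatter arbitrarily large sets). Everything else (boundedness of $L$, Hoeffding, the ERM comparison) is routine. So the proof is essentially a pointer to the agnostic-PAC framework, with the honest caveat that "efficiently learnable" is inherited from a complexity assumption on $\calF$ that matches how these learning-augmented results are always stated.
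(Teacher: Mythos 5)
Your high-level structure is correct and matches the paper: bound the deviation between empirical and expected loss uniformly over $\calF$ (using the pseudo-dimension of the composed class $\calH=\{L\circ f : f\in\calF\}$), and then run the usual two-sided ERM comparison to conclude. Your "third step" is exactly \Cref{cor:uni-dim-cor}, and invoking uniform convergence with sample complexity $\Theta(d/\eps^2)$ is exactly \Cref{thm:uni-dim}.

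The genuine gap is in what you flag as the "main obstacle." You treat the bound $d_{\calF}\le\poly(T(n))$ as an extra structural hypothesis on $\calF$ that you would state and not prove, whereas the paper actually \emph{derives} it from the computational assumption already built into the statement. The chain is: by \Cref{lem:PD_to_VC}, the pseudo-dimension of $\calH$ equals the VC dimension of the associated class of threshold indicators $B_{\calH}$; and by \Cref{lem:VC_bound} (Theorem 8.14 of \cite{anthony_bartlett_1999}, the Goldberg--Jerrum-style bound), any parameterized concept class whose membership indicator is computable using at most $t$ arithmetic operations and comparisons has VC dimension $O(a^2t^2 + t^2 a\log a)$, where $a$ is the number of real parameters. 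Since the paper assumes each $L(f_G(x,y,z))$ is computable in time $T(n)$, the threshold function $\text{sgn}(h(x)-y)$ is computable in $O(T(n))$ operations, so the VC dimension — hence the pseudo-dimension of $\calH$ — is $\poly(T(n))$. This is not an additional assumption layered on top; it is the content of the computability hypothesis. Your worry that "an arbitrary family of predictors can shatter arbitrarily large sets" is correct in general, but it is precisely the $T(n)$-step computability (with a finite-dimensional parameterization) that rules out such families. Once you plug in this derived pseudo-dimension bound, the rest of your argument goes through verbatim.
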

In particular, \Cref{thm:learn:oracle} is a PAC-style result that bounds the number of samples necessary to achieve a good probability of learning an approximately-optimal function $\hat{f}$. 
The algorithm corresponding to \Cref{thm:learn:oracle} is straightforward; it is simply the empirical minimizer after a sufficient number of samples are drawn. 
To prove correctness, we first require the following definition of pseudo-dimension for a function class, which is a generalization of VC dimension to real-valued functions. 
\begin{definition}[Pseudo-dimension, e.g., Definition $9$ in \cite{LucicFKF17}]
Let $\calX$ denote a ground set, and let $\calF$ be a collection of functions mapping elements from $\calX$ to the interval $[0,1]$. 
Consider a fixed set $S = \{x_1, \dots, x_n\} \subset \calX$, a set of real numbers $R = \{r_1, \dots, r_n\}$, where each $r_i \in [0,1]$, and a function $f \in \calF$. 
The subset $S_f = \{x_i \in S \mid f(x_i) \ge r_i\}$ is referred to as the induced subset of $S$ determined by the function $f$ and the real values $R$. 
We say that the set $S$ with associated values $R$ is shattered by $\calF$ if the number of distinct induced subsets is $|\{S_f \mid f \in \calF\}| = 2^n$. 
Then the \emph{pseudo-dimension} of $\calF$ is defined as the size of the largest subset of $\calX$ that can be shattered by $\calF$ (or it is infinite if no such maximum exists). 
\end{definition}
Using pseudo-dimension, we can now present an accuracy-sample complexity trade-off for empirical risk minimization with and the number of necessary samples. 
First, we define $\calH$ be the class of functions in $\calF$ composed with $L$, i.e., $\calH:= \{L \circ f : f \in \calF\}$. 
Moreover, by normalization, we can assume the range of $L$ is contained within $[0,1]$. 
Then we have the following generalization bounds:
\begin{theorem}
\cite{anthony_bartlett_1999}
\label{thm:uni-dim}
Let $\calD$ be a distribution over problem instances in $\calG$, and let $\calH$ be a class of functions $h:\calG\to[0,1]$ with pseudo-dimension $d_{\calG}$. 
Consider $t$ i.i.d.\ samples $G_1, G_2, \dots, G_t$ drawn from $\calD$. 
There exists a universal constant $c_0$ such that for any $\eps > 0$, if $t \ge c_0 \cdot \frac{d_{\calH}}{\eps^2}$, then for all $h\in\calH$, we have the following with probability at least $\frac{9}{10}$:
\[
\left| \frac{1}{t} \sum_{i=1}^{t} h(G_i) - \EEx{G \sim\calD} h(G) \right| \leq \eps.
\]
\end{theorem}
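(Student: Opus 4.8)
The plan is to prove this by the classical route for uniform convergence of real-valued function classes of finite pseudo-dimension: (i) symmetrize the deviation $\sup_{h\in\calH}\big|\frac1t\sum_{i=1}^t h(G_i)-\mathbb{E}_{G\sim\calD}[h(G)]\big|$ against a ghost sample; (ii) use the pseudo-dimension $d:=d_{\calH}$ to replace $\calH$ by a small empirical cover; and (iii) finish with a union bound over the cover and Hoeffding's inequality, choosing $t=\Theta(d/\eps^2)$ so that the exponential tail dominates the size of the cover. Throughout write $\hat{\mathbb{E}}_t h:=\frac1t\sum_{i=1}^t h(G_i)$ and $\mathbb{E} h:=\mathbb{E}_{G\sim\calD}[h(G)]$; the fact that every $h\in\calH$ takes values in $[0,1]$ is what makes the boundedness-based estimates below go through.

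First I would carry out the symmetrization step. Draw an independent ghost sample $G_1',\dots,G_t'\sim\calD$ and set $\hat{\mathbb{E}}_t' h:=\frac1t\sum_{i=1}^t h(G_i')$. Since $\var{\hat{\mathbb{E}}_t' h}\le\frac{1}{4t}$, Chebyshev's inequality shows that whenever $t\ge 8/\eps^2$ (which is implied by the stated lower bound on $t$ once $c_0$ is large enough), for each fixed $h$ the event $|\hat{\mathbb{E}}_t' h-\mathbb{E} h|\le\eps/2$ has probability at least $1/2$; the usual argument then yields
\[
\Pr\!\paren{\sup_{h\in\calH}\big|\hat{\mathbb{E}}_t h-\mathbb{E} h\big|>\eps}\ \le\ 2\,\Pr\!\paren{\sup_{h\in\calH}\big|\hat{\mathbb{E}}_t h-\hat{\mathbb{E}}_t' h\big|>\eps/2}.
\]
Inserting i.i.d.\ Rademacher signs $\sigma_1,\dots,\sigma_t$ (legitimate because each pair $(G_i,G_i')$ is exchangeable) bounds the right-hand side by $4\,\Pr\big[\sup_{h}\big|\frac1t\sum_i\sigma_i(h(G_i)-h(G_i'))\big|>\eps/4\big]$, so it remains to control this symmetrized process conditionally on the $2t$-point sample $Z=(G_1,\dots,G_t,G_1',\dots,G_t')$.

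Next I would invoke the combinatorial consequence of finite pseudo-dimension: for any $\gamma\in(0,1)$ the class $\calH$ admits a $\gamma$-cover $\calH_\gamma\subseteq\calH$ in the empirical $L_1$ metric on $Z$ of size at most $(c_1/\gamma)^{c_2 d}$ for universal constants $c_1,c_2$ (the Pollard/Haussler covering bound). Taking $\gamma=\eps/16$, any $h$ lies within $\gamma$ of some representative $h'\in\calH_\gamma$ in the symmetrized empirical average, so the event above forces $\sup_{h'\in\calH_\gamma}\big|\frac1t\sum_i\sigma_i(h'(G_i)-h'(G_i'))\big|>\eps/8$. For fixed $h'$ and fixed $Z$ this is a sum of independent mean-zero terms in $[-1/t,1/t]$, so Hoeffding gives probability at most $2\exp(-t\eps^2/128)$; a union bound over $|\calH_\gamma|\le(16c_1/\eps)^{c_2 d}$ yields
\[
\Pr\!\paren{\sup_{h\in\calH}\big|\hat{\mathbb{E}}_t h-\mathbb{E} h\big|>\eps}\ \le\ 8\,(16c_1/\eps)^{c_2 d}\,\exp\!\paren{-t\eps^2/128},
\]
and choosing $t\ge c_0\, d/\eps^2$ with $c_0$ a sufficiently large universal constant makes the right-hand side at most $1/10$, which is the claim.

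The hard part is the covering-number bound used in the third step, namely that finite pseudo-dimension implies $\poly(1/\gamma)^{d}$-sized $L_1$ covers on any finite sample. The way I would establish it is to discretize the range $[0,1]$ into $O(1/\gamma)$ levels, observe that the induced set system $\{(G,j):h(G)\ge j/\lceil 1/\gamma\rceil\}$ is essentially the subgraph (hypograph) collection of $\calH$ restricted to a grid and hence has VC dimension $O(d)$, apply the Sauer--Shelah lemma to bound the number of distinct discretized patterns on the $2t$ points by $(2t/\gamma)^{O(d)}$, and convert this growth-function bound into an honest $\gamma$-cover by a standard greedy packing argument. Pinning down the constants in the exponent and checking that the range discretization does not inflate the VC dimension is the only delicate point; the rest is routine symmetrization plus a Hoeffding union bound, and the whole argument is precisely the one in \cite{anthony_bartlett_1999}.
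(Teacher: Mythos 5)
The paper does not prove this statement at all: it is imported verbatim as a black-box citation to Anthony and Bartlett's textbook, so there is no internal proof to compare against. Your reconstruction is the standard (and correct) argument from that reference — ghost-sample symmetrization, Rademacher randomization, the Pollard/Haussler $L_1$ covering bound in terms of pseudo-dimension, and a Hoeffding union bound over the cover with $t=\Theta(d/\eps^2)$ — and the constants you track ($\eps/4\to\eps/8$ after covering at scale $\eps/16$, tail $2\exp(-t\eps^2/128)$) are internally consistent. Nothing further is required.
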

We have the following immediate corollary by applying the triangle inequality.
\begin{corollary}
\label{cor:uni-dim-cor}
Let $G_1,\ldots,G_t$ be a set of independent samples from $\calD$ and let $\hat{h}\in\calH$ be a function that minimizes $\frac{1}{t}\sum_{i=1}^t h(G_i)$. 
If the number of samples $t$ is chosen as in \Cref{thm:uni-dim}, then with probability at least $\frac{9}{10}$,
\[\EEx{G\sim\calD}{\hat{h}(G)}\le\min\EEx{G\sim\calD}{h^*(G)} + 2 \eps.\]
\end{corollary}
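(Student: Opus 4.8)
The plan is to derive the corollary directly from the uniform convergence guarantee of \Cref{thm:uni-dim}, via the classical three-line analysis of empirical risk minimization. First I would fix the sample size $t \ge c_0 \cdot d_{\calH}/\eps^2$ exactly as in \Cref{thm:uni-dim}, so that the conclusion of that theorem applies: with probability at least $\frac{9}{10}$ over the draw of $G_1,\ldots,G_t$, the event
\[
\calE := \Big\{\, \Big| \tfrac{1}{t}\sum_{i=1}^t h(G_i) - \EEx{G\sim\calD}{h(G)} \Big| \le \eps \ \text{ for all } h\in\calH \,\Big\}
\]
occurs. Everything after this is a deterministic argument conditioned on $\calE$.

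Next I would let $h^*\in\calH$ be a function minimizing the population risk $\EEx{G\sim\calD}{h(G)}$ over $\calH$ (matching the ``$\min\EEx{G\sim\calD}{h^*(G)}$'' in the statement; if only an infimum exists, one picks $h^*$ within an arbitrarily small slack and absorbs it, but I will assume a minimizer as in the statement). On the event $\calE$, I apply the uniform bound once to the empirical minimizer $\hat h$ to get $\EEx{G\sim\calD}{\hat h(G)} \le \frac{1}{t}\sum_{i=1}^t \hat h(G_i) + \eps$; then use the defining property of $\hat h$, namely $\frac{1}{t}\sum_{i=1}^t \hat h(G_i) \le \frac{1}{t}\sum_{i=1}^t h^*(G_i)$; then apply the uniform bound again, this time to the fixed comparator $h^*$, to get $\frac{1}{t}\sum_{i=1}^t h^*(G_i) \le \EEx{G\sim\calD}{h^*(G)} + \eps$. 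Chaining these three inequalities yields
\[
\EEx{G\sim\calD}{\hat h(G)} \le \EEx{G\sim\calD}{h^*(G)} + 2\eps = \min_{h\in\calH}\EEx{G\sim\calD}{h(G)} + 2\eps,
\]
which is the claim, valid on the probability-$\ge\frac{9}{10}$ event $\calE$.

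There is essentially no hard step here; the corollary is just the triangle inequality built on \Cref{thm:uni-dim}. The one subtlety worth stating explicitly is that uniform convergence must be invoked once for the data-dependent (hence random) minimizer $\hat h$ and once for the fixed $h^*$, and it is precisely the ``for all $h\in\calH$'' quantifier in \Cref{thm:uni-dim} that licenses applying the bound to $\hat h$ — a single-function concentration bound would not suffice. Finiteness of $d_{\calH}$ (so that the surrounding $t=\poly(1/\eps)$ claim is meaningful) follows from the assumed efficient computability of $L\circ f$ together with standard pseudo-dimension bounds for the composed class, but plays no role in the inequality itself.
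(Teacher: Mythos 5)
Your proposal is correct and matches the paper's intended argument: the paper states only that the corollary is ``immediate\ldots by applying the triangle inequality,'' and your three-step chain (uniform bound on $\hat h$, empirical optimality of $\hat h$, uniform bound on $h^*$) is exactly that standard derivation. No gaps.
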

Thus, the main question is to analyze the pseudo-dimension of our function class $\calH$. 
To that end, we first relate the pseudo-dimension to the VC dimension of a related class of threshold functions. 
\begin{lemma}[Pseudo-dimension to VC dimension, Lemma $10$ in \cite{LucicFKF17}]
\label{lem:PD_to_VC}
For any $h\in\calH$, let $B_h$ denote the indicator function of the threshold function, i.e., $B_h(x,y)  =  \text{sgn}(h(x)-y)$. 
Then the pseudo-dimension of $\calH$ equals the VC-dimension of the subgraph class $B_{\calH}=\{B_h \mid h \in \calH\}$.
\end{lemma}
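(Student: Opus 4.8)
The plan is to establish $\mathrm{PD}(\calH)=\mathrm{VC}(B_{\calH})$ by showing that a witness for either quantity is, essentially verbatim, a witness for the other. First I would pin down conventions so the two definitions align: take $B_h(x,y)$ to be the indicator of the event $h(x)\ge y$ on the product domain $\calX\times[0,1]$ --- this matches the non-strict inequality ``$f(x_i)\ge r_i$'' used in the pseudo-dimension definition --- and record at the outset that a point $(x,y)$ with $y\notin[0,1]$ can never participate in a shattered set (since every $h$ ranges over $[0,1]$), so restricting the second coordinate to $[0,1]$ changes neither dimension. With this setup the lemma reduces to an unwinding of definitions in both directions.

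For $\mathrm{PD}(\calH)\le\mathrm{VC}(B_{\calH})$: if $d=\mathrm{PD}(\calH)$, fix a witness $x_1,\dots,x_d\in\calX$ together with reals $r_1,\dots,r_d\in[0,1]$ such that every $S\subseteq[d]$ equals the induced subset $\{i : h(x_i)\ge r_i\}$ for some $h=h_S\in\calH$. I would then check that the $d$ points $(x_1,r_1),\dots,(x_d,r_d)\in\calX\times[0,1]$ are shattered by $B_{\calH}$: for an arbitrary $T\subseteq[d]$ the function $B_{h_T}$ satisfies $B_{h_T}(x_i,r_i)=1\iff h_T(x_i)\ge r_i\iff i\in T$, so it picks out exactly $T$. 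Hence $\mathrm{VC}(B_{\calH})\ge d$.

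For the reverse inequality I would argue symmetrically, and here lies the only genuinely non-routine point, which I expect to be the main (indeed the sole) obstacle. Let $d'=\mathrm{VC}(B_{\calH})$ and let $(x_1,y_1),\dots,(x_{d'},y_{d'})$ be a shattered set. These first coordinates must be pairwise distinct: if $x_i=x_j$ with $y_i<y_j$, then for every $h$ the equality $h(x_i)=h(x_j)$ forces $B_h(x_j,y_j)=1\Rightarrow B_h(x_i,y_i)=1$, so no $h$ realizes the pattern $\{j\}$, contradicting shattering. Given distinctness, $x_1,\dots,x_{d'}$ together with the reals $y_1,\dots,y_{d'}$ form a valid pseudo-dimension witness: for $S\subseteq[d']$, the $h$ realizing the pattern $S$ on the shattered points satisfies $h(x_i)\ge y_i\iff i\in S$, which is precisely the induced-subset condition. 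Thus $\mathrm{PD}(\calH)\ge d'$, and combining the two bounds gives equality, understood in the extended sense that one side is infinite iff the other is (obtained by pushing arbitrarily large finite witnesses through the same two reductions). In short, the entire content of the lemma is the definitional duality between a point set $\{x_i\}$ carrying thresholds $\{r_i\}$ and the point set $\{(x_i,r_i)\}$ on the product domain; the only places requiring care are the distinctness observation above and the pedantic matching of the $\ge$-versus-$>$ convention between the two definitions, and there are no computations to carry out.
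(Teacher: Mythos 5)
The paper does not prove this lemma at all --- it is quoted verbatim as Lemma 10 of \cite{LucicFKF17} and used as a black box. Your proof is a correct, self-contained verification of the definitional equivalence: both directions are the right translation between a pseudo-shattering witness $(x_i,r_i)$ and a VC-shattering witness in the product domain, and you correctly isolate the only two points that need care, namely fixing the $\ge$-versus-$>$ convention in $\text{sgn}$ so that $B_h(x,y)=1\iff h(x)\ge y$ matches the induced-subset condition, and the observation that the first coordinates of a VC-shattered set must be pairwise distinct (otherwise the singleton pattern on the point with the larger threshold is unrealizable), which is what makes the recovered pseudo-dimension witness a genuine set. No gaps.
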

What remains is to bound the VC dimension of the function to compute in the class, which follows from the following standard result.

\begin{lemma}[Theorem $8.14$ in \cite{anthony_bartlett_1999}]
\label{lem:VC_bound}
Let $\tau: \mathbb{R}^a \times \mathbb{R}^b \to \{0,1\}$, defining the class
\[
\calT = \{ x \mapsto \tau(\theta, x) : \theta \in \mathbb{R}^a \}.
\]
Assume that any function $\tau$ can be computed by an algorithm that takes as input the pair $(\theta, x) \in \mathbb{R}^a \times \mathbb{R}^b$ and produces the value $\tau(\theta, x)$ after performing no more than $t$ of the following operations:
\begin{itemize}
    \item arithmetic operations $+, -, \times, /$ on real numbers,
    \item comparisons involving $>, \ge, <, \le, =,$ and outputting the result of such comparisons,
    \item outputting $0$ or $1$.
\end{itemize}
Then, the VC dimension of $\calT$ is bounded by $O(a^2 t^2 + t^2 a \log a)$.
\end{lemma}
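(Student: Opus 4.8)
The plan is to bound the growth function of $\calT$ and read off the VC dimension. Write $\Pi_\calT(m)$ for the maximum, over all choices of $x_1,\dots,x_m\in\mathbb{R}^b$, of the number of distinct labelings $\big(\tau(\theta,x_1),\dots,\tau(\theta,x_m)\big)$ achievable as $\theta$ ranges over $\mathbb{R}^a$; then $\mathrm{VCdim}(\calT)$ is the largest $m$ with $\Pi_\calT(m)=2^m$, so it suffices to show $\Pi_\calT(m)<2^m$ once $m$ exceeds $O(a^2t^2+t^2a\log a)$. First I would fix a sample $x_1,\dots,x_m$ and run the given algorithm \emph{symbolically} on each input $(\theta,x_i)$, treating the parameters $\theta=(\theta_1,\dots,\theta_a)$ as indeterminates and $x_i$ as fixed constants. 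Each intermediate quantity produced is a rational function of $\theta$; I would carry it as an ordered pair (numerator, denominator) of polynomials, so that under any of $+,-,\times,/$ the degrees of numerator and denominator at most double. Since the algorithm performs at most $t$ arithmetic operations, every numerator and denominator that arises has degree at most $\Delta:=2^{t}$, and every comparison performed is --- after clearing denominators --- a test of the sign of a polynomial in $\theta$ of degree at most $\Delta$ (one also records the sign of the product of the relevant denominators, again of degree at most $\Delta$, to resolve sign flips).

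The subtle point, which I expect to be the main obstacle, is that the computation \emph{branches}: the polynomial whose sign is tested at step $j$ depends on the outcomes of the earlier tests, so it is not a single fixed polynomial but a piecewise object on $\mathbb{R}^a$. I would resolve this by enumerating branch histories. There are at most $t$ comparisons along any run and hence at most $2^{t}$ possible outcome sequences, so for each example $x_i$ the computation on $(\theta,x_i)$ gives rise to a finite family $\calP_i$ of at most $O(t\cdot 2^{t})$ polynomials --- one per (step, history) pair, together with the associated denominator polynomials --- each of degree at most $\Delta$, and with the property that the output bit $\tau(\theta,x_i)$ is a fixed Boolean function of the sign vector $\big(\operatorname{sgn} P(\theta)\big)_{P\in\calP_i}$. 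Setting $\calP:=\bigcup_{i=1}^{m}\calP_i$ after discarding identically-zero polynomials, we obtain $N:=|\calP|\le O(m\,t\,2^{t})$ polynomials, all of degree at most $\Delta=2^{t}$, such that every labeling of the sample is determined by the single global sign vector $\big(\operatorname{sgn} P(\theta)\big)_{P\in\calP}$.

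Next I would invoke a Warren/Milnor--Thom-type estimate on the number of sign patterns of a polynomial system: the number of distinct sign vectors realized by $N$ polynomials of degree at most $\Delta$ in $a$ real variables is at most $\big(C\,\Delta\,N/a\big)^{a}$ for an absolute constant $C$ (assuming $N\ge a$, which we may, since otherwise the class is trivially small). This yields $\Pi_\calT(m)\le\big(C\,\Delta\,N/a\big)^{a}$. Finally, substituting $\Delta=2^{t}$ and $N=O(m\,t\,2^{t})$ and taking logarithms gives $\log_2\Pi_\calT(m)=O\!\big(at+a\log_2 m+a\log_2(t/a)\big)$, and solving the resulting inequality $m\le\log_2\Pi_\calT(m)$ for the threshold beyond which $\Pi_\calT(m)<2^{m}$ --- via the standard fact that $m\le b+c\log_2 m$ forces $m=O(b+c\log c)$ --- produces a bound of the claimed polynomial form $O(a^2t^2+t^2a\log a)$ (the precise exponents just requiring a suitably generous accounting in the last two steps). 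Apart from the branching bookkeeping, the only other places demanding care are the degree/denominator accounting through division (in particular the possibility that a denominator vanishes at some $\theta$, which is exactly why denominators are themselves included among the tested polynomials) and verifying the hypotheses under which the sign-pattern bound is stated (non-degeneracy of the polynomials and $N\ge a$); both are routine.
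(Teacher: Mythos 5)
The paper does not actually prove this lemma---it is imported as a black-box citation of Theorem 8.14 of Anthony and Bartlett---so there is no in-paper argument to compare against; what you have written is a self-contained derivation in the Goldberg--Jerrum/Warren style, and for the operation set as listed in the statement it is essentially correct. Your degree accounting (carrying numerator/denominator pairs whose degrees at most double per operation, hence degree at most $2^t$ after $t$ steps), the enumeration of the at most $2^t$ branch histories to convert the piecewise computation into a fixed Boolean function of the sign vector of $O(mt2^t)$ polynomials, and the inclusion of the denominators among the tested polynomials are all the right moves; the only technical points left to discharge are the ones you already flag, namely using the variant of the sign-pattern bound that counts vectors in $\{-,0,+\}^N$ rather than only strict sign vectors, and the $N\ge a$ hypothesis. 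Two remarks on the comparison. First, your argument proves strictly more than the lemma claims: solving $m\le O(at+a\log_2 m+a\log_2 t)$ yields $\mathrm{VCdim}(\calT)=O\big(a(t+\log(at))\big)$, linear rather than quadratic in $a$ and $t$, and the stated $O(a^2t^2+t^2a\log a)$ follows a fortiori. Second, the reason the textbook's bound is quadratic is that Theorem 8.14 as stated there also admits the exponential function $\alpha\mapsto e^{\alpha}$ as a unit-cost operation (an item the transcription in the lemma omits); with exponentials the intermediate quantities are no longer rational functions of bounded degree, the Warren/Milnor--Thom count no longer applies, and the proof must instead pass through Khovanskii-type bounds on sign patterns of Pfaffian systems, which is exactly where the $t^2a^2$ dependence originates. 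So your route buys an elementary proof and a sharper bound for the arithmetic-only instruction set, but it would not survive adding $e^{\alpha}$ to the list, which is the setting the citation is actually drawn from.
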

We can now apply these results to prove \Cref{thm:learn:oracle} by instantiating Lemma \ref{lem:VC_bound} with the computational complexity of evaluating any function in the class $\calH$. 
\begin{proof}[Proof of \Cref{thm:learn:oracle}]
From Lemma \ref{lem:PD_to_VC}, we know that the pseudo-dimension of $\calH$ is equivalent to the VC dimension of the threshold functions defined by $\calH$. 
Next, from Lemma \ref{lem:VC_bound}, we observe that the VC dimension of the relevant class of threshold functions is polynomial in the computational complexity of evaluating a function from the class.
In other words, Lemma \ref{lem:VC_bound} implies that the VC dimension of $B_{\calH}$ (as defined in Lemma \ref{lem:PD_to_VC}) is polynomial in the number of arithmetic operations required to compute the threshold function corresponding to some $h\in\calH$.
According to our definition, this quantity is polynomial in $T(n)$. 
Thus, the pseudo-dimension of $\calH$ is also polynomial in $T(n)$, and the desired result follows.
\end{proof}
Note that we can initialize \Cref{thm:learn:oracle} with various oracles, in terms of the input and output predictions. 
Indeed, if each function in the family of oracles we are considering can be computed efficiently, then \Cref{thm:learn:oracle} guarantees that a polynomial number of samples is sufficient to learn a nearly optimal oracle.

\end{document}